\titlespacing*{\section}      {0em}{.75em}{0.25em}
\titlespacing*{\subsection}   {0em}{1.00em}{.0em}
\titlespacing*{\subsubsection}{0em}{.50em}{.25em}
\titlespacing*{\paragraph}    {0em}{.25em}{.25em}
\newcommand\ph[1]{\mathscr{#1}}
\newcommand\xqed[1]{%
\leavevmode\unskip\penalty9999 \hbox{}\nobreak\hfill
\quad\hbox{#1}}
\newcommand\demo{\xqed{$\triangle$}}
\newtheorem{theorem}{Theorem}[section]
\newtheorem{example}[theorem]{Example}
\newtheorem{remark}{Remark}
\def\myline{\pgfutil@ifnextchar[{\my@line}{\my@line[]}}%
\def\my@line[#1](#2)(#3){%
\tikz[overlay] \draw[#1]  (#2)--(#3);}%
  \renewcommand*\env@matrix[1][*\c@MaxMatrixCols c]{%
    \hskip -\arraycolsep
    \let\@ifnextchar\new@ifnextchar
  \array{#1}}
\newcommand\chap[1]{%
  \chapter*{#1}%
  \addcontentsline{toc}{chapter}{#1}}
\newcommand{\tocfill}{\cleaders\hbox{$\m@th \mkern\@dotsep mu . \mkern\@dotsep mu$}\hfill}
\newenvironment{abbreviations}{\begin{list}{}{%
\setlength{\labelwidth}{3cm}\setlength{\leftmargin}{\labelwidth+\labelsep}%
\setlength{\itemsep}{-3pt}}}{\end{list}}
\pgfplotsset{compat=1.5}
\pgfplotsset{grid style={dotted,gray}}
\pgfplotsset{legend image with text/.style={legend image code/.code={%
\node[anchor=west, align=right] at (0.0cm,0cm) {#1};}},}
\pgfplotsset{
  box plot/.style={/pgfplots/.cd,black,only marks,mark=-,
  mark size=\pgfkeysvalueof{/pgfplots/box plot width},
  /pgfplots/error bars/y dir=plus,/pgfplots/error bars/y explicit,
  /pgfplots/table/x index=\pgfkeysvalueof{/pgfplots/box plot x index},},
  box plot box/.style={
  /pgfplots/error bars/draw error bar/.code 2 args={%
  \draw  ##1 -- ++(\pgfkeysvalueof{/pgfplots/box plot width},0pt) 
  |- ##2 -- ++(-\pgfkeysvalueof{/pgfplots/box plot width},0pt) |- ##1 -- cycle;},
  /pgfplots/table/.cd,
  y index=\pgfkeysvalueof{/pgfplots/box plot box top index},
  y error expr={
  \thisrowno{\pgfkeysvalueof{/pgfplots/box plot box bottom index}}
  - \thisrowno{\pgfkeysvalueof{/pgfplots/box plot box top index}}},
  /pgfplots/box plot},
  box plot top whisker/.style={
  /pgfplots/error bars/draw error bar/.code 2 args={%
  \pgfkeysgetvalue{/pgfplots/error bars/error mark}%
  {\pgfplotserrorbarsmark}%
  \pgfkeysgetvalue{/pgfplots/error bars/error mark options}%
  {\pgfplotserrorbarsmarkopts}%
  \path ##1 -- ##2;},/pgfplots/table/.cd,
  y index=\pgfkeysvalueof{/pgfplots/box plot whisker top index},
  y error expr={
  \thisrowno{\pgfkeysvalueof{/pgfplots/box plot box top index}}
  - \thisrowno{\pgfkeysvalueof{/pgfplots/box plot whisker top index}}},
  /pgfplots/box plot},
  box plot bottom whisker/.style={
  /pgfplots/error bars/draw error bar/.code 2 args={%
  \pgfkeysgetvalue{/pgfplots/error bars/error mark}%
  {\pgfplotserrorbarsmark}%
  \pgfkeysgetvalue{/pgfplots/error bars/error mark options}%
  {\pgfplotserrorbarsmarkopts}%
  \path ##1 -- ##2;},
  /pgfplots/table/.cd,
  y index=\pgfkeysvalueof{/pgfplots/box plot whisker bottom index},
  y error expr={
  \thisrowno{\pgfkeysvalueof{/pgfplots/box plot box bottom index}}
  - \thisrowno{\pgfkeysvalueof{/pgfplots/box plot whisker bottom index}}},
  /pgfplots/box plot},
  box plot median/.style={/pgfplots/box plot,
  /pgfplots/table/y index=\pgfkeysvalueof{/pgfplots/box plot median index},
  semithick,black },
  box plot width/.initial=1em,
  box plot x index/.initial=0,
  box plot median index/.initial=1,
  box plot box top index/.initial=2,
  box plot box bottom index/.initial=3,
  box plot whisker top index/.initial=4,
  box plot whisker bottom index/.initial=5,}
\newcommand{\boxplot}[2][]{
    \addplot [box plot median,#1] table {#2};
    \addplot [forget plot, box plot box,#1] table {#2};
    \addplot [forget plot, box plot top whisker,#1] table {#2};
    \addplot [forget plot, box plot bottom whisker,#1] table {#2};}
\begin{document}

\pagestyle{fancyplain}
\pagenumbering{roman}
\allowdisplaybreaks

\newcommand{\publ}{}
\renewcommand{\sectionmark}[1]{\markright{\it \thesection.\ #1}}
\renewcommand{\chaptermark}[1]{\markboth{\it \thechapter.\ #1}{}}
\lhead[\thepage]{\fancyplain{\publ}{\rightmark}}
\chead[\fancyplain{}{}]{\fancyplain{}{}}
\rhead[\fancyplain{}{\leftmark}]{\fancyplain{}{\thepage}}
\lfoot[]{}
\cfoot[]{}
\rfoot[]{}
\pagenumbering{arabic}

\thispagestyle{empty}

\begin{adjustwidth*}{-7.5mm}{-7.5mm}
\begin{flushleft}
{\LARGE \textsc{\textbf{Design and Analysis of \\ Distributed State Estimation Algorithms \\ Based on Belief Propagation and\\ \vskip 2.1mm Applications in Smart Grids}}}\\ \vskip 16mm
{\large \textbf{Mirsad \'{C}osovi\'{c}}}
\end{flushleft}
\end{adjustwidth*}

\pagebreak
\thispagestyle{empty}
\vspace*{9cm}


\thispagestyle{empty}

\begin{centering}

{\Large Design and Analysis of \\ Distributed State Estimation Algorithms \\ Based on Belief Propagation and\\ \vskip 2.1mm Applications in Smart Grids}


{\large by\\
Mirsad \'Cosovi\'c\\
}

\vskip 0.7cm

Mr.-Ing. Power Electrical Engineering, University of Sarajevo, \\Bosnia and Herzegovina, 2013.\\
Dipl.-Ing. Power Electrical Engineering, University of Sarajevo, \\Bosnia and Herzegovina, 2009.

\vskip 0.7cm

for the degree of \\ \vspace{1.0cm}
{\Large Doctor of Technical Sciences}

\vskip 0.7cm
{\large A dissertation submitted to the\\
\vskip 0.3cm 
Department of Power, Electronics\\
and Communication Engineering,\\ 
Faculty of Technical Sciences,\\
University of Novi Sad,\\
Serbia.\\}

\vspace{1.0cm}

2018.\\

\end{centering}

\vskip 4cm
\pagebreak
\thispagestyle{empty}

{\Large Advisor:}\\
\\
\hspace*{1.0cm} {\Large Dr Dejan Vukobratovi\'c}, {Associate Professor}\\  
\hspace*{1.0cm} Department of Power, Electronics and Communication Engineering,\\
\hspace*{1.0cm} University of Novi Sad, Serbia.

\vspace*{1cm}
{\Large Thesis Committee Members:} \\
\\
\hspace*{1.0cm} {\Large Dr Andrija Sari\'c}, {Full Professor}\\ 
\hspace*{1.0cm} Department of Power, Electronics and Communication Engineering,\\
\hspace*{1.0cm} University of Novi Sad, Serbia.\\
\\
\hspace*{1.0cm} {\Large Dr Petar Popovski}, {Full Professor}\\ 
\hspace*{1.0cm} Department of Electronic Systems,\\
\hspace*{1.0cm} Aalborg University, Denmark.\\
\\
\hspace*{1.0cm} {\Large Dr \v{C}edomir Stefanovi\'c}, {Associate Professor}\\ 
\hspace*{1.0cm} Department of Electronic Systems,\\
\hspace*{1.0cm} Aalborg University Copenhagen, Denmark.\\
\\
\hspace*{1.0cm} {\Large Dr Izudin D\v{z}afi\'c}, {Full Professor},\\ 
\hspace*{1.0cm} Department of Electrical Engineering,\\
\hspace*{1.0cm} International University of Sarajevo, Bosnia and Herzegovina.\\       
\\
\hspace*{1.0cm} {\Large Dr Du\v{s}an Jakoveti\'c}, {Assistant Professor}\\ 
\hspace*{1.0cm} Department of Mathematics and Informatics,\\
\hspace*{1.0cm} University of Novi Sad, Serbia.\\  

\vspace*{1cm}
This research has received funding from the EU 7th Framework Programme for research, technological development and demonstration under grant agreement no. 607774.
\thispagestyle{empty}

\vspace*{1cm}

{\flushright \emph{Engineering: where the semi-skilled laborers \\execute the vision of those who think and dream. \\ Hello, Oompa Loompas of science.}\\Dr. Sheldon Lee Cooper, B.S., M.S., M.A., Ph.D., Sc.D.\\}

\tableofcontents
\chap{Acknowledgments} 
\pagestyle{empty}

This thesis is a final result of an incredible journey that has lasted for the last four years. It is impossible to list all the incredible people that I met during this period, who left an indelible mark in my life. I would like to point out that the value of this thesis is not in mathematical equations, its main value is that it made me a better person. 

First and foremost, I would like to dedicate this thesis to my family, they have always been with me. I want to express my deep gratitude for their support, encouraging and love. 

Further, I would like to express my deepest gratitude to my friend and advisor Prof. Dejan Vukobratovic, without his support, nothing would be possible. There are simply no words to describe his influence on this work and on my life. You can meet such a person and scientist once in your life if you have the very best of luck, few people had so much positive influence on my life as he had. 

I owe very much to Dragana Bajovic, Dusan Jakovetic and Gorana Mijatovic, they were always there for me, without them these four years would be significantly different. Also, I would like to thank all members of the Communications and Signal Processing Group at Faculty of Technical Sciences, University of Novi Sad, 

I would like to give special thanks to ADVANTAGE and SENSIBLE projects, and special thanks to Hazel Cox, Prof. John Thompson, Prof. Cedomir Stefanovic, Charalampos Kalalas, Marko Angjelichinoski, Achilleas Tsitsimelis and Alexandros Kleidaras, for all wonderful moments that we shared through ADVANTAGE project. 

Finally, I owe my thanks to the great people who hosted me during my secondment time, Prof. Carles Anton-Haro, CTTC, Barcelona; Prof. Vladimir Stankovic, University of Strathclyde, Glasgow; Prof. Juraj Machaj, University of Zilina, Zilina; and Prof. Chao Wang, Tongji University, Shanghai.

\chap{List of Publications and Awards} 
\pagestyle{empty}

{\Large Journal Publications:}

\noindent
M. Cosovic and D. Vukobratovic, “Distributed Gauss-Newton Method for State
Estimation Using Belief Propagation,” in IEEE Transactions on Power Systems, 2018 (early access).

\noindent
M. Cosovic, A. Tsitsimelis, D. Vukobratovic, J. Matamoros and C. Anton-Haro, "5G Mobile Cellular Networks: Enabling Distributed State Estimation for Smart Grids," in IEEE Communications Magazine, vol. 55, no. 10, pp. 62-69, October 2017.

\vspace{0.8cm}
\noindent
{\Large Conference Publications:}

\noindent
M. Cosovic, D. Vukobratovic and V. Stankovic, "Linear state estimation via 5G C-RAN cellular networks using Gaussian belief propagation," 2018 IEEE Wireless Communications and Networking Conference (WCNC), Barcelona, 2018, pp. 1-6.

\noindent
M. Cosovic and D. Vukobratovic, "Fast real-time DC state estimation in electric power systems using belief propagation," 2017 IEEE International Conference on Smart Grid Communications (SmartGridComm), Dresden, 2017, pp. 207-212.

\noindent
A. Kleidaras, M. Cosovic, D. Vukobratovic and A. E. Kiprakis, "Demand response for thermostatically controlled loads using belief propagation," 2017 IEEE PES Innovative Smart Grid Technologies Conference Europe (ISGT-Europe), Torino, 2017, pp. 1-6.

\noindent
M. Cosovic and D. Vukobratovic, "Distributed Gauss-Newton method for AC state estimation: A belief propagation approach," 2016 IEEE International Conference on Smart Grid Communications (SmartGridComm), Sydney, NSW, 2016, pp. 643-649.

\noindent
M. Cosovic and D. Vukobratovic, "State estimation in electric power systems using belief propagation: An extended DC model," 2016 IEEE 17th International Workshop on Signal Processing Advances in Wireless Communications (SPAWC), Edinburgh, 2016, pp. 1-5.

\vspace{0.8cm}
\noindent
{\Large Book Chapter:}

\noindent
M. Angjelichinoski, M. Cosovic, C. Kalalas, R. Lliuyacc, M. Zeinali, J. Alonso-Zarate, J. M. Mauricio, P. Popovski, C. Stefanovic, J. S. Thompson and D. Vukobratovic, “Overview of research in the ADVANTAGE project,” Chapter 12, “Book title: Smarter Energy: from Smart Metering to the Smart Grid,” Editors: H. Sun, N.D. Hatziargyriou, H.V. Poor, L. Carpanini, and M. Forni{\'e}, ser. Energy Engineering,
Institution of Engineering and Technology, 2016.

\vspace{0.8cm}
\noindent
{\Large Awards:}

\noindent
Early Career Research Award in NSF US-Serbia \& West Balkan Data Science Workshop, Belgrade, Serbia, 2018, for the poster ``Distributed Power System State Estimation Algorithms Based on the Belief Propagation".

\noindent
Best Student Paper Award in IEEE International Conference on Smart Grid Communications (SmartGridComm), Dresden, Germany, 2017, for the paper ``Fast real-time DC state estimation in electric power systems using belief propagation”.

\phantomsection
\cleardoublepage
\addcontentsline{toc}{chapter}{\listfigurename}
\listoffigures

\cleardoublepage
\addcontentsline{toc}{chapter}{\listtablename}
\phantomsection
\listoftables

\chap{Abstract} 
\pagestyle{empty}

We present a detailed study on application of factor graphs and the belief propagation (BP) algorithm to the power system state estimation (SE) problem. We start from the BP solution for the linear DC model, for which we provide a detailed convergence analysis. Using BP-based DC model we propose a fast real-time state estimator for the power system SE. The proposed estimator is easy to distribute and parallelize, thus alleviating computational limitations and allowing for processing measurements in real time. The presented algorithm may run as a continuous process, with each new measurement being seamlessly processed by the distributed state estimator. In contrast to the matrix-based SE methods, the BP approach is robust to ill-conditioned scenarios caused by significant differences between measurement variances, thus resulting in a solution that eliminates observability analysis. Using the DC model, we numerically demonstrate the performance of the state estimator in a realistic real-time system model with asynchronous measurements. We note that the extension to the non-linear SE is possible within the same framework.

Using insights from the DC model, we use two different approaches to derive the BP algorithm for the non-linear model. The first method directly applies BP methodology, however, providing only approximate BP solution for the non-linear model. In the second approach, we make a key further step by providing the solution in which the BP is applied sequentially over the non-linear model, akin to what is done by the Gauss-Newton method. The resulting iterative Gauss-Newton belief propagation (GN-BP) algorithm can be interpreted as a distributed Gauss-Newton method with the same accuracy as the centralized SE, however, introducing a number of advantages of the BP framework. The thesis provides extensive numerical study of the GN-BP algorithm, provides details on its convergence behavior, and gives a number of useful insights for its implementation.

Finally, we define the bad data test based on the BP algorithm for the non-linear model. The presented model establishes local criteria to detect and identify bad data measurements. We numerically demonstrate that the BP-based bad data test significantly improves the bad data detection over the largest normalized residual test.
\chap{Abbreviations} 
\pagestyle{empty}


\begin{abbreviations}
\item[AC-BP]	Native Belief Propagation Approximate Solution for the Non-Linear State 					Estimation Model
\item[BP]		Belief Propagation
\item[BP-BDT]	Belief Propagation based Bad Data Test
\item[CDF]		Cumulative Density Function
\item[DC]		Direct Current
\item[DC-BP]	Belief Propagation based DC State Estimation Algorithm
\item[EMS]		Energy Management System
\item[GN-BP]	Gauss-Newton Belief Propagation based Algorithm
\item[LNRT]		Largest Normalized Residual Test
\item[MAD]		Mean Absolute Difference
\item[MAP]		Maximum a Posteriori
\item[PMU]		Phasor Measurement Unit
\item[SCADA]	Supervisory Control and Data Acquisition
\item[SE]		State Estimation
\item[WAMS]		Wide Area Measurement System
\item[WLS]		Weighted Least-Squares
\item[WRSS]		Weighted Residual Sum of Squares
\item[5G]		Fifth-Generation
\end{abbreviations}

\pagestyle{fancyplain}
\chapter{Introduction}\label{ch:introduction}
The major topic of the thesis is to provide novel distributed state estimation (SE) algorithms applicable to electric power systems. In essence, we provide algorithms that solve systems of linear and non-linear equations with real coefficients and variables. Consequently, the implications of our results go far beyond SE in electric power systems and can be applied in different areas, such as for demand response \cite{kleidaras} or water distribution systems \cite{nalini}.

Proposed SE algorithms are suitable to cope with near-real-time and asynchronous operation requirements, bypassing established routines (e.g., system observability). They are flexible and easy to distribute across local processors that are located at different physical locations, and/or in parallel fashion, where local processors run in parallel at the same physical place. Novel algorithms do not involve direct matrix inversion, which makes them attractive from the point of computational complexity and in some special conditions are numerically more stable.

In this chapter, we present the formulation of the problems that we intend to solve and introduce the basic terms, giving the reader a clearer picture of the problems. We clearly state assumptions and limitations that we use throughout this thesis and present main advantages over the current state-of-the-art SE models in electric power systems. Finally, we note that results presented in the thesis are based on our previous publications with additional clarifications, and enriched with many useful examples.

\section{Power System State Estimation}  
Electric power systems consist of generation, transmission and consumption spread over wide geographical areas and operated from the control centers by the system operators. Maintaining normal operation conditions is of the central importance for the power system operators \cite[Ch.~1]{abur}. Control centers are traditionally operated in centralized and independent fashion. However, increase in the system size and complexity, as well as external socio-economic factors, lead to deregulation of power systems, resulting in decentralized structure with distributed control centers. Cooperation in control and monitoring across distributed control centers is critical for efficient system operation. Consequently, existing centralized algorithms have to be redefined based on a new requirements for distributed operation, scalability and computational efficiency \cite{wu}.

The system monitoring is an essential part of the control centers, providing control and optimization functionality whose efficiency relies on accurate SE. The centralized SE assumes that the measurements collected across the system are available at the control center, where the centralized SE algorithm provides the system state estimate. Precisely, the centralized SE algorithm typically uses the Gauss-Newton method to solve the non-linear weighted least-squares (WLS) problem \cite{monticelli}, \cite{schweppe}. In contrast, decentralized SE distributes communication and computational effort across multiple control centers to provide the system state estimate. There are two main approaches to distributed SE: i) algorithms which require a global control center to exchange data with local control centers, and ii) algorithms with local control centers only \cite{gupta}. Distributed SE algorithms target the same state estimate accuracy as achievable using the centralized SE algorithms. 

Input data for the SE arrive from supervisory control and data acquisition (SCADA) technology. SCADA provides communication infrastructure to collect legacy measurements (voltage and line current magnitude, power flow and injection measurements) from measurement devices and transfer them to a central computational unit for processing and storage. In the last decades, phasor measurement units (PMUs) were developed that measure voltage and line current phasors and provide highly accurate measurements with high sampling rates. PMUs were instrumental to the development of the wide area measurement systems (WAMSs) that should provide real-time monitoring and control of electric power systems \cite{zhu, anna, bose2010smart}. The WAMS requires significant investments in deployment of a large number of PMUs across the system, which is why SCADA systems will remain important technology, particularly at medium and low voltage levels. However, with the evolution and adoption of PMU technology and, consequently, with decline in price of PMUs, it is realistic to assume that future power systems will be fully observable by PMUs \cite{gol}.  Exploiting PMU inputs by robust, decentralized and real-time SE solution calls for novel distributed algorithms and communication infrastructure that would support future WAMS and aims to detect and counteract power grid disturbances in real-time \cite{commag, terzija}. 

Monitoring and control capability of the system, besides the SE accuracy, strongly depends on the periodicity of evaluation of state estimates. Ideally, in the presence of both legacy and phasor measurements, SE should run at the scanning rate (seconds), but due to the computational limitations, practical SE algorithms run every few minutes or when a significant change occurs \cite{monticelli}. 
 
\subsection{Distributed SE Algorithms}
The mainstream distributed SE algorithms exploit matrix decomposition techniques applied over the Gauss-Newton method. These algorithms usually achieve the same accuracy as the centralized SE algorithm and work either with global control center \cite{korres, jiang, aburali, contaxis} or without it \cite{minot, marelli, tai, reza}. Furthermore, SE algorithms based on distributed optimization \cite{conejo}, and in particular, the alternating direction method of multipliers \cite{boyddistributed} became very popular\cite{giannakis, kekatos, matamoros}. Authors in \cite{anna} present the robust decentralized Gauss-Newton algorithm which provides flexible communication model, but suffers from slight performance degradation compared to the centralized SE. The work in \cite{poor} proposed a fully distributed SE algorithm for wide-area monitoring which provably converges to the centralized SE. The paper \cite{chakrabarti} proposed a new multi-area SE approach with the central coordinator, where is no requirement to share the topology information among the sub-areas and from sub-areas to the central coordinator. Recently, in \cite{guo}, a new hierarchical multi-area SE method is proposed, where the algorithm converges close to the centralized SE solution with improved convergence speed. We refer the reader to \cite{gomeztax} for a detailed survey of the distributed multi-area SE. In addition, we note that most of the distributed SE papers implicitly consider wide-area monitoring and transmission grid scenario, which is the approach we follow in this thesis.

\section{Belief Propagation Approach}
In this thesis, we solve the SE problem using probabilistic graphical models \cite{pearl}, a powerful tool for modeling the independence/dependence relationships among the systems of random variables \cite[Ch.~4]{barber}. Graphical models are useful since they provide a framework for studying a wide class of probabilistic models and associated algorithms. Factor graph represents a graphical model which allows a graph-based representation of probability density functions using variable and factor nodes connected by edges. In contrast to directed and undirected graphical models, factor graphs provide the details of the factorization in more explicit way \cite[Ch.~8]{bishop}. 

We represent the SE problem using factor graphs and solve it using the belief propagation (BP) algorithm. Applying the BP algorithm on probabilistic graphical models without loops, one obtains exact marginal distributions or a mode of the joint distribution of the system of random variables \cite{pearl}, \cite{bishop}. The BP algorithm can be also applied to graphical models with loops (loopy BP)\cite{loop}, although in that case, the solution may not converge to the correct marginals/modes of the joint distribution. BP is a fully distributed algorithm suitable for accommodation of distributed power sources and time-varying loads. Moreover, placing the SE into the probabilistic graphical modelling framework enables not only efficient inference, but also, a rich collection of tools for learning parameters or structure of the graphical model from observed data \cite{KollerFreidmanBook, bajovic}.

In the standard setup, the goal of the BP algorithm is to efficiently evaluate the marginals of a system of random variables $\mathbf y = [y_1,\dots,y_n]^{\mathrm{T}}$ described via the joint probability density function $g(\mathbf{y})$\footnote{With a slight abuse of notation, here we use $\mathbf{y}$ to define a general system of random variables, hereinafter we use different symbols to describe those. However, throughout the thesis, we use $\mathcal{V}$ to describe the set of nodes.}. Assuming that the function $g(\mathbf{y})$ can be factorized proportionally ($\propto$) to a product of local functions:
		\begin{equation}
        \begin{aligned}
        g(\mathbf{y}) \propto \prod_{i=1}^k \psi_i(\mathcal{V}_i),
        \end{aligned}
		\label{FG_factorize}
		\end{equation}
where $\mathcal{V}_i \subseteq \{y_1,\dots,y_n\}$, the marginalization problem can be efficiently solved using BP algorithm. The first step is forming a factor graph, which is a bipartite graph that describes the structure of the factorization \eqref{FG_factorize}. The factor graph structure comprises the set of factor nodes $\mathcal{F}=\{f_1,\dots,f_k\}$, where each factor node  $f_i$ represents local function $\psi_i(\mathcal{V}_i)$, and the set of variable nodes $\mathcal{V}=\{y_1,\dots,y_n\}$. The factor node $f_i$ connects to the variable node $y_s$ if and only if $y_s \in \mathcal{V}_i$ \cite{kschischang}.

The BP algorithm on factor graphs proceeds by passing two types of messages along the edges of the factor graph: 
\begin{enumerate}[label=(\roman*)]
\item a variable node $y_s \in \mathcal{V}$ to a factor node $f_i \in \mathcal{F}$ message $\mu_{y_s \to f_i}(y_s)$, and  
\item a factor node $f_i \in \mathcal{F}$ to a variable node $y_s \in \mathcal{V} $ message $\mu_{f_i \to y_s}(y_s)$.
\end{enumerate} 
Both variable and factor nodes in a factor graph process the incoming messages and calculate outgoing messages, where an output message on any edge depends on incoming messages from all other edges. BP messages represent "beliefs" about variable nodes, thus a message that arrives or departs a certain variable node is a function (distribution) of the random variable corresponding to the variable node. 

We are employing a loopy BP since the corresponding factor graph usually contains cycles. Loopy BP is an iterative algorithm, and requires a message-passing schedule. Typically, the scheduling where messages from variable to factor nodes, and messages from factor nodes to variable nodes, are updated in parallel in respective half-iterations, is known as synchronous scheduling. Synchronous scheduling updates all messages in a given iteration using the output of the previous iteration as an input \cite{elidan}.

\subsection{Belief Propagation SE Algorithms} 
The work in \cite{kavcic, kavcicconf} provides the first demonstration of BP applied to the SE problem. Although this work is elaborate in terms of using, e.g., environmental correlation via historical data, it applies BP to a linear approximation of the non-linear functions \cite{cain}. The non-linear model is recently addressed in \cite{ilic}, where tree-reweighted BP is applied using preprocessed weights obtained by randomly sampling the space of spanning trees. The work in \cite{fu} investigates Gaussian BP convergence for the DC model. Although the above results provide initial insights on using BP for distributed SE, the BP-based solution for non-linear SE model and the corresponding performance and convergence analysis is still missing. This thesis intends to fill this gap.

\subsection{Belief Propagation Based DC SE Algorithm} 
In general, the DC SE model is obtained by linearisation of the non-linear model, and the model ignores the reactive powers and transmission losses and takes into account only the active powers. Our methodology is to start with the simplest linear DC SE model and use insights obtained therein to derive the BP solution for the non-linear SE model; we refer to the corresponding method as the DC-BP. As a side-goal of this part, we aimed at thorough and detailed presentation of applying BP on the simple DC SE problem in order to make the powerful BP algorithm more accessible and more popular within the power-engineering community \cite{cosovicexdc}. 

Using the DC-BP algorithm, we demonstrate capability of the BP algorithm. More precisely, we propose a fast real-time state estimator based on the BP algorithm. In other words, unlike the usual scenario where measurements are transmitted directly to the control center, in the BP framework, measurements are locally collected and processed by local modules (at substations, generators or load units) that exchange BP messages with neighboring local modules. Furthermore, even in the scenario where measurements are transmitted to the centralized control entity, the BP solution is advantageous over the classical centralized solutions in that it can be easily distributed and parallelized for high performance. We note that the extension to the non-linear SE is possible within the same framework.

Finally, this thesis provides a novel and detailed convergence analysis of the BP-DC algorithm and points to extension of this analysis for the proposed BP-based non-linear SE algorithm, and an improved algorithm that applies synchronous scheduling with randomized damping.

\subsection{Belief Propagation Based Non-Linear SE Algorithms}
The non-linear SE model is defined using the measurement functions that precisely follow the physical laws that connect the measured variables and the state variables. In the process of deriving non-liner algorithms, we provide a step-by-step guide for application of BP algorithm to the SE problem, giving this part of the thesis strong tutorial flavor. 

Using insights from the linear BP-based DC SE model, we derive the native BP solution for the non-linear SE model. Unfortunately, as closed-form expressions for certain classes of BP messages cannot be obtained, that lead us to propose the AC-BP algorithm as an approximate BP solution for the non-linear SE model. However, we include the resulting AC-BP method for methodological reasons, although it is outperformed by the subsequent Gauss-Newton BP (GN-BP) method. 

Finally, as a main contribution, we make a key further step where we change the perspective of our BP approach and, instead of applying the BP directly onto the non-linear SE model, we present the solution where the BP is applied sequentially over the non-linear model, akin to what is done by the Gauss-Newton method. The resulting GN-BP represents a BP counterpart of the Gauss-Newton method achieving the same accuracy, however, preserving a number of advantages brought in by the BP framework.

\section{Contributions}
Some of the contributions have already been mentioned throughout previous discussion, however, as the main contribution, we adopt different methodology to derive efficient BP-based non-linear SE method, and propose the GN-BP algorithm. The GN-BP is the first BP-based solution for the non-linear SE model achieving exactly the same accuracy as the centralized SE via Gauss-Newton method.

In general, solving the SE problems using factor graphs and BP algorithm introduce a number of advantages over the current state-of-the-art in power systems SE algorithms:
\begin{itemize}[leftmargin=*]
\item In comparison with the distributed SE algorithms that exploit matrix decomposition, the BP-based SE algorithms are robust to ill-conditioned scenarios caused by significant differences between measurement variances, thus allowing inclusion of arbitrary number of pseudo-measurements without impact to the solution within the observable islands.
\item Due to the sparsity of the underlying factor graph, the algorithms has optimal computational complexity (linear per iteration), making it particularly suitable for solving large-scale systems.
\item BP-based algorithms can be easily designed to provide \emph{asynchronous} operation and integrated as part of the \emph{real-time} systems where newly arriving measurements are processed as soon as they are received. 
\item Algorithms can easily integrate new measurements: the arrival of a measurement at the control center will define a new factor node which will be seamlessly integrated in the graph as part of the time continuous process. 
\item In the multi-area scenario, BP-based algorithms can be implemented over the non-overlapping multi-area SE scenario without the central coordinator, where algorithms neither requires exchanging measurements nor local network topology among the neighboring areas.
\item BP-based algorithms are flexible and easy to \emph{distribute} and \emph{parallelize}. Thus, even if implemented in the framework of centralized SE, it can be flexibly matched to distributed computation resources (e.g., parallel processing on graphical-processing units).
\item The BP approach allows to define the novel bad data test that significantly improves the bad data detection.
\end{itemize}

Finally, even if electric power systems observable only by PMUs \cite{gol, xupmu, aburpmu} are beyond the thesis scope, we note that the BP can be applied to this problem. Then, in the multi-area scenario, areas exchange only ``beliefs'' about specific state variables, where algorithm ensures data privacy in the distributed architecture. Furthermore, the BP framework allows integration of legacy and phasor measurements in fifth-generation (5G) communication infrastructure, as we demonstrate in \cite{commag, stankovic}.

\section{Assumptions}
In this thesis, we provide BP-based algorithms using following assumptions: 
\begin{itemize}
\item the network topology and parameters are known without errors,
\item power system operates in the steady state under balanced condition,
\item phase shifting transformers are neglected,
\item the SE model is described with an overdetermined system of equations,
\item legacy and phasor measurements are uncorrelated,
\item measurement errors follow a zero-mean Gaussian distribution,
\item complex bus voltages are observed state variables.
\end{itemize}

\section{Summary}
In the thesis, we solve power system SE problems using factor graphs and BP algorithm. We proposed three BP-based algorithms: 
\begin{enumerate}[label=(\roman*)]
\item DC-BP to solve linear DC SE model, 
\item AC-BP that provides an approximate solution of the non-linear SE model, 
\item GN-BP that is the first BP-based solution for the non-linear SE model achieving exactly the same solution as the Gauss-Newton method.
\end{enumerate}
Presented architectures directly exploit system sparsity, can be flexibly paralellized (in the extreme case, the algorithm can be implemented as a fully distributed) and results in substantially lower computational complexity compared to traditional SE solutions.
\chapter{Power System State Estimation}	\label{ch:se_power}
\addcontentsline{lof}{chapter}{2 Power System State Estimation}
In this chapter, we review the state-of-the-art SE models in electric power systems. The power system represents a dynamic system, where power generation and power demand is changing values depending on various factors. The SE is used for describing the present state of the power system, unlike the power flow analysis which is used for defining load profiles, generator capabilities, voltage specification, contingency analysis, and planning. 
	\begin{figure}[ht]
	\centering
	\includegraphics[width=9.5cm]{./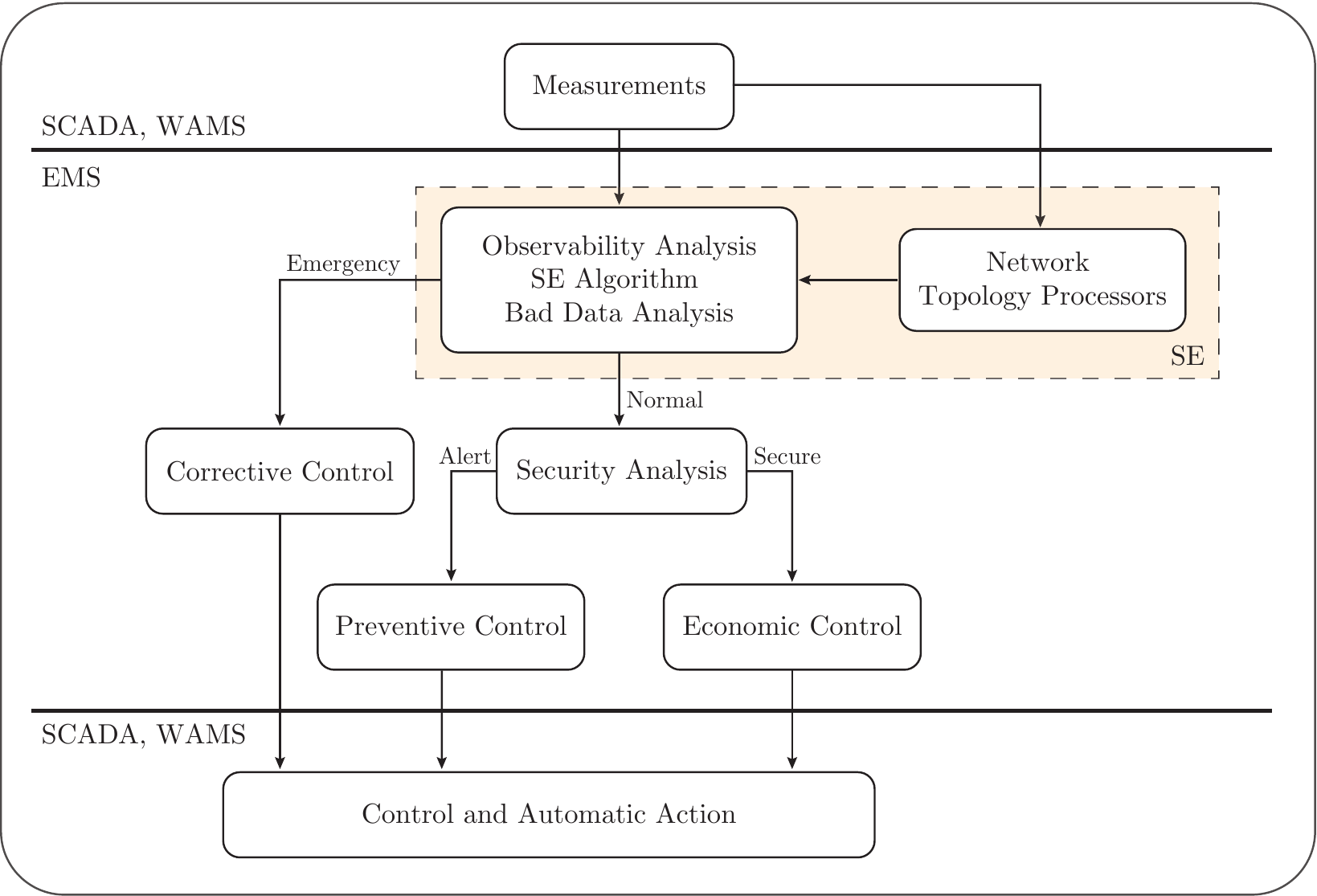}
	\caption{EMS configuration and SE routines.}
	\label{fig1_1}
	\end{figure}

The SE is a part of the energy management systems (EMS) and typically includes network topology processors, observability analysis, SE algorithm and bad data analysis, as shown in \autoref{fig1_1}. Data for the SE arrives from SCADA and WAMS technology. SCADA provides legacy measurements with low sampling rates insufficient to capture system dynamics in real-time and provides a snapshot SE with order of seconds and minutes latency. In contrast, WAMS provides data from PMUs with high sampling rates $(10\,\mbox{ms} - 20\,\mbox{ms})$ enabling the real-time system monitoring.

In a usual scenario, the SE model is described with the system of non-linear equations, where bus voltage magnitudes and bus voltage angles, transformer magnitudes of turns ratio and transformer angles of turns ratio as state variables $\mathbf{x}$. The core of the SE is the SE algorithm that provides an estimate of the system state $\mathbf{x}$ based on the network topology and available measurements. SE is performed on a bus/branch model and used to reconstruct the state of the system. Conventional SE algorithms are centralized and typically use the Gauss-Newton method to solve the non-linear WLS problem \cite{monticelli, schweppe}. Besides the non-linear SE model, the DC model is obtained by linearization of the non-linear model, and it provides an approximate solution. The DC state estimate is obtained through non-iterative procedure by solving the linear WLS problem.

\section{Measurement Model}	\label{sec:Measurement_Model}
The SE algorithm estimates the values of the state variables based on the knowledge of network topology and parameters, and measured values obtained from measurement devices spread across the power system. The knowledge of the network topology and parameters is provided by the network topology processor in the form of the bus/branch model, where branches of the grid are usually described using the two-port $\pi$-model \cite[Ch.~1,2]{abur}. The bus/branch model can be represented using a graph $\mathcal{G} =$ $(\mathcal{H},\mathcal{E})$, where the set of nodes $\mathcal{H} =$ $\{1,\dots,N  \}$ represents the set of buses, while the set of edges $\mathcal{E} \subseteq \mathcal{H} \times \mathcal{H}$ represents the set of branches of the power network. 

As an input, the SE requires a set of measurements $\mathcal{M}$ of different electrical quantities spread across the power network. Using the bus/branch model and available measurements, the observability analysis defines observable and unobservable parts of the network, subsequently defining the additional set of pseudo-measurements needed to determine the solution \cite[Ch.~4]{abur}.
Finally, the measurement model can be described as the system of equations \cite{schweppe}:  
		\begin{equation}
        \begin{aligned}
        \mathbf{z}=\mathbf{h}(\mathbf{x})+\mathbf{u},
        \end{aligned}
		\label{SE_model}
		\end{equation}
where $\mathbf {x}=[x_1,\dots,x_{n}]^{\mathrm{T}}$ is the vector of the state variables, $\mathbf{h}(\mathbf{x})=$ $[h_1(\mathbf{x})$, $\dots$, $h_k(\mathbf{x})]^{\mathrm{T}}$ is the vector of measurement functions,  $\mathbf{z} = [z_1,\dots,z_k]^{\mathrm{T}}$ is the vector of measurement values, and $\mathbf{u} = [u_1,\dots,u_k]^{\mathrm{T}}$ is the vector of uncorrelated measurement errors. The SE problem in transmission grids is commonly an overdetermined system of equations $(k>n)$ \cite[Sec.~2.1]{monticelliBook}.

Each measurement $M_i \in \mathcal{M}$ is associated with measured value $z_i$, measurement error  $u_i$, and measurement function $h_i(\mathbf{x})$. Under the assumption that measurement errors $u_i$ follow a zero-mean Gaussian distribution, the probability density function associated with the \textit{i}-th measurement is proportional to:
		\begin{equation}
        \begin{gathered}
        \mathcal{N}(z_i|\mathbf{x},v_i) \propto
        \exp\Bigg\{\cfrac{[z_i-h_i(\mathbf{x})]^2}{2v_i}\Bigg\},
        \end{gathered}
		\label{SE_Gauss_mth}
		\end{equation}
where $v_i$ is the measurement variance defined by the measurement error  $u_i$, and the measurement function $h_i(\mathbf{x})$ connects the vector of state variables $\mathbf{x}$ to the value of the \textit{i}-th measurement.

The SE in electric power systems deals with the problem of determining state variables $\mathbf{x}$ according to the noisy observed data $\mathbf{z}$ and a prior knowledge: 
		\begin{equation}
        \begin{gathered}
 		p(\mathbf{x}|\mathbf{z})=
		\cfrac{p(\mathbf{z}|\mathbf{x})p(\mathbf{x})}{p(\mathbf{z})}.
        \end{gathered}
		\label{SE_problem}
		\end{equation}
Assuming that the prior probability distribution $p(\mathbf{x})$ is uniform, and given that $p(\mathbf{z})$ does not depend on $\mathbf{x}$, the maximum a posteriori (MAP) solution of \eqref{SE_problem} reduces to the maximum likelihood solution, as given below \cite{barber}:
		\begin{equation}
        \begin{gathered}
		\hat{\mathbf{x}}=
		\mathrm{arg}\max_{\mathbf{x}}p(\mathbf{x}|\mathbf{z})=
		\mathrm{arg}\max_{\mathbf{x}}p(\mathbf{z}|\mathbf{x})=
		\mathrm{arg}\max_{\mathbf{x}}\mathcal{L}(\mathbf{z}|\mathbf{x}).
        \end{gathered}
		\label{SE_MAP_MLE}
		\end{equation}

\begin{tcolorbox}[title=Maximum Likelihood Estimator]
One can find the solution \eqref{SE_MAP_MLE} via maximization of the likelihood function $\mathcal{L}(\mathbf{z}|\mathbf{x})$, which is defined via likelihoods of $k$ independent measurements:  
		\begin{equation}
        \begin{gathered}
		\hat{\mathbf x}=
		\mathrm{arg} \max_{\mathbf{x}}\mathcal{L}(\mathbf{z}|\mathbf{x})=
		\mathrm{arg} \max_{\mathbf{x}}  
		\prod_{i=1}^k \mathcal{N}(z_i|\mathbf{x},v_i).
        \end{gathered}
		\label{SE_likelihood}
		\end{equation}
\end{tcolorbox}

It can be shown that the solution of the MAP problem can be obtained by solving the following optimization problem, known as the WLS problem \cite[Sec.~9.3]{wood}:
		\begin{equation}
        \begin{gathered}
		\hat{\mathbf x} =
		\mathrm{arg}\min_{\mathbf{x}} \sum_{i=1}^k 
		\cfrac{[z_i-h_i(\mathbf x)]^2}{v_i}.
        \end{gathered}
		\label{SE_WLS_problem}
		\end{equation}
The state estimate $\hat{\mathbf x}$ representing the solution of the optimization problem \eqref{SE_WLS_problem} is known as the WLS estimator, the maximum likelihood and WLS estimator are equivalent to the maximum a posteriori (MAP) solution \cite[Sec.~8.6]{barber}.

\subsection{Measurement Set}
The typical set of measurements $\mathcal{M}$ is defined according to type of measurement devices and includes:

\begin{enumerate}[label=(\roman*)]
\item Legacy measurements that contain active and reactive power flow and line current magnitude $\{M_{P_{ij}},$ $M_{Q_{ij}},$ $M_{I_{ij}} \}$, $(i,j) \in \mathcal{E}$, respectively; active and reactive power injection and bus voltage magnitude $\{M_{P_{i}},$ $M_{Q_{i}},M_{V_{i}} \}$, $i \in \mathcal{H}$, respectively.
\item Phasor measurements provide by PMUs contain line current $\mathcal{M}_{\ph{I}_{ij}}$, $(i,j) \in \mathcal{E}$ and bus voltage $\mathcal{M}_{\ph{V}_{i}}$, $i \in \mathcal{H}$ phasors, where each phasor measurement can be represented by a pair of measurements, for example, the bus voltage phasor measurement can be represented over the bus voltage magnitude and angle measurements $\mathcal{M}_{\ph{V}_{i}} =$ $\{{M}_{{V}_{i}}, {M}_{{\theta}_{i}}\}$, $i \in \mathcal{H}$.
\end{enumerate}

Each legacy measurement is described by non-linear measurement function $h_i(\mathbf{x})$, where the state vector $\mathbf{x}$ is given in polar coordinates. In contrast, phasor measurements can be described with both non-linear and linear measurement functions $h_i(\mathbf{x})$, where the state vector $\mathbf{x}$ can be given in polar or rectangular coordinates. Phasor measurements integration into the SE defines different models for solving the SE problem.

\subsection{The Equivalent Branch Model} \label{subsec:pi_model}
To solve SE problem, it is necessary to establish expressions of measurement functions  $\mathbf h (\mathbf x)$ related to measurements in the set $\mathcal{M}$. The equivalent $\pi$-model for a branch, shown in \autoref{fig1_2}, is sufficient to describe all measurement functions using currents, voltages and apparent powers. For simplicity, we assume that the model does not contain phase-shifting transformers.

The series admittance is $y_{ij}$ and shunt admittances of the branch are denoted as $y_{\mathrm{s}i} =$ $g_{\mathrm{s}i} + $ $\mathrm{j} b_{\mathrm{s}i}$ and $y_{\mathrm{s}j} =$ $g_{\mathrm{s}j} + $ $\mathrm{j} b_{\mathrm{s}j}$.
	\begin{figure}[ht]
	\centering
	\includegraphics[width=65mm]{./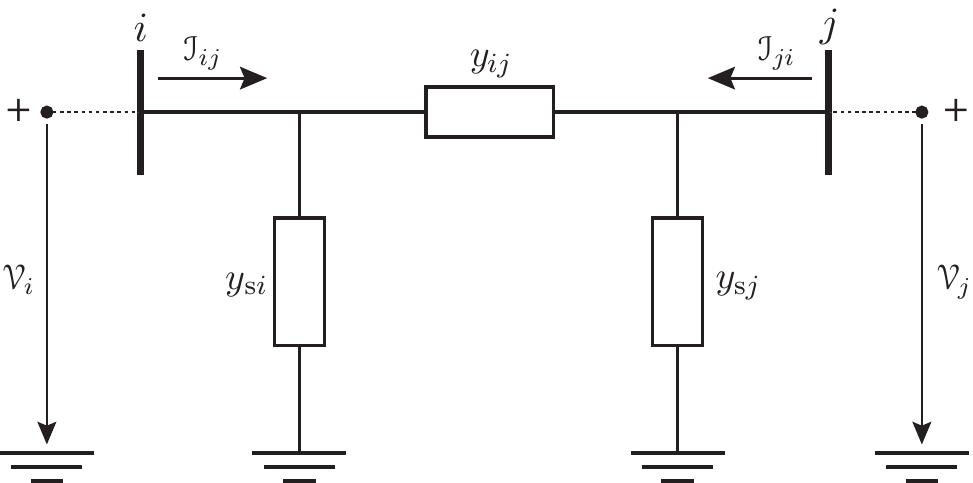}
	\caption{The equivalent branch $\pi$-model.
	\label{fig1_2}}
	\end{figure}
The branch series admittance $y_{ij}$ is inversely proportional to the branch series impedance $z_{ij}$:  
	\begin{equation}
   	\begin{aligned}
    y_{ij} = \frac{1}{z_{ij}} = 
    \frac{1}{{r_{ij}}+\mathrm{j}x_{ij}} = 
    \frac{r_{ij}}{r_{ij}^2+x_{ij}^2}-
    \mathrm{j}\frac{x_{ij}}{r_{ij}^2+x_{ij}^2}=g_{ij}+\mathrm{j}b_{ij},
   	\end{aligned}
   	\label{br_admittance}
	\end{equation}
where $r_{ij}$ is a resistance, $x_{ij}$ is a reactance, $g_{ij}$ is a conductance and $b_{ij}$ is a susceptance of the branch. In \autoref{fig1_2}, $\{i,j\} \in \mathcal{H}$ denotes buses, where, in power networks the bus represents elements such as a generator, load, substation, etc.

Using Kirchoff's laws, the complex current at buses $i$ and $j$ are:
	\begin{subequations}
   	\begin{align}
    \ph{I}_{ij} &= 
    {y}_{ij}(\ph{V}_{i}-\ph{V}_{j})+{y}_{\mathrm{s}i}
    \ph{V}_{i}=
    ({y}_{ij}+{y}_{\mathrm{s}i})\ph{V}_{i}-{y}_{ij}\ph{V}_{j}
    \label{current_ij}\\
    \ph{I}_{ji} &=
    - {y}_{ij}(\ph{V}_{i}-\ph{V}_{j})+{y}_{\mathrm{s}j}
    \ph{V}_{j}=
    ({y}_{ij}+{y}_{\mathrm{s}j})\ph{V}_{j}-{y}_{ij}\ph{V}_{i}.
    \label{current_ji}
   	\end{align}
   	\label{current_ij_ji}%
	\end{subequations}		
From \eqref{current_ij_ji} the complex currents at the bus are proportional to admittances incident to the bus (i.e. the sum of admittances) and the admittance between buses. These equations refer to the Node-Voltage method, and we apply \eqref{current_ij} to derive SE models (i.e., measurement functions). Further, complex bus voltages can be written:
	\begin{subequations}
   	\begin{align}
    \ph{V}_{i}&=V_{i}\mathrm{e}^{\mathrm{j}\theta_{i}} 
    = V_i\cos\theta_i + \mathrm{j}V_i\sin\theta_i \label{vol_polar1} \\ 
    \ph{V}_{j}&=V_{j}\mathrm{e}^{\mathrm{j}\theta_{j}}  
    = V_j\cos\theta_j + \mathrm{j}V_j\sin\theta_j, 
   	\end{align}
   	\label{vol_polar}%
	\end{subequations}
where $V_i$ and $V_j$ are bus voltage magnitudes, and $\theta_i$ and $\theta_j$ are bus voltage angles at buses $i$ and $j$. The apparent power $\ph S_{ij}$ from bus $i$ to bus $j$ is equal to:
	\begin{equation}
   	\begin{aligned}
	\ph {S}_{ij}&=\ph{V}_i \ph{I}_{ij}^* = P_{ij}+\mathrm{j}Q_{ij},
	\end{aligned}
   	\label{apparent_ij}
	\end{equation}
where $P_{ij}$ and $Q_{ij}$ represent active and reactive power flow from bus $i$ to bus $j$. 

Further, the injection complex current into the bus $i \in \mathcal{H}$ can be obtained by observing a set of buses $\mathcal{H}_i \setminus i = \{k, \dots, K\} \subset \mathcal{H}$ connected to the bus $i$, illustrated in \autoref{fig1_3}. 
	\begin{figure}[ht]
	\centering
	\includegraphics[width=48mm]{./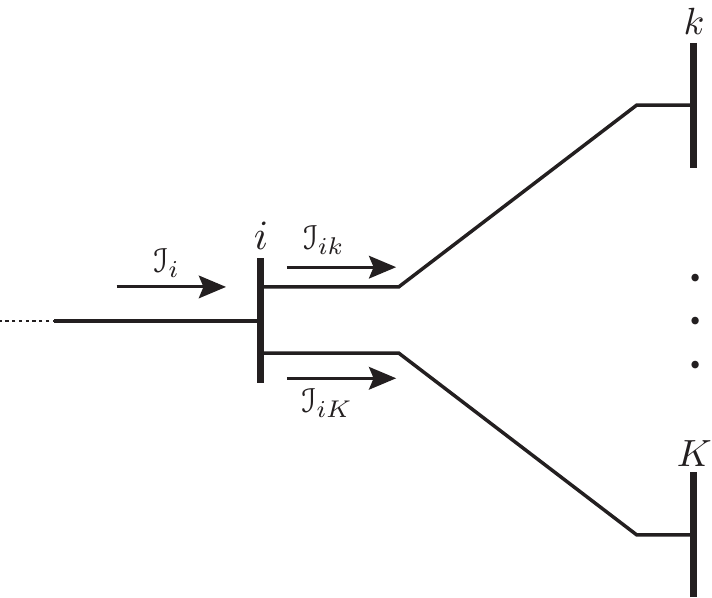}
	\caption{The set of buses $\mathcal{H}_i \setminus i = \{k, \dots, K\} 
	\subset \mathcal{H}$ connected to the bus $i$.} 
	\label{fig1_3}
	\end{figure}	
Using Kirchoff's law and \eqref{current_ij}, the injection complex current $\ph{I}_{i}$ into the bus $i$ is defined:
	\begin{equation}
   	\begin{aligned}
    \ph{I}_{i} = 
    \ph{I}_{ik}+...+\ph{I}_{iK}= 
    \sum\limits_{j \in \mathcal{H}_i \setminus i} 
    \Big[({y}_{ij}+{y}_{\mathrm{s} i})
    \ph{V}_{i}-{y}_{ij}\ph{V}_{j} \Big].  
   	\end{aligned}
   	\label{injcurr_sum}
	\end{equation}

In general, for the electric power system with $\mathcal{H} =$ $\{1,\dots,N  \}$ buses, the injection complex current $\ph{I}_{i}$ for each bus $i \in \mathcal{H}$ can be computed using:
	\begin{equation}
   	\begin{aligned}
    \ph{I}_{i} =\sum\limits_{j=1}^N \Big[({y}_{ij}+{y}_{\mathrm{s}i})
    \ph{V}_{i}-{y}_{ij}\ph{V}_{j} \Big].
   	\end{aligned}
   	\label{injcurr_vol_all}
	\end{equation}
Further, the expanded form is:
	\begin{equation}
   	\begin{aligned}
    Y_{11}&\ph{V}_{1}+Y_{12}\ph{V}_{2}+Y_{13}\ph{V}_{3} + 
    \cdots+ Y_{1N}\ph{V}_{N}= \ph{I}_{1} \\
    Y_{21}&\ph{V}_{1}+Y_{22}\ph{V}_{2}+Y_{23}\ph{V}_{3} + 
    \cdots+ Y_{2N}\ph{V}_{N}= \ph{I}_{2} \\
    \;\vdots  & \\
    Y_{N1}&\ph{V}_{1}+Y_{N2}\ph{V}_{2}+Y_{N3}\ph{V}_{3} + 
    \cdots+ Y_{NN}\ph{V}_{N}= \ph{I}_{N}.
   	\end{aligned}
   	\label{injcurr_expa}
	\end{equation}
Above system of equations can be written in the the matrix form:
	\begin{equation}
   	\begin{aligned}
    \mathbf{Y}\pmb{\mathscr{V}}=\pmb{\mathscr{I}},
   	\end{aligned}
   	\label{injcurr_mat}
	\end{equation}
where the elements of the bus or nodal admittance matrix $\mathbf{Y}$, when the bus is incident to the branch, can be formed as:
	\begin{equation}
   	\begin{aligned}
  	Y_{ij}= G_{ij} + \mathrm{j}B_{ij} = 
  	\begin{cases}
   	\sum\limits_{j\in \mathcal{H}_i \setminus i} 
   	({y}_{ij}+{y}_{\mathrm{s}i}), & \text{if} \;\; i=j 
   	 \;\;(\mathrm{diagonal\;element})\\
   	-{y}_{ij}, & \text{if} \;\; i \not = j
   	 \;\;(\mathrm{non-diagonal\;element).}
	\end{cases}
	\end{aligned}
   	\label{adm_mat_ele}
	\end{equation}
When the branch is not incident (or adjacent) to the bus the corresponding element in the nodal admittance matrix $\textbf{Y}$ is equal to zero. The nodal admittance matrix $\mathbf{Y}$ is a sparse matrix (i.e., a small number of elements are non-zeros) for a real power systems. Note that, if bus $i$ contains shunt element (capacitor or reactor), positive or negative susceptance value will be added to the diagonal element $i = j$ of the matrix $\mathbf{Y}$. Although it is often assumed that the matrix $\mathbf{Y}$ is symmetrical, it is not a general case, for example, in the presence of phase shifting transformers the matrix $\mathbf{Y}$ is not symmetrical \cite[Sec.~9.6]{stevenson}.

The apparent power injection $\ph S_i$ into the bus $i$ is a function of the complex voltage $\ph{V}_i$ at the bus and the conjugate value of the injection complex current $\ph{I}_i$ into the bus $i$:
	\begin{equation}
   	\begin{aligned}
  	\ph {S}_{i} =\ph{V}_{i}\ph{I}_{i}^* = P_i + \mathrm{j}Q_i,
	\end{aligned}
   	\label{apparent_inj}
	\end{equation}
where $P_i$ and $Q_i$ represent active power and reactive power injection into bus $i$. According to \eqref{injcurr_expa}, \eqref{adm_mat_ele} and \eqref{apparent_inj} apparent injection power $S_i$ into the bus $i$ is:
	\begin{equation}
   	\begin{aligned}
   	\ph {S}_{i} =\ph{V}_{i}\sum\limits_{j \in \mathcal{H}_i} {Y}_{ij}^*
   	\ph{V}_{j}^*,                   
	\end{aligned}
   	\label{apparent_inj_vol}
	\end{equation}
where $\mathcal{H}_i$ is the set of buses adjacent to the bus $i$, including the bus $i$. Using \eqref{vol_polar}, apparent injection power $\ph S_i$ is defined:
	\begin{equation}
   	\begin{aligned}
   	\ph {S}_{i} ={V}_{i}\sum\limits_{j \in \mathcal{H}_i} {V}_{j}
   	\mathrm{e}^{\mathrm{j}\theta_{ij}}(G_{ij}-\mathrm{j}B_{ij}).   
	\end{aligned}
   	\label{apparent_inj_state}
	\end{equation}

\subsection{State Variables}
In typical scenario, the SE model takes complex bus voltages and transformer turns ratio as state variables $\mathbf{x}$. Without loss of generality, in the rest of the thesis, for the SE model we observe complex bus voltages $\ph V_i$, $i \in \mathcal{H}$ as state variables: 
	\begin{equation}
   	\begin{aligned}
    \ph V_i = V_{i}\mathrm{e}^{\mathrm{j}\theta_{i}} = 
    \Re {(\ph V_i)} + \mathrm{j} \Im{(\ph V_i)},
   	\end{aligned}
   	\label{state_vol}
	\end{equation} 
where $\Re {(\ph V_i)}$ and $\Im{(\ph V_i)}$ represent the real and imaginary components of the complex bus voltage $\ph V_i$, respectively.   

Thus, the vector of state variables $\mathbf{x}$ can be given in polar coordinates $\mathbf x \equiv[\bm \uptheta,\mathbf V]^{\mathrm{T}}$, where we observe bus voltage angles and magnitudes as state variables respectively: 
	\begin{equation}
   	\begin{aligned}
    \bm \uptheta&=[\theta_1,\dots,\theta_N]\\
    \mathbf V&=[V_1,\dots, V_N].
   	\end{aligned}
   	\label{polar_coord}
	\end{equation} 
One voltage angle from the vector $\bm \uptheta$ corresponds to the slack or reference bus where the voltage angle has a given value. Consequently, the SE operates with  $n=2N - 1$ state variables\footnote{For convenience, BP-based SE algorithms take state variables defined with \eqref{polar_coord} as probabilistic variable nodes, where each state variable defines a variable node (i.e., the number of state variables is $n=2N$).}. The conventional SE model in the presence of legacy measurements usually implies above approach. 

Furthermore, the vector of state variables $\mathbf{x}$ can be given in rectangular coordinates $\mathbf x \equiv[\mathbf{V}_\mathrm{re},\mathbf{V}_\mathrm{im}]^{\mathrm{T}}$, where we can observe real and imaginary components of bus voltages as state variables:   
	\begin{equation}
   	\begin{aligned}
    \mathbf{V}_\mathrm{re}&=\big[\Re(\ph{V}_1),\dots,\Re(\ph{V}_N)\big]\\
	\mathbf{V}_\mathrm{im}&=\big[\Im(\ph{V}_1),\dots,\Im(\ph{V}_N)\big].     
   	\end{aligned}
   	\label{rect_coord}
	\end{equation} 
One of the elements from the vector $\mathbf{V}_\mathrm{im}$ corresponds to the slack bus. This way of assignment is frequently used for phasor measurements, whereupon measurement functions $h_i(\mathbf{x})$ become linear. However, same as before, the number of state variables is $n=2N - 1$.

\section{State Estimation Models} \label{subsec:se_models}
Power system SE models can be defined in several ways by using different criteria, such as type of measurements or according to state variables and measurements representation, as well as whether the system is linear or non-linear and how to interpret the obtained state estimator. 

\autoref{fig1_4} shows SE models described with measurement functions that precisely follow the physical laws. In general, the model where only legacy measurements exist is described with non-linear measurement functions, where state variables are given in the polar coordinate system $\mathbf x \equiv[\bm \uptheta,\mathbf V]^{\mathrm{T}}$, and it defines the conventional SE model, described in \autoref{sub:leg_meas}.
    \begin{figure}[ht]
    \centering
    \includegraphics[width=11.5cm]{./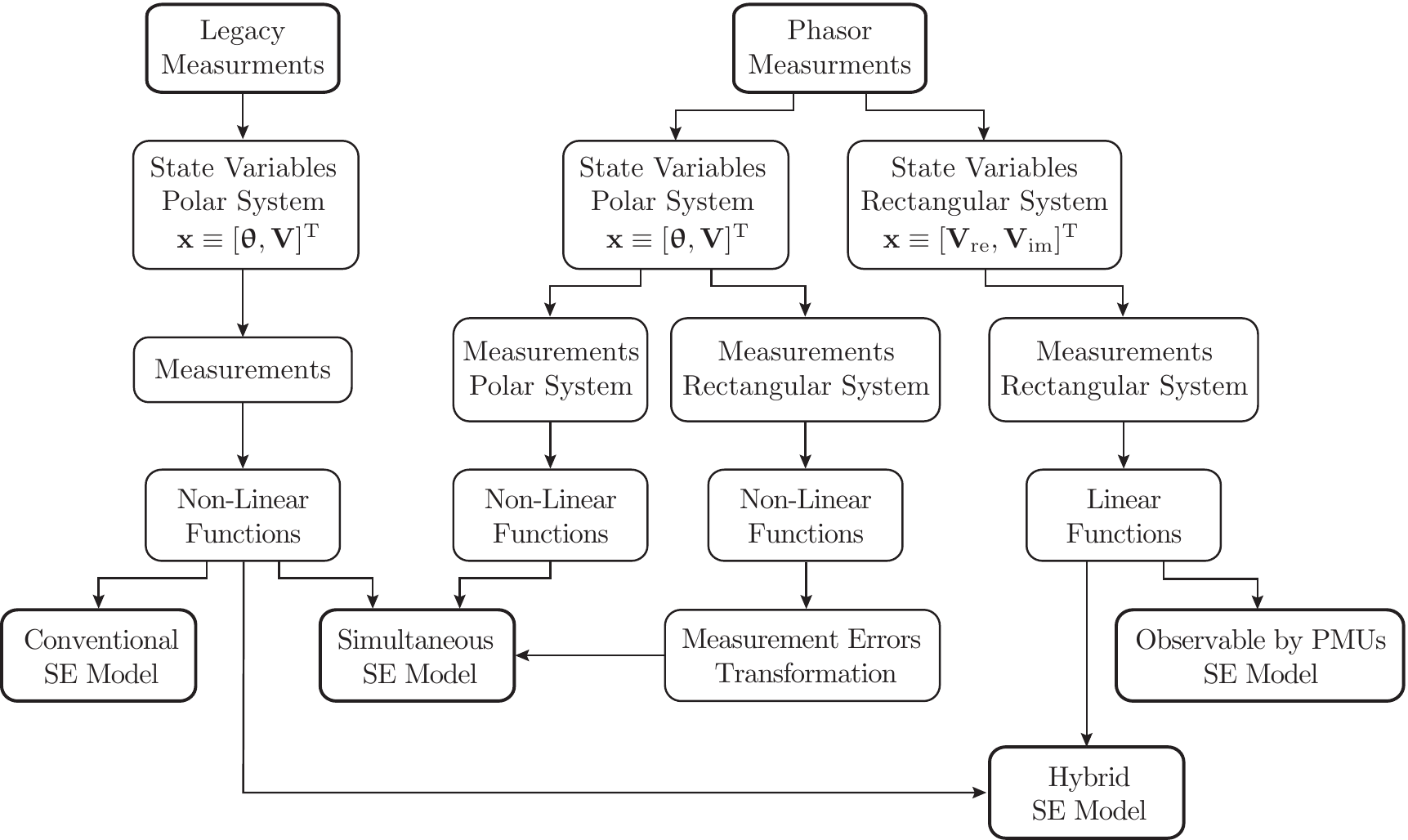}
    \caption{SE models described with measurement functions 
    that precisely follow the physical laws.}
    \label{fig1_4}
    \end{figure}

The simultaneous SE model represents the expansion of the conventional SE model with phasor measurements. State variables are given in the polar coordinate system $\mathbf x \equiv[\bm \uptheta,\mathbf V]^{\mathrm{T}}$, while phasor measurements can be given in the polar or rectangular coordinates. Phasor measurements in polar coordinate system enable straightforward inclusion in the conventional SE model (see \autoref{sub:pmu_polar}), whereas it is necessary to convert measurement variances for the case of phasor measurements in the rectangular coordinate system (see \autoref{sub:pmu_rect}) \cite{catalina}.

Hybrid SE models \cite{yang, kashyap, ghosh, dzafic, phadke} use advantages of linear functions related to phasor measurements, where state variables are given in the rectangular coordinate system $\mathbf x \equiv[\mathbf{V}_\mathrm{re},\mathbf{V}_\mathrm{im}]^{\mathrm{T}}$. Finally, to provide a state estimator only with PMUs, the system needs to be observable by PMUs only, which is currently difficult to achieve. However, with the evolution and adoption of PMU technology and, consequently, decline in the price of PMUs, it is realistic to assume that future power systems will be fully observable by PMUs, where the SE model becomes linear \cite{gol}, as will be described in \autoref{sub:pmu_linear}.

\autoref{fig1_5} shows SE models related to the SE accuracy and solving methods. In the presence of legacy measurements where measurement functions follow the physical laws, the SE model represents the non-convex problem and the Gauss-Newon provides a solution, described in \autoref{subsec:gauss-newton}.  
    \begin{figure}[ht]
    \centering
    \includegraphics[width=5.9cm]{./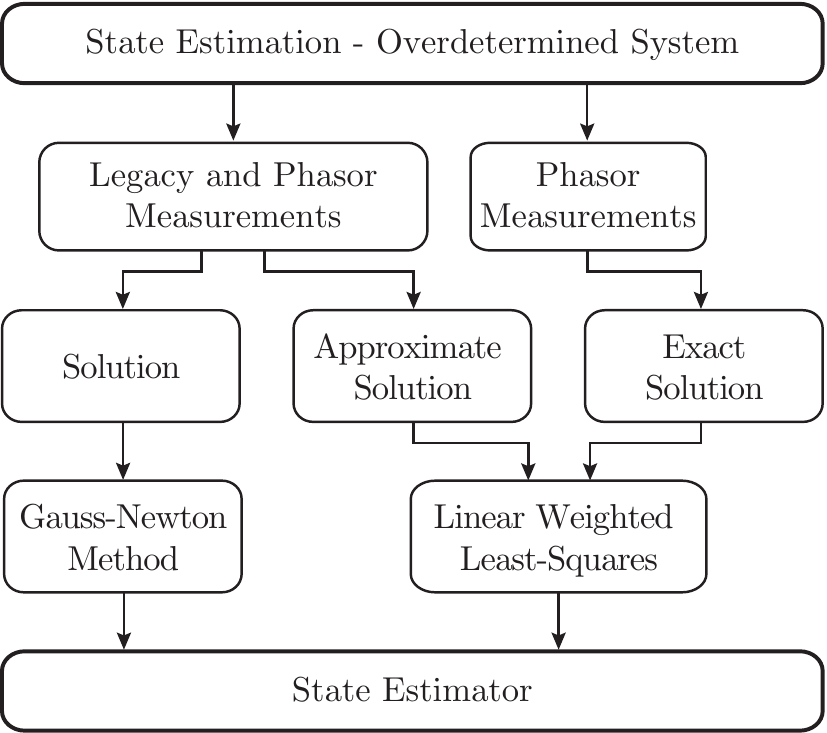}
    \caption{Different SE models related to the SE accuracy and solving methods.}
    \label{fig1_5}
    \end{figure}

The approximate solution is related with the DC SE model, obtained by linearisation of the non-linear model. The DC model ignores the reactive powers and transmission losses and takes into account only the active powers. Therefore, the DC SE takes only bus voltage angles as state variables, see \autoref{sec:dc_se_model}.     
 
Electric power systems observable by PMUs results with linear SE model where measurement functions follow the physical laws, thus obtained solution represent the optimal or exact state estimator. This model is beyond the thesis scope.  

\section{The Gauss-Newton Method} \label{subsec:gauss-newton}
In the presence of both, legacy and phasor measurements, the system \eqref{SE_model} in general represents the system of non-linear equations. The Gauss-Newton method is typically used to solve the non-linear SE model defined using measurement functions $\mathbf {h(x)}$ that precisely follow the physical laws that connect the measured variables and the state variables.  

\begin{tcolorbox}[title=Gauss-Newton Method]
Based on the available set of measurements $\mathcal{M}$, the WLS estimator $\hat{\mathbf x}$, i.e., the solution of the WLS problem \eqref{SE_WLS_problem}, can be found using the Gauss-Newton method:	
		\begin{subequations}
        \begin{gather}  
		\Big[\mathbf J (\mathbf x^{(\nu)})^\mathrm{T} \mathbf R^{-1} 
		\mathbf J (\mathbf x^{(\nu)})\Big] \Delta \mathbf x^{(\nu)} =
		\mathbf J (\mathbf x^{(\nu)})^\mathrm{T}
		\mathbf R^{-1} \mathbf r (\mathbf x^{(\nu)})
		\label{AC_GN_increment}\\
		\mathbf x^{(\nu+1)} = 
		\mathbf x^{(\nu)} + \Delta \mathbf x^{(\nu)}, \label{AC_GN_update}
        \end{gather}
        \label{AC_GN}%
		\end{subequations}	
where $\nu = \{0,1,2,\dots\}$ is the iteration index, $\Delta \mathbf x \in \mathbb {R}^{n}$ is the vector of increments of the state variables, $\mathbf J (\mathbf x)\in \mathbb {R}^{k\mathrm{x}n}$ is the Jacobian matrix of measurement functions $\mathbf h (\mathbf x)$ at $\mathbf x=\mathbf x^{(\nu)}$, $\mathbf{R}\in \mathbb {R}^{k\mathrm{x}k}$ is a measurement error covariance matrix, and $\mathbf r (\mathbf x) = \mathbf{z} - \mathbf h (\mathbf x)$ is the vector of residuals \cite[Ch.~10]{monticelliBook}. Note that, assumption that measurement errors are uncorrelated leads to the diagonal covariance matrix $\mathbf {R}$ that corresponds to measurement variances.
\end{tcolorbox}	

The non-linear SE represents non-convex problem arising from the non-linear measurement functions \cite{weng}. Due the fact that the values of state variables usually fluctuate in narrow boundaries, the non-linear SE model represents the mildly non-linear problem, where solutions are in a reasonable-sized neighborhood which enables the use of the Gauss-Newton method. The Gauss-Newton method can produce different rates of convergence, which can be anywhere from linear to quadratic \cite[Sec.~9.2]{hansen}. The convergence rate in regards to power system SE depends of the topology and measurements, and if parameters are consistent (e.g., free bad data measurement set), the method shows near quadratic convergence rate \cite[Sec.~11.2]{monticelliBook}.

\section{Legacy Measurments} \label{sub:leg_meas}
In the following, we provide expressions for measurement functions $\mathbf h (\mathbf x)$ and corresponding Jacobian elements of the matrix $\mathbf J (\mathbf x)$ related to legacy measurements, where state variables (i.e., unknown variables) are given in polar coordinates $\mathbf x \equiv[\bm \uptheta,\mathbf V]^{\mathrm{T}}$. To recall, legacy measurements contain active and reactive power flow and line current magnitude $\{M_{P_{ij}},$ $M_{Q_{ij}},$ $M_{I_{ij}}\}$, $(i,j) \in \mathcal{E}$, respectively; active and reactive power injection and bus voltage magnitude $\{M_{P_{i}},$ $M_{Q_{i}},M_{V_{i}} \}$, $i \in \mathcal{H}$, respectively.  

The active and reactive power flow at the branch $(i,j) \in \mathcal{E}$ that connects buses $i$ and $j$ can be obtained using \eqref{current_ij}, \eqref{vol_polar} and \eqref{apparent_ij}. It is easy to show that the apparent power $\ph S_{ij}$ equals:
	\begin{equation}
   	\begin{aligned}
	\ph {S}_{ij}&= 
	V_i^2 (g_{ij} + g_{\mathrm{s}i}) - 
	\mathrm{j}V_i^2(b_{ij} + b_{\mathrm{s}i}) -
	V_iV_j\mathrm{e}^{\mathrm{j}\theta_{ij}}(g_{ij} - \mathrm{j}b_{ij}),
	\end{aligned}
   	\label{apparent_state}
	\end{equation}
where $\theta_{ij} = \theta_i - \theta_j$ is the bus voltage angle difference between bus voltage angles at buses $i$ and $j$. The apparent power $\ph S_{ij}$ consists of the active $P_{ij}$ and reactive $Q_{ij}$ power flow \eqref{apparent_ij}. Hence, the real and imaginary components of the complex expression \eqref{apparent_state} define the active and reactive power flow measurement functions $P_{ij} \triangleq h_{P_{ij}}(\cdot)$ and $Q_{ij} \triangleq h_{Q_{ij}}(\cdot)$. 

\begin{tcolorbox}[title=Active and Reactive Power Flow Measurement Functions]
Thus, measurements $\{M_{P_{ij}},$ $M_{Q_{ij}}\} \in \mathcal{M}$, $(i,j) \in \mathcal{E}$ are associated with measurement functions:
	\begin{subequations}
   	\begin{align}
    h_{P_{ij}}(\cdot)&=
    {V}_{i}^2(g_{ij}+g_{si})-{V}_{i}{V}_{j}(g_{ij}\cos\theta_{ij}
    +b_{ij}\sin\theta_{ij})
    \label{mf_activeF}\\
    h_{Q_{ij}}(\cdot)&=
    -{V}_{i}^2(b_{ij}+b_{si})-{V}_{i}{V}_{j}(g_{ij}\sin\theta_{ij}
    -b_{ij}\cos\theta_{ij}).
    \label{mf_reactiveF}
   	\end{align}
   	\label{mf_flow}%
	\end{subequations}
\vspace{-0.7\baselineskip}	
\end{tcolorbox}

Jacobian expressions corresponding to the measurement function $h_{P_{ij}}(\cdot)$ are defined:
	\begin{subequations}
   	\begin{align}
    \cfrac{\mathrm \partial{h_{P_{ij}}(\cdot)}}
    {\mathrm \partial \theta_{i}}&=
    {V}_{i}{V}_{j}(g_{ij}\sin\theta_{ij}-b_{ij}\cos\theta_{ij}) 			
    \label{jac_ele_PijTi}
    \\
    \cfrac{\mathrm \partial{{h_{P_{ij}}}(\cdot)}}
    {\mathrm \partial \theta_{j}}&=
    -{V}_{i}{V}_{j}(g_{ij}\sin\theta_{ij}-b_{ij}\cos\theta_{ij})
    \label{jac_ele_PijTj}
    \\
    \cfrac{\mathrm \partial{{h_{P_{ij}}}(\cdot)}}
   	{\mathrm \partial V_{i}}&=
    -{V}_{j}(g_{ij}\cos\theta_{ij}+b_{ij}\sin\theta_{ij})+
    2V_{i}(g_{ij}+g_{si}) 
    \label{jac_ele_PijVi}
    \\
    \cfrac{\mathrm \partial{h_{{P_{ij}}}(\cdot)}}{\mathrm \partial V_{j}}&=
    -{V}_{i}(g_{ij}\cos\theta_{ij}+b_{ij}\sin\theta_{ij}). 
    \label{jac_ele_PijVj}    
	\end{align}
   	\label{jac_ele_Pij}%
	\end{subequations}  	
Further, Jacobian expressions corresponding to the measurement function $h_{Q_{ij}}(\cdot)$ are:
\begin{subequations}
   	\begin{align}
    \cfrac{\mathrm \partial{h_{{Q_{ij}}}(\cdot)}}
    {\mathrm \partial \theta_{i}}&=
    -{V}_{i}{V}_{j}(g_{ij}\cos\theta_{ij}+b_{ij}\sin\theta_{ij})
    \label{jac_ele_QijTi}
    \\
    \cfrac{\mathrm \partial{h_{{Q_{ij}}}(\cdot)}}
    {\mathrm \partial \theta_{j}}&=
    {V}_{i}{V}_{j}(g_{ij}\cos\theta_{ij}+b_{ij}\sin\theta_{ij})
    \label{jac_ele_QijTj}
    \\
    \cfrac{\mathrm \partial{h_{{Q_{ij}}}(\cdot)}}{\mathrm \partial V_{i}}&=
    -{V}_{j}(g_{ij}\sin\theta_{ij}-b_{ij}\cos\theta_{ij})-
    2V_{i}(b_{ij}+b_{si})
	\label{jac_ele_QijVi}    
    \\
    \cfrac{\mathrm \partial{h_{{Q_{ij}}}(\cdot)}}
    {\mathrm \partial V_{j}}&=-
    {V}_{i}(g_{ij}\sin\theta_{ij}-b_{ij}\cos\theta_{ij}).
    \label{jac_ele_QijVj}
	\end{align}
   	\label{jac_ele_Qij}%
	\end{subequations}

The line current magnitude at the branch $(i,j) \in \mathcal{E}$ that connects buses $i$ and $j$ can be obtained using \eqref{apparent_ij}:  
	\begin{equation}
   	\begin{aligned}
	I_{ij} = \cfrac{\sqrt{P_{ij}^2 + Q_{ij}^2}}{V_i}.
	\end{aligned}
   	\label{curmag_flow}
	\end{equation}
Using \eqref{mf_flow} and \eqref{curmag_flow}, the expression that defines the line current magnitude measurement function $I_{ij} \triangleq h_{I_{ij}}(\cdot)$ can be obtained.

\begin{tcolorbox}[title=Line Current Magnitude Measurement Function]	
Hence, measurement $M_{I_{ij}}$ $\in$ $\mathcal{M}$, $(i,j) \in \mathcal{E}$ is associated with measurement function:
	\begin{equation}
	h_{I_{ij}}(\cdot) = 
    [A_\mathrm{c} V_i^2 + B_\mathrm{c} V_j^2 - 2 V_iV_j
    (C_\mathrm{c} \cos \theta_{ij}-D_\mathrm{c} \sin \theta_{ij})]^{1/2},
    \label{mf_curmag}
	\end{equation}	
where coefficients are as follows: 		
	\begin{equation}
    \begin{aligned}
    A_\mathrm{c}&=(g_{ij}+g_{\mathrm{s}i})^2+(b_{ij}+b_{\mathrm{s}i})^2;&
    B_\mathrm{c}&=g_{ij}^2+b_{ij}^2\\
    C_\mathrm{c}&=g_{ij}(g_{ij}+g_{\mathrm{s}i})+b_{ij}(b_{ij}+b_{\mathrm{s}i});&
    D_\mathrm{c}&=g_{ij}b_{\mathrm{s}i}-b_{ij}g_{\mathrm{s}i}.
    \end{aligned}
    \nonumber
	\end{equation}
\end{tcolorbox}

Jacobian expressions corresponding to the line current magnitude measurement function $h_{I_{ij}}(\cdot)$ are:  
	\begin{subequations}
   	\begin{align}
	\cfrac{\mathrm \partial{h_{I_{ij}}(\cdot)}}
	{\mathrm \partial \theta_{i}}&=
    \cfrac{V_iV_j(D_\mathrm{c}\cos\theta_{ij}+
    C_\mathrm{c}\sin\theta_{ij})}{h_{I_{ij}}(\cdot)}
    \label{jac_ele_IijTi}	
    \\
    \cfrac{\mathrm \partial{h_{I_{ij}}(\cdot)}}
    {\mathrm \partial \theta_{j}}&=-
    \cfrac{V_iV_j(D_\mathrm{c}\cos\theta_{ij}+
    C_\mathrm{c}\sin\theta_{ij})}{h_{I_{ij}}(\cdot)}
    \label{jac_ele_IijTj}
    \\
    \cfrac{\mathrm \partial{h_{I_{ij}}(\cdot)}}{\mathrm \partial V_{i}}&=
    \cfrac{V_j(D_\mathrm{c}\sin\theta_{ij}-
    C_\mathrm{c}\cos\theta_{ij})+A_\mathrm{c}V_i}{h_{I_{ij}}(\cdot)}
    \label{jac_ele_IijVi}
    \\
    \cfrac{\mathrm \partial{h_{I_{ij}}(\cdot)}}{\mathrm \partial V_{j}}&=
    \cfrac{V_i(D_\mathrm{c}\sin\theta_{ij}-
    C_\mathrm{c}\cos\theta_{ij})+B_\mathrm{c}V_j}{h_{I_{ij}}(\cdot)}. 
    \label{jac_ele_IijVj}   
	\end{align}
   	\label{jac_ele_Iij}%
	\end{subequations}	

Note that, in deregulation environment current magnitude measurements can be found in significant numbers, especially in distribution grids. The use of line current magnitude measurements can lead to various problems (e.g., the ``flat start" will cause undefined Jacobian elements), which in turn may seriously deteriorate the performance of the state estimators \cite[Sec.~9.3]{abur}.

The active and reactive power injection into the bus $i \in \mathcal{H}$ can be obtained using \eqref{apparent_inj} and \eqref{apparent_inj_state}, where the real and imaginary components determine the active and reactive power injection measurement functions $P_{i} \triangleq h_{P_{i}}(\cdot)$ and $Q_{i} \triangleq h_{Q_{i}}(\cdot)$.

\begin{tcolorbox}[title=Active and Reactive Power Injection Measurement Functions]
Thus, measurements $\{M_{P_{i}},$ $M_{Q_{i}}\}$ $\in$ $\mathcal{M}$, $i \in \mathcal{H}$ are associated with measurement functions: 
	\begin{subequations}
   	\begin{align}
    h_{P_{i}}(\cdot) &={V}_{i}\sum\limits_{j \in \mathcal{H}_i} {V}_{j}
    (G_{ij}\cos\theta_{ij}+B_{ij}\sin\theta_{ij})\label{mf_injcetionA}\\
    h_{Q_{i}}(\cdot) &={V}_{i}\sum\limits_{j \in \mathcal{H}_i} {V}_{j}
    (G_{ij}\sin\theta_{ij}-B_{ij}\cos\theta_{ij})\label{mf_injcetionR}.     
	\end{align}
   	\label{mf_injcetion}%
	\end{subequations}
\vspace{-1.0\baselineskip}
\end{tcolorbox}
 
Jacobian expressions corresponding to the measurement function $h_{P_{i}}(\cdot)$ are defined:
	\begin{subequations}
   	\begin{align}
    \cfrac{\mathrm \partial{h_{P_{i}}(\cdot)}}
    {\mathrm \partial \theta_{i}}&=
    {V}_{i}\sum_{j \in \mathcal{H}_i \setminus i } {V}_{j}
    (-G_{ij}\sin\theta_{ij}+B_{ij}\cos\theta_{ij})
    \label{jac_ele_PiTi}
    \\
    \cfrac{\mathrm \partial{h_{P_{i}}(\cdot)}}{\mathrm \partial \theta_{j}}&=
    {V}_{i}{V}_{j}(G_{ij}\sin\theta_{ij}-B_{ij}\cos\theta_{ij})
    \label{jac_ele_PiTj}
    \\
    \cfrac{\mathrm \partial{h_{P_{i}}(\cdot)}}{\mathrm \partial V_{i}}&=
    \sum_{j \in \mathcal{H}_i \setminus i } 
    {V}_{j}(G_{ij}\cos\theta_{ij}+B_{ij}
    \sin\theta_{ij})+2{V}_{i}G_{ii}
    \label{jac_ele_PiVi}
    \\
    \cfrac{\mathrm \partial{h_{P_{i}}(\cdot)}}{\mathrm \partial V_{j}}&=
    {V}_{i}(G_{ij}\cos\theta_{ij}+B_{ij}\sin\theta_{ij}),
    \label{jac_ele_PiVj}
	\end{align}
   	\label{jac_ele_Pi}%
	\end{subequations}
where $\mathcal{H}_i \setminus i$ is the set of buses adjacent to the bus $i$. Furthermore, Jacobian expressions corresponding to the measurement function $h_{Q_{i}}(\cdot)$ are:
	\begin{subequations}
   	\begin{align}
   	\cfrac{\mathrm \partial{h_{Q_{i}}(\cdot)}}
   	{\mathrm \partial \theta_{i}}&=
    {V}_{i}\sum_{j \in \mathcal{H}_i \setminus i }
    {V}_{j}(G_{ij}\cos\theta_{ij}+B_{ij}\sin\theta_{ij})
    \label{jac_ele_QiTi}
    \\
    \cfrac{\mathrm \partial{h_{Q_{i}}(\cdot)}}{\mathrm \partial \theta_{j}}&=
    {V}_{i}{V}_{j}(-G_{ij}\cos\theta_{ij}-B_{ij}
    \sin\theta_{ij})
    \label{jac_ele_QiTj}
    \\
    \cfrac{\mathrm \partial{h_{Q_{i}}(\cdot)}}{\mathrm \partial V_{i}}&=
    \sum_{j \in \mathcal{H}_i \setminus i }
    {V}_{j}(G_{ij}\mbox{sni}\theta_{ij}-B_{ij}
    \cos\theta_{ij})-2{V}_{i}B_{ii}	
    \label{jac_ele_QiVi}
    \\
    \cfrac{\mathrm \partial{h_{Q_{i}}(\cdot)}}{\mathrm \partial V_{j}}&=
    {V}_{i}(G_{ij}\sin\theta_{ij}-B_{ij}\cos\theta_{ij}).  
    \label{jac_ele_QiVj}  
	\end{align}
   	\label{jac_ele_Qi}%
	\end{subequations}

The bus voltage magnitude on the bus $i \in \mathcal{H}$ simply defines corresponding measurement function $V_{i} \triangleq h_{V_{i}}(\cdot)$. 

\begin{tcolorbox}[title=Bus Voltage Magnitude Measurement Function]
Hence, measurement $M_{V_{i}}$ $\in$ $\mathcal{M}$, $i \in \mathcal{H}$ is associated with measurement function:
		\begin{equation}
        \begin{aligned}
        h_{V_{i}}(\cdot) = V_i.
        \end{aligned}
        \label{mf_voltage_leg}
		\end{equation} 
\end{tcolorbox}		 

Jacobian expressions corresponding to the measurement function $h_{V_{i}}(\cdot)$ are defined:  
	\begin{subequations}
   	\begin{align}
   	\cfrac{\mathrm \partial{{h_{V_{i}}(\cdot)}}}
   	{\mathrm \partial \theta_{i}}=0;\;\;\;\;  
    \cfrac{\mathrm \partial{{h_{V_{i}}(\cdot)}}}
    {\mathrm \partial \theta_{j}}=0   	
    \label{jac_ele_ViTiTj} \\   	
   	\cfrac{\mathrm \partial{{h_{V_{i}}(\cdot)}}}
   	{\mathrm \partial V_{i}}=1; \;\;\;\;  
    \cfrac{\mathrm \partial{{h_{V_{i}}(\cdot)}}}{\mathrm 
    \partial V_{j}}=0.	
    \label{jac_ele_ViViVj}
	\end{align}
   	\label{jac_ele_Vi}%
	\end{subequations}	

\subsection{The Conventional SE Model}
The conventional SE model implies the state vector in polar coordinates $\mathbf x \equiv[\bm \uptheta,\mathbf V]^{\mathrm{T}}$, where the vector of measurement functions $\mathbf h (\mathbf x)$ and corresponding Jacobian elements of the matrix $\mathbf J (\mathbf x)$ are expressed in the same coordinate system. If we denote with $N_{\mathrm{le}}$ the number of legacy measurements, the vector of measurement values  $\mathbf z_{\mathrm{le}} \in \mathbb {R}^{N_{\mathrm{le}}}$, the vector of measurement functions $\mathbf h_{\mathrm{le}}(\mathbf x) \in \mathbb {R}^{N_{\mathrm{le}}}$ and corresponding Jacobian matrix $\mathbf {J}_\mathrm{{le}}(\mathbf x) \in \mathbb {R}^{N_{\mathrm{le}} \times n}$ are: 
	\begin{equation}
   	\begin{gathered}
   	\mathbf z_{\mathrm{le}} =
    \begin{bmatrix}    	 
	\mathbf z_{\mathrm{P_{ij}}}\\[3pt]
	\mathbf z_{\mathrm{Q_{ij}}}\\[3pt]
	\mathbf z_{\mathrm{I_{ij}}}\\[3pt]
	\mathbf z_{\mathrm{P_{i}}}\\[3pt]
	\mathbf z_{\mathrm{Q_{i}}}\\[3pt]
	\mathbf z_{\mathrm{V_{i}}}
	\end{bmatrix};		
	\;\;\;\;   	
   	\mathbf h_{\mathrm{le}} (\mathbf x) =
    \begin{bmatrix}    	 
	\mathbf h_{\mathrm{P_{ij}}}(\mathbf x)\\[3pt]
	\mathbf h_{\mathrm{Q_{ij}}}(\mathbf x)\\[3pt]
	\mathbf h_{\mathrm{I_{ij}}}(\mathbf x)\\[3pt]
	\mathbf h_{\mathrm{P_{i}}}(\mathbf x)\\[3pt]
	\mathbf h_{\mathrm{Q_{i}}}(\mathbf x)\\[3pt]
	\mathbf h_{\mathrm{V_{i}}}(\mathbf x)
	\end{bmatrix};		
	\;\;\;\;
    \mathbf J_{\mathrm{le}}(\mathbf x)=
    \begin{bmatrix} 
    \mathbf {J}_\mathrm{{P_{ij}\uptheta}}(\mathbf x) &
	\mathbf {J}_\mathrm{{P_{ij}V}}(\mathbf x)    \\[3pt]
    \mathbf {J}_\mathrm{{Q_{ij}\uptheta}}(\mathbf x) &
	\mathbf {J}_\mathrm{{Q_{ij}V}}(\mathbf x)    \\[3pt]	
	\mathbf {J}_\mathrm{{I_{ij}\uptheta}}(\mathbf x) &
	\mathbf {J}_\mathrm{{I_{ij}V}}(\mathbf x)    \\[3pt]    
    \mathbf {J}_\mathrm{{P_{i}\uptheta}}(\mathbf x) &
	\mathbf {J}_\mathrm{{P_{i}V}}(\mathbf x)    \\[3pt]
    \mathbf {J}_\mathrm{{Q_{i}\uptheta}}(\mathbf x) &
	\mathbf {J}_\mathrm{{Q_{i}V}} (\mathbf x)   \\[3pt]
	\mathbf {J}_\mathrm{{V_{i}\uptheta}}(\mathbf x) &
	\mathbf {J}_\mathrm{{V_{i}V}}(\mathbf x) 	 
	\end{bmatrix}.
	\end{gathered}
   	\label{mv_leg}
	\end{equation}
Due to assumption of uncorrelated measurement errors (i.e., usual assumption for legacy measurements), the measurement error covariance matrix $\mathbf{R}_\mathrm{le} \in \mathbb {R}^{N_{\mathrm{le}} \times N_{\mathrm{le}}}$ has the diagonal structure:
	\begin{equation}
   	\begin{gathered}
	\mathbf R_{\mathrm{le}} = \mathrm{diag}	
    (
	\mathbf R_{\mathrm{P_{ij}}},
	\mathbf R_{\mathrm{Q_{ij}}},
	\mathbf R_{\mathrm{I_{ij}}},
	\mathbf R_{\mathrm{P_{i}}},
	\mathbf R_{\mathrm{Q_{i}}},
	\mathbf R_{\mathrm{V_{i}}}),	
    \end{gathered}
   	\label{mv_cova_mat}
	\end{equation}
and each covariance sub-matrix of $\mathbf R_{\mathrm{le}}$ is the diagonal matrix that contains measurement variances.

The solution of the described SE model can be found using Gauss-Newton method, where $\mathbf z \equiv \mathbf z_{\mathrm{le}}$, $\mathbf h (\mathbf x) \equiv \mathbf h_{\mathrm{le}} (\mathbf x)$, $\mathbf J(\mathbf x) \equiv \mathbf J_{\mathrm{le}}(\mathbf x)$ and $\mathbf R\equiv \mathbf R_{\mathrm{le}}$. In \autoref{app:A}, we provide a step-by-step illustrative example to describe the SE model where legacy measurements are involved. 

\section{Phasor Measurements with Polar State Vector} \label{sub:pmu_meas}
Integration of phasor measurements in the established model with legacy measurements can be done using different approaches. To recall, phasor measurements contain line current $\mathcal{M}_{\ph{I}_{ij}}$, $(i,j) \in \mathcal{E}$ and bus voltage $\mathcal{M}_{\ph{V}_{i}}$, $i \in \mathcal{H}$ phasors. More precisely, phasor measurement provided by PMU is formed by a magnitude, equal to the root mean square value of the signal, and phase angle \cite[Sec.~5.6]{phadkeBook}, where measurement errors are also related with magnitude and angle of the phasor. Thus, the PMU outputs phasor measurement in polar coordinates. In addition, PMU outputs can be observed in the rectangular coordinates with real and imaginary parts of the bus voltage and line current phasors, but in that case, the two measurements may be affected by correlated measurement errors. \cite[Sec.~7.3]{phadkeBook}. Note that throughout this section the vector of state variables is given in polar coordinates $\mathbf x \equiv[\bm \uptheta,\mathbf V]^{\mathrm{T}}$.

\subsection{Measurements in Polar Coordinates} \label{sub:pmu_polar}
In the majority of PMUs, the voltage and current phasors in polar coordinate system are regarded as ``direct'' measurements (i.e., output from the PMU). This representation delivers the more accurate state estimates in comparison to the rectangular measurement representation, but it requires larger computing time \cite{manousakis}. This representation is called simultaneous SE formulation, where measurements provided by PMUs are handled in the same manner as legacy measurements \cite{catalina}. Measurement errors are uncorrelated, with measurement variances that correspond to each components of the phasor measurements (i.e., magnitude and angle). 

The bus voltage phasor on the bus $i \in \mathcal{H}$ in the polar coordinate system is described:
	\begin{equation}
   	\begin{aligned}
    \ph V_i = V_{i}\mathrm{e}^{\mathrm{j}\theta_{i}},
   	\end{aligned}
   	\label{state_vol1}
	\end{equation} 
and due the fact that the state vector is given in the polar coordinate system $\mathbf x \equiv[\bm \uptheta,\mathbf V]^{\mathrm{T}}$, measurement functions are defined as $V_{i} \triangleq h_{V_{i}}(\cdot)$, $\theta_{i} \triangleq h_{\theta_{i}}(\cdot)$. 

\begin{tcolorbox}[title=Bus Voltage Phasor Measurement Functions]
Measurement $\mathcal{M}_{\ph{V}_{i}}= $ $\{M_{{V}_{i}},$ $M_{{\theta}_{i}}\}$ $\subseteq$ $\mathcal{M}$, $i \in \mathcal{H}$ is associated with measurement functions:
	\begin{subequations}
   	\begin{align}
	h_{{V}_{i}}(\cdot) = V_i\\
	h_{\theta_i}(\cdot) = \theta_i.
    \end{align}    
   	\label{vol_pha_meas_fun}
	\end{subequations} 
\vspace{-1.0\baselineskip}	
\end{tcolorbox}	
	
Jacobian expressions corresponding to the measurement function $h_{{V}_{i}}(\cdot)$ are defined:  
	\begin{subequations}
   	\begin{align}
   	\cfrac{\mathrm \partial{{h_{{V}_{i}}(\cdot)}}}
   	{\mathrm \partial \theta_{i}}=0;\;\;\;\;  
    \cfrac{\mathrm \partial{{h_{{V}_{i}}(\cdot)}}}
    {\mathrm \partial \theta_{j}}=0   	
    \label{jac_ele_ViTiTj_pmu} \\   	
   	\cfrac{\mathrm \partial{{h_{{V}_{i}}(\cdot)}}}
   	{\mathrm \partial V_{i}}=1; \;\;\;\;  
    \cfrac{\mathrm \partial{{h_{{V}_{i}}(\cdot)}}}{\mathrm 
    \partial V_{j}}=0,	
    \label{jac_ele_ViViVj_pmu}
	\end{align}
   	\label{jac_ele_Vi_pmu}%
	\end{subequations}	
while Jacobian expressions corresponding to the measurement function $h_{\theta_i}(\cdot)$ are:  
	\begin{subequations}
   	\begin{align}
   	\cfrac{\mathrm \partial{{h_{\theta_i}(\cdot)}}}
   	{\mathrm \partial \theta_{i}}=1;\;\;\;\;  
    \cfrac{\mathrm \partial{{h_{\theta_i}(\cdot)}}}
    {\mathrm \partial \theta_{j}}=0   	
    \label{jac_ele_TiTiTj_pmu} \\   	
   	\cfrac{\mathrm \partial{{h_{\theta_i}(\cdot)}}}
   	{\mathrm \partial V_{i}}=0; \;\;\;\;  
    \cfrac{\mathrm \partial{{h_{\theta_i}(\cdot)}}}{\mathrm 
    \partial V_{j}}=0.	
    \label{jac_ele_TiViVj_pmu}
	\end{align}
   	\label{jac_ele_Ti_pmu}%
	\end{subequations}

The line current phasor at the branch $(i,j) \in \mathcal{E}$ that connects buses $i$ and $j$ in polar coordinates is defined as:
	\begin{equation}
   	\begin{aligned}
	\mathscr{I}_{ij}&=I_{ij}\mathrm{e}^{\mathrm{j}\phi_{ij}},
    \end{aligned}    
   	\label{curr_pha}
	\end{equation}
where $I_{ij}$ and $\phi_{ij}$ are magnitude and angle of the line current phasor, respectively. The line current phasor measurement directly measures magnitude and angle of the phasor. It is easy to obtain magnitude and angle equations of the line current phasor using \eqref{current_ij}, where the vector of state variables is given in the polar coordinate system $\mathbf x \equiv[\bm \uptheta,\mathbf V]^{\mathrm{T}}$. Thus, the line current phasor measurement $\mathcal{M}_{\ph{I}_{ij}}= $ $\{M_{{I}_{ij}},$ $M_{{\phi}_{ij}}\}$ $\subseteq$ $\mathcal{M}$, $(i,j) \in \mathcal{E}$ is associated with magnitude $I_{ij} \triangleq h_{I_{ij}}(\cdot)$ and angle $\phi_{ij} \triangleq h_{\phi_{ij}}(\cdot)$ measurement functions.   
	
\begin{tcolorbox}[title=Magnitude of Line Current Phasor Measurement Function]	
To recall, measurement $M_{{I}_{ij}}$ $\in$ $\mathcal{M},$ $(i,j) \in \mathcal{E}$ is associated with measurement function:
	\begin{equation}
	h_{{I}_{ij}}(\cdot) = 
    [A_\mathrm{c} V_i^2 + B_\mathrm{c} V_j^2 - 2 V_iV_j
    (C_\mathrm{c} \cos \theta_{ij}-D_\mathrm{c} \sin \theta_{ij})]^{1/2},
    \label{mf_curmag_pmu}
	\end{equation}	
where coefficients are as follows: 		
	\begin{equation}
    \begin{aligned}
    A_\mathrm{c}&=(g_{ij}+g_{\mathrm{s}i})^2+(b_{ij}+b_{\mathrm{s}i})^2;&
    B_\mathrm{c}&=g_{ij}^2+b_{ij}^2\\
    C_\mathrm{c}&=g_{ij}(g_{ij}+g_{\mathrm{s}i})+b_{ij}(b_{ij}+b_{\mathrm{s}i});&
    D_\mathrm{c}&=g_{ij}b_{\mathrm{s}i}-b_{ij}g_{\mathrm{s}i}.
    \end{aligned}
    \nonumber
	\end{equation}
\end{tcolorbox}	
	
Jacobian expressions corresponding to the measurement function $h_{I_{ij}}(\cdot)$ are given in \eqref{jac_ele_Iij}.	

\begin{tcolorbox}[title=Angle of Line Current Phasor Measurement Function]			
Furthermore, measurement $M_{{\phi}_{ij}}$ $\in$ $\mathcal{M},$ $(i,j) \in \mathcal{E}$ is associated with measurement function:
	\begin{equation}
	h_{{\phi}_{ij}}(\cdot) =\mathrm{arctan}\Bigg[ 
	\cfrac{(A_\mathrm{a} \sin\theta_i
    +B_\mathrm{a} \cos\theta_i)V_i 
    - (C_\mathrm{a} \sin\theta_j + D_\mathrm{a}\cos\theta_j)V_j}
   	{(A_\mathrm{a} \cos\theta_i
    -B_\mathrm{a} \sin\theta_i)V_i 
    - (C_\mathrm{a} \cos\theta_j - D_\mathrm{a} \sin\theta_j)V_j} \Bigg],    
    \label{mf_curang}
	\end{equation}
where coefficients are as follows: 		
	\begin{equation}
    \begin{aligned}
    A_\mathrm{a}&=g_{ij}+g_{\mathrm{s}i};&
    B_\mathrm{a}&=b_{ij}+b_{\mathrm{s}i}\\
    C_\mathrm{a}&=g_{ij};&
    D_\mathrm{a}&=b_{ij}.
    \end{aligned}
    \nonumber
	\end{equation}	
\end{tcolorbox}	
	
Jacobian expressions corresponding to the measurement function $h_{{\phi}_{ij}}(\cdot)$ are:	
	\begin{subequations}
   	\begin{align}
	\cfrac{\mathrm \partial{h_{{\phi}_{ij}}(\cdot)}}
	{\mathrm \partial \theta_{i}}&=
	\frac{A_\mathrm{c} V_i^2+ (D_\mathrm{c} \sin \theta_{ij}-C_\mathrm{c} 
	\cos\theta_{ij})V_iV_j}
   	{h_{{I}_{ij}}(\cdot)}\\
   	\cfrac{\mathrm \partial{h_{{\phi}_{ij}}(\cdot)}}
	{\mathrm \partial \theta_{j}}&=
	\frac{B_\mathrm{c} V_j^2+ (D_\mathrm{c}
   	\sin \theta_{ij}-C_\mathrm{c} \cos
   	\theta_{ij})V_iV_j}{h_{{I}_{ij}}(\cdot)}	\\
   	\cfrac{\mathrm \partial{h_{{\phi}_{ij}}(\cdot)}}
	{\mathrm \partial V_{i}}&=
   	-\frac{V_j (C_\mathrm{c} \sin\theta_{ij}+D_\mathrm{c} \cos\theta_{ij})}
   	{h_{{I}_{ij}}(\cdot)}\\
   	\cfrac{\mathrm \partial{h_{{\phi}_{ij}}(\cdot)}}
	{\mathrm \partial V_{j}}&=
	\frac{V_i (C_\mathrm{c} \sin\theta_{ij}+D_\mathrm{c} \cos\theta_{ij})}
   	{h_{{I}_{ij}}(\cdot)}.		
	\end{align}
   	\label{jac_ele_Iij_pmu}%
	\end{subequations}

To summarize, presented measurement model associated with line current phasor measurements is non-linear. However, if we denote with $N_{\mathrm{ph}}$ the number of phasor measurements, the vector of measurement values  $\mathbf z_{\mathrm{ph}} \in \mathbb {R}^{2N_{\mathrm{ph}}}$, the vector of measurement functions $\mathbf h_{\mathrm{ph}}(\mathbf x) \in \mathbb {R}^{2N_{\mathrm{ph}}}$ and corresponding Jacobian matrix $\mathbf {J}_\mathrm{{ph}}(\mathbf x) \in \mathbb {R}^{(2N_{\mathrm{ph}}) \times n}$ are: 
	\begin{equation}
   	\begin{gathered}
   	\mathbf z_{\mathrm{ph}} =
    \begin{bmatrix}    	 
	\mathbf z_{\mathrm{V_{i}}}\\[3pt]
	\mathbf z_{\mathrm{\uptheta_{i}}}\\[3pt]
	\mathbf z_{\mathrm{I_{ij}}}\\[3pt]
	\mathbf z_{\mathrm{\upphi_{ij}}}
	\end{bmatrix};		
	\;\;\;\;   	
   	\mathbf h_{\mathrm{ph}} (\mathbf x) =
    \begin{bmatrix}    	 
	\mathbf h_{\mathrm{V_{i}}}(\mathbf x)\\[3pt]
	\mathbf h_{\mathrm{\uptheta_{i}}}(\mathbf x)\\[3pt]
	\mathbf h_{\mathrm{I_{ij}}}(\mathbf x)\\[3pt]
	\mathbf h_{\mathrm{\upphi_{i}}}(\mathbf x)
	\end{bmatrix};		
	\;\;\;\;
    \mathbf J_{\mathrm{ph}}(\mathbf x)=
    \begin{bmatrix} 
    \mathbf {J}_\mathrm{{V_{i}\uptheta}}(\mathbf x) &
	\mathbf {J}_\mathrm{{V_{i}V}}(\mathbf x)    \\[3pt]
    \mathbf {J}_\mathrm{{\uptheta_{i}\uptheta}}(\mathbf x) &
	\mathbf {J}_\mathrm{{\uptheta_{i}V}}(\mathbf x)    \\[3pt]	
	\mathbf {J}_\mathrm{{I_{ij}\uptheta}}(\mathbf x) &
	\mathbf {J}_\mathrm{{I_{ij}V}}(\mathbf x)    \\[3pt]    
    \mathbf {J}_\mathrm{{\upphi_{ij}\uptheta}}(\mathbf x) &
	\mathbf {J}_\mathrm{{\upphi_{ij}V}}(\mathbf x)	 
	\end{bmatrix}.
	\end{gathered}
   	\label{mv_pmu_pol}
	\end{equation}

When phasor measurements are given in polar coordinate system, measurement errors are uncorrelated and assume zero-mean Gaussian distribution whose covariance matrix $\mathbf{R}_\mathrm{ph} \in \mathbb {R}^{(2N_{\mathrm{ph}}) \times (2N_{\mathrm{ph}})}$ has the diagonal structure:
	\begin{equation}
   	\begin{gathered}
	\mathbf R_{\mathrm{ph}} = \mathrm{diag}	
    (
	\mathbf R_{\mathrm{V_{i}}},
	\mathbf R_{\mathrm{\uptheta_{i}}},
	\mathbf R_{\mathrm{I_{ij}}},
	\mathbf R_{\mathrm{\upphi_{ij}}}),	
    \end{gathered}
   	\label{mv_cova_pmu_pol}
	\end{equation}
where each covariance sub-matrix of $\mathbf R_{\mathrm{ph}}$ is the diagonal matrix that contains measurement variances.

The solution of the SE model with legacy and phasor measurements can be found using Gauss-Newton method, where:
	\begin{equation}
   	\begin{gathered}
   	\mathbf z \equiv
    \begin{bmatrix}    	 
	\mathbf z_{\mathrm{le}}\\[3pt]
	\mathbf z_{\mathrm{ph}}
	\end{bmatrix};		
	\;\;\;\;   	
   	\mathbf h (\mathbf x) \equiv
    \begin{bmatrix}    	 
	\mathbf h_{\mathrm{le}}(\mathbf x)\\[3pt]
	\mathbf h_{\mathrm{ph}}(\mathbf x)
	\end{bmatrix};		
	\;\;\;\;
    \mathbf J(\mathbf x)\equiv
    \begin{bmatrix} 
    \mathbf J_{\mathrm{le}}(\mathbf x)\\[3pt]
    \mathbf J_{\mathrm{ph}}(\mathbf x)
	\end{bmatrix}\;\;\;\;
	\mathbf R \equiv
    \begin{bmatrix}    	 
	\mathbf R_{\mathrm{le}} & \mathbf{0}\\
	\mathbf{0} &\mathbf R_{\mathrm{ph}}
	\end{bmatrix}.		
	\end{gathered}
   	\label{mv_pmu_pol_all}
	\end{equation}
In \autoref{app:A}, we provide a step-by-step illustrative example to describe the SE model with legacy and phasor measurements. 

\subsection{Measurements in Rectangular Coordinates} \label{sub:pmu_rect}
The bus voltage and line current phasors in rectangular coordinate system are regarded as ``indirect'' measurements obtained from measurements in polar coordinates \cite{manousakis}. Thus, measurements contain the real and imaginary parts of the line current phasor measurement and the real and imaginary parts of the bus voltage phasor measurement. As before, the vector of state variables is given in polar coordinates $\mathbf x \equiv[\bm \uptheta,\mathbf V]^{\mathrm{T}}$.

The bus voltage phasor on the bus $i \in \mathcal{H}$ in the rectangular coordinate system is given: 
	\begin{equation}
   	\begin{aligned}
	\ph{V}_{i}&=\Re(\ph{V}_{i}) +\mathrm{j}\Im(\ph{V}_{i}). 
    \end{aligned}    
   	\label{vol_pha_rct}
	\end{equation}
The state vector is given in polar coordinate system $\mathbf x \equiv[\bm \uptheta,\mathbf V]^{\mathrm{T}}$, hence using \eqref{vol_polar1}, one can obtain the real and imaginary components that define corresponding measurement functions $\Re(\ph{V}_{i}) \triangleq h_{\Re (\ph{V}_{i})}(\cdot)$ and $\Im(\ph{V}_{i}) \triangleq h_{\Im (\ph{V}_{i})}(\cdot)$, respectively.
	
\begin{tcolorbox}[title=Bus Voltage Phasor Measurement Functions]	
Measurement $\mathcal{M}_{\ph{V}_{i}}= $ $\{M_{{\Re (\ph{V}_{i})}},$ $M_{{\Im (\ph{V}_{i})}}\}$ $\subseteq$ $\mathcal{M}$, $i \in \mathcal{H}$ is associated with measurement functions:
	\begin{subequations}
   	\begin{align}
    h_{\Re (\ph{V}_{i})}(\cdot) &= V_i \cos\theta_i
    \label{mf_voltage_pmuRe}\\
    h_{\Im (\ph{V}_{i})}(\cdot) &= V_i \sin\theta_i. 
    \label{mf_voltage_pmuIm}   
	\end{align}
   	\label{mf_voltage_pmu_rct}%
	\end{subequations}	
\vspace{-1.5\baselineskip}		
\end{tcolorbox}

Jacobians expressions corresponding to the measurement function $h_{\Re (\ph{V}_{i})}(\cdot)$ are defined:
	\begin{subequations}
   	\begin{align}
   	\cfrac{\mathrm \partial{h_{\Re (\ph{V}_{i})}(\cdot)}}
   	{\mathrm \partial \theta_{i}}=
    -V_i \sin\theta_i;\;\;\;\;  
    \cfrac{\mathrm \partial{h_{\Re (\ph{V}_{i})}(\cdot)}}
    {\mathrm \partial \theta_{j}}=0 \\
   	\cfrac{\mathrm \partial{h_{\Re (\ph{V}_{i})}(\cdot)}}
   	{\mathrm \partial V_{i}}=
    \cos\theta_i;\;\;\;\;  
    \cfrac{\mathrm \partial{h_{\Re (\ph{V}_{i})}(\cdot)}}
    {\mathrm \partial V_{j}}=0,    
	\end{align}
   	\label{jac_ele_pmuRe_Vi}%
	\end{subequations}
while Jacobians expressions corresponding to the measurement function $h_{\Im (\ph{V}_{i})}(\cdot)$ are: 
	\begin{subequations}
   	\begin{align}
   	\cfrac{\mathrm \partial{h_{\Im (\ph{V}_{i})}(\cdot)}}
   	{\mathrm \partial \theta_{i}}=
    V_i \cos\theta_i;\;\;\;\; 
    \cfrac{\mathrm \partial{h_{\Im (\ph{V}_{i})}(\cdot)}}
    {\mathrm \partial \theta_{j}}=0\\
    \cfrac{\mathrm \partial{h_{\Im (\ph{V}_{i})}(\cdot)}}
   	{\mathrm \partial V_{i}}=
    \sin\theta_i;\;\;\;\; 
    \cfrac{\mathrm \partial{h_{\Im (\ph{I}_{ij})}(\cdot)}}
    {\mathrm \partial V_{j}}=0.
	\end{align}
   	\label{jac_ele_pmuIm_Vi}%
	\end{subequations}	

In contrast to measurements represented in the polar coordinates, measurement functions and corresponding Jacobian elements are non-linear functions, which makes the polar coordinate system preferable.  

The line current phasor at the branch $(i,j) \in \mathcal{E}$ that connects buses $i$ and $j$ in the rectangular coordinate system is given: 
	\begin{equation}
   	\begin{aligned}
	\ph{I}_{ij}&=\Re(\ph{I}_{ij}) +\mathrm{j}\Im(\ph{I}_{ij}). 
    \end{aligned}    
   	\label{vol_pha_rct1}
	\end{equation}
Using \eqref{current_ij} and \eqref{vol_polar}, where the state vector is given in polar coordinate system $\mathbf x \equiv[\bm \uptheta,\mathbf V]^{\mathrm{T}}$, the real and imaginary components of the line current phasor $\ph{I}_{ij}$ define measurement functions $\Re (\ph{I}_{ij}) \triangleq h_{\Re (\ph{I}_{ij})}(\cdot)$ and $\Im (\ph{I}_{ij}) \triangleq h_{\Im (\ph{I}_{ij})}(\cdot)$. 
	
\begin{tcolorbox}[title=Line Current Phasor Measurement Functions]	
Consequently, measurement $M_{\ph{I}_{ij}} = $ $\{M_{{\Re (\ph{I}_{ij})}},$ $M_{{\Im (\ph{I}_{ij})}}\}$ $\subseteq$ $\mathcal{M}$, $(i,j) \in \mathcal{E}$ is associated with measurement functions: 
	\begin{subequations}
   	\begin{align}
    h_{\Re (\ph{I}_{ij})}(\cdot) &= 
    V_i(A_\mathrm{a} \cos\theta_i - 
    B_\mathrm{a}\sin\theta_i) - 
    V_j(C_\mathrm{a}\cos\theta_j - 
    D_\mathrm{a}\sin\theta_j)\label{mf_current_pmuRe}\\
    h_{\Im (\ph{I}_{ij})}(\cdot) &=
    V_i(A_\mathrm{a}\sin\theta_i + 
    B_\mathrm{a}\cos\theta_i) - 
    V_j(C_\mathrm{a}\sin\theta_j + D_\mathrm{a}\cos\theta_j)\label{mf_current_pmuIm}.     
	\end{align}
   	\label{mf_current_pmu}%
	\end{subequations}
\vspace{-1.0\baselineskip}		
\end{tcolorbox}

Jacobians expressions corresponding to the measurement function $h_{\Re (\ph{I}_{ij})}(\cdot)$ are defined:	
	\begin{subequations}
   	\begin{align}
   	\cfrac{\mathrm \partial{h_{\Re (\ph{I}_{ij})}(\cdot)}}
   	{\mathrm \partial \theta_{i}}&=
    -V_i (A_\mathrm{a} \sin\theta_i + B_\mathrm{a}\cos\theta_i)\\
    \cfrac{\mathrm \partial{h_{\Re (\ph{I}_{ij})}(\cdot)}}
    {\mathrm \partial \theta_{j}}&=
    V_j (C_\mathrm{a} \sin\theta_j + D_\mathrm{a}\cos\theta_j)
    \\
    \cfrac{\mathrm \partial{h_{\Re (\ph{I}_{ij})}(\cdot)}}
   	{\mathrm \partial V_{i}}&=
    A_\mathrm{a} \cos\theta_i - B_\mathrm{a}\sin\theta_i\\
    \cfrac{\mathrm \partial{h_{\Re (\ph{I}_{ij})}(\cdot)}}
    {\mathrm \partial V_{j}}&=
    -C_\mathrm{a} \cos\theta_j + D_\mathrm{a}\sin\theta_j,
	\end{align}
   	\label{jac_ele_pmuRe_Iij}%
	\end{subequations}
while Jacobians expressions corresponding to the measurement function $h_{\Im (\ph{I}_{ij})}(\cdot)$ are:
	\begin{subequations}
   	\begin{align}
   	\cfrac{\mathrm \partial{h_{\Im (\ph{I}_{ij})}(\cdot)}}
   	{\mathrm \partial \theta_{i}}&=
    V_i (A_\mathrm{a} \cos\theta_i - B_\mathrm{a}\sin\theta_i)\\
    \cfrac{\mathrm \partial{h_{\Im (\ph{I}_{ij})}(\cdot)}}
    {\mathrm \partial \theta_{j}}&=
    -V_j (C_\mathrm{a} \cos\theta_j - D_\mathrm{a}\sin\theta_j)
    \\
   	\cfrac{\mathrm \partial{h_{\Im (\ph{I}_{ij})}(\cdot)}}
   	{\mathrm \partial V_{i}}&=
    A_\mathrm{a} \sin\theta_i + B_\mathrm{a}\cos\theta_i\\
    \cfrac{\mathrm \partial{h_{\Im (\ph{I}_{ij})}(\cdot)}}
    {\mathrm \partial V_{j}}&=
    -C_\mathrm{a} \sin\theta_j - D_\mathrm{a}\cos\theta_j.    
	\end{align}
   	\label{jac_ele_pmuIm_Iij}%
	\end{subequations}

Same as before, functions associated with line current phasor measurements are non-linear. In addition, the rectangular representation of the line current phasor resolves ill-conditioned problems that arise in polar coordinates due to small values of current magnitudes \cite{manousakis, catalina}. The main disadvantage of this approach is related to measurement errors, because measurment errors correspond to polar coordinates (i.e. magnitude and phase errors), and hence, the covariance matrix must be transformed from polar to rectangular coordinates \cite{poor_wu, phadke, nuquistate}. As a result, measurement errors of a single PMU are correlated and covariance matrix does not have diagonal form. Despite that, the measurement error covariance matrix is usually considered as diagonal matrix, which has the effect on the accuracy of the SE. Note that, combining representation of measurements in polar and rectangular is possible, for example, the bus voltage phasor in polar form and the line current phasor in rectangular form is often used \cite{catalina}. 

The vector of measurement values  $\mathbf z_{\mathrm{ph}} \in \mathbb {R}^{2N_{\mathrm{ph}}}$, the vector of measurement functions $\mathbf h_{\mathrm{ph}}(\mathbf x) \in \mathbb {R}^{2N_{\mathrm{ph}}}$ and corresponding Jacobian matrix $\mathbf {J}_\mathrm{{ph}}(\mathbf x) \in \mathbb {R}^{(2N_{\mathrm{ph}}) \times n}$ are: 
	\begin{equation}
   	\begin{gathered}
   	\mathbf z_{\mathrm{ph}} =
    \begin{bmatrix}    	 
	\mathbf z_{\Re (\mathrm{\ph{V}_{i}})}\\[3pt]
	\mathbf z_{\Im (\mathrm{\ph{V}_{i}})}\\[3pt]
	\mathbf z_{\Re (\mathrm{\ph{I}_{ij}})}\\[3pt]
	\mathbf z_{\Im (\mathrm{\ph{I}_{ij}})}	
	\end{bmatrix};		
	\;\;\;   	
   	\mathbf h_{\mathrm{ph}} (\mathbf x) =
    \begin{bmatrix} 
	\mathbf h_{\Re (\mathrm{\ph{V}_{i}})}(\mathbf x)\\[3pt]
	\mathbf h_{\Im (\mathrm{\ph{V}_{i}})}(\mathbf x)\\[3pt]
	\mathbf h_{\Re (\mathrm{\ph{I}_{ij}})}(\mathbf x)\\[3pt]
	\mathbf h_{\Im (\mathrm{\ph{I}_{ij}})}(\mathbf x)	       	 
	\end{bmatrix}		
	\\
    \mathbf J_{\mathrm{ph}}(\mathbf x)=
    \begin{bmatrix}  
	\mathbf {J}_{\Re (\mathrm{\ph{V}_{i}})\uptheta}(\mathbf x) &
	\mathbf {J}_{\Re (\mathrm{{\ph{V}_{i}})V}}(\mathbf x) \\[3pt]
	\mathbf {J}_{\Im (\mathrm{\ph{V}_{i}})\uptheta}(\mathbf x) &
	\mathbf {J}_{\Im (\mathrm{{\ph{V}_{i}})V}} (\mathbf x) \\[3pt]
	\mathbf {J}_{\Re (\mathrm{\ph{I}_{ij}})\uptheta}(\mathbf x) &
	\mathbf {J}_{\Re (\mathrm{{\ph{I}_{ij}})V}}(\mathbf x) \\[3pt]
	\mathbf {J}_{\Im (\mathrm{\ph{I}_{ij}})\uptheta}(\mathbf x) &
	\mathbf {J}_{\Im (\mathrm{{\ph{I}_{ij}})V}}(\mathbf x)	
	\end{bmatrix}.
    \end{gathered}
   	\label{mv_pmu1}
	\end{equation}

In case we neglect correlation between the measurements of a single PMU, the matrix $\mathbf{R}_\mathrm{ph} \in \mathbb {R}^{(2N_{\mathrm{ph}}) \times (2N_{\mathrm{ph}})}$ can be observed as the diagonal matrix:
	\begin{equation}
   	\begin{gathered}
	\mathbf R_{\mathrm{ph}} = \mathrm{diag}(	
	\mathbf R_{\Re (\mathrm{\ph{V}_{i}})},
	\mathbf R_{\Im (\mathrm{\ph{V}_{i}})},
	\mathbf R_{\Re (\mathrm{\ph{I}_{ij}})},
	\mathbf R_{\Im (\mathrm{\ph{I}_{ij}})}),	 
    \end{gathered}
   	\label{mv_pmu2}
	\end{equation}
where each covariance sub-matrix of $\mathbf R_{\mathrm{ph}}$ is the diagonal matrix that contains measurement variances. To recall, the solution of the SE model with legacy and phasor measurements can be found using the Gauss-Newton method, where:
	\begin{equation}
   	\begin{gathered}
   	\mathbf z \equiv
    \begin{bmatrix}    	 
	\mathbf z_{\mathrm{le}}\\[3pt]
	\mathbf z_{\mathrm{ph}}
	\end{bmatrix};		
	\;\;\;\;   	
   	\mathbf h (\mathbf x) \equiv
    \begin{bmatrix}    	 
	\mathbf h_{\mathrm{le}}(\mathbf x)\\[3pt]
	\mathbf h_{\mathrm{ph}}(\mathbf x)
	\end{bmatrix};		
	\;\;\;\;
    \mathbf J(\mathbf x)\equiv
    \begin{bmatrix} 
    \mathbf J_{\mathrm{le}}(\mathbf x)\\[3pt]
    \mathbf J_{\mathrm{ph}}(\mathbf x)
	\end{bmatrix}\;\;\;\;
	\mathbf R \equiv
    \begin{bmatrix}    	 
	\mathbf R_{\mathrm{le}} & \mathbf{0}\\
	\mathbf{0}&\mathbf R_{\mathrm{ph}}
	\end{bmatrix}.		
	\end{gathered}
   	\label{mv_pmu_pol_all1}
	\end{equation}

\section{Phasor Measurements with Rectangular State Vector} \label{sub:pmu_linear}
For the case when the vector of state variables is given in rectangular coordinates $\mathbf x \equiv[\mathbf{V}_\mathrm{re},\mathbf{V}_\mathrm{im}]^{\mathrm{T}}$, and where measurements are also represented in the same coordinates, we obtain linear measurement functions with constant Jacobian elements. Unfortunately, direct inclusion in the conventional SE model is not possible due to different coordinate systems, however, this still represents the important advantage of phasor measurements. 

The bus voltage phasor on the bus $i \in \mathcal{H}$ in the rectangular coordinates is defined as:
	\begin{equation}
   	\begin{aligned}
	\ph{V}_{i}&=\Re(\ph{V}_{i}) +\mathrm{j}\Im(\ph{V}_{i}). 
    \end{aligned}    
   	\label{vol_pha_rct_rc}
	\end{equation}
The state vector is given in the rectangular coordinate system $\mathbf x \equiv[\mathbf{V}_\mathrm{re},\mathbf{V}_\mathrm{im}]^{\mathrm{T}}$ and the real and imaginary components of \eqref{vol_pha_rct_rc} directly define measurement functions $\Re(\ph{V}_{i}) \triangleq h_{\Re (\ph{V}_{i})}(\cdot)$ and $\Im(\ph{V}_{i}) \triangleq h_{\Im (\ph{V}_{i})}(\cdot)$.

\begin{tcolorbox}[title=Bus Voltage Phasor Measurement Functions]	
Measurement $\mathcal{M}_{\ph{V}_{i}}= $ $\{M_{{\Re (\ph{V}_{i})}},$ $M_{{\Im (\ph{V}_{i})}}\}$ $\subseteq$ $\mathcal{M}$, $i \in \mathcal{H}$ is associated with measurement functions:
	\begin{subequations}
   	\begin{align}
    h_{\Re (\ph{V}_{i})}(\cdot) &= \Re(\ph{V}_{i})
    \label{mf_voltage_pmuRe_rc}\\
    h_{\Im (\ph{V}_{i})}(\cdot) &= \Im(\ph{V}_{i}). 
    \label{mf_voltage_pmuIm_rc}   
	\end{align}
   	\label{mf_voltage_pmu_rct_rc}%
	\end{subequations}	
\vspace{-1.0\baselineskip}		
\end{tcolorbox}

Jacobians expressions corresponding to the measurement function $h_{\Re (\ph{V}_{i})}(\cdot)$ are defined:
	\begin{subequations}
   	\begin{align}
   	\cfrac{\mathrm \partial{h_{\Re (\ph{V}_{i})}(\cdot)}}
   	{\mathrm \partial \Re(\ph{V}_{i})}=1;\;\;\;\;  
    \cfrac{\mathrm \partial{h_{\Re (\ph{V}_{i})}(\cdot)}}
    {\mathrm \partial \Re(\ph{V}_{j})}=0 \\
   	\cfrac{\mathrm \partial{h_{\Re (\ph{V}_{i})}(\cdot)}}
   	{\mathrm \partial \Im(\ph{V}_{i})}=0;\;\;\;\;  
    \cfrac{\mathrm \partial{h_{\Re (\ph{V}_{i})}(\cdot)}}
    {\mathrm \partial \Im(\ph{V}_{j})}=0,    
	\end{align}%
	\end{subequations}
while Jacobians expressions corresponding to the measurement function $h_{\Im (\ph{V}_{i})}(\cdot)$ are: 
	\begin{subequations}
   	\begin{align}
   	\cfrac{\mathrm \partial{h_{\Im (\ph{V}_{i})}(\cdot)}}
   	{\mathrm \partial \Re(\ph{V}_{i})}=0;\;\;\;\;  
    \cfrac{\mathrm \partial{h_{\Im (\ph{V}_{i})}(\cdot)}}
    {\mathrm \partial \Re(\ph{V}_{j})}=0 \\
   	\cfrac{\mathrm \partial{h_{\Im (\ph{V}_{i})}(\cdot)}}
   	{\mathrm \partial \Im(\ph{V}_{i})}=1;\;\;\;\;  
    \cfrac{\mathrm \partial{h_{\Im (\ph{V}_{i})}(\cdot)}}
    {\mathrm \partial \Im(\ph{V}_{j})}=0.    
	\end{align}%
	\end{subequations}	

The line current phasor at the branch $(i,j) \in \mathcal{E}$ that connects buses $i$ and $j$ in the rectangular coordinate system is given: 
	\begin{equation}
   	\begin{aligned}
	\ph{I}_{ij}&=\Re(\ph{I}_{ij}) +\mathrm{j}\Im(\ph{I}_{ij}). 
    \end{aligned}    
   	\label{vol_pha_rct2}
	\end{equation}
Using \eqref{current_ij} and \eqref{vol_polar}, where the state vector is given in the rectangular coordinate system $\mathbf x \equiv[\mathbf{V}_\mathrm{re},\mathbf{V}_\mathrm{im}]^{\mathrm{T}}$, the real and imaginary components of the line current phasor $\ph{I}_{ij}$ define measurement functions $\Re (\ph{I}_{ij}) \triangleq h_{\Re (\ph{I}_{ij})}(\cdot)$ and $\Im (\ph{I}_{ij}) \triangleq h_{\Im (\ph{I}_{ij})}(\cdot)$. 

\begin{tcolorbox}[title=Line Current Phasor Measurement Functions]	
Measurements $M_{\ph{I}_{ij}} = $ $\{M_{{\Re (\ph{I}_{ij})}},$ $M_{{\Im (\ph{I}_{ij})}}\}$ $\subseteq$ $\mathcal{M}$, $(i,j) \in \mathcal{E}$ are associated with measurement functions: 
	\begin{subequations}
   	\begin{align}
	h_{\Re (\ph{I}_{ij})}(\cdot) & = (g_{ij}+g_{\mathrm{s}i})\Re(\ph{V}_{i}) -
	(b_{ij}+b_{\mathrm{s}i})\Im(\ph{V}_{i}) - g_{ij}\Re(\ph{V}_{j}) +
	b_{ij}\Im(\ph{V}_{j})  
	\\
	h_{\Im (\ph{I}_{ij})}(\cdot) & = (b_{ij}+b_{\mathrm{s}i})\Re(\ph{V}_{i}) +
	(g_{ij}+g_{\mathrm{s}i})\Im(\ph{V}_{i}) - b_{ij}\Re(\ph{V}_{j}) -
	g_{ij}\Im(\ph{V}_{j}).  
	\end{align}%
	\end{subequations}	
\vspace{-1.0\baselineskip}		
\end{tcolorbox}

Jacobians expressions corresponding to the measurement function $h_{\Re (\ph{I}_{ij})}(\cdot)$ are defined:	
	\begin{subequations}
   	\begin{align}
   	\cfrac{\mathrm \partial{h_{\Re (\ph{I}_{ij})}(\cdot)}}
   	{\mathrm \partial \Re(\ph{V}_{i})}=
    g_{ij}+g_{\mathrm{s}i};\;\;\;\; 
    \cfrac{\mathrm \partial{h_{\Re (\ph{I}_{ij})}(\cdot)}}
    {\mathrm \partial \Re(\ph{V}_{j})}=
    -g_{ij}
    \\
    \cfrac{\mathrm \partial{h_{\Re (\ph{I}_{ij})}(\cdot)}}
   	{\mathrm \partial \Im(\ph{V}_{i})}=
    -b_{ij}-b_{\mathrm{s}i};\;\;\;\; 
    \cfrac{\mathrm \partial{h_{\Re (\ph{I}_{ij})}(\cdot)}}
    {\mathrm \partial \Im(\ph{V}_{j})}=
    b_{ij},
	\end{align}%
	\end{subequations}
while Jacobians expressions corresponding to the measurement function $h_{\Im (\ph{I}_{ij})}(\cdot)$ are: 
	\begin{subequations}
   	\begin{align}
   	\cfrac{\mathrm \partial{h_{\Im (\ph{I}_{ij})}(\cdot)}}
   	{\mathrm \partial \Re(\ph{V}_{i})}=
    b_{ij}+b_{\mathrm{s}i};\;\;\;\; 
    \cfrac{\mathrm \partial{h_{\Im (\ph{I}_{ij})}(\cdot)}}
    {\mathrm \partial \Re(\ph{V}_{j})}=
    -b_{ij}
    \\
    \cfrac{\mathrm \partial{h_{\Im (\ph{I}_{ij})}(\cdot)}}
   	{\mathrm \partial \Im(\ph{V}_{i})}=
    g_{ij}+g_{\mathrm{s}i};\;\;\;\; 
    \cfrac{\mathrm \partial{h_{\Im (\ph{I}_{ij})}(\cdot)}}
    {\mathrm \partial \Im(\ph{V}_{j})}=
    -g_{ij}.
	\end{align}%
	\end{subequations}

To summarize, presented model represents system of linear equations, where solution can be found by solving the linear WLS problem. As before, measurement errors by a single PMU are correlated and covariance matrix does not hold diagonal form. 
	
\section{The DC State Estimation} \label{sec:dc_se_model}
The DC model is obtained by linearisation of the non-linear model. In the typical operating conditions, the difference of bus voltage angles between adjacent buses $(i,j) \in \mathcal{E}$ is very small $\theta_{i}-\theta_{j} \approx 0$, which implies $\cos \theta_{ij}\approx1$ and $\sin \theta_{ij} \approx \theta_{ij}$. Further, all bus voltage magnitudes are $V_i \approx 1$, $i \in \mathcal{H}$, and all shunt elements and branch resistances can be neglected. This implies that the DC model ignores the reactive powers and transmission losses and takes into account only the active powers. Therefore, the DC SE takes only bus voltage angles $\mathbf x \equiv {\bm \uptheta}^{\mathrm{T}}$ as state variables. Consequently, the number of state variables is $n=N-1$, where one voltage angle represents the slack bus\footnote{Similar to the non-linear SE, the BP approach uses complete set of state variables.}. 

The set of DC model measurements $\mathcal{M}$ involves only active power flow $M_{P_{ij}}$, $(i,j) \in \mathcal{E}$, and active power injection $M_{P_{i}}$, $i \in \mathcal{H}$, from legacy measurments, and without loss of generality, we can include bus voltage angle $M_{\theta_{i}}$, $i \in \mathcal{H}$, from PMUs. 

\begin{tcolorbox}[title=Linear Weighted Least-Squares Method]
The DC state estimate $\hat{\mathbf x} \equiv \hat{\bm \uptheta}{}^{\mathrm{T}}$, which is a solution to the WLS problem \eqref{SE_WLS_problem}, is obtained through the non-iterative procedure by solving the system of linear equations:  
		\begin{equation}
        \begin{aligned}  
		\Big(\mathbf H^\mathrm{T} {\mathbf R}^{-1} \mathbf H \Big) 
		\hat{\mathbf x} =		
		\mathbf H^\mathrm{T} {\mathbf R}^{-1} \mathbf z, \label{DC_WLS}\\
		\end{aligned}
        \end{equation}
where $\mathbf{H}\in \mathbb {R}^{k\mathrm{x}N}$ is the Jacobian matrix of measurement functions.
\end{tcolorbox}

According to the set of measurements $\mathcal{M}$, vector and matrices are the following block structure:
	\begin{equation}
   	\begin{gathered}
   	\mathbf z =
    \begin{bmatrix}    	 
	\mathbf z_{\mathrm{P_{ij}}}\\[3pt]
	\mathbf z_{\mathrm{P_{i}}}\\[3pt]
	\mathbf z_{\uptheta_{i}}
	\end{bmatrix};		
	\;\;\;\;   	
   	\mathbf {H} =
    \begin{bmatrix} 
    \mathbf {H}_\mathrm{{P_{ij}}}\\[3pt]
    \mathbf {H}_\mathrm{{P_{i}}}\\[3pt]
	\mathbf {H}_\mathrm{{\uptheta_{i}}}
	\end{bmatrix};	
	\;\;\;\;
	\mathbf R = 	
    \begin{bmatrix} 
	\mathbf R_{\mathrm{P_{ij}}}  & \mathbf{0}& \mathbf{0} \\
	\mathbf{0} & \mathbf R_{\mathrm{P_{i}}} & \mathbf{0}  \\
	\mathbf{0} &\mathbf{0} & \mathbf {R}_\mathrm{{\uptheta_{i}}}
	\end{bmatrix}.	
    \end{gathered}
   	\label{dc_mv_leg}
	\end{equation}
Note that, each sub-matrix of $\mathbf R$ is the diagonal measurement error covariance matrix that contains measurement variances. In the following, we provide expressions for elements of $\mathbf {H}$. 

\begin{tcolorbox}[title=Active Power Flow Measurement Function (DC Model)]
The active power flow at the branch $(i,j) \in \mathcal{E}$ that connects buses $i$ and $j$ can be obtained using \eqref{mf_activeF}:
	\begin{equation}
    \begin{aligned}
    h_{P_{ij}}(\cdot) &= 
    -b_{ij}(\theta_{i}-\theta_{j}).
    \end{aligned}
    \label{DC_active_flow}
    \end{equation}
\end{tcolorbox}
    
Jacobian $\mathbf {H}_\mathrm{{P_{ij}}}$ of the function $h_{P_{ij}}(\cdot)$ associated with measurement $M_{P_{ij}}$, $(i,j) \in \mathcal{E}$ is defined as matrix with corresponding elements:
	\begin{equation}
    \begin{aligned}
    \cfrac{\mathrm \partial{h_{P_{ij}}(\cdot)}}{\mathrm 
    \partial \theta_{i}}&=
    -b_{ij};\;\;\;\;
    \cfrac{\mathrm \partial{{h_{P_{ij}}}(\cdot)}}{\mathrm 
    \partial \theta_{j}}&=
    b_{ij}.
    \end{aligned}
    \label{DC_jac_Pij}%
    \end{equation}

\begin{tcolorbox}[title=Active Power Injection Measurement Function (DC Model)]
The active power injection into bus $i \in \mathcal{H}$ can be obtained using \eqref{mf_injcetionA}:
	\begin{equation}
    \begin{aligned}
    h_{P_{i}}(\cdot) &=
    -\sum_{j \in \mathcal{H}_i \setminus i} b_{ij}(\theta_{i}-\theta_{j}),
    \end{aligned}
    \label{DC_active_injection}
    \end{equation}
where $\mathcal{H}_i\setminus i$ is the set of buses adjacent to the bus $i$. 
\end{tcolorbox}

Jacobian $\mathbf {H}_\mathrm{{P_{i}}}$ of the function $h_{P_{i}}(\cdot)$ associated with measurement $M_{P_{i}}$, $i \in \mathcal{H}$ is defined as matrix with corresponding elements:		
	\begin{equation}
    \begin{aligned}
    \cfrac{\mathrm \partial{h_{P_{i}}(\cdot)}}
    {\mathrm \partial \theta_{i}}=
    -\sum_{j \in \mathcal{H}_i \setminus i} 
    b_{ij};\;\;\;\;
    \cfrac{\mathrm \partial{{h_{P_{i}}}(\cdot)}}
    {\mathrm \partial \theta_{j}}=
    \sum_{j \in \mathcal{H}_i \setminus i} b_{ij}.
    \end{aligned}
    \label{DC_jac_Pi}%
    \end{equation}

\begin{tcolorbox}[title=Bus Voltage Angle Measurement Function (DC Model)]
The bus voltage angle on the bus $i \in \mathcal{H}$ is described with function:
	\begin{equation}
    \begin{aligned}
    h_{\theta_{i}}(\cdot) = \theta_i. 
    \end{aligned}
    \label{DC_angle}
	\end{equation} 
\end{tcolorbox}

Jacobian $\mathbf {H}_\mathrm{{\uptheta_{i}}}$ of the function $h_{\theta_{i}}(\cdot)$ associated with measurement $M_{\theta_{i}}$, $i \in \mathcal{H}$ is defined as matrix with corresponding elements:
	\begin{equation}
    \begin{aligned}
	\cfrac{\mathrm \partial{{h_{\theta_{i}}(\cdot)}}}
	{\mathrm \partial \theta_{i}}=1;\;\;\;\;
	\cfrac{\mathrm \partial{{h_{\theta_{i}}(\cdot)}}}
	{\mathrm \partial \theta_{j}}=0.
    \end{aligned}
    \label{DC_jac_Ti}%
	\end{equation}

\section{Summary}
The solution for the non-linear and DC SE model can be found by solving the optimization problem \eqref{SE_likelihood}. The solution of the non-linear SE model reduces to solving the iterative Gauss-Newton method, while the DC SE solution can be obtained through the non-iterative procedure by solving WLS problem. The DC SE provides an approximate solution, where all bus voltage magnitudes are set to one. The presented models assume uncorrelated measurement errors that define diagonal measurement error covariance matrices.

In the SE problem, each measurement function $h_i(\mathbf x)$ depends on a limited (typically small) subset of state variables $\mathbf{x}$. Hence, the likelihood function $\mathcal{L}(\mathbf{z}|\mathbf{x})$ can be factorized into factors \eqref{SE_likelihood} affecting small subsets of state variables. \textit{This fact motivates solving the SE problem scalably and efficiently using probabilistic graphical models}. The solution involves defining the factor graph corresponding to \eqref{SE_likelihood}, and deriving expressions for BP messages exchanged over the factor graph.

\chapter{Belief Propagation based DC State Estimation}	\label{ch:dc_bp}
\addcontentsline{lof}{chapter}{3 Belief Propagation based DC State Estimation}
\addcontentsline{lot}{chapter}{3 Belief Propagation based DC State Estimation}
For completeness of exposition, we present the solution of the DC SE problem using the BP algorithm; we refer to the corresponding method as the DC-BP. Furthermore, we propose a fast real-time DC state estimator and provide an in-depth convergence analysis of the DC-BP algorithm, including the additional method to improve its convergence. The material in this section sets the stage for the main contribution of this thesis - the BP-based Gauss-Newton method for the non-linear SE model. 

The DC SE model is described by the system of linear functions, where each measurement function $h_i(\mathbf x)$ involved in \eqref{SE_Gauss_mth} is defined with \eqref{DC_active_flow}, \eqref{DC_active_injection} and \eqref{DC_angle}. Due to the linearity, messages exchanged within the DC-BP algorithm can be evaluated in closed form. 

\section{The Factor Graph Construction}
For the DC model, the set of variable nodes is defined by the state variables $\mathbf x \equiv \bm \uptheta^{\mathrm{T}}$, thus  $\mathcal{V} = \{\theta_1,\dots,\theta_N\} \equiv$ $\{x_1,\dots,x_N\}$. The set of factor nodes $\mathcal{F} =\{f_1,\dots,f_k\}$ is defined by the set of measurements $\mathcal{M}$, with measurement functions \eqref{DC_active_flow}, \eqref{DC_active_injection} and \eqref{DC_angle}. Measurements define likelihood functions $\mathcal{N}(z_i|\mathbf{x},v_i)$ that are in turn equal to local functions $\psi_i({\mathcal V_i})$ associated to factor nodes. A factor node $f_i$ connects to a variable node $x_s \in \mathcal{V}$ if and only if the state variable $x_s$ is an argument of the corresponding measurement function $h_i(\mathbf x)$.

\begin{example}[Constructing factor graph] In this toy example, using a simple 3-bus model presented in \autoref{Fig_ex_bus_branch}, we demonstrate the conversion from a bus/branch model with a given measurement configuration into the corresponding factor graph for the DC model. 
	\begin{figure}[ht]
	\centering
	\begin{tabular}{@{}c@{}}
	\subfloat[]{\label{Fig_ex_bus_branch}
	\includegraphics[width=2.8cm]{./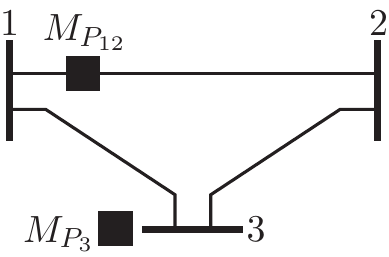}}
	\end{tabular}\quad\quad
	\begin{tabular}{@{}c@{}}
	\subfloat[]{\label{Fig_ex_DC_graph}
	\includegraphics[width=2.6cm]{./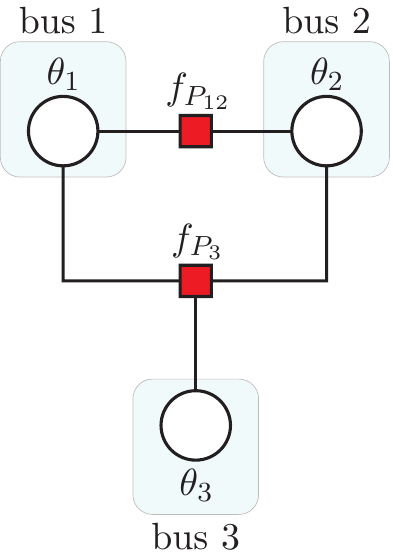}}
	\end{tabular}
	\caption{Transformation of the bus/branch model and measurement configuration 
	(subfigure a) into the corresponding factor graph for the DC model (subfigure b).}
	\label{Fig_ex_DC}
	\end{figure}
	
The variable nodes represent state variables, i.e., $\mathcal{V}=$ $\{\theta_1,$ $\theta_2,$ $\theta_3 \} \equiv$ $\{x_1,x_2,x_3\}$. Factor nodes are defined by corresponding measurements, where in our example, measurements $M_{P_{12}}$ and $M_{P_3}$ are mapped into factor nodes $\mathcal {F}=\{f_{P_{12}}$, $f_{P_3}\}$.	
\demo   
\end{example}  

\section{The Belief Propagation Algorithm}  
To recall, the BP algorithm efficiently calculates marginal distributions of state variables by passing two types of messages along the edges of the
factor graph: i) a variable node to a factor node, and ii) a factor node to a variable node messages. The marginal inference provides marginal probability distributions of each of the state variables that is used to estimate values of state variables $\mathbf{x}$. Next, we describe the DC-BP algorithm that is a version of the BP algorithm called Gaussian BP.

\subsection{Derivation of BP Messages and Marginal Inference}
\textbf{Message from a variable node to a factor node:} 
Consider a part of a factor graph shown in \autoref{Fig_v_f} with a group of factor nodes $\mathcal{F}_s=\{f_i,f_w,...,f_W\}$ $\subseteq$ $\mathcal{F}$ that are neighbours of the variable node $x_s$ $\in$ $\mathcal{V}$. 
	\begin{figure}[ht]
	\centering
	\includegraphics[width=4.3cm]{./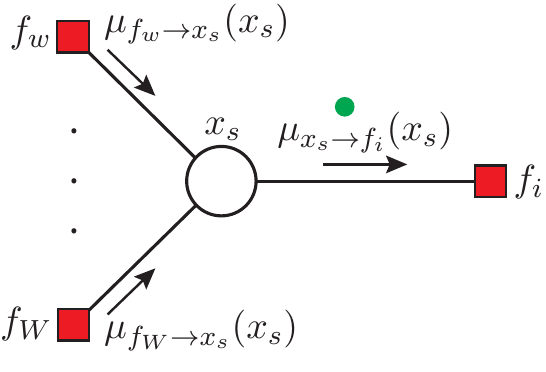}
	\caption{Message $\mu_{x_s \to f_i}(x_s)$ from variable node 
	$x_s$ to factor node $f_i$.}
	\label{Fig_v_f}
	\end{figure} \noindent
The message $\mu_{x_s \to f_i}(x_s)$ from the variable node $x_s$ to the factor node $f_i$ is equal to the product of all incoming factor node to variable node messages arriving at all the other incident edges: 
		\begin{equation}
        \begin{gathered}
        \mu_{x_s \to f_i}(x_s) =\prod_{f_a \in \mathcal{F}_s \setminus f_i} 
        \mu_{f_a \to x_s}(x_s),
        \end{gathered}
		\label{FG_v_f}
		\end{equation}
where $\mathcal{F}_s \setminus f_i$ represents the set of factor nodes incident to the variable node $x_s$, excluding the factor node $f_i$. Note that each message is a function of the variable $x_s$.

\begin{tcolorbox}[title=Message from a Variable Node to a Factor Node]
Let us assume that the incoming messages $\mu_{f_w \to x_s}(x_s)$, $\dots$, $\mu_{f_W \to x_s}(x_s)$ into the variable node $x_s$ are Gaussian and represented by their mean-variance pairs $(z_{f_w \to x_s},v_{f_w \to x_s})$, $\dots$, $(z_{f_W \to x_s},v_{f_W \to x_s})$. Note that these messages carry beliefs about the variable node $x_s$ provided by its neighbouring factor nodes $\mathcal{F}_s\setminus f_i$. According to \eqref{FG_v_f}, it can be shown that the message $\mu_{x_s \to f_i}(x_s)$ from the variable node $x_s$ to the factor node $f_i$ is proportional to:  
		\begin{equation}
        \begin{aligned}
		\mu_{x_s \to f_i}(x_s) 
		\propto 
		\mathcal{N}(x_s|z_{x_s \to f_i},
		v_{x_s \to f_i}),		
        \end{aligned}
		\label{BP_Gauss_vf} 
        \end{equation}
with mean $z_{x_s \to f_i}$ and variance $v_{x_s \to f_i}$ obtained as: 
		\begin{subequations}
        \begin{align}
        z_{x_s \to f_i} &= 
        \Bigg( \sum_{f_a \in \mathcal{F}_s\setminus f_i}
        \cfrac{z_{f_a \to x_s}}{v_{f_a \to x_s}}\Bigg)
        v_{x_s \to f_i}
        \label{BP_vf_mean}\\
		\cfrac{1}{v_{x_s \to f_i}} &= 
		\sum_{f_a \in \mathcal{F}_s\setminus f_i}
		\cfrac{1}{v_{f_a \to x_s}}.
		\label{BP_vf_var}
        \end{align}
		\label{BP_vf_mean_var}	
		\end{subequations}
\vspace{-0.5\baselineskip}			
\end{tcolorbox}	

After the variable node $x_s$ receives the messages from all of the neighbouring factor nodes from the set $\mathcal{F}_s\setminus f_i$, it evaluates the message $\mu_{x_s \to f_i}(x_s)$ according to \eqref{BP_vf_mean_var} and sends it to the factor node $f_i$. 

\textbf{Message from a factor node to a variable node:} 
Consider a part of a factor graph shown in \autoref{Fig_f_v} that consists of a group of variable nodes $\mathcal{V}_i = \{x_s, x_l,...,x_L\}$ $\subseteq$ $\mathcal V$ that are neighbours of the factor node $f_i$ $\in$ $\mathcal{F}$. 
	\begin{figure}[ht]
	\centering
	\includegraphics[width=4.5cm]{./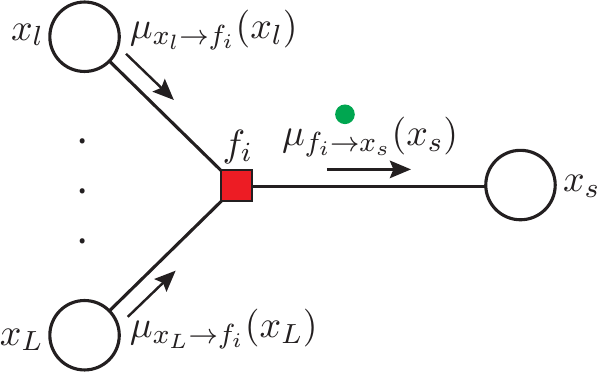}
	\caption{Message $\mu_{f_i \to x_s}(x_s)$ from factor node $f_i$ 
	to variable node $x_s$.}
	\label{Fig_f_v}
	\end{figure} \noindent
The message $\mu_{f_i \to x_s}(x_s)$ from the factor node $f_i$ to the variable node $x_s$ is defined as a product of all incoming variable node to factor node messages arriving at other incident edges, multiplied by the function $\psi_i(\mathcal{V}_i)$ associated to the factor node $f_i$, and marginalized over all of the variables associated with the incoming messages:
		\begin{equation}
		\begingroup\makeatletter\def\f@size{9}\check@mathfonts
        \begin{aligned}
        \mu_{f_i \to x_s}(x_s)= 
		\int\displaylimits_{x_l}\dots\int\displaylimits_{x_L} 
		\psi_i(\mathcal{X}_i)
		\prod_{x_b \in \mathcal{V}_i\setminus x_s} 
		\big[\mu_{x_b \to f_i}(x_b) 
		\cdot \mathrm{d}x_b\big], 
        \end{aligned}
        \endgroup
		\label{FG_f_v}
		\end{equation}		
where $\mathcal{V}_i\setminus x_s$ is the set of variable nodes incident to the factor node $f_i$, excluding the variable node $x_s$.

Due to linearity of measurement functions $h_i(\cdot)$, closed form expressions for these messages is easy to obtain and follow a Gaussian form:
		\begin{equation}
        \begin{aligned}
		\mu_{f_i \to x_s}(x_s) \propto 
		\mathcal{N}(x_s|z_{f_i \to x_s},v_{f_i \to x_s}).
        \end{aligned}
		\label{BP_Gauss_fv}
		\end{equation}
The message $\mu_{f_i \to x_s}(x_s)$ can be computed only when all other incoming messages (variable to factor node messages) are known due to synchronous scheduling. Let us assume that the messages into factor nodes are Gaussian, denoted by: 
		\begin{equation}
        \begin{aligned}
		\mu_{x_l \to f_i}(x_l) &\propto
		\mathcal{N}(x_l|z_{x_l \to f_i},
		v_{x_l \to f_i})\\
		& \vdotswithin{\propto}\\ 
		\mu_{x_L \to f_i}(x_L) &\propto
		\mathcal{N}(x_L|z_{x_L \to f_i},
		v_{x_L \to f_i}).
        \end{aligned}
		\label{BP_incoming_vf}
		\end{equation} 
The Gaussian function associated with the factor node $f_i$ is given by \eqref{SE_Gauss_mth}:
		\begin{equation}
        \begin{aligned}
		\mathcal{N}(z_i|x_s,x_l,\dots,
		x_L, v_i) 
		\propto 
        \exp\Bigg\{\cfrac{[z_i-h_i
        (x_s,x_l,\dots,x_L)]^2}
        {2v_i}\Bigg\}.
        \end{aligned}        
		\label{BP_Gauss_measurement_fun}
		\end{equation} 
The DC model contains only linear measurement functions which we represent in a general form as:
		\begin{equation}
        \begin{gathered}
		h_i(x_s,x_l,\dots,x_L) =
		C_{x_s} x_s + 
		\sum_{x_b \in \mathcal{X}_i\setminus x_s} 
		C_{x_b} x_b,
		\end{gathered}
		\label{BP_general_measurment_fun}
		\end{equation}
where $\mathcal{V}_i\setminus x_s$ is the set of variable nodes incident to the factor node $f_i$, excluding the variable node $x_s$. 

\begin{tcolorbox}[title=Message from a Factor Node to a Variable Node]
From the expression \eqref{FG_f_v}, and using \eqref{BP_incoming_vf}-\eqref{BP_general_measurment_fun}, it can be shown that the message $\mu_{f_i \to x_s}(x_s)$ from the factor node $f_i$ to the variable node $x_s$ is represented by the Gaussian function \eqref{BP_Gauss_fv}, with mean $z_{f_i \to x_s}$ and variance $v_{f_i \to x_s}$ obtained as: 
		\begin{subequations}
        \begin{align}
		z_{f_i \to x_s} &=         
        \cfrac{1}{C_{x_s}} \Bigg(z_i -  
        \sum_{x_b \in \mathcal{V}_i \setminus x_s} 
        C_{x_b} z_{x_b \to f_i}  \Bigg)
        \label{BP_fv_mean}\\
        v_{f_i \to x_s} &=         
        \cfrac{1}{C_{x_s}^2} \Bigg( v_i +  
        \sum_{x_b \in \mathcal{V}_i \setminus x_s} 
        C_{x_b}^2 v_{x_b \to f_i}  \Bigg).
		\label{BP_fv_var}
        \end{align}
		\label{BP_fv_mean_var}	
		\end{subequations}
\vspace{-0.5\baselineskip}			
\end{tcolorbox}	

To summarize, after the factor node $f_i$ receives the messages from all of the neighbouring variable nodes from the set $\mathcal{V}_i\setminus x_s$, it evaluates the message $\mu_{f_i \to x_s}(x_s)$ according to \eqref{BP_fv_mean} and \eqref{BP_fv_var}, and sends it to the variable node $x_s$.  

\textbf{Marginal inference:} 
The marginal of the variable node $x_s$, illustrated in \autoref{Fig_marginal}, is obtained as the product of all incoming messages into the variable node $x_s$:
		\begin{equation}
        \begin{gathered}
        p(x_s) =\prod_{f_c \in \mathcal{F}_s} \mu_{f_c \to x_s}(x_s),
        \end{gathered}
		\label{FG_marginal}
		\end{equation}
where $\mathcal{F}_s$ is the set of factor nodes incident to the variable node $x_s$.		
	\begin{figure}[ht]
	\centering
	\includegraphics[width=4.3cm]{./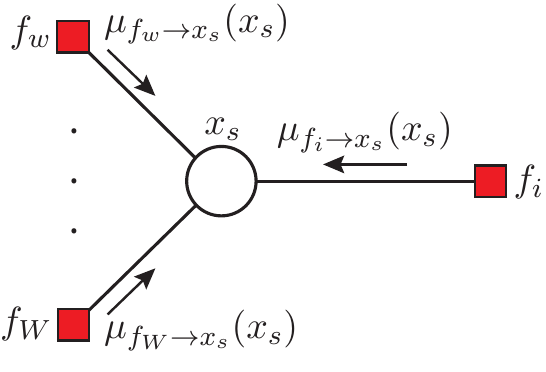}
	\caption{Marginal inference of the variable node $x_s$.}
	\label{Fig_marginal}
	\end{figure} \noindent

\begin{tcolorbox}[title=Marginal]
According to \eqref{FG_marginal}, it can be shown that the marginal of the state variable $x_s$ is represented by: 
\begin{equation}
        \begin{gathered}
        p(x_s) \propto 
        \mathcal{N}(x_s|\hat x_s,v_{x_s}),
        \end{gathered}
		\label{BP_marginal_gauss}
		\end{equation} 
with the mean value $\hat x_s$ and variance $v_{x_s}$:		
		\begin{subequations}
        \begin{align}
        \hat x_s &= 
        \Bigg( \sum_{f_c \in \mathcal{F}_s}
        \cfrac{z_{f_c \to x_s}}{v_{f_c \to x_s}}\Bigg)
        v_{x_s}
        \label{BP_marginal_mean} \\
		\cfrac{1}{v_{x_s}} &= 
		\sum_{f_c \in \mathcal{F}_s}
		\cfrac{1}{v_{f_c \to x_s}}.
		\label{BP_marginal_var}        
        \end{align}
        \label{BP_marginal_mean_var}		
		\end{subequations} 
\vspace{-0.5\baselineskip}			
\end{tcolorbox}	

Finally, the mean-value $\hat x_s$ is adopted as the estimated value of the state variable $x_s$. 

\subsection{Iterative DC-BP Algorithm}
The SE scenario is in general an instance of loopy BP since the corresponding factor graph usually contains cycles. Loopy BP is an iterative algorithm, with an iteration index $\tau = \{0,1,2, \dots \}$, and we use the synchronous scheduling, where all messages are updated in a given iteration using the output of the previous iteration as an input.

To present the algorithm precisely, we need to introduce different types of factor nodes. The \emph{indirect factor nodes} $\mathcal{F}_{\mathrm{ind}} \subset \mathcal{F}$ correspond to measurements that measure state variables indirectly. In the DC scenario, this includes active power flow and power injection measurements. The \emph{direct factor nodes} $\mathcal{F}_{\mathrm{dir}} \subset \mathcal{F}$ correspond to the measurements that measure state variables directly. For our choice of state variables for the DC scenario, an example includes measurements of bus voltage angles.

Besides direct and indirect factor nodes, we define two additional types of singly-connected factor nodes. The \emph{slack factor node} corresponds to the slack or reference bus where the voltage angle has a given value. Finally, the \emph{virtual factor node} is a singly-connected factor node used if the variable node is not directly measured, and takes the value of "flat start" with variance $v_i \to \infty $ or a priori given mean value and variance of state variables. 

We refer to direct factor nodes and two additional types of singly-connected factor nodes as local factor nodes $\mathcal{F}_{\mathrm{loc}} \subset \mathcal{F}$. We note that local factor nodes only send, but do not receive, and repeatedly transmit the same message to the corresponding variable node throughout BP iterations.

\begin{algorithm} [ht]
\caption{The DC-BP}
\label{DC}
\begin{spacing}{1.15}
\begin{algorithmic}[1] 
\Procedure {Initialization $\tau=0$}{}
  \For{each $f_s \in \mathcal{F}_{\mathrm{loc}}$}
  	  \State send $\mu_{f_s \to x_s}^{(0)}$ 
  	   to incident $x_s \in \mathcal{V}$
  \EndFor 
  \For{each $x_s \in \mathcal{V}$}
  	  \State send $\mu_{x_s \to f_i}^{(0)} = \mu_{f_s \to x_s}^{(0)}$, 
  	  to incident $f_i \in \mathcal{F}_{\mathrm{ind}}$
  \EndFor 
\EndProcedure
\myline[black](-2.4,3.22)(-2.4,0.3)

\Procedure {Iteration loop $\tau=1,2,\dots$}{}  
    \While{stopping criterion is not met} 
  \For{each $f_i \in \mathcal{F}_{\mathrm{ind}}$}
 	\State Compute $\mu_{f_i \to x_s}^{(\tau)}$ using \eqref{BP_fv_mean}*, \eqref{BP_fv_var}*
  \EndFor
   \For{each $x_s \in \mathcal{V}$}
  \State Compute $\mu_{x_s \to f_i}^{(\tau)}$ using \eqref{BP_vf_mean_var}
  \EndFor
  \EndWhile
 \EndProcedure
\myline[black](-2.4,4.18)(-2.4,0.3) 

\Procedure {Output}{}
 \For{each $x_s \in \mathcal{V}$}
    \State Compute $\hat x_s$, $v_{x_s}$ using \eqref{BP_marginal_mean_var}
  \EndFor
  \EndProcedure
\myline[black](-2.4,1.77)(-2.4,0.3)  
   \Statex *Incoming messages are obtained in previous iteration $\tau-1$  
 \end{algorithmic}
 \end{spacing}
\end{algorithm} 

\begin{example}[Different types of factor nodes] In this example, we consider the bus/branch model with three measurements illustrated in \autoref{Fig_ex_bus_branch_5meas} that we use to describe different types of factor nodes. 
	\begin{figure}[ht]
	\centering
	\begin{tabular}{@{}c@{}}
	\subfloat[]{\label{Fig_ex_bus_branch_5meas}
	\includegraphics[width=2.8cm]{./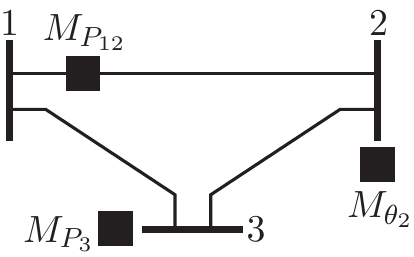}}
	\end{tabular}\quad
	\begin{tabular}{@{}c@{}}
	\subfloat[]{\label{Fig_ex_FG_5meas}
	\includegraphics[width=3.3cm]{./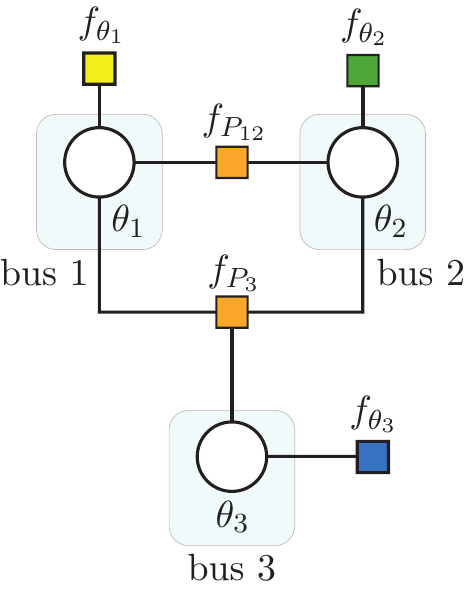}} 
	\end{tabular}
	\caption{Transformation of the bus/branch model and
	 measurement configuration (subfigure a) into the corresponding 
	 factor graph with different types of factor nodes 
	 for the DC model (subfigure b). }
	\label{Fig_ex_diff_factor}
	\end{figure} 

The indirect factor nodes (orange squares) are defined by corresponding measurements, where in our example, active power flow $M_{P_{12}}$ and active power injection $M_{P_3}$ measurements are mapped into factor nodes $\mathcal{F}_{\mathrm{ind}} = $ $\{f_{P_{12}},$ $f_{P_{3}} \}$. The set of local factor nodes $\mathcal{F}_{\mathrm{loc}}$ consists of the set of direct factor node (green square) $\mathcal{F}_{\mathrm{dir}} = $ $\{f_{\theta_{2}}\}$ defined by bus voltage angle measurement $M_{\theta_{2}}$, virtual factor node (blue square) and the slack factor node (yellow square).   \demo	  
\end{example}

The presented algorithm is an instance of a loopy Gaussian BP applied over a linear model defined by linear measurement functions $\mathbf{h}(\mathbf{x})$. It is well known that, if loopy Gaussian BP applied over a linear model converges, it will converge to a fixed point representing a solution of an equivalent WLS problem \eqref{DC_WLS} \cite{freeman}. Unlike means, the variances of Gaussian BP messages need not converge to correct values. 

The DC-BP algorithm is presented in \autoref{DC}. After the initialization (lines 1-8), the main algorithm routine starts which includes BP-based message inference (lines 9-18). Finally, the marginal inference provides the estimate of the state variables (lines 19-23). In \autoref{app:B}, we presented an illustrative numerical example of the proposed DC-BP algorithm.

\subsection{Convergence of DC-BP Algorithm} \label{sub:convergence_syn}
In this part, we present convergence analysis of DC-BP algorithm with synchronous scheduling. In the following, it will be useful to consider a subgraph of the factor graph that contains the set of variable nodes $\mathcal{V} = \{x_1, \dots, x_N \}$, the set of indirect factor nodes $\mathcal{F}_{\mathrm{ind}} = \{f_1, \dots, f_m \} \subset \mathcal{F}$, and a set of edges $\mathcal{B} \subseteq \mathcal{V} \times \mathcal{F}_{\mathrm{ind}}$ connecting them. The number of edges in this subgraph is $b = |\mathcal{B}|$. Within the subgraph, we will consider a factor node $f_i \in \mathcal{F}_{\mathrm{ind}}$ connected to its neighboring set of variable nodes $\mathcal{V}_i = \{x_q, \dots,x_Q \} \subset \mathcal{V}$ by a set of edges $\mathcal{B}_i = \{b_i^q, \dots,b_i^Q \}  \subset \mathcal{B}$, where $d_i = |\mathcal{V}_i|$ is the degree of $f_i$. Next, we provide results on convergence of both variances and means of DC-BP messages, respectively.

\textbf{Convergence of the Variances:} From equations \eqref{BP_vf_var} and \eqref{BP_fv_var}, we note that the evolution of the variances is independent of mean values of messages and measurements. Let $\mathbf {v}_{\mathrm{s}} \in \mathbb{R}^b$ denote a vector of variance values of messages from indirect factor nodes $\mathcal{F}_{\mathrm{ind}}$ to variable nodes $\mathcal{V}$. Note that this vector can be decomposed as:
		\begin{equation}
        \begin{aligned}
        \mathbf {v}_{\mathrm{s}}^{(\tau)} &=              
        [\mathbf {v}_{\mathrm{s},1}^{(\tau)}, \dots 
        \mathbf {v}_{\mathrm{s},m}^{(\tau)}]^{\mathrm{T}},
        \end{aligned}
        \label{con_2}
		\end{equation}
where the $i$-th element $\mathbf {v}_{\mathrm{s},i} \in \mathbb{R}^{d_i}$ is equal to:		
		\begin{equation}
        \begin{aligned}
        \mathbf {v}_{\mathrm{s},i}^{(\tau)} &=         
        [ v_{f_{i} \to x_q}^{(\tau)}, \dots, v_{f_{i} \to x_Q}^{(\tau)}].
        \end{aligned}
        \label{con_3}        
		\end{equation} 
Substituting \eqref{BP_vf_var} in \eqref{BP_fv_var}, the evolution of variances $\mathbf {v}_{\mathrm{s}}$ is equivalent to the following iterative equation:
		\begin{equation}
        \begin{aligned}
        \mathbf {v}_{\mathrm{s}}^{(\tau)} = 
        f \big( \mathbf {v}_{\mathrm{s}}^{(\tau-1)}\big).
        \end{aligned}
        \label{con_1}
		\end{equation}
More precisely, using simple matrix algebra, one can obtain the evolution of the variances $\mathbf {v}_{\mathrm{s}}$ in the following matrix form:		
		\begin{equation}
        \begin{aligned}
        \mathbf {v}_{\mathrm{s}}^{(\tau)} = 
        \Big[\big(\mathbf{\widetilde C}^{-1}
        \bm {\Pi} \mathbf{\widetilde C}\big)\cdot
        \big(\mathfrak{D}(\mathbf{A}) 
        \big)^{-1}
        +\bm {\Sigma}_{\mathrm{a}}\mathbf{\widetilde C}^{-1}
        \Big] \mathbf{i}, 
        \end{aligned}
        \label{con_5}        
		\end{equation}
where 
		\begin{subequations}
        \begin{align}
		\mathbf{\widetilde C} &= \mathbf{C}\mathbf{C}^{\mathrm{T}}
        \\
		\mathbf{A} &= \mathbf{\Gamma} \bm {\Sigma}_{\mathrm{s}}^{-1} 
        \mathbf{\Gamma}^\mathrm{T} + \mathbf{L}.
        \end{align}	
		\end{subequations}		
Note that in \eqref{con_5}, the dependance on $\mathbf {v}_{\mathrm{s}}^{(\tau-1)}$ is hidden in matrix $\mathbf{A}$, or more precisely, in matrix $\mathbf{{\Sigma}_{\mathrm{s}}}$. Next, we briefly describe both the matrices and matrix-operators involved in \eqref{con_5}.		
		
The operator $\mathfrak{D}(\mathbf{A}) \equiv \mathrm{diag}(A_{11}, \dots, A_{bb})$, where $A_{ii}$ is the $i$-th diagonal entry of the matrix $\mathbf{A}$. The unit vector $\mathbf{i}$ is of dimension $b$ and is equal to $\mathbf{i} = [1,\dots,1]^\mathrm{T}$. The diagonal matrix $\bm {\Sigma}_{\mathrm{s}}$ is obtained as $\bm {\Sigma}_{\mathrm{s}} = \mathrm{diag}\big(\mathbf {v}_{\mathrm{s}}^{(\tau-1)}\big) \in \mathbb{R}^{b \times b}$.

The matrix $\mathbf{C} = \mathrm{diag}\big(\mathbf{C}_{1},\dots,\mathbf{C}_{m}\big) \in \mathbb{R}^{b \times b}$ contains diagonal entries of the Jacobian non-zero elements, where $i$-th element $\mathbf {C}_i = [C_{x_q}, \dots, C_{x_Q}] \in \mathbb{R}^{d_i}$. The matrix $\bm {\Sigma}_{\mathrm{a}} = \mathrm{diag}\big(\bm {\Sigma}_{\mathrm{a,1}},  \cdots\bm {\Sigma}_{\mathrm{a},m} \big) \in \mathbb{R}^{b \times b}$ contains indirect factor node variances, with the $i$-th entry $\bm {\Sigma}_{\mathrm{a},i} = [v_i, \dots, v_i] \in \mathbb{R}^{d_i}$. 

The matrix $\mathbf{L} = \mathrm{diag}\big(\mathbf{L}_{1}, \cdots \mathbf{L}_{m} \big) \in \mathbb{R}^{b \times b}$ contains inverse variances from singly-connected factor nodes to a variable node, if such nodes exist, where the $i$-th element $\mathbf{L}_i = \big[l_{x_q}, \cdots {l}_{x_Q} \big] \in \mathbb{R}^{d_i}$. For example, $l_{x_q}$ equals:  	
		\begin{equation}
        l_{x_q}=
   		\begin{cases}
        \cfrac{1}{v_{f_{\mathrm{d},q} \to x_q}}, & \text{if}  \;\; x_q  \;\; 
        \text{is incident to $f_{\mathrm{d},q}$} \\[3pt]
        0, & \text{otherwise.}
        \end{cases}
		\end{equation}	

The matrix $\mathbf{\Pi} = \mathrm{diag}\big(\mathbf{\Pi}_1, \dots \mathbf{\Pi}_m \big) \in \mathbb{F}_2^{b \times b}$, $\mathbb{F}_2 = \{0,1\}$, is a block-diagonal matrix in which the $i$-th element is a block matrix $\mathbf{\Pi}_i = \mathbf{1}_i - \mathbf{I}_i \in \mathbb{F}_2^{d_i \times d_i}$, where the matrix $\mathbf{1}_i$ is $d_i \times d_i$ block matrix of ones, and $\mathbf{I}_i$ is $d_i \times d_i$ identity matrix. The matrix $\mathbf{\Gamma} \in \mathbb{F}_2^{b \times b}$ is of the following block structure:
		\begin{equation}
		\mathbf{\Gamma} = \left( \begin{array}{cccc}
 		\mathbf{0}_{1,1} & \mathbf{\Gamma}_{1,2} & \dots & \mathbf{\Gamma}_{1,m}  \\
 		\mathbf{\Gamma}_{2,1} & \mathbf{0}_{2,2} & \dots & \mathbf{\Gamma}_{2,m}   \\
		\vdots & \vdots  & \hfill &\vdots   \\
		\mathbf{\Gamma}_{m,1} & \mathbf{\Gamma}_{m,2}  & \dots & \mathbf{0}_{m,m} \\
		\end{array} \right),
		\label{}
		\end{equation}
where $\mathbf{0}_{i,i}$ is a block matrix $d_i \times d_i$ of zeros, and  $\mathbf{\Gamma}_{i,j} \in \mathbb{F}_2^{d_i \times d_j}$ with the $(i,j)$-th entry:
\begin{equation}
        \mathbf{\Gamma}_{i,j}(i,j)=
   		\begin{cases}
        1, & \text{if}  \;\;  
        \text{both $b_i^q$ and $b_j^q$ are incident to $x_q$} \\[3pt]
        0, & \text{otherwise.}
        \end{cases}
		\label{}		
		\end{equation}	
Note that the following holds: $\mathbf{\Gamma}_{j,i} = \mathbf{\Gamma}_{i,j}^\mathrm{T}$.

\begin{theorem}  \label{th_var}
The variances $\mathbf {v}_{\mathrm{s}}$ from indirect factor nodes to variable nodes always converge to a unique fixed point $\lim_{\tau \to \infty} \mathbf {v}_{\mathrm{s}}^{(\tau)} =\hat{\mathbf {v}}_{\mathrm{s}}$ for any initial point $\mathbf {v}_{\mathrm{s}}^{(\tau=0)} > 0$.
\end{theorem}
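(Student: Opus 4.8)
The plan is to read \eqref{con_5} as a single map $f:\mathbb{R}^{b}_{>0}\to\mathbb{R}^{b}_{>0}$, $\mathbf v_{\mathrm s}^{(\tau)}=f(\mathbf v_{\mathrm s}^{(\tau-1)})$, built from two elementary operations whose order-theoretic structure I can control. First I would isolate the variable-to-factor step \eqref{BP_vf_var}: each outgoing variance $v_{x_s\to f_i}=\big(\sum_{f_a\in\mathcal F_s\setminus f_i}1/v_{f_a\to x_s}\big)^{-1}$ is the parallel (harmonic) combination of the incoming factor variances, which is \emph{monotonically increasing} in every argument, \emph{positively homogeneous of degree one}, and \emph{concave}. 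The factor-to-variable step \eqref{BP_fv_var} is the positive-affine map $v_{f_i\to x_s}=v_i/C_{x_s}^2+\sum_{x_b\in\mathcal V_i\setminus x_s}(C_{x_b}^2/C_{x_s}^2)\,v_{x_b\to f_i}$ with a strictly positive additive constant $v_i/C_{x_s}^2$. Composing them, $f$ is \emph{isotone}, \emph{concave} (a nonnegative affine combination of concave functions), and \emph{strictly subhomogeneous}: for $\alpha>1$ the homogeneity of the harmonic step passes the scaling through while the constant term does not, yielding the explicit gap $\alpha f(\mathbf v)-f(\alpha\mathbf v)=(\alpha-1)\,\bm\Sigma_{\mathrm a}\widetilde{\mathbf C}^{-1}\mathbf i>0$.

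With these three properties in hand, the second step is to invoke the framework of standard interference functions (positivity, monotonicity, scalability): such a map can have \emph{at most one} fixed point, and whenever a fixed point exists the synchronous iteration converges to it from \emph{every} positive start $\mathbf v_{\mathrm s}^{(0)}>0$. Equivalently, I would argue through Thompson's metric $d(\mathbf u,\mathbf v)=\|\log\mathbf u-\log\mathbf v\|_\infty$ on the positive orthant, in which an isotone subhomogeneous map is automatically nonexpansive; the explicit additive gap above is exactly what upgrades nonexpansiveness to a strict contraction on compact subsets, which would deliver uniqueness and global convergence in one stroke.

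The main obstacle I anticipate is \emph{existence} of the fixed point, equivalently boundedness of the orbit: monotonicity and subhomogeneity alone do not rule out $\mathbf v_{\mathrm s}^{(\tau)}\to\infty$, and a pure two-node cycle carrying only power-flow factors with large coefficient ratios shows the recursion can in principle blow up absent any anchoring. The resolution must come from the concrete block structure entering $\mathbf A=\mathbf\Gamma\bm\Sigma_{\mathrm s}^{-1}\mathbf\Gamma^{\mathrm T}+\mathbf L$ in \eqref{con_5}, together with the model assumptions: the slack factor node and the direct/virtual factor nodes contribute $\mathbf L$ and keep the diagonal $\mathfrak D(\mathbf A)$ bounded away from zero along the iteration, so that $(\mathfrak D(\mathbf A))^{-1}$ stays bounded. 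Concretely, I would exhibit a \emph{super-solution} $\overline{\mathbf v}$ with $f(\overline{\mathbf v})\le\overline{\mathbf v}$ by bounding each harmonic term $v_{x_b\to f_i}$ through the precision already received from its incident local factor nodes; iterating $f$ downward from $\overline{\mathbf v}$ and upward from the positive lower bound $\bm\Sigma_{\mathrm a}\widetilde{\mathbf C}^{-1}\mathbf i$ then produces two monotone sequences trapped in the order interval, both converging by monotonicity and continuity to fixed points that uniqueness forces to coincide. Securing that invariant interval from the graph structure is the delicate part; once it is in place, positivity, monotonicity and strict scalability close the argument and yield the unique globally attracting limit $\hat{\mathbf v}_{\mathrm s}$ claimed in \autoref{th_var}.
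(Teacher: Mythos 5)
Your proposal follows essentially the same route as the paper: the paper's entire proof consists of observing that the variance-update map $f(\mathbf{v}_{\mathrm{s}})$ satisfies the conditions of a standard interference function in the sense of \cite{hanly} (positivity, monotonicity, scalability) and pointing to the steps of Lemma 1 in \cite{zhang}, which is exactly what you establish through the harmonic/affine decomposition of \eqref{BP_vf_var} and \eqref{BP_fv_var}. Your extra attention to existence of the fixed point (the super-solution and invariant-interval construction) fills in a step the paper delegates wholesale to the cited references rather than constituting a different approach.
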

\begin{proof}
The theorem can be proved by showing that $f \big( \mathbf {v}_{\mathrm{s}}\big)$ satisfies the conditions of the so-called standard function \cite{hanly}, following similar steps as in the proof of Lemma 1 in \cite{zhang}.
\end{proof}

\textbf{Convergence of the Means:}
Equations \eqref{BP_vf_mean} and \eqref{BP_fv_mean} show that the evolution of the mean values depends on the variance values. Due to Theorem 3.2.2, it is possible to simplify evaluation of mean values $\mathbf {z}_{\mathrm{s}}$ from indirect factor nodes $\mathcal{F}_{\mathrm{ind}}$ to variable nodes $\mathcal{V}$ by using the fixed-point values of $\hat{\mathbf {v}}_{\mathrm{s}}$. The evolution of means $\mathbf {z}_{\mathrm{s}}$ becomes a set of linear equations: 
		\begin{equation}
        \begin{aligned}
        \mathbf {z}_{\mathrm{s}}^{(\tau)} = \mathbf{\widetilde z}
        -  \bm \Omega 
        \mathbf {z}_{\mathrm{s}}^{(\tau-1)},
        \end{aligned}
        \label{rand_mean}
		\end{equation}
where
		\begin{subequations}
        \begin{align}
		\mathbf{\widetilde z} &=  \mathbf{C}^{-1} \mathbf{z}_{\mathrm{a}} - 
        \mathbf{D} \cdot \big(\mathfrak{D} (\hat {\mathbf{A}})\big)^{-1} 
        \cdot \mathbf{L}\mathbf{z}_\mathrm{b}
        \\
		\bm \Omega &=\mathbf{D}\cdot\big(\mathfrak{D} 
        (\hat {\mathbf{A}})\big)^{-1} \cdot 
        \mathbf{\Gamma}\hat{\bm \Sigma}_{\mathrm{s}}^{-1}
        \\
        \hat {\mathbf{A}} &= \mathbf{\Gamma} \hat{\bm \Sigma}_{\mathrm{s}}^{-1}
        \mathbf{\Gamma}^\mathrm{T} + \mathbf{L}
        \\
        \mathbf{D} &= \mathbf{C}^{-1}\mathbf{\Pi} \mathbf{C}.
        \end{align}	
		\end{subequations}				
Note that the vector of means $\mathbf {z}_{\mathrm{s}} \in \mathbb{R}^b$ can be decomposed as:
		\begin{equation}
        \begin{aligned}
        \mathbf {z}_{\mathrm{s}}^{(\tau)} &=              
        [\mathbf {z}_{\mathrm{s},1}^{(\tau)}, \dots, \mathbf {z}_{\mathrm{s},m}
        ^{(\tau)}]^{\mathrm{T}},
        \end{aligned}
		\end{equation}
where the $i$-th element $\mathbf {z}_{\mathrm{s},i} \in \mathbb{R}^{d_i}$ is equal to:		
		\begin{equation}
        \begin{aligned}
        \mathbf {z}_{\mathrm{s},i}^{(\tau)} &=         
        [ z_{f_{i} \to x_k}^{(\tau)}, \dots, z_{f_{i} \to x_K}^{(\tau)}].
        \end{aligned}
		\end{equation}		

The vector $\mathbf{z}_{\mathrm{a}} = \big[\mathbf{z}_{\mathrm{a,1}}, \cdots\mathbf{z}_{\mathrm{a},m} \big]^{\mathrm{T}} \in \mathbb{R}^{b}$ contains means of indirect factor nodes, where $\mathbf{z}_{\mathrm{a},i} = [z_i, \dots, z_i] \in \mathbb{R}^{d_i}$. The diagonal matrix $\hat{\bm \Sigma}_{\mathrm{s}} \in \mathbb{R}^{b \times b}$ is obtained as $\hat{\bm \Sigma}_{\mathrm{s}} =$ $\lim_{\tau \to \infty} {\bm \Sigma}_{\mathrm{s}}^{(\tau)}$. The vector $\mathbf{z}_{\mathrm{b}} = \big[\mathbf{z}_{\mathrm{b},1}, \cdots, \mathbf{z}_{\mathrm{b},m} \big]\in \mathbb{R}^{b}$ contains means from direct and virtual factor nodes to a variable node, if such nodes exist, where the $i$-th element $\mathbf{z}_{\mathrm{b},i} = \big[z_{x_k}, \cdots {z}_{x_K} \big] \in \mathbb{R}^{d_i}$. For example, the element $z_{x_k}$ of $\mathbf{z}_{\mathrm{b},i}$ is equal to:  	
		\begin{equation}
        z_{x_k}=
   		\begin{cases}
        {z_{f_{\mathrm{d},k} \to x_k}}, & \text{if}  \;\; x_k  \;\; 
        \text{is incident to $f_{\mathrm{d},k}$} \\[3pt]
        0, & \text{otherwise.}
        \end{cases}
		\end{equation}
		
\begin{theorem} \label{th_mean_sy}
The means $\mathbf {z}_{\mathrm{s}}$ from indirect factor nodes to variable nodes converge to a unique fixed point $\lim_{\tau \to \infty} \mathbf {z}_{\mathrm{s}}^{(\tau)} =\hat {\mathbf {z}}_{\mathrm{s}}$:
		\begin{equation}
        \begin{aligned}
        \hat {\mathbf {z}}_{\mathrm{s}} =\big(\mathbf{I}+\bm \Omega \big)^{-1} 
        \mathbf{\widetilde z},
        \label{fixed}
        \end{aligned}
		\end{equation}
for any initial point $\mathbf {z}_{\mathrm{s}}^{(\tau=0)}$ if and only if the spectral radius $\rho(\bm \Omega)<1$.
\end{theorem}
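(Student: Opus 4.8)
The plan is to treat \eqref{rand_mean} as an affine fixed-point iteration $\mathbf{z}_{\mathrm{s}}^{(\tau)} = \widetilde{\mathbf{z}} - \bm\Omega\,\mathbf{z}_{\mathrm{s}}^{(\tau-1)}$ whose iteration matrix is $-\bm\Omega$, and to analyze it with the standard machinery for such linear recursions. The whole argument rests on the classical fact that for a square matrix $\mathbf{M}$ one has $\mathbf{M}^{\tau}\to\mathbf{0}$ as $\tau\to\infty$ if and only if $\rho(\mathbf{M})<1$, together with the elementary observation that $\rho(-\bm\Omega)=\rho(\bm\Omega)$, since negating a matrix only negates its eigenvalues.

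First I would settle existence and uniqueness of the fixed point. Assuming $\rho(\bm\Omega)<1$, every eigenvalue $\lambda$ of $\bm\Omega$ satisfies $|\lambda|<1$, so in particular $\lambda\neq-1$; hence $-1$ is not an eigenvalue of $\bm\Omega$ and $\mathbf{I}+\bm\Omega$ is nonsingular. The fixed-point condition $\hat{\mathbf{z}}_{\mathrm{s}}=\widetilde{\mathbf{z}}-\bm\Omega\,\hat{\mathbf{z}}_{\mathrm{s}}$ then has the unique solution $\hat{\mathbf{z}}_{\mathrm{s}}=(\mathbf{I}+\bm\Omega)^{-1}\widetilde{\mathbf{z}}$, which is exactly \eqref{fixed}.

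For sufficiency, I would introduce the error vector $\mathbf{e}^{(\tau)}=\mathbf{z}_{\mathrm{s}}^{(\tau)}-\hat{\mathbf{z}}_{\mathrm{s}}$. Subtracting the fixed-point equation from \eqref{rand_mean} cancels the constant term $\widetilde{\mathbf{z}}$ and gives $\mathbf{e}^{(\tau)}=-\bm\Omega\,\mathbf{e}^{(\tau-1)}$, whence $\mathbf{e}^{(\tau)}=(-\bm\Omega)^{\tau}\mathbf{e}^{(0)}$. Since $\rho(-\bm\Omega)=\rho(\bm\Omega)<1$, the powers $(-\bm\Omega)^{\tau}$ converge to the zero matrix, so $\mathbf{e}^{(\tau)}\to\mathbf{0}$ and $\mathbf{z}_{\mathrm{s}}^{(\tau)}\to\hat{\mathbf{z}}_{\mathrm{s}}$ regardless of the initial point $\mathbf{z}_{\mathrm{s}}^{(\tau=0)}$.

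For necessity, I would exploit that the statement demands convergence to one and the same fixed point for \emph{every} initialization. Running \eqref{rand_mean} from an arbitrary start $\hat{\mathbf{z}}_{\mathrm{s}}+\mathbf{w}$ and from $\hat{\mathbf{z}}_{\mathrm{s}}$ itself, the difference of the two trajectories is exactly $(-\bm\Omega)^{\tau}\mathbf{w}$; since both are assumed to converge to the same limit $\hat{\mathbf{z}}_{\mathrm{s}}$, this difference must tend to $\mathbf{0}$. As $\mathbf{w}$ is arbitrary, $(-\bm\Omega)^{\tau}\mathbf{w}\to\mathbf{0}$ for all $\mathbf{w}$, i.e.\ $(-\bm\Omega)^{\tau}\to\mathbf{0}$, which by the standard fact forces $\rho(-\bm\Omega)=\rho(\bm\Omega)<1$. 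The one point deserving care is precisely this direction: the hypothesis that the limit is a \emph{unique} fixed point reached from \emph{any} initial point is what rules out the borderline case of an eigenvalue of $\bm\Omega$ lying on the unit circle (e.g.\ equal to $-1$), where the powers $(-\bm\Omega)^{\tau}$ could converge to a nonzero projection rather than to $\mathbf{0}$; convergence of a single trajectory alone would not suffice to exclude this.
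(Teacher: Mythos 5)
Your proof is correct and complete. It is worth noting how it relates to the paper: the thesis does not actually write out a proof of this theorem at all --- its ``proof'' is a one-line pointer to Theorem 5.2 of the cited reference \cite{hanly}, which is exactly the classical result you reprove, namely that an affine iteration $\mathbf{z}_{\mathrm{s}}^{(\tau)} = \widetilde{\mathbf{z}} - \bm\Omega\,\mathbf{z}_{\mathrm{s}}^{(\tau-1)}$ converges to a unique fixed point from every initialization if and only if the spectral radius of the iteration matrix is strictly less than one. So your submission effectively supplies the details the paper outsources: invertibility of $\mathbf{I}+\bm\Omega$ under $\rho(\bm\Omega)<1$ (no eigenvalue can equal $-1$), the error recursion $\mathbf{e}^{(\tau)} = (-\bm\Omega)^{\tau}\mathbf{e}^{(0)}$ for sufficiency, and the two-trajectory argument for necessity. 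The necessity direction is indeed the only delicate point, and you handle it properly: convergence of a single trajectory would not rule out eigenvalues of $-\bm\Omega$ on the unit circle (where powers may converge to a nonzero projection), whereas demanding a common limit from \emph{every} initial point forces $(-\bm\Omega)^{\tau}\mathbf{w}\to\mathbf{0}$ for all $\mathbf{w}$, hence $(-\bm\Omega)^{\tau}\to\mathbf{0}$ and $\rho(\bm\Omega)<1$. What your self-contained argument buys over the paper's citation is transparency and independence from the ``standard function'' machinery used for the variance result (Theorem \ref{th_var}); what the citation buys the paper is brevity and a uniform framework shared with that variance analysis. Either way, your argument would stand as a valid replacement for the paper's proof.
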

\begin{proof}
The proof steps follow the proof of Theorem 5.2, \cite{hanly}.
\end{proof}

\begin{tcolorbox}[title=Convergence of the DC-BP Algorithm with Synchronous Scheduling]
To summarize, the convergence of the DC-BP algorithm depends on the spectral radius of the matrix:
		\begin{equation}
        \begin{aligned}
        \bm \Omega = \big(\mathbf{C}^{-1}\mathbf{\Pi} \mathbf{C} \big) 
        \cdot\big[\mathfrak{D} (\mathbf{\Gamma} \hat{\bm \Sigma}_{\mathrm{s}}^{-1}
        \mathbf{\Gamma}^\mathrm{T} + \mathbf{L})\big]^{-1} \cdot \big(\mathbf{\Gamma}
        \hat{\bm \Sigma}_{\mathrm{s}}^{-1}\big).
        \end{aligned}
        \label{syn_omega}
		\end{equation}
If the spectral radius $\rho(\bm \Omega)<1$, the DC-BP algorithm will converge and the resulting vector of mean values will be equal to the solution of the MAP estimator.
\end{tcolorbox}	

\subsection{Convergence of DC-BP with Randomized Damping} \label{sub:convergence_rnd}
In this section, we propose an improved DC-BP algorithm that applies synchronous scheduling with randomized damping. Several previous works reported that damping the BP messages improves the convergence of BP\cite{zhang, pretti}. Here, we propose a different randomized damping approach, where each mean value message from indirect factor node to a variable node is damped independently with probability $p$, otherwise, the message is calculated as in the standard DC-BP algorithm. The damped message is evaluated as a linear combination of the message from the previous and the current iteration, with weights $\alpha_1$ and $1-\alpha_1$, respectively. In numerical section, we demonstrate that the DC-BP with randomized damping dramatically improves convergence as compared to the standard DC-BP.    

In the proposed damping, the equation \eqref{rand_mean} is redefined as:
		\begin{equation}
        \begin{aligned}
        \mathbf {z}_{\mathrm{d}}^{(\tau)} = \mathbf {z}_{\mathrm{q}}^{(\tau)}
        +\alpha_1 \mathbf {z}_{\mathrm{w}}^{(\tau-1)} + 
        \alpha_2\mathbf {z}_{\mathrm{w}}^{(\tau)},
        \label{rand_3}
        \end{aligned}
		\end{equation}
where $0<\alpha_1<1$ is the weighting coefficient, and $\alpha_2 = 1 - \alpha_1$. In the above expression, 
$\mathbf {z}_{\mathrm{q}}^{(\tau)}$ and $\mathbf {z}_{\mathrm{w}}^{(\tau)}$ are obtained as: 
		\begin{subequations}
        \begin{align}
        \mathbf {z}_{\mathrm{q}}^{(\tau)} &= 
        \mathbf {Q} \mathbf{\widetilde z}
        - \mathbf {Q}  \bm \Omega 
        \mathbf {z}_{\mathrm{s}}^{(\tau-1)}
        \label{rand_1}\\
         \mathbf {z}_{\mathrm{w}}^{(\tau)} &= 
        \mathbf {W}\mathbf{\widetilde z}
        - \mathbf {W} \bm \Omega
        \mathbf {z}_{\mathrm{s}}^{(\tau-1)},
        \label{rand_2}
        \end{align}
		\end{subequations} 
where diagonal matrices $\mathbf {Q} \in \mathbb{F}_2^{b \times b}$ and $\mathbf {W} \in \mathbb{F}_2^{b \times b}$ are defined as $\mathbf {Q} = \mathrm{diag}(1 - q_1,...,1 - q_b)$, $q_i \sim \mathrm{Ber}(p)$, and $\mathbf {W} = \mathrm{diag}(q_1,...,q_b)$, respectively, where $\mathrm{Ber}(p) \in \{0,1\}$ is a Bernoulli random variable with probability $p$ independently sampled for each mean value message. 	

Substituting \eqref{rand_1} and \eqref{rand_2} in \eqref{rand_3}, we obtain:
		\begin{equation}
        \begin{aligned}
        \mathbf {z}_{\mathrm{d}}^{(\tau)} = \big(\mathbf {Q}+  \alpha_2 \mathbf {W}\big)
        \mathbf{\widetilde z} - 
        \big(\mathbf {Q} \bm \Omega + \alpha_2\mathbf {W} 
        \bm \Omega - \alpha_1\mathbf {W} \big) 
        \mathbf {z}_{\mathrm{s}}^{(\tau-1)}. 
        \label{rand_4}
        \end{aligned}
		\end{equation}
Note that $\mathbf {z}_{\mathrm{r}}^{(\tau-1)} = \mathbf {W} \mathbf {z}_{\mathrm{s}}^{(\tau-1)}$. In a more compact form, equation \eqref{rand_4} can be written as follows: 
		\begin{equation}
        \begin{aligned}
        \mathbf {z}_{\mathrm{d}}^{(\tau)} =\mathbf{\bar z} - \bm {\bar \Omega} 
        \mathbf {z}_{\mathrm{s}}^{(\tau-1)}, 
        \label{rand_5}
        \end{aligned}
		\end{equation}
where
		\begin{subequations}
        \begin{align}
		\mathbf{\bar z} &= \big(\mathbf {Q}+  \alpha_2 \mathbf {W}\big) 
        \mathbf{\widetilde z}
        \label{rd_z}\\
		\bm {\bar \Omega} &= \mathbf {Q} \bm \Omega + \alpha_2\mathbf {W} 
        \bm \Omega - \alpha_1\mathbf {W}.
		\label{rd_omega}
        \end{align}
		\label{rd_all}%
		\end{subequations}
		
\begin{theorem} \label{th_mean_rd}
The means $\mathbf {z}_{\mathrm{d}}$ from indirect factor nodes to variable nodes converge to a unique fixed point $\hat{\mathbf {z}}_{\mathrm{d}} = \lim_{\tau \to \infty} \mathbf {z}_{\mathrm{d}}^{(\tau)}$ for any initial point $\mathbf {z}_{\mathrm{d}}^{(\tau=0)}$ if and only if the spectral radius $\rho(\bm {\bar \Omega})<1$. For the resulting fixed point, it holds that $\hat{\mathbf {z}}_{\mathrm{d}} = \hat{\mathbf {z}}_{\mathrm{s}}$. 
\end{theorem}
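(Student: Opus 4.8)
The plan is to read the damped recursion \eqref{rand_5} as a stationary affine iteration and then to reuse, almost verbatim, the linear-iteration argument already used for Theorem 3.2.3. Fixing one realization of the Bernoulli damping pattern, so that $\mathbf{Q}$, $\mathbf{W}$, $\bm{\bar\Omega}$ and $\mathbf{\bar z}$ are constant diagonal/affine data, \eqref{rand_5} becomes $\mathbf{z}_{\mathrm{d}}^{(\tau)}=\mathbf{\bar z}-\bm{\bar\Omega}\,\mathbf{z}_{\mathrm{d}}^{(\tau-1)}$, where the state propagated between iterations is the damped vector itself. First I would settle the convergence equivalence exactly as in Theorem 3.2.3: when $\rho(\bm{\bar\Omega})<1$ the value $-1$ is not an eigenvalue of $\bm{\bar\Omega}$, so $\mathbf{I}+\bm{\bar\Omega}$ is nonsingular and a unique candidate fixed point $\hat{\mathbf{z}}_{\mathrm{d}}=(\mathbf{I}+\bm{\bar\Omega})^{-1}\mathbf{\bar z}$ exists; subtracting it from the recursion yields the error dynamics $\mathbf{e}^{(\tau)}=(-\bm{\bar\Omega})^{\tau}\mathbf{e}^{(0)}$. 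Since $\rho(-\bm{\bar\Omega})=\rho(\bm{\bar\Omega})$, the standard fact that $\mathbf{M}^{\tau}\to\mathbf{0}$ if and only if $\rho(\mathbf{M})<1$ gives decay of every initial error precisely when $\rho(\bm{\bar\Omega})<1$, and exhibits a non-decaying initial error otherwise, establishing both directions of the ``if and only if.''

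The substantive half is the identity $\hat{\mathbf{z}}_{\mathrm{d}}=\hat{\mathbf{z}}_{\mathrm{s}}$, which I would prove by showing that damping cannot move the fixed point. Substituting $\mathbf{Q}=\mathbf{I}-\mathbf{W}$ and $1-\alpha_2=\alpha_1$ into \eqref{rd_z} and \eqref{rd_omega} collapses the damped data to $\mathbf{\bar z}=(\mathbf{I}-\alpha_1\mathbf{W})\mathbf{\widetilde z}$ and $\bm{\bar\Omega}=\bm\Omega-\alpha_1\mathbf{W}(\mathbf{I}+\bm\Omega)$, and hence gives the key factorization
\[
\mathbf{I}+\bm{\bar\Omega}=(\mathbf{I}-\alpha_1\mathbf{W})(\mathbf{I}+\bm\Omega).
\]
Because $\mathbf{I}-\alpha_1\mathbf{W}$ is diagonal with entries in $\{1,\,1-\alpha_1\}\subset(0,\infty)$, it is invertible, so multiplying the defining relation $(\mathbf{I}+\bm\Omega)\hat{\mathbf{z}}_{\mathrm{s}}=\mathbf{\widetilde z}$ of Theorem 3.2.3 on the left by $(\mathbf{I}-\alpha_1\mathbf{W})$ gives $(\mathbf{I}+\bm{\bar\Omega})\hat{\mathbf{z}}_{\mathrm{s}}=(\mathbf{I}-\alpha_1\mathbf{W})\mathbf{\widetilde z}=\mathbf{\bar z}$. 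Thus $\hat{\mathbf{z}}_{\mathrm{s}}$ satisfies the damped fixed-point equation, and by the uniqueness established in the first step it must equal $\hat{\mathbf{z}}_{\mathrm{d}}$.

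I expect the factorization to be the conceptual crux rather than any single calculation: it is the algebraic expression of the intuition that at a fixed point the previous and current damped messages coincide, so the convex combination $\alpha_1\mathbf{z}_{\mathrm{w}}^{(\tau-1)}+\alpha_2\mathbf{z}_{\mathrm{w}}^{(\tau)}$ reduces to the undamped message and leaves the solution unchanged. The remaining difficulties are bookkeeping: I must check that the reduction of $\bm{\bar\Omega}$ and $\mathbf{\bar z}$ is exact under $\mathbf{Q}+\mathbf{W}=\mathbf{I}$ and $\alpha_1+\alpha_2=1$, and justify the self-referential reading of \eqref{rand_5} in which $\mathbf{z}_{\mathrm{s}}^{(\tau-1)}$ is the previous damped state. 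The payoff of the factorization is also the qualitative message of the theorem: damping rescales $\mathbf{I}+\bm\Omega$ by the positive diagonal matrix $\mathbf{I}-\alpha_1\mathbf{W}$, which can shrink the spectral radius $\rho(\bm{\bar\Omega})$ and thereby broaden the convergence regime, while provably preserving the MAP estimate $\hat{\mathbf{z}}_{\mathrm{s}}$.
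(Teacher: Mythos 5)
Your proposal is correct, and its key step is genuinely different from the thesis proof. The thesis substitutes \eqref{rand_6} into \eqref{rand_3} to obtain the two-step recursion \eqref{rand_7} in $\mathbf{z}_{\mathrm{s}}^{(\tau-1)}$ and $\mathbf{z}_{\mathrm{s}}^{(\tau-2)}$, passes to the limit $\tau\to\infty$ to get \eqref{rand_8}, and then collapses that expression to $(\mathbf{I}+\bm\Omega)^{-1}\mathbf{\widetilde z}$ using $\mathbf{Q}+\mathbf{W}=\mathbf{I}$ and $\alpha_1+\alpha_2=1$; the ``if and only if'' convergence criterion itself is not re-proved there but inherited from the argument of Theorem \ref{th_mean_sy} (via \cite{hanly}). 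You instead keep the one-step recursion \eqref{rand_5}, read self-referentially in the damped state (the correct reading, consistent with \eqref{num_fix}), handle the convergence equivalence by the standard affine-iteration error analysis, and transfer the fixed point through the factorizations $\mathbf{\bar z}=(\mathbf{I}-\alpha_1\mathbf{W})\mathbf{\widetilde z}$ and $\mathbf{I}+\bm{\bar\Omega}=(\mathbf{I}-\alpha_1\mathbf{W})(\mathbf{I}+\bm\Omega)$, both of which are exact under $\mathbf{Q}=\mathbf{I}-\mathbf{W}$ and $\alpha_2=1-\alpha_1$. The same elementary identities underlie both proofs, but your factorization buys two things the thesis leaves implicit: it shows that $\rho(\bm{\bar\Omega})<1$ already forces $\mathbf{I}+\bm\Omega$ to be nonsingular, so the target $\hat{\mathbf{z}}_{\mathrm{s}}$ of \eqref{fixed} is well defined even in the regime of practical interest where the undamped iteration diverges ($\rho(\bm\Omega)\ge 1$) --- a point needed both to perform the inversion in \eqref{rand_8} and to invoke the relation of Theorem \ref{th_mean_sy} at all; and it makes transparent that damping merely rescales the fixed-point equation by a positive diagonal matrix, which is the structural reason the MAP solution is preserved. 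The loose ends you flag yourself (the self-referential reading, and the legitimacy of using $(\mathbf{I}+\bm\Omega)^{-1}$) are exactly the points your factorization settles, so there is no genuine gap.
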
	

\begin{proof}
To prove theorem it is sufficient to show that equation \eqref{rand_5} converges to the fixed point defined in \eqref{fixed}. We can write:
		\begin{equation}
        \begin{aligned}
        \mathbf {z_r}^{(\tau-1)} &= 
        \mathbf {W}\mathbf{\widetilde z}
        - \mathbf {W} \bm \Omega
        \mathbf {z}_{\mathrm{s}}^{(\tau-2)}.
        \label{rand_6}
        \end{aligned}
		\end{equation}
Substituting \eqref{rand_1}, \eqref{rand_2} and \eqref{rand_6} in \eqref{rand_3}:
		\begin{equation}
        \begin{aligned}
        \mathbf {z}_{\mathrm{d}}^{(\tau)} = \big(\mathbf {Q}+  \alpha_2 \mathbf {W} 
        + \alpha_1 \mathbf {W}\big)
        \mathbf{\widetilde z} 
        - \big(\mathbf {Q} \bm \Omega + \alpha_2\mathbf {W} 
        \bm \Omega\big)\mathbf {z}_{\mathrm{s}}^{(\tau-1)} - 
        \alpha_1\mathbf {W} \bm \Omega
        \mathbf {z}_{\mathrm{s}}^{(\tau-2)}.           
        \label{rand_7}
        \end{aligned}
		\end{equation}

The fixed point $\hat{\mathbf {z}}_{\mathrm{d}}=\lim_{\tau \to \infty} \mathbf {z}_{\mathrm{d}}^{(\tau)}$ is equal to:		
		\begin{equation}
        \begin{aligned}
        \hat{\mathbf {z}}_{\mathrm{d}} = 
        \big( \mathbf {I} + \mathbf {Q} \bm \Omega + \alpha_2\mathbf {W} 
        \bm \Omega + \alpha_1\mathbf {W} \bm \Omega \big)^{-1} 
        \cdot \big(\mathbf {Q}+  \alpha_2 \mathbf {W} 
        + \alpha_1 \mathbf {W}\big)
        \mathbf{\widetilde z}.     
        \label{rand_8}
        \end{aligned}
		\end{equation}
From definitions of $\mathbf {Q}$, $\mathbf {W}$ and $\alpha_2$, we have $\mathbf {Q} \bm \Omega + \alpha_2\mathbf {W} \bm \Omega + \alpha_1\mathbf {R} \bm \Omega = \bm \Omega$ and $\mathbf {Q}+  \alpha_2 \mathbf {W} + \alpha_1 \mathbf {W} = \mathbf {I}$, thus \eqref{rand_8} becomes:
		\begin{equation}
        \begin{aligned}
        \hat{\mathbf {z}}_{\mathrm{d}} = 
        \big( \mathbf {I} + \bm \Omega \big)^{-1}\mathbf{\widetilde z}.  
        \end{aligned}
		\end{equation}
This concludes the proof.				
\end{proof}

\begin{tcolorbox}[title=Convergence of the DC-BP Algorithm with Randomized Damping]
To summarize, the convergence of the DC-BP with randomized damping depends on the spectral radius of the matrix:
		\begin{equation}
        \begin{aligned}
		\bm {\bar \Omega} &= \mathbf {Q} \bm \Omega + \alpha_2\mathbf {W} 
        \bm \Omega - \alpha_1\mathbf {W}.
        \end{aligned}
		\end{equation}
If the spectral radius $\rho(\bm {\bar \Omega})<1$, the DC-BP algorithm will converge to the same fixed point obtained by the DC-BP with synchronous scheduling.
\end{tcolorbox}	

\subsection{Randomized Damping Parameters}
The proposed randomized damping scheduling updates of selected factor to variable node means in every iteration by combining them with their values from the previous iteration using convergence parameters $p$ and $\alpha_1$:
		\begin{equation}
        \begin{aligned}
		z_{f_i \to x_s}^{(\tau)}=
		(1-q_{is})\cdot z_{f_i \to x_s}^{(\tau)} + q_{is}
		\cdot \big(\alpha_1\cdot z_{f_i \to x_s}^{(\tau-1)}+\alpha_2
		\cdot z_{f_i \to x_s}^{(\tau)}\big), 
		\label{num_fix}
        \end{aligned}
		\end{equation}
where $q_{is} \sim \mathrm{Ber}(p) \in \{0,1\}$ is independently sampled with probability $p$ for the mean from factor node $f_i$ to the variable node $x_s$.

The probability $p$ defines a fraction of a factor node to variable node messages from the current iteration that are combined with the corresponding messages from the previous iteration. The weighting coefficient $\alpha_1$ defines the ratio that determines how messages from the current and the previous iteration are combined. For example, $p = 0.2$ specifies that $20 \%$ of the messages from the current iteration will be combined with their values in the previous iteration, while $80 \%$ of messages are keeping the values calculated in the current iteration. Furthermore, if $\alpha_1 =$ $0.1$, then for the $20 \%$ of messages, the new value is obtained as a linear combination of the values calculated in the current and the previous iteration with coefficients $0.1$ and $0.9$, respectively. 

The randomized damping parameter pairs lead to trade-off between the number of non-converging simulations and the rate of convergence. In general, for the selection of $p$ and $\alpha_1$ for which only a small fraction of messages are combined with their values in the previous iteration, and that is the case for $p$ close to zero or $\alpha_1$ close to one, we observe a large number of non-converging simulations. This clearly demonstrates the necessity of using \eqref{rand_3} to ``slow down'' the BP progress, thus increasing the algorithm stability and providing improved convergence.
 
We expect that, for any selected $\alpha_1$, the BP algorithm will converge faster for smaller values of $p$, as lower $p$ leads to a reduced ``slow down'' effect. However, one needs to be careful with selection of $p$ in order to avoid the combinations of $p$ and $\alpha_1$ that lead to large number of non-converging outcomes. 
     
\section{Fast Real-Time DC State Estimation}
Monitoring and control capability of the system strongly depends on the SE accuracy as well as the periodicity of evaluation of state estimates. Ideally, in the presence of both legacy and phasor measurements, SE should run at the scanning rate (seconds or sub-second). In the following, we propose a fast real-time state estimator based on the BP algorithm. As we described, using the BP, it is possible to estimate state variables in a distributed fashion. In other words, unlike the usual scenario where measurements are transmitted directly to the control center, in the BP framework, measurements are locally collected and processed by local modules that exchange BP messages with neighboring local modules. Furthermore, even in the scenario where measurements are transmitted to the centralized control entity, the BP solution is advantageous over the classical centralized solutions in that it can be easily distributed and parallelized for high performance.

Compared to the previous section that addresses classical (static) SE problem, this section is an extension to the real-time model that operates continuously and accepts asynchronous measurements from different measurement subsystems. More precisely, we assume presence of both SCADA and WAMS infrastructure. We present appropriate models for measurement arrival processes and for the process of measurement deterioration (or ``aging'') over time. Such measurements are continuously integrated into the running instances of distributed BP-based modules. For simplicity, we present the real-time DC-BP, while extension to the non-linear SE model is possible. Furthermore, the BP-based SE is robust to ill-conditioned systems in which significant difference arise between measurement variances, thus allowing state estimator that runs without observability analysis.

To recall, the main SE routines comprise the SE algorithm, network topology processor, observability analysis and bad data analysis. The core of the SE is \emph{the SE algorithm} which provides a state estimate of the system, based on the network topology and set of measurements $\mathcal{M}$. Using information about switch and circuit breaker positions \emph{the network topology processor} generates a bus/branch model of the power network and assigns real-time measurement devices (legacy and/or PMU devices) across the bus/branch model \cite[Sec.~1.3]{abur}. As a result, the graph $\mathcal{G} =$ $(\mathcal{H},\mathcal{E})$ representing the power network is defined. In addition, the set of real-time measurements $\mathcal{M}_{\mathrm{rt}} \subseteq \mathcal{M}$ is connected to the graph $\mathcal{G}$.

According to the location and the type of real-time measurements \emph{the observability analysis} determines observable and unobservable islands. Within the observable islands, it is possible to obtain unique state estimates from the available set of real-time measurements $\mathcal{M}_{\mathrm{rt}}$, which is not the case within unobservable parts of the system. Once observability analysis is done, pseudo-measurements can be added, in order for the entire system to be observable \cite[Sec.~4.6]{abur}, \cite{monticelli}. The set of pseudo-measurements $\mathcal{M}_{\mathrm{ps}} \subset \mathcal{M}$ represents certain prior knowledge (e.g., historical data) of different electrical quantities and they are  usually assigned high values of variances \cite[Sec.~1.3]{abur}. As detailed later, we assume that, at a given time, the system measurements are either real-time or pseudo-measurements, i.e., the sets $\mathcal{M}_{\mathrm{rt}}$ and $\mathcal{M}_{\mathrm{ps}}$ are disjoint $\mathcal{M}_{\mathrm{rt}} \cap \mathcal{M}_{\mathrm{ps}} = \emptyset$ and their union is the set $\mathcal{M} = \mathcal{M}_{\mathrm{rt}} \cup \mathcal{M}_{\mathrm{ps}}$.

To summarize, in this section, we propose a fast and robust BP-based SE algorithm that can update the state estimate vector $\hat{\mathbf x}$ in a time-continuous process. Hence, the algorithm can handle each new measurement $M_i \in \mathcal{M}_{\mathrm{rt}}$ as soon as it is delivered from telemetry to the computational unit. Further, using the DC-BP algorithm, it is possible to compute the state estimate vector $\hat{\mathbf x}$ without resorting to observability analysis.

\subsection{Real-Time SE Using DC-BP}
The proposed SE solution is based on the fact that the BP-based algorithm is robust in terms of handling the ill-conditioned scenarios caused by significant differences between values of variances (e.g., phasor measurements and pseudo-measurements). Ideally, pseudo-measurements should not affect the solution within observable islands (i.e., determined with real-time measurements), therefore the variance of pseudo-measurements $M_i \in \mathcal{M}_{\mathrm{ps}}$ should be set to $v_i \to \infty$. In the conventional SE this concept is a source of ill-conditioned system. Hence, the values of pseudo-measurement variances should be defined to prevent ill-conditioned situations and ensure numerical stability of the SE algorithm (e.g., $10^{10}-10^{15}$). On the other hand, inability to define $v_i \to \infty$ causes that pseudo-measurements have more or less impact on the state estimate $\hat{\mathbf x}$, and thus the number of pseudo-measurements should be minimized to produce an observable system.

The BP SE algorithm allows the inclusion of an arbitrary number of pseudo-measurements with an extremely large values of variances (e.g., $10^{60}$), hence the impact on the observable island is negligible. Consequently, observable islands will have unique solution according to the real-time measurements, while unobservable islands will be determined according to both real-time and pseudo-measurements. Therefore, we propose a model where the network topology processor generates bus/branch model and assigns all possible measurements that exist in the power system, setting their variances to suitable values. 

Without loss of generality, we demonstrate this procedure by a toy-example, using a simple bus/branch model shown in \autoref{Fig_bus_branch} where all the possible measurements are assigned. The first step is converting the bus/branch model and its measurements configuration into the corresponding factor graph illustrated in \autoref{Fig_DC_graph}. 
	\begin{figure}[ht]
	\centering
	\begin{tabular}{@{}c@{}}
	\subfloat[]{\label{Fig_bus_branch}
	\includegraphics[width=3.3cm]{./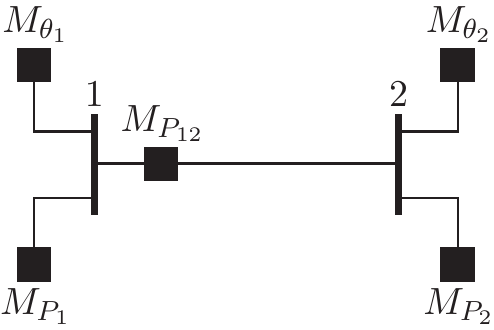}}
	\end{tabular}\quad
	\begin{tabular}{@{}c@{}}
	\subfloat[]{\label{Fig_DC_graph}
	\includegraphics[width=3.8cm]{./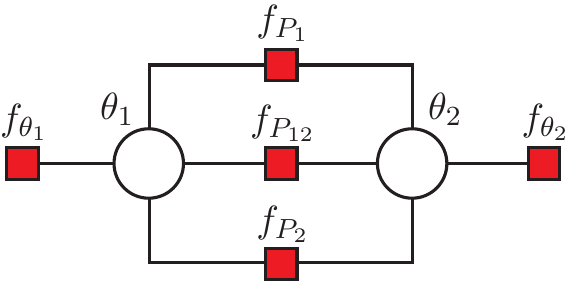}}
	\end{tabular}
	\caption{ Transformation of the bus/branch model and measurement configuration 
	(subfigure a) into the corresponding factor graph for the DC model 
	(subfigure b).}
	\label{Fig_DC}
	\end{figure}
We assume, for the time being, that all the measurements are pseudo-measurements 
$\mathcal{M} \equiv$  $\mathcal{M}_{\mathrm{ps}} =$ $\{M_{\theta_1},$ $M_{\theta_2},$ $M_{P_1},$ $M_{P_2},$ $M_{P_{12}}\}$ and $\mathcal{M}_{\mathrm{rt}} = \{\emptyset\}$, noting that the system is unobservable. Using equations \eqref{BP_vf_mean_var}, \eqref{BP_fv_mean_var} and \eqref{BP_marginal_mean_var} the BP algorithm will compute the state estimate vector $\hat{\mathbf x}$ according to the set of factor nodes $\mathcal{F}$ defined by the set of pseudo-measurements $\mathcal{M} \equiv$  $\mathcal{M}_{\mathrm{ps}}$. Hence, the system is defined according to the prior knowledge in lack of real-time measurements.

Subsequently, in an arbitrary moment, we assume that the computational unit received a real-time measurement $\mathcal{M}_{\mathrm{rt}} =$ $\{M_{\theta_1}\}$, which determines an observable island that contains bus $1$, while bus 2 remains within unobservable island. The BP algorithm in continuous process will compute the new value of state estimate $\hat{\theta}_1$ according to $M_{\theta_1}$, with insignificant impact of (high-variance) pseudo-measurements $\mathcal{M}_{\mathrm{ps}} \setminus \{M_{\theta_1}\}$, while the value of the state estimate $\hat{\theta}_2$ will be defined according to both $M_{\theta_1}$ and $\mathcal{M}_{\mathrm{ps}}\setminus \{M_{\theta_1}\}$.   

Assuming that subsequently, the computational unit receives an additional real-time measurement $M_{P_{12}}$, the system will be observable. The state estimate $\hat{\mathbf x}$ at that moment will be computed according to the real-time measurements $\mathcal{M}_{\mathrm{rt}} =$ $\{M_{\theta_1},$ $M_{P_{12}}\}$, with negligible influence of pseudo-measurements $\mathcal{M}_{\mathrm{ps}} \setminus \{M_{\theta_1}, M_{P_{12}}\}$.

Based on our extensive numerical analysis on large IEEE test cases, the proposed algorithm is able to track the state of the system in the continuous process without need for observability analysis. Note that, due the fact that the values of state variables usually fluctuate in narrow boundaries, in normal conditions, the continuous algorithm allows for fast response to new each measurement.

\section{Numerical Results} 
In this section, using numerical simulations, we analyze the convergence and evaluate the performance of the fast real-time DC-BP algorithm. In all simulated models, we start with a given IEEE test case and apply the power flow analysis to generate the exact solution. Thus, we apply the DC power flow analysis to calculate voltage angles and active powers. Further, we corrupt the exact solution by the additive white Gaussian noise of variance $v_i$ and we observe the set of measurements $\mathcal{M}$.

\subsection{Convergence Analysis}
The measurements contain active power flows and power injections, and bus voltage angles and the set of measurements is selected in such a way that the system is observable. More precisely, for each scenario, we generate 1000 random measurement configurations with the number of measurements equal either to double or triple the size of the number of state variables (i.e., we consider the redundancy to be equal 2 or 3). To evaluate the performance, we convert each of the above randomly generated IEEE test cases with a given measurement configuration into the corresponding factor graph and we run the DC-BP algorithm over the factor graph.       
	\begin{figure}[ht]
	\centering
	\captionsetup[subfigure]{oneside,margin={1.3cm,0cm}}	
	\begin{tabular}{@{}c@{}}
	\subfloat[]{
	\begin{tikzpicture}
  	\begin{axis}[width=5.5cm, height=5.0cm,
   	x tick label style={/pgf/number format/.cd,
   	set thousands separator={},fixed},
   	xlabel={Spectral Radius $\rho$},
   	ylabel={Empirical CDF $F(\rho)$},
   	label style={font=\footnotesize},
   	grid=major,
   	legend style={legend pos=north west,font=\scriptsize, column sep=0cm},
	legend columns=1,   	
   	ymin = 0, ymax = 1.1,
   	xmin = 0.55, xmax = 1.25,
   	xtick={0.6,0.7,0.8,0.9,1,1.1,1.2},
   	tick label style={font=\footnotesize},
   	ytick={0,0.1,0.2,0.3,0.4,0.5,0.6,0.7,0.8,0.9,1.0}]  
    
    \addplot[mark=otimes*,mark repeat=70, mark size=1.5pt, blue] 
   	table [x={x}, y={y}] {./chapter_03/Figs/fig3_7a/01_IEEE14_red2_synchronous.txt}; 
   	\addlegendentry{$\rho(\bm \Omega)$} 
	\addplot[mark=square*,mark repeat=70, mark size=1.5pt, red] 
   	table [x={x}, y={y}] {./chapter_03/Figs/fig3_7a/02_IEEE14_red2_randomized.txt};
   	\addlegendentry{$\rho(\bm {\bar \Omega})$}
  	\end{axis}
	\end{tikzpicture}}
	\end{tabular} \quad	
	\begin{tabular}{@{}c@{}}
	\subfloat[]{
	\begin{tikzpicture}
	\begin{axis}[width=5.5cm, height=5.0cm,
   	x tick label style={/pgf/number format/.cd,
   	set thousands separator={},fixed},
   	xlabel={Spectral Radius $\rho$},
   	ylabel={Empirical CDF $F(\rho)$},
   	label style={font=\footnotesize},
   	grid=major,
   	legend style={legend pos=north west,font=\scriptsize, column sep=0cm},
	legend columns=1,   	
   	ymin = 0, ymax = 1.1,
   	xmin = 0.75, xmax = 1.25,
   	xtick={0.8,0.9,1,1.1,1.2},
   	tick label style={font=\footnotesize},
   	ytick={0,0.1,0.2,0.3,0.4,0.5,0.6,0.7,0.8,0.9,1.0}]
	    
    \addplot[mark=otimes*,mark repeat=70, mark size=1.5pt, blue] 
   	table [x={x}, y={y}] {./chapter_03/Figs/fig3_7b/01_IEEE14_red3_synchronous.txt}; 
   	\addlegendentry{$\rho(\bm \Omega)$} 
	\addplot[mark=square*,mark repeat=70, mark size=1.5pt, red] 
   	table [x={x}, y={y}] {./chapter_03/Figs/fig3_7b/02_IEEE14_red3_randomized.txt};
   	\addlegendentry{$\rho(\bm {\bar \Omega})$} 	
  	\end{axis}
	\end{tikzpicture}}
	\end{tabular}\\
	\begin{tabular}{@{}c@{}}
	\subfloat[]{
	\centering
	\begin{tikzpicture}[spy using outlines=
	{circle, magnification=4, connect spies}]
  	\begin{axis}[width=5.5cm, height=5.0cm,
   	x tick label style={/pgf/number format/.cd,
   	set thousands separator={},fixed},
   	xlabel={Spectral Radius $\rho$},
   	ylabel={Empirical CDF $F(\rho)$},
   	label style={font=\footnotesize},
   	grid=major,
   	legend style={legend pos=south east,font=\scriptsize, column sep=0cm},
	legend columns=1,   	
   	ymin = 0, ymax = 1.1,
   	xmin = 0.95, xmax = 1.55,
   	xtick={1,1.1,1.2,1.3,1.4,1.5},
   	tick label style={font=\footnotesize},
   	ytick={0,0.1,0.2,0.3,0.4,0.5,0.6,0.7,0.8,0.9,1.0}]  
    
    \addplot[mark=otimes*,mark repeat=70, mark size=1.5pt, blue] 
   	table [x={x}, y={y}] {./chapter_03/Figs/fig3_7c/01_IEEE118_red2_synchronous.txt}; 
   	\addlegendentry{$\rho(\bm \Omega)$} 
	\addplot[mark=square*,mark repeat=70, mark size=1.5pt, red] 
   	table [x={x}, y={y}] {./chapter_03/Figs/fig3_7c/02_IEEE118_red2_randomized.txt};
   	\coordinate (spypoint) at (axis cs:1,0.9);
  	\coordinate (magnifyglass) at (axis cs:1.4,0.6);
   	\addlegendentry{$\rho(\bm {\bar \Omega})$}
  	\end{axis}
  	\spy [black, size=1.2cm] on (spypoint) in node[line width=0.15mm, fill=white] 
  	at (magnifyglass);  	
	\end{tikzpicture}}
	\end{tabular}\quad	
	\begin{tabular}{@{}c@{}}
	\subfloat[]{
	\centering
	\begin{tikzpicture}[spy using outlines=
	{circle, magnification=4, connect spies}]
  	\begin{axis}[width=5.5cm, height=5.0cm,
   	x tick label style={/pgf/number format/.cd,
   	set thousands separator={},fixed},
   	xlabel={Spectral Radius $\rho$},
   	ylabel={Empirical CDF $F(\rho)$},
   	label style={font=\footnotesize},
   	grid=major,
   	legend style={legend pos=south east,font=\scriptsize, column sep=0cm},
	legend columns=1,   	
   	ymin = 0, ymax = 1.1,
   	xmin = 0.95, xmax = 1.55,
   	xtick={1,1.1,1.2,1.3,1.4,1.5},
   	tick label style={font=\footnotesize},
   	ytick={0,0.1,0.2,0.3,0.4,0.5,0.6,0.7,0.8,0.9,1.0}]  
    
    \addplot[mark=otimes*,mark repeat=70, mark size=1.5pt, blue] 
   	table [x={x}, y={y}] {./chapter_03/Figs/fig3_7d/01_IEEE118_red3_synchronous.txt}; 
   	\addlegendentry{$\rho(\bm \Omega)$} 
	\addplot[mark=square*,mark repeat=70, mark size=1.5pt, red] 
   	table [x={x}, y={y}] {./chapter_03/Figs/fig3_7d/02_IEEE118_red3_randomized.txt};
   	\coordinate (spypoint) at (axis cs:1,1);
  	\coordinate (magnifyglass) at (axis cs:1.4,0.6);
   	\addlegendentry{$\rho(\bm {\bar \Omega})$}
  	\end{axis}
  	\spy [black, size=1.2cm] on (spypoint) in node[line width=0.15mm,fill=white] 
  	at (magnifyglass);
	\end{tikzpicture}}
	\end{tabular}
	\caption{The spectral radius of matrices $\bm \Omega$ for synchronous scheduling 
	and $\bm {\bar \Omega}$ for randomized damping for
	redundancy equal 2 for IEEE 14-bus (subfigure a) and IEEE 118-bus (subfigure c) 
	test case and for redundancy equal 3 for IEEE 14-bus (subfigure b) 
	and IEEE 118-bus (subfigure d) test case.}
	\label{fig_DC_con}
	\end{figure}
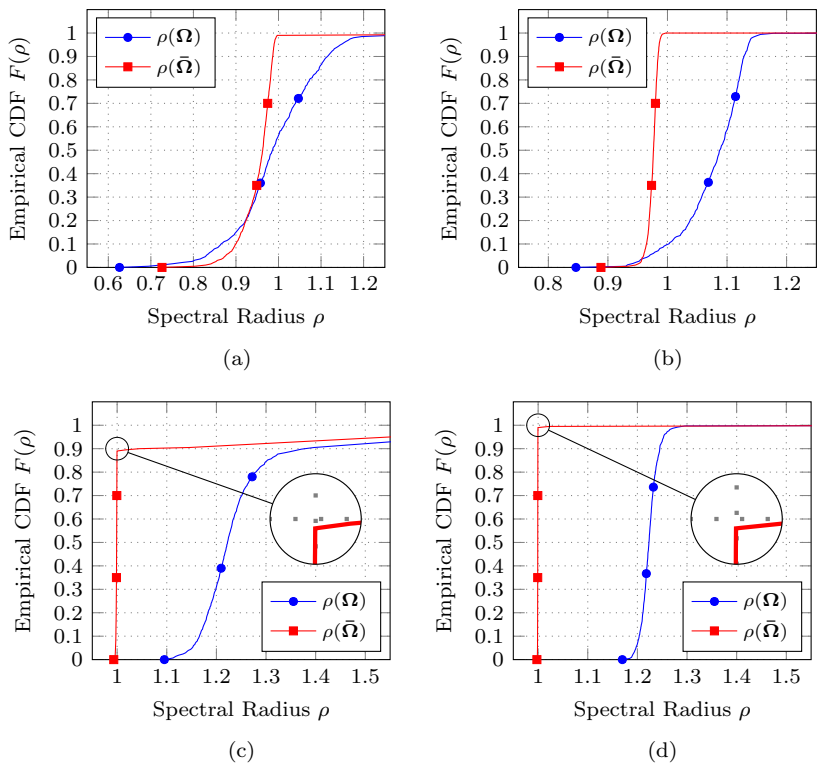     	 

As detailed in \autoref{sub:convergence_syn} and \autoref{sub:convergence_rnd}, the DC-BP with synchronous scheduling with or without randomized damping will converge if $\rho(\bm \Omega) < 1$ and $\rho(\bm {\bar \Omega}) < 1$, respectively. This condition is verified in our simulations, thus we present the convergence performance by comparing spectral radii of matrices $\bm \Omega$ and $\bm {\bar \Omega}$. 

\autoref{fig_DC_con} shows empirical cumulative density function (CDF) $F(\rho)$ of spectral radius $\rho(\bm \Omega)$ and $\rho(\bm {\bar \Omega})$ for different redundancies for IEEE 14-bus and IEEE 118-bus test case. For each scenario, the randomized damping case behaves superior in terms of the spectral radius. As an interesting and somewhat extreme case, for the IEEE 118-bus test case, the DC-BP algorithm with synchronous scheduling could not converge at all, while with randomized damping\footnote{Note that randomized damping parameters are set to $p = 0.6$ and $\alpha_1 = 0.5$.}, we recorded convergence with probability above $0.9$. As expected, the algorithm with randomized damping performs better for larger redundancy.

\subsection{Fast Real-Time DC-BP Algorithm}
We evaluate the performance of the proposed algorithm using the IEEE 14-bus test case with the measurement configuration shown in \autoref{fig_IEEE14}. The slack bus is bus 1 where the voltage angle has a given value $\theta_1 = 0$, therefore, the variance is $v_1 \to 0$ (e.g. we use $v_1 = 10^{-60}\,\mathrm{deg}$). Throughout this part, the variance of active power flow and injection pseudo-measurements are $v_{\mathrm{ps}} = 10^{60}\,\mathrm{MW}$, while voltage angle pseudo-measurements have $v_{\mathrm{ps}} = 10^{60}\,\mathrm{deg}$. Note that the base power for the IEEE 14-bus test case is $100\,\mathrm{MVA}$. 
	\begin{figure}[ht]
	\centering
	\includegraphics[width=65mm]{./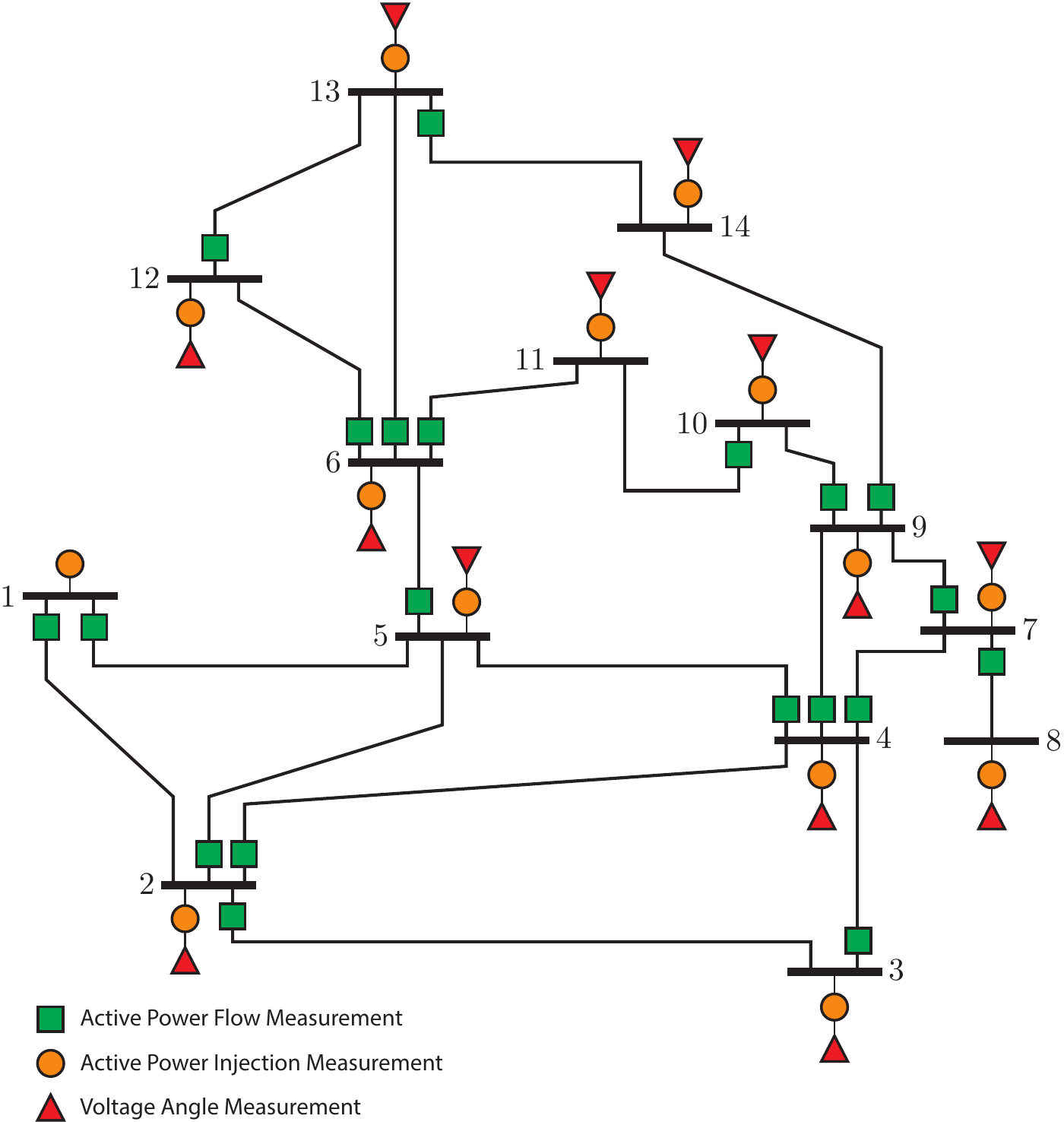}
	\caption{The IEEE 14-bus test case with measurement configuration.}
	\label{fig_IEEE14}
	\end{figure}

In each test case (described below), the algorithm starts at the time instant $t = 0$ initialized using the full set of pseudo-measurements $\mathcal{M} \equiv \mathcal{M}_{\mathrm{ps}}$ generated according to historical data.
Consider an arbitrary measurement $M_i \in \mathcal{M}$, this measurement is initialized as pseudo-measurement, i.e., at $t=0$, $M_i \in \mathcal{M}_{\mathrm{ps}}$. Let $t_{\mathrm{rt}}$ denotes the time instant when the computational unit has received the real-time measured value of $M_i$ with the predefined value of variance $v_{\mathrm{rt}}$. We model the ``aging'' of the information provided by this measurement by the linear variance increase over time up to the time instant $t_{\mathrm{ps}}$ where it becomes equal to $v_{\mathrm{ps}}$ (\autoref{fig_var}). In other words, we assume $M_i \in \mathcal{M}_{\mathrm{ps}}$ during $0 \leq t < t_{\mathrm{rt}}$ and $t \geq t_{\mathrm{ps}}$, while $M_i \in \mathcal{M}_{\mathrm{rt}}$ during $t_{\mathrm{rt}} \leq t < t_{\mathrm{ps}}$. After the transition period $t \geq t_{\mathrm{ps}}$, $M_i$ is observed as pseudo-measurement until the next real-time measurement is received.
	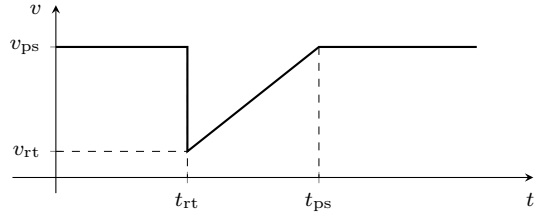
\begin{figure}[ht]
	\centering
	\begin{tikzpicture}
  	\begin{axis}[axis lines=center, axis equal image, enlargelimits=true,
  	y label style={at={(0.015,1.05)}}, x label style={at={(1.02,-0.12)}},
    xlabel={$t$},
	ylabel={$v$},
    label style={font=\footnotesize},   	
    xtick={1, 2},
    xticklabels={$t_{\mathrm{rt}}$, $t_{\mathrm{ps}}$},
    ytick={0.2,1},
    yticklabels={$v_{\mathrm{rt}}$, $v_{\mathrm{ps}}$},
    tick label style={font=\footnotesize},
    ymin = 0, ymax = 1.2,   	
   	xmin = 0, xmax = 3.3]
   	\addplot [black, no markers, thick] coordinates {(0,1) (1,1) (1,0.2) (2,1) (3.2,1)};
   	\addplot [black, no markers, dashed, very thin] coordinates {(0,0.2) (1,0.2)}; 
   	\addplot [black, no markers, dashed, very thin] coordinates {(2,1) (2,0)};
   	\addplot [black, no markers, dashed, very thin] coordinates {(1,1) (1,0)};  
  	\end{axis}
	\end{tikzpicture}
	\caption{The time-dependent function of variances for real-time measurements.}
	\label{fig_var}
	\end{figure}
	
\textbf{Test Case 1:} In the following, we analyze performance of the proposed algorithm in the scenario characterized by significant differences between variances and observe influence of the pseudo-measurements on the state estimate $\hat{\mathbf x} \equiv \hat{\bm \uptheta}{}^{\mathrm{T}}$.

In Table I, we define the (fixed) schedule and type of real-time measurements, where each real-time measurement is set to $v_{\mathrm{rt}} = 10^{-12}\,\mathrm{MW}$ at $t_{\mathrm{rt}}$ and we assume $t_{\mathrm{ps}} \to \infty$ (i.e., $v_{\mathrm{rt}}$ remains at $10^{-12}\,\mathrm{MW}$ for $t > t_{\mathrm{rt}}$ ). The example is designed in such a way that, upon reception of each real-time measurement, due to its very low variance one of the states from the estimated state vector $\hat{\bm \uptheta}{}^{\mathrm{T}}$ becomes approximately equal to the power flow solution.
	\begin{table}[ht]
	\footnotesize
  	\centering
	\begin{tabular}{c|cc||c|cc}
	\hline
	\multicolumn{1}{c|}{Time} & \multicolumn{2}{c||}{Active power flow $M_{P_{ij}}$} & 
	\multicolumn{1}{c|}{Time} & 
	\multicolumn{2}{c}{Active power flow $M_{P_{ij}}$} \rule{0pt}{1ex}\rule{0pt}{3ex}\\
	$t_{\mathrm{rt}} (\mathrm{s})$ & \multicolumn{1}{c}{from bus $i$}  & \multicolumn{1}{c||}{to bus $j$} & 
	$t_{\mathrm{rt}} (\mathrm{s})$ & \multicolumn{1}{c}{from bus $i$}  & \multicolumn{1}{c}{to bus $j$}
	\rule{0pt}{3ex}\\
	\hline
	1     & 1 & 2  & 8 & 7 & 9   
	\rule{0pt}{3ex}\\
	2     & 2 & 3  & 9 & 9 & 10    
	\rule{0pt}{2ex}\\
	3     & 3 & 4  & 10 & 10 & 11   
	\rule{0pt}{2ex}\\
	4     & 4 & 5  & 11 & 6 & 12  
	\rule{0pt}{2ex}\\
	5     & 5 & 6  & 12 & 12 & 13  
	\rule{0pt}{2ex}\\
	6     & 4 & 7  & 13 & 13 & 14  
	\rule{0pt}{2ex}\\
	7     & 7 & 8  &  &  &  
	\rule{0pt}{2ex}\\							
	\hline
	\end{tabular}
	\caption{Schedule and type of real-time measurements.}
	\label{Tab1}
	\end{table}	
	
	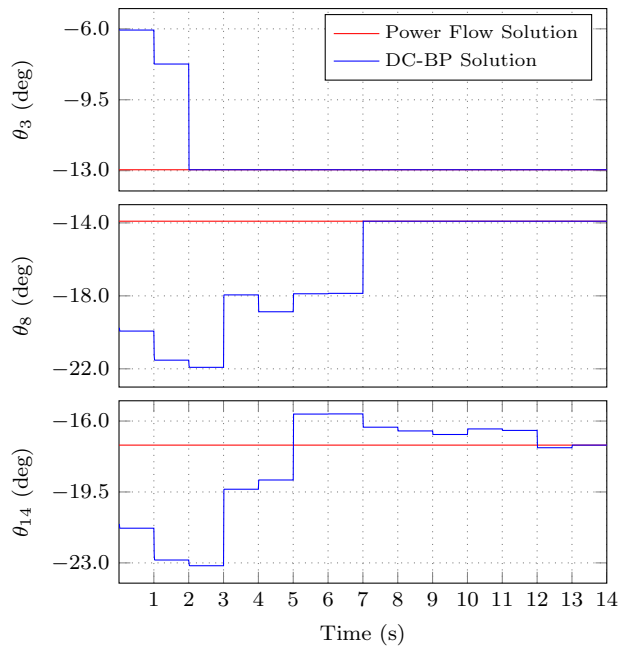
\begin{figure}[ht]
	\centering
	\begin{tikzpicture}
  	\begin{axis}[xmajorticks=false,width=8cm,height=4.0cm,at={(0cm,0cm)},
   	y tick label style={/pgf/number format/.cd,fixed,
   	fixed zerofill, precision=1, /tikz/.cd},
   	x tick label style={/pgf/number format/.cd,
   	set thousands separator={},fixed},
   	legend cell align=left,
   	legend style={legend pos=north east,font=\scriptsize},
   	legend entries={Power Flow Solution, DC-BP Solution},
   	xlabel={},
   	ylabel={$\theta_3$ (deg)},
   	label style={font=\footnotesize},   	
   	grid=major,
    xtick={1,2,3,4,5,6,7,8,9,10,11,12,13,14},
   	ytick={-6, -9.5, -13},
   	tick label style={font=\footnotesize},
   	ymin = -14.0, ymax = -5,   	
   	xmin = 0, xmax = 14]
   	\addplot [red, no markers] coordinates 
   	{(0,-12.9536631292105) (14,-12.9536631292105)}; 
   	\addplot[blue] 
   	table [x={time}, y={T3}] {./chapter_03/Figs/fig3_10/case_1.txt};
  	\end{axis}
  	\begin{axis}[xmajorticks=false,width=8cm,height=4.0cm,at={(0cm,-2.6cm)},
   	y tick label style={/pgf/number format/.cd,fixed,
   	fixed zerofill, precision=1, /tikz/.cd},
   	x tick label style={/pgf/number format/.cd,
   	set thousands separator={},fixed},
   	ylabel={$\theta_8$ (deg)},
   	label style={font=\footnotesize},   	
   	grid=major,
    xtick={1,2,3,4,5,6,7,8,9,10,11,12,13,14},
    ytick={-14, -18, -22},
   	tick label style={font=\footnotesize},
   	ymin = -23.0, ymax = -13,   	
   	xmin = 0, xmax = 14]
   	\addplot [red, no markers] coordinates 
   	{(0,-13.9070545899204) (14,-13.9070545899204)}; 
   	\addplot[blue]
   	table [x={time}, y={T8}] {./chapter_03/Figs/fig3_10/case_1.txt};
  	\end{axis} 
    \begin{axis}[width=8cm,height=4.0cm,at={(0cm,-5.2cm)},
   	y tick label style={/pgf/number format/.cd,fixed,
   	fixed zerofill, precision=1, /tikz/.cd},
   	x tick label style={/pgf/number format/.cd,
   	set thousands separator={},fixed},
   	xlabel={Time (s)},
   	ylabel={$\theta_{14}$ (deg)},
   	label style={font=\footnotesize},   	
   	grid=major,
    xtick={1,2,3,4,5,6,7,8,9,10,11,12,13,14},
  	ytick={-16, -19.5, -23},
   	tick label style={font=\footnotesize},
    ymin = -24, ymax = -15,   	
   	xmin = 0, xmax = 14]
   	\addplot [red, no markers] coordinates 
   	{(0,-17.1882875702935) (14,-17.1882875702935)}; 
   	\addplot[blue]
   	table [x={time}, y={T14}] {./chapter_03/Figs/fig3_10/case_1.txt};
  	\end{axis}
	\end{tikzpicture}
	\caption{Real-Time estimates of voltage angles $\theta_3$, 
	$\theta_8$ and $\theta_{14}$ where the computational unit received 
	active power flow real-time measurements every $t = 1\,\mathrm{s}$ with variance
	$v_{\mathrm{rt}} = 10^{-12}\,\mathrm{MW}$.}
	\label{Fig_case1}
	\end{figure}  
	
\autoref{Fig_case1} shows estimated values of voltage angles $\theta_3$, $\theta_8$ and $\theta_{14}$ for the scenario defined in \autoref{Tab1}. One can note the robustness of the proposed BP SE solution in a sense that, at any time instant, the extreme difference in variances between already received real-time measurements and remaining set of pseudo-measurements (that typically lead to ill-conditioned scenarios), are accurately solved by the BP estimator. As expected, in our pre-designed example, we clearly note a sequential refinement of the state estimate, where each new received real-time measurement $M_{P_{ij}}$ accurately defines the corresponding state variable $\theta_j$. More precisely, starting from the slack bus that has a known state value,  the real-time measurement $M_{P_{12}}$ specifies the state value of $\theta_2$ at time $t=1\,\mathrm{s}$. The chain of refinements repeats successively until $t=13\,\mathrm{s}$ when the final state variable $\theta_{14}$ is accurately estimated.   

Although somewhat trivial, the above example demonstrates that the BP-based SE algorithm provides a solution according to the real-time measurements, irrespective of the presence of (all) pseudo-measurements. In addition, \autoref{Fig_case1} shows how BP influence propagates through the network (e.g., upon reception, measurement $M_{P_{12}}$ affects the distant state variable $\theta_{14}$).

\textbf{Test Case 2:} In order to investigate how fast BP influence propagates through the network, we use the same setup given in Test Case 1, and analyse the response of the system to the received real-time measurement of different variance $v_{\mathrm{rt}} =$ $\{20^2,$ $10^2,$ $10^{-2}\}\,\mathrm{MW}$. In particular, we track the convergence of the (iterative message passing) BP algorithm over time, from the moment the real-time measurement is received, to the moment when the state estimate reaches a steady state.
	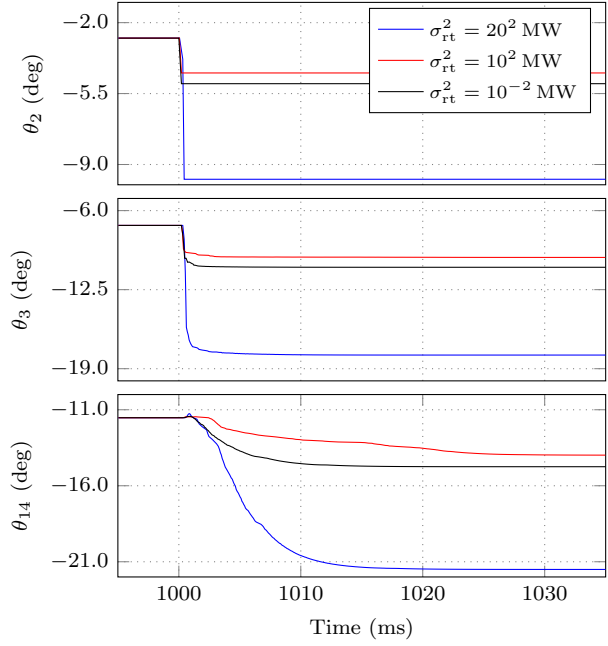
\begin{figure}[ht]
	\centering
	\begin{tikzpicture}
    \begin{axis}[xmajorticks=false,width=8cm,height=4cm,at={(0cm,0cm)},
   	y tick label style={/pgf/number format/.cd,fixed,
   	fixed zerofill, precision=1, /tikz/.cd},
   	x tick label style={/pgf/number format/.cd,
   	set thousands separator={},fixed, precision=1},
   	legend cell align=left,
   	legend style={legend pos=north east,font=\scriptsize},
   	legend entries={$\sigma_{\mathrm{rt}}^2=20^2\,\mathrm{MW}$, 
   	$\sigma_{\mathrm{rt}}^2=10^2\,\mathrm{MW}$,
   	$\sigma_{\mathrm{rt}}^2=10^{-2}\,\mathrm{MW}$},
   	ylabel={$\theta_{2}$ (deg)},
   	label style={font=\footnotesize},   	
   	grid=major,
  	ytick={-9, -5.5, -2.0},
   	tick label style={font=\footnotesize},
 	ymin = -10, ymax = -1,   	
 	xmin = 995, xmax = 1035]
   	\addplot[blue]
   	table [x={time}, y={T2}] {./chapter_03/Figs/fig3_11/case_n20.txt};
   	\addplot[red]
   	table [x={time}, y={T2}] {./chapter_03/Figs/fig3_11/case_n10.txt};
   	\addplot[black]
   	table [x={time}, y={T2}] {./chapter_03/Figs/fig3_11/case_n10-1.txt};
  	\end{axis} 	
  	\begin{axis}[xmajorticks=false,width=8cm,height=4.0cm,at={(0cm,-2.6cm)},
   	y tick label style={/pgf/number format/.cd,fixed,
   	fixed zerofill, precision=1, /tikz/.cd},
   	x tick label style={/pgf/number format/.cd,
   	set thousands separator={},fixed},
   	ylabel={$\theta_{3}$ (deg)},
   	label style={font=\footnotesize},   	
   	grid=major,
  	ytick={-19, -12.5, -6.0},
   	tick label style={font=\footnotesize},
 	ymin = -20, ymax = -5,   	
 	xmin = 995, xmax = 1035]
   	\addplot[blue]
   	table [x={time}, y={T3}] {./chapter_03/Figs/fig3_11/case_n20.txt};
   	\addplot[red]
   	table [x={time}, y={T3}] {./chapter_03/Figs/fig3_11/case_n10.txt};
   	\addplot[black]
   	table [x={time}, y={T3}] {./chapter_03/Figs/fig3_11/case_n10-1.txt};
  	\end{axis}
    \begin{axis}[width=8cm,height=4cm,at={(0cm,-5.2cm)},
   	y tick label style={/pgf/number format/.cd,fixed,
   	fixed zerofill, precision=1, /tikz/.cd},
   	x tick label style={/pgf/number format/.cd,
   	set thousands separator={},fixed, precision=1},
   	ylabel={$\theta_{14}$ (deg)},
   	xlabel={Time (ms)},   	
   	label style={font=\footnotesize},   	
   	grid=major,
  	ytick={-21, -16, -11},
   	tick label style={font=\footnotesize},
 	ymin = -22, ymax = -10,   	
 	xmin = 995, xmax = 1035]
   	\addplot[blue]
   	table [x={time}, y={T14}] {./chapter_03/Figs/fig3_11/case_n20.txt};
   	\addplot[red]
   	table [x={time}, y={T14}] {./chapter_03/Figs/fig3_11/case_n10.txt};
   	\addplot[black]
   	table [x={time}, y={T14}] {./chapter_03/Figs/fig3_11/case_n10-1.txt};
  	\end{axis} 
	\end{tikzpicture}
	\caption{Real-Time estimates of voltage angles $\theta_2$, $\theta_3$  
	and $\theta_{14}$ where the computational unit received 
	active power flow real-time measurement $M_{P_{12}}$ at the time 
	$t = 1\,\mathrm{s}$ with variances
	$v_{\mathrm{rt}} = \{20^2, 10^2, 10^{-2}\}\,\mathrm{MW}$.}
	\label{Fig_case2}
	\end{figure}

\autoref{Fig_case2} illustrates the influence of the real-time measurement $M_{P_{12}}$ received at $t_{\mathrm{rs}} = 1\,\mathrm{s}$, on the state variables  $\theta_2$, $\theta_3$ and $\theta_{14}$. As expected, the received real-time measurement has almost immediate impact on the state variable $\theta_2$, where steady state occurs within $t < 1\,\mathrm{ms}$, even for the high value of measurement variance $v_{\mathrm{rt}}=20^2\,\mathrm{MW}$. Further, this real-time measurement will influence the entire system through iterative BP message exchanges. As expected, increasing the distance between the measurement location and the bus location, more time is needed for the corresponding state variable to reach the steady state. For example, steady state of the state variable $\theta_{14}$ occurs within $t < 25\,\mathrm{ms}$. 

To summarize, the algorithm is able to provide fast response on the received real-time measurements and, for the DC SE framework, it is able to support both WAMS and SCADA technology in terms of the required computational delays. 
		  			
\textbf{Test Case 3:} In the final scenario, we consider the dynamic scenario in which the power system changes values of both generations and loads every $100\,\mathrm{s}$. In the interval between $t=0$ and $t=250\,\mathrm{s}$, only active power flow and injection real-time measurements are available with variances $v_{\mathrm{rt}} = 10^{2}\,\mathrm{MW}$ and  $t_{\mathrm{ps}}-t_{\mathrm{rt}} = 10^{3}\,\mathrm{s}$.\footnote{Although the period of $10^{3}\,\mathrm{s}$ may appear large, note that this is compensated by very high variance $v_{\mathrm{ps}} = 10^{60}\,\mathrm{MW}$ at $t_{\mathrm{ps}}$.} After $250\,\mathrm{s}$, the voltage angle real-time measurements become available with parameters $v_{\mathrm{rt}} = 10^{-6}\,\mathrm{deg}$ and $t_{\mathrm{ps}} \to \infty$. For every measurement, arrival process in each interval is modeled using Poisson process with average inter-arrival time $1/\lambda$, where for active power flow and injection real-time measurements we set $\lambda = 0.05$ and for angle real-time measurements $\lambda = 0.5$.
   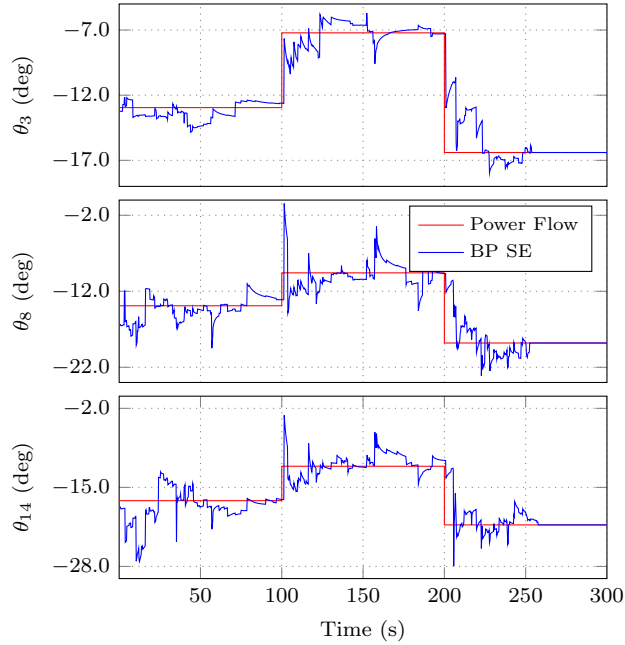
\begin{figure}[ht]
	\centering
	\begin{tikzpicture}
  	\begin{axis}[xmajorticks=false,width=8cm,height=4cm,at={(0cm,0cm)},
   	y tick label style={/pgf/number format/.cd,fixed,
   	fixed zerofill, precision=1, /tikz/.cd},
   	x tick label style={/pgf/number format/.cd,
   	set thousands separator={},fixed},
   	xlabel={},
   	ylabel={$\theta_3$ (deg)},
   	label style={font=\footnotesize},   	
   	grid=major,
    xtick={50, 100, 150, 200, 250, 300},
   	ytick={-7, -12, -17},
   	tick label style={font=\footnotesize},
   	ymin = -19.0, ymax = -5,   	
   	xmin = 0, xmax = 300]
   	\addplot [red, no markers] coordinates 
   	{(0,-12.9536631292105) (100,-12.9536631292105) 
   	(100,-7.21648794584402) (200,-7.21648794584402) 
   	(200,-16.4004583212321) (300,-16.4004583212321)};
   	\addplot[blue] 
   	table [x={time}, y={T3}] {./chapter_03/Figs/fig3_12/case_2.txt};
  	\end{axis}
  	\begin{axis}[xmajorticks=false,width=8cm,height=4cm,at={(0cm,-2.6cm)},
   	y tick label style={/pgf/number format/.cd,fixed,
   	fixed zerofill, precision=1, /tikz/.cd},
   	legend cell align=left,
   	legend style={legend pos=north east,font=\scriptsize},
   	legend entries={Power Flow, BP SE},
   	x tick label style={/pgf/number format/.cd,
   	set thousands separator={},fixed},
   	ylabel={$\theta_8$ (deg)},
   	label style={font=\footnotesize},   	
   	grid=major,
    xtick={50, 100, 150, 200, 250, 300},
   	ytick={-2, -12, -22},
   	tick label style={font=\footnotesize},
   	ymin = -24, ymax = 0,   	
   	xmin = 0, xmax = 300]
   	\addplot [red, no markers] coordinates 
   	{(0,-13.9070545899204) (100,-13.9070545899204)	
   	(100,-9.57566308601556) (200,-9.57566308601556) 
   	(200,-18.8061641931343) (300,-18.8061641931343)};
  	\addplot[blue]
   	table [x={time}, y={T8}] {./chapter_03/Figs/fig3_12/case_2.txt};
  	\end{axis} 	
    \begin{axis}[width=8cm,height=4cm,at={(0cm,-5.2cm)},
   	y tick label style={/pgf/number format/.cd,fixed,
   	fixed zerofill, precision=1, /tikz/.cd},
   	x tick label style={/pgf/number format/.cd,
   	set thousands separator={},fixed},
   	xlabel={Time (s)},
   	ylabel={$\theta_{14}$ (deg)},
   	label style={font=\footnotesize},   	
   	grid=major,
    xtick={50, 100, 150, 200, 250, 300},
  	ytick={-2, -15, -28},
   	tick label style={font=\footnotesize},
  	ymin = -30, ymax = 0,   	
   	xmin = 0, xmax = 300]
   	\addplot [red, no markers] coordinates 
   	{(0,-17.1882875702935) (100,-17.1882875702935)	
   	(100,-11.5246130227784) (200,-11.5246130227784) 
   	(200,-21.1725201082984) (300,-21.1725201082984)}; 
   	\addplot[blue]
   	table [x={time}, y={T14}] {./chapter_03/Figs/fig3_12/case_2.txt};
  	\end{axis}
	\end{tikzpicture}
	\caption{Real-time estimates of voltage angles $\theta_3$, $\theta_8$ 
	and $\theta_{14}$ where real-time measurements arrived at the 
	computational unit according to Poisson process.}
	\label{Fig_case3}
	\end{figure}	

\autoref{Fig_case3} shows state estimates of state variables $\theta_3$, $\theta_8$ and $\theta_{14}$ over the time interval of $300\,\mathrm{s}$ for the described scenario. During the first $250\,\mathrm{s}$, the BP SE provides state estimates according to incoming noisy real-time measurements and, as apparent from the figure, each new real-time measurement will affect the current state of the system. After $t=250\,\mathrm{s}$, the voltage angle real-time measurements arrived with constant and very low variance, thus providing state estimates which are considerably more accurate.

\section{Summary}
We proposed a fast real-time state estimator based on the BP algorithm. The estimator is easy to distribute and parallelize, thus alleviating computational limitations and allowing for processing measurements in real time. Convergence of the DC-BP algorithm depends of the spectral radius of the matrix that governs evolution of means from indirect factor nodes to variable nodes, and we proposed improved DC-BP algorithm using synchronous scheduling with randomized damping. 

The algorithm may run as a continuous process, with each new measurement being seamlessly processed by the distributed state estimator. In contrast to the matrix-based state estimation methods, the belief propagation approach is robust to ill-conditioned scenarios caused by significant differences between measurement variances, thus resulting in a solution that eliminates observability analysis. Using the DC model, we numerically demonstrate the performance of the state estimator in a realistic real-time system model with asynchronous measurements. We note that the extension to the non-linear state estimation is possible within the same framework.

\chapter{Native Belief Propagation based Non-Linear State Estimation}	\label{ch:native_bp}
\addcontentsline{lof}{chapter}{4 Native Belief Propagation based Non-Linear State Estimation}
The native BP-based algorithm (AC-BP) for the non-linear SE represents a logical step in the transition from a linear to a non-linear model. We use insights from the DC-BP algorithm therein to derive the AC-BP algorithm. Due to non-linearity of measurement functions, the closed-form expressions for certain classes of BP messages cannot be obtained, and using approximations, we proposed the algorithm as an approximate BP solution for the non-linear SE problem. Unfortunately, due to approximations, the AC-BP algorithm does not match the performance of the centralized non-linear SE based on Gauss-Newton method. 

Additionally, the AC-BP messages have considerably more complex form as compared to the DC-BP, and the algorithm requires prior knowledge (e.g., historical data). Despite all that, the AC-BP gives a different interpretation of the BP algorithm and establishes interesting connections between the BP algorithm and WLS equations.

Without loss of generality, in the rest of the chapter, for the AC-BP we observe only legacy measurements. To recall, the non-linear SE model is characterized by the set of state variables $\mathbf x \equiv[\bm \uptheta,\mathbf V]^{\mathrm{T}}$, while measurement functions are defined with \eqref{mf_flow}, \eqref{mf_curmag}, \eqref{mf_injcetion} and \eqref{mf_voltage_leg}.   

\section{The Factor Graph Construction}
According to \eqref{SE_likelihood}, in the non-linear scenario, the set of state variables $\mathbf x \equiv[\bm \uptheta,\mathbf V]^{\mathrm{T}}$ determines the set of variable nodes $\mathcal{V} =$ $\{(\theta_1, V_1),$ $\dots,$ $(\theta_N, V_N)\} \equiv $ $\{x_1,\dots,x_n\} $, while the set of factor nodes $\mathcal{F} =\{f_1,\dots,f_k\}$ is defined by the set of measurements $\mathcal{M}$. A factor node $f_i$ connects to a variable node $x_s \in \mathcal{V}$ if and only if the state variable $x_s$ is an argument of the corresponding measurement function $h_i(\mathbf x)$. 

\begin{example}[Constructing factor graph] In this toy example, using a simple 3-bus model presented in \autoref{Fig_ex_AC_app}, we demonstrate the conversion from a bus/branch model with a given measurement configuration into the corresponding factor graph for the AC-BP model. 
	\begin{figure}[ht]
	\centering
	\begin{tabular}{@{}c@{}}
	\subfloat[]{\label{Fig_ex_bus_branch_app}
	\includegraphics[width=2.8cm]{./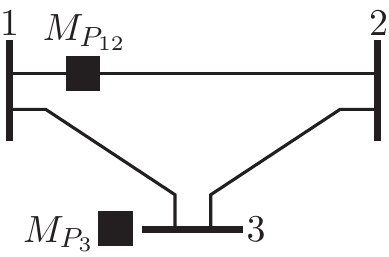}}
	\end{tabular}\quad\quad
	\begin{tabular}{@{}c@{}}
	\subfloat[]{\label{Fig_ex_AC_graph_app}
	\includegraphics[width=3.2cm]{./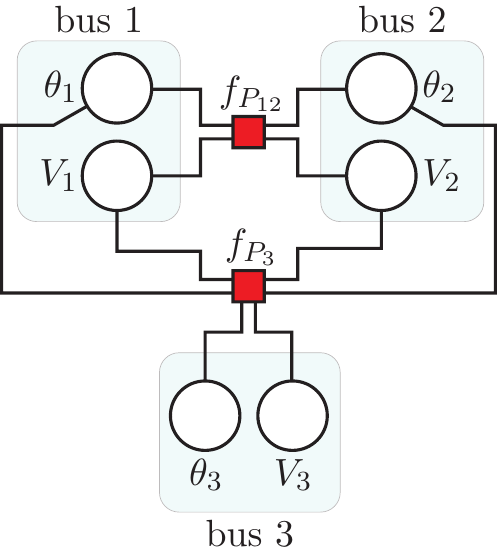}}
	\end{tabular}
	\caption{Transformation of the bus/branch model and measurement configuration 
	(subfigure a) into the corresponding factor graph for 
	the AC-BP model (subfigure b).}
	\label{Fig_ex_AC_app}
	\end{figure}
	
The variable nodes represent state variables $\mathcal{X}=$ $\{\theta_1, V_1,$ $\theta_2, V_2,$ $\theta_3, V_3\}$. Factor nodes are defined by corresponding measurements, where in our example, measurements $M_{P_{12}}$ and $M_{P_3}$ are mapped into factor nodes $\mathcal {F}=\{f_{P_{12}}$, $f_{P_3}\}$.		
\demo   
\end{example}

\section{The Belief Propagation Algorithm}
The AC-BP algorithm is based on the direct BP application over set of variable $\mathcal{V}$ and factor $\mathcal{F}$ nodes, thus insights from the DC-BP algorithm can be used.  
 
\subsection{Derivation of BP Messages and Marginal Inference} 
\textbf{Message from a variable node to a factor node:} Due to the fact that variable node output messages do not depend on measurement functions according to \eqref{FG_v_f}, relations \eqref{BP_Gauss_vf} and \eqref{BP_vf_mean_var} hold for the AC-BP.

\textbf{Message from a factor node to a variable node:} Due to non-linear measurement functions $h_i(\cdot)$, the integral in \eqref{FG_f_v} for the AC-BP cannot be evaluated in closed form. Consequently, the message from a factor node to a variable node will not be Gaussian. In the following, as \emph{an approximation}, we assume that for the AC-BP, the message $\mu_{f_i \to x_s}(x_s)$ also has the Gaussian form \eqref{BP_Gauss_fv}. According to DC-BP we provide arguments that lead us to approximations used to derive messages for the AC-BP. 

\textit{Mean value evaluation:} The expression for the mean of the DC-BP $z_{f_i \to x_s}$ is \emph{exact} and equals \eqref{BP_fv_mean}. Although the expression \eqref{BP_fv_mean} is obtained by directly evaluating \eqref{FG_f_v} for the linear DC model, we note that it has a useful interpretation via conditional expectation. For that purpose, let us define a vector $\mathbf {x}_b = \mathcal{V}_i \setminus x_s$, and let $\mathbf{z}_{\mathbf {x}_b \to f_i}$ denote a vector of mean values of messages from variable nodes $\mathcal{V}_i \setminus x_s$ to the factor node $f_i$. Then, the conditional expectation $\mathbb{E}[h_i(x_s, \mathbf {x}_b)|\mathbf {x}_b = \mathbf{z}_{\mathbf {x}_b \to f_i} ]$ can be calculated as:
	\begin{equation}
    \begin{aligned}
	\mathbb{E}[h_i(x_s, \mathbf {x}_b)|\mathbf {x}_b = 
	\mathbf{z}_{\mathbf {x}_b \to f_i} ] 
	= C_{x_s} \mathbb{E}[x_s|\mathbf {x}_b = \mathbf{z}_{\mathbf {x}_b \to f_i}] + 
	\sum_{x_b \in \mathcal{V}_i\setminus x_s} 
	C_{x_b} {z}_{x_b \to f_i} = 		
	z_i.
	\end{aligned}		
	\label{BP_cond_expectation}
	\end{equation}	

From the BP perspective, the conditional expected value $\mathbb{E}[x_s|\mathbf {x}_b = \mathbf{z}_{\mathbf {x}_b \to f_i}]$ represents the mean $z_{f_i \to x_s}$. Hence, it is possible to define the conditional expectation of non-linear measurement function $h_i(\cdot)$:
		\begin{equation}
    	\begin{gathered}
		\mathbb{E}[h_i(x_s,\mathbf {x}_b)|\mathbf {x}_b = \mathbf{z}_{\mathbf {x}_b \to f_i} ]  = 		
		z_i.
		\end{gathered}
		\label{BP_fv_mean_AC}
		\end{equation}
Due different forms of non-linear measurement functions $h_i(\cdot)$, see equations \eqref{mf_flow}, \eqref{mf_curmag} and \eqref{mf_injcetion}, the equation \eqref{BP_fv_mean_AC} will produce different forms of conditional expectation $\mathbb{E}[x_s|\mathbf{z}_{\mathbf {x}_b \to f_i}] $ $\equiv$ $z_{f_i \to x_s}$:
		\begin{subequations}
        \begin{align}
		a\mathbb{E}[x_s|\mathbf{z}_{\mathbf {x}_b \to f_i} ] + b &= 0 
        \label{BP_vf_mean_Vj}\\
		a\mathbb{E}[x_s^2|\mathbf {x}_b = 
		\mathbf{z}_{\mathbf {x}_b \to f_i} ] + 
		b\mathbb{E}[x_s|\mathbf {x}_b = \mathbf{z}_{\mathbf {x}_b \to f_i} ] +c &= 0	
		\label{BP_vf_mean_Vi}\\
		a\mathbb{E}[\sin^2 {x_s}|\mathbf {x}_b = 
		\mathbf{z}_{\mathbf {x}_b \to f_i} ] + 
		b\mathbb{E}[\sin {x_s}|\mathbf {x}_b = \mathbf{z}_{\mathbf {x}_b \to f_i} ] +c &= 0,	
		\label{BP_vf_mean_sin}
        \end{align}
		\label{BP_vf_mean_all}%
		\end{subequations}
where $a$, $b$ and $c$ are coefficients derived from non-linear measurement functions (see Appendix C for details). 

Due to quadratic form of \eqref{BP_vf_mean_Vi} and \eqref{BP_vf_mean_sin}, we may obtain two possible values for the mean value $z_{f_i \to x_s}$. Thus in order to unambiguously define $z_{f_i \to x_s}$, we assume that certain a priori knowledge of state variables, denoted as $\widetilde {\mathbf x} \equiv (\widetilde {\bm \uptheta},\widetilde{\mathbf V})$, is available (e.g., historical data). Given the prior data, we evaluate the mean value as: 
		\begin{equation}
        z_{f_i \to x_s}=
   		\begin{cases}
        z_{f_i \to x_s}^{(1)}, & \text{if}\ \Delta > 0 \;\; \text{and}\ d_1 < d_2\\[3pt]
        z_{f_i \to x_s}^{(2)}, & \text{if}\ \Delta > 0 \;\; \text{and}\ d_1 > d_2\\[3pt]
        \widetilde {x}_s, 	   & \text{if}\ \Delta < 0, 
        \end{cases}
		\label{BP_mean_cond}		
		\end{equation}
where $\Delta$ is the discriminant of the quadratic polynomial, and $d_1 = |z_{f_i \to x_s}^{(1)}- \widetilde {x}_s|$, $d_2 = |z_{f_i \to x_s}^{(2)}- \widetilde {x}_s|$, (see Appendix C for details).

\textit{The variance evaluation:} The expression for the variance of the DC-BP $v_{f_i \to x_s}$ is equal \eqref{BP_fv_var}. Let us provide another interpretation of the variance $v_{f_i \to x_s}$. For this purpose, we observe the factor graph presented in \autoref{Fig_marginalAC}.
	\begin{figure}[ht]
	\centering
	\includegraphics[width=6cm]{./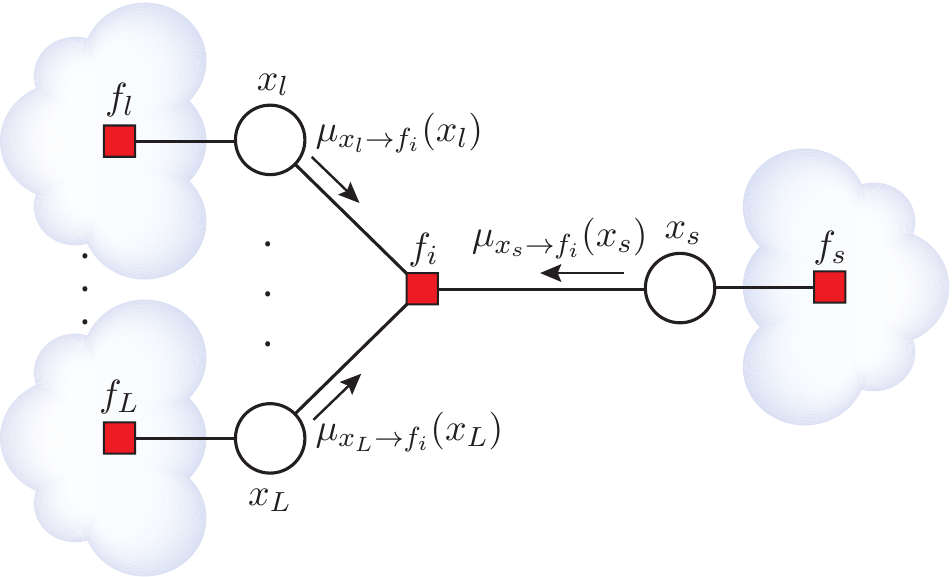}
	\caption{Factor graph which provides interpretation 
	of the variance $v_{f_i \to x_s}$.}
	\label{Fig_marginalAC}
	\end{figure} \noindent 
	
Consider the set of messages $\mu_{x_b \to f_i} = \mathcal{N}(x_b|z_{x_b \to f_i},v_{x_b \to f_i})$ arriving to the factor node $f_i$ from any variable node neighbour $x_b \in \mathcal{V}_i$. Informally, we note that this message carries a ``belief'' about itself that the variable node $x_b$ sends to the factor node $f_i$, representing collective evidence the rest of the factor graph provides about the variable node $x_b$. Let us represent this belief by an equivalent factor node attached to each variable node. Thus for a set of variable nodes $\mathcal{V}_i$, we introduce a set of factor nodes $\mathcal{F}_{\mathrm{eq}} =$ $\{f_s,$ $f_l,$ $\dots,$ $f_L \}$, where for each $x_b \in   \mathcal{V}_i$, the corresponding factor node $f_b \in \mathcal{F}_{\mathrm{eq}}$ is singly-connected to $x_b$ and by $\mathcal{N}(x_b|z_{x_b \to f_i},v_{x_b \to f_i})$. Note that, from the perspective of SE, this factor node can observed as a measurement defined by the value $z_{x_b \to f_i}$, variance $v_{x_b \to f_i}$, and measurement function $h_b(x_b)=x_b$. 

Let us now solve the system illustrated in \autoref{Fig_marginalAC} using the WLS method. It is easy to show that the corresponding Jacobian matrix\footnote{Note that the measurement function of the factor node $f_i$ is given by \eqref{BP_general_measurment_fun}, while for all other factor nodes $f_b \in \mathcal{F}_{\mathrm{eq}}$, it is equal to $h_b(x_b)=x_b$.} $\mathbf{H}$ and the measurement error covariance matrix $\mathbf{R}$ have the following form:
		\begin{equation}
		\mathbf{H} = \left( \begin{array}{cccc}
 		C_{x_s} & C_{x_l} & \dots & C_{x_L}  \\
 		1 & 0 & \dots & 0   \\
		0 & 1 & \dots & 0   \\
		\vdots & \vdots  & \hfill &\vdots   \\
		0 & 0 & \dots & 1 \\
		\end{array} \right)
		\label{BP_H}
		\end{equation}
 		\begin{equation}
    	\begin{gathered}
		\mathbf{R} = \mathrm{diag}({v_i},  v_{x_s \to f_i}, 
		v_{x_l \to f_i}, \dots, v_{x_L \to f_i}).
		\end{gathered}
		\label{BP_W}
		\end{equation}
		
A variance-covariance matrix of WLS method is defined as: 
 		\begin{equation}
    	\begin{aligned}
		\mathbb{V}(\mathbf{x}_i) &= (\mathbf H^\mathrm{T} 
		\mathbf R^{-1} \mathbf H)^{-1}
		&\setlength\arraycolsep{2pt}
		= \left( \begin{array}{cccc}
 		\mathrm{var}(x_s) & \mathrm{cov}(x_s,x_l) & \dots & 
 		\mathrm{cov}(x_s,x_L)  \\
		\mathrm{cov}(x_l,x_s) & \mathrm{var}(x_l) & \dots & 
		\mathrm{cov}(x_l,x_L)   \\
		\vdots & \vdots  & \hfill &\vdots   \\
		\mathrm{cov}(x_L,x_s) & \mathrm{cov}(x_L,x_l) & \dots & 
		\mathrm{var}(x_L) \\
		\end{array} \right).
		\end{aligned}
		\label{BP_cov_var}
		\end{equation}	
According to \eqref{BP_cov_var}, and using \eqref{BP_H} and \eqref{BP_W}, the variance $\mathrm{var}(x_s)$ is:
 		\begin{equation}
    	\begin{gathered}
		\cfrac{1}{\mathrm{var}(x_s)} = 
		\cfrac{1}{v_{x_s \to f_i}} +
		\Bigg [\cfrac{1}{C_{x_s}^2} \Big( v_i + 
		\sum_{x_b \in \mathcal{V}_i \setminus x_s}  
        C_{x_b}^2 v_{x_b \to f_i} \Big) \Bigg]^{-1}.
		\end{gathered}
		\label{BP_discuss_var}
		\end{equation}
Consider the second term on the right-hand side of \eqref{BP_discuss_var}. Recall that it represents the inverse of the variance $v_{f_i \to x_s}$ of the message from the factor node $f_i$ to the variable node $x_s$, as defined by \eqref{BP_fv_var}. Therefore, we have demonstrated that by applying WLS on the factor graph in \autoref{Fig_marginalAC}, one can obtain the expression for the variance of the message from the factor node $f_i$ to the variable node $x_s$.

For the SE that deals with non-linear measurement functions, it is possible to define a linear approximation of the variance-covariance matrix at a given point $\mathbf x_i$ using the Gauss-Newton method \eqref{AC_GN_increment}: 
 		\begin{equation}
    	\begin{gathered}
		\mathbb{V}(\mathbf x_i) = 
		[\mathbf J (\mathbf x_i)^\mathrm{T} \mathbf {R}^{-1} \mathbf J (\mathbf x_i)]^{-1}.
		\end{gathered}
		\label{BP_var_cov_GN}
		\end{equation}
It can be shown, using \eqref{BP_var_cov_GN}, that the variance $v_{f_i \to x_s}$ is governed by \eqref{BP_fv_var} where the coefficients $C_{x_p}$, $x_p \in \mathcal{V}_i$ are defined by Jacobian elements (see Appendix A and C for details):  
 		\begin{equation}
    	\begin{gathered} 
		C_{x_p} = \cfrac{\mathrm \partial{h_i(\cdot)}}{\mathrm \partial x_p}
		\Biggr|_{\substack{x_s = z_{f_i \to x_s} \\ \mathbf x_b = 
		\mathbf z_{\mathbf {x}_b \to f_i}}}.
		\end{gathered}
		\label{BP_fv_variance_jac}
		\end{equation}	
Note that the coefficients above are evaluated at the point $\mathbf x_i=(x_s,\mathbf {x}_b)$, where the values in $\mathbf x_i$ represent the mean-values of the corresponding messages.

To summarize, the message evaluation for the AC-BP is governed by \eqref{BP_vf_mean_all} and \eqref{BP_fv_var}, where coefficients are obtained using \eqref{BP_fv_variance_jac}.	

\textbf{Marginal inference:} The marginal of the state variable $x_s$ is governed by \eqref{BP_marginal_mean_var}. 
 
\subsection{Iterative AC-BP Algorithm} 
Here, the \emph{indirect factor nodes} $\mathcal{F}_{\mathrm{ind}} \subset \mathcal{F}$ include measurements of power flows, power injections and current magnitudes. The \emph{direct factor nodes} $\mathcal{F}_{\mathrm{dir}} \subset \mathcal{F}$ include measurements of bus voltage magnitudes.
\begin{algorithm} [ht]
\caption{The AC-BP}
\label{AC}
\begin{spacing}{1.25}
\begin{algorithmic}[1] 
\Procedure {Initialization $\tau=0$}{}
  \For{Each $f_s \in \mathcal{F}_{\mathrm{loc}}$}
  	  \State send $\mu_{f_s \to x_s}^{(0)}$ 
  	   to incident $x_s \in \mathcal{V}$
  \EndFor 
  \For{Each $x_s \in \mathcal{V}$}
  	  \State send $\mu_{x_s \to f_i}^{(0)} = \mu_{f_s \to x_s}^{(0)}$, 
  	  to incident $f_i \in \mathcal{F}_{\mathrm{ind}}$
  \EndFor 
  \For{Each $f_i \in \mathcal{F}_{\mathrm{ind}}$}
       \State send $\mu_{f_i \to x_s}^{(0)} = \mu_{x_s \to f_i}^{(0)}$ 
        to incident $x_s \in \mathcal{V}$ 
  \EndFor
\EndProcedure
\myline[black](-2.4,5.1)(-2.4,0.3)

\Procedure {Iteration loop $\tau=1,2,\dots$}{}  
    \While{stopping criterion is not met} 
  \For{Each $f_i \in \mathcal{F}_{\mathrm{ind}}$}
 	\State Compute $\mu_{f_i \to x_s}^{(\tau)}$ using \eqref{BP_vf_mean_all}*, \eqref{BP_fv_var}*
  \EndFor
   \For{Each $x_s \in \mathcal{V}$}
  \State Compute $\mu_{x_s \to f_i}^{(\tau)}$ using \eqref{BP_vf_mean_var}
  \EndFor
  \EndWhile
 \EndProcedure
\myline[black](-2.4,4.57)(-2.4,0.3) 

\Procedure {Output}{}
 \For{Each $x_s \in \mathcal{V}$}
    \State Compute $\hat x_s$, $v_{x_s}$ using \eqref{BP_marginal_mean_var}
  \EndFor
  \EndProcedure
\myline[black](-2.4,1.95)(-2.4,0.3)    
   \Statex *Incomming messages are obtained in previous iteration $\tau-1$  
 \end{algorithmic}
 \end{spacing}
\end{algorithm} 

The AC-BP algorithms are presented in \autoref{AC}. Note that, the initialization step for the DC-BP and AC-BP is different. This is due to the fact that the variance of the message from a factor node to a variable node for the AC-BP depends not only on the mean values of incoming messages, but also on the mean value of the message whose variance is being calculated.   		

\section{Numerical Results}
In the following, we compare the accuracy of the AC-BP algorithm to that of the centralized Gauss-Newton method using the IEEE 14-bus test case. We start with a given IEEE test case and apply the AC power flow analysis to generate the exact currents, voltages and powers across the network. Further, we corrupt the exact solution by the additive white Gaussian noise of variance $v_i$ and we observe the set of measurements. 

The IEEE 14-bus test case with fixed measurement configuration containing 61 measurement devices, as shown in Fig. \ref{fig_IEEE14AC}, is used to compare the accuracy of the SE algorithms. For each value of noise variance $v_i=$ $\{v_1$, $v_2\}$ $=\{10^{-10}$ $10^{-4} \} \,\mbox{p.u.}$, using Monte Carlo approach, we generate 1000 random sets of measurement values and feed them to the SE algorithms. 
	\begin{figure}[ht]
	\centering
	\includegraphics[width=65mm]{./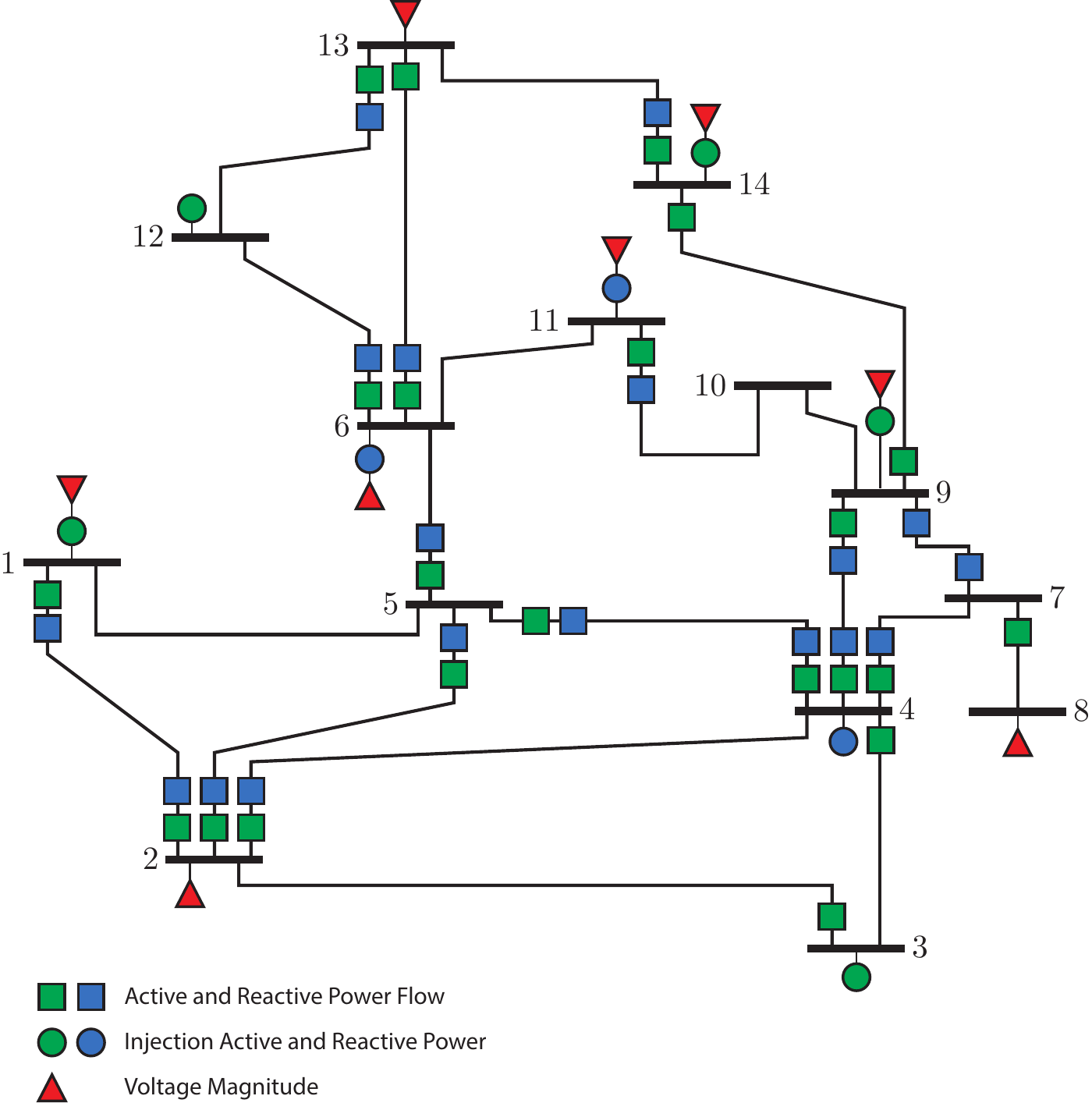}
	\caption{The IEEE 14-bus test case with given measurement configuration.}
	\label{fig_IEEE14AC}
	\end{figure} 
Note that, in order to initialize the AC-BP and the Gauss-Newton method, we use the ``flat start" assumption ($V_i=1$, $\theta_i=0$, $i=1,\dots,N$). 

To compare the accuracy of the AC-BP algorithm to that of the centralized Gauss-Newton method, we use the weighted residual sum of squares (WRSS) as a metric:
	\begin{equation}
	\begin{gathered}
	\mathrm{WRSS} = \sum_{i=1}^k 
	\cfrac{[z_i-h_i({\mathbf x})]^2}{v_i}.
	\end{gathered}
	\label{num_WRSS}
	\end{equation}
Note that WRSS is the value of the objective function of the optimization problem \eqref{SE_WLS_problem} we are solving, thus it is suitable metric for the SE accuracy. Finally, we normalize the obtained WRSS by $\mathrm{WRSS}_{\mbox{\scriptsize WLS}}$ of the centralized SE obtained using the Gauss-Newton method after 12 iterations (which we adopt as a normalization constant). This way, we compare the accuracy of BP-based algorithms to the one of the centralized SE.
	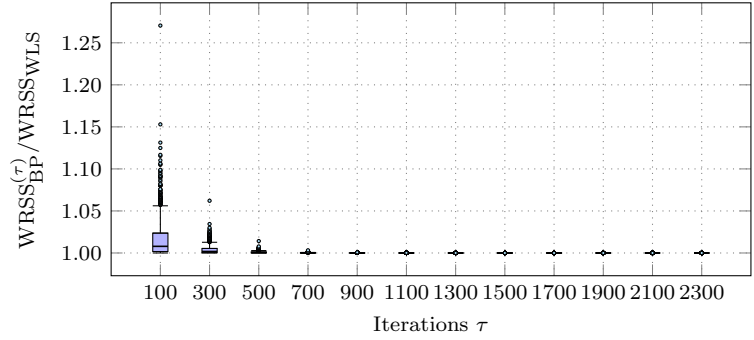
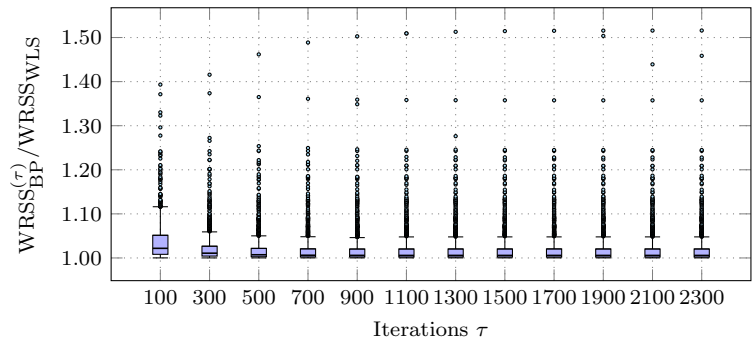
\begin{figure}[ht]
	\centering
	\captionsetup[subfigure]{oneside,margin={1.6cm,0cm}}
	\begin{tabular}{@{}c@{}}
	\subfloat[]{\label{plot3a}
	\begin{tikzpicture}
	\begin{axis} [box plot width=1.0mm,
	y tick label style={/pgf/number format/.cd,fixed,
   	fixed zerofill, precision=2, /tikz/.cd},	
	xlabel={Iterations  $\tau$},
   	ylabel={$\mathrm{WRSS}_{\mbox{\scriptsize BP}}^{(\tau)}/
   	\mathrm{WRSS}_{\mbox{\scriptsize WLS}}$},
   	grid=major,   		
   	xmin=0, xmax=13,	
   	xtick={1,2,3,4,5,6,7,8,9,10,11,12},
   	xticklabels={100, 300, 500, 700, 900, 1100, 1300, 1500, 1700, 1900, 2100, 2300},
   	ytick={1,1.05,1.1,1.15,1.2,1.25},
   	width=10cm,height=5.2cm,
   	tick label style={font=\footnotesize}, label style={font=\footnotesize}]
	\boxplot [
    forget plot, fill=blue!30,
    box plot whisker bottom index=1,
    box plot whisker top index=5,
    box plot box bottom index=2,
    box plot box top index=4,
    box plot median index=3] {./chapter_04/Figs/fig4_4a/ieee14_WRSS_v1_data.txt};   
   	
	\addplot[only marks, mark options={draw=black, fill=cyan!30},mark size=0.6pt] 
	table[x index=0, y index=1] 
	{./chapter_04/Figs/fig4_4a/ieee14_WRSS_v1_outliers.txt};   
	\end{axis}
	\end{tikzpicture}}
	\end{tabular}\\
	\begin{tabular}{@{}c@{}}
	\subfloat[]{\label{plot3b}
	\begin{tikzpicture}
	\begin{axis} [box plot width=1.0mm,
	y tick label style={/pgf/number format/.cd,fixed,
   	fixed zerofill, precision=2, /tikz/.cd},	
	xlabel={Iterations  $\tau$},
   	ylabel={$\mathrm{WRSS}_{\mbox{\scriptsize BP}}^{(\tau)}/
   	\mathrm{WRSS}_{\mbox{\scriptsize WLS}}$},
   	grid=major,   		
   	xmin=0, xmax=13,	
   	xtick={1,2,3,4,5,6,7,8,9,10,11,12},
   	xticklabels={100, 300, 500, 700, 900, 1100, 1300, 1500, 1700, 1900, 2100, 2300},
   	ytick={1,1.1,1.2,1.3,1.4,1.5},
   	width=10cm,height=5.2cm,
   	tick label style={font=\footnotesize}, label style={font=\footnotesize}]
	\boxplot [
    forget plot, fill=blue!30,
    box plot whisker bottom index=1,
    box plot whisker top index=5,
    box plot box bottom index=2,
    box plot box top index=4,
    box plot median index=3] {./chapter_04/Figs/fig4_4b/ieee14_WRSS_v2_data.txt};   
   	
	\addplot[only marks, mark options={draw=black, fill=cyan!30},mark size=0.6pt] 
	table[x index=0, y index=1] 
	{./chapter_04/Figs/fig4_4b/ieee14_WRSS_v2_outliers.txt};   
	\end{axis}
	\end{tikzpicture}}
	\end{tabular}
	\caption{The AC-BP normalized WRSS 
	(i.e., $\mathrm{WRSS}_{\mbox{\scriptsize BP}}^{(\tau)}/
   	\mathrm{WRSS}_{\mbox{\scriptsize WLS}}$) 
   	for the low noise level $v_1$(subfigure a) and the high noise 
   	level $v_2$ (subfigure b).}
	\label{fig_AC_BP_plot}
	\end{figure}

\autoref{fig_AC_BP_plot} shows the weighted residual sum of squares of the AC-BP $\mathrm{WRSS}_{\mbox{\scriptsize BP}}^{(\tau)}$ over the iterations ${\tau}$, normalized by $\mathrm{WRSS}_{\mbox{\scriptsize WLS}}$ (i.e., $\mathrm{WRSS}_{\mbox{\scriptsize BP}}^{(\tau)}/\mathrm{WRSS}_{\mbox{\scriptsize WLS}}$). We observe that the AC-BP converges for both the low and the high noise level, however, for the high noise level, the solution of the AC-BP algorithm does not correspond to the solution of the centralized SE. This is expected, since as the noise variance increases, the accuracy of Gaussian approximation of the BP messages is decreasing, which affects the accuracy of the AC-BP solution.  

\section{Summary}
The AC-BP represents an approximate BP solution for the non-linear SE problem. Despite the complexity of message forms, the AC-BP interprets the BP algorithm through conditional expectations and gives a useful insight into the relationships between the BP algorithm and WLS method. The algorithm presents the intermediate step between DC-BP and BP-based Gauss-Newton algorithm described in the next chapter.

\chapter{Distributed Gauss-Newton Method for State Estimation}	\label{ch:gn_bp}
\addcontentsline{lof}{chapter}{5 Distributed Gauss-Newton Method for State Estimation}
As the main contribution of this thesis, we adopt different methodology to derive efficient BP-based SE method. We present a novel distributed BP-based Gauss-Newton algorithm, where the BP is applied sequentially over the non-linear model, akin to what is done by the Gauss-Newton method. The resulting Gauss-Newton BP (GN-BP) algorithm represents a BP counterpart of the Gauss-Newton method. The GN-BP is the first BP-based solution for the non-linear SE model achieving exactly the same accuracy as the centralized SE via Gauss-Newton method. We note that results presented in this chapter are based on our publications \cite{cosovictsp, cosovicac}.

\section{Gauss-Newton Method as a Sequential MAP Problem}
Consider the Gauss-Newton method \eqref{AC_GN} where, at each iteration step $\nu$, the algorithm returns a new estimate of $\mathbf{x}$ denoted as $\mathbf{x}^{(\nu)}$. Note that, after a given iteration, an estimate $\mathbf{x}^{(\nu)}$ is a vector of known (constant) values. If the Jacobian matrix $\mathbf J (\mathbf x^{(\nu)})$ has a full column rank, the equation \eqref{AC_GN_increment} represents the linear WLS solution of the minimization problem \cite[Ch.~9]{hansen}:   
	\begin{equation}
    \begin{gathered}
	\min_{\Delta \mathbf x^{(\nu)}} 
	||\mathbf P^{1/2}[\mathbf r (\mathbf x^{(\nu)}) - 
	\mathbf J (\mathbf x^{(\nu)})\Delta \mathbf x^{(\nu)}]||_2^2,
    \end{gathered}
	\label{GN_WLS_increment}
	\end{equation}
where $\mathbf P = \mathbf {R}^{-1}$. Hence, at each iteration $\nu$, the Gauss-Newton method produces WLS solution of the following system of linear equations:  
	\begin{equation}
    \begin{aligned}
    \mathbf r (\mathbf x^{(\nu)})=\mathbf{g}(\Delta \mathbf x^{(\nu)})
    +\mathbf{u},
    \end{aligned}
	\label{GN_linear_SE_model}
	\end{equation}	
where $\mathbf{g}(\Delta \mathbf x^{(\nu)})= \mathbf J (\mathbf x^{(\nu)})\Delta \mathbf x^{(\nu)}$ comprises linear functions, while $\mathbf{u}$ is the vector of measurement errors. The equation \eqref{AC_GN_increment} is the weighted normal equation for the minimization problem defined in \eqref{GN_WLS_increment}, or alternatively \eqref{AC_GN_increment} is a WLS solution of \eqref{GN_linear_SE_model}.
Consequently, the probability density function associated with the \textit{i}-th measurement (i.e., the \textit{i}-th residual component $r_i$) at any iteration step $\nu$ is:
	\begin{equation}
    \begin{aligned}
	\mathcal{N}(r_i(\mathbf x^{(\nu)})|{\Delta \mathbf x^{(\nu)}},v_i)
	= 
    \cfrac{1}{\sqrt{2\pi v_i}} 
    \exp\Bigg\{\cfrac{[r_i(\mathbf x^{(\nu)}) - 
    g_i(\Delta \mathbf x^{(\nu)})]^2}{2v_i}\Bigg\}.
    \end{aligned}        
	\label{GN_m_th_residual}
	\end{equation}	

\begin{tcolorbox}[title=Gauss-Newton Method as a MAP Optimization Problem]
The MAP solution of \eqref{SE_likelihood} can be redefined as an iterative optimization problem where, instead of solving \eqref{AC_GN}, we solve:
	\begin{subequations}
    \begin{align}
    \Delta \hat {\mathbf x}^{(\nu)}&=
	\mathrm{arg} \max_{\Delta\mathbf{x}^{(\nu)}}
	\mathcal{L}\Big(\mathbf{r}(\mathbf{x}^{(\nu)})|
	\Delta\mathbf{x}^{(\nu)}\Big)	
	= \mathrm{arg} \max_{\Delta\mathbf{x}^{(\nu)}} 
	\prod_{i=1}^k \mathcal{N} \Big(r_i(\mathbf{x}^{(\nu)})|
	\Delta\mathbf{x}^{(\nu)},v_i\Big)
    \label{GN_sub_MAP}\\
	\mathbf{x}^{{(\nu+1)}} &= \mathbf{x}^{(\nu)}+ \Delta \hat {\mathbf x}^{(\nu)}.
	\label{GN_MAP_update}
    \end{align}
	\label{GN_MAP}%
	\end{subequations}
In the following, we show that the solution of the above problem \eqref{GN_MAP} can be efficiently obtained using the BP algorithm applied over the underlying factor graph.
\end{tcolorbox} 

The solution $\Delta \hat {\mathbf x}^{(\nu)}$ in each iteration $\nu = \{0,1, \dots, \nu_{\max}\}$ of the outer iteration loop, is obtained by applying the iterative BP algorithm within inner iteration loops. Every inner BP iteration loop $\tau(\nu) = \{0,1, \dots, \tau_{\max}(\nu)\}$ outputs $\Delta \hat {\mathbf x}^{(\nu,\tau_{\max}(\nu))}$ $\equiv$ $\Delta \hat {\mathbf x}^{(\nu)}$, where $\tau_{\max}(\nu)$ is the number of inner BP iterations within the outer iteration $\nu$. Note that, in general, the BP algorithm operating within inner iteration loops represents an instance of a loopy Gaussian BP over a linear model defined by linear functions $\mathbf{g}(\Delta \mathbf x^{(\nu)})$. Thus, if it converges, it provides a solution equal to the linear WLS solution $\Delta {\mathbf x}^{(\nu)}$ of \eqref{AC_GN_increment}. 

\section{The Factor Graph Construction}
From the factorization of the likelihood expression \eqref{GN_sub_MAP}, one easily obtains the factor graph corresponding to the GN-BP method as follows. The increments $\Delta \mathbf x$ of state variables $\mathbf x$ determine the set of variable nodes $\mathcal{V} = \{(\Delta \theta_1, \Delta V_1), \dots, (\Delta \theta_N, \Delta V_N)\}$ and each likelihood function $\mathcal{N} (r_i(\mathbf{x}^{(\nu)})|\Delta\mathbf{x}^{(\nu)},v_i)$ represents the local function associated with the factor node. Since the residual equals $r_i(\mathbf{x}^{(\nu)}) = z_i - h_i(\mathbf{x}^{(\nu)})$, in general, the set of factor nodes $\mathcal{F} =\{f_1,\dots,f_k\}$ is defined by the set of measurements $\mathcal{M}$. The factor node $f_i$ connects to the variable node $\Delta x_s \in \{\Delta \theta_s,\Delta V_s \}$ if and only if the increment of the state variable $\Delta x_s$ is an argument of the corresponding function ${g_i}({\Delta \mathbf x})$, i.e., if the state variable $ x_s \in \{ \theta_s, V_s \}$ is an argument of the measurement function $h_i(\mathbf x)$.

The GN-BP algorithm is applied sequentially over the non-linear model, where the main algorithm routine includes BP-based inference over MAP sub-problem \eqref{GN_sub_MAP}.  For completeness of exposition, we provide a step-by-step presentation of the GN-BP algorithm.

\subsection{Derivation of BP Messages and Marginal Inference}
\textbf{Message from a variable node to a factor node:} Consider a part of a factor graph shown in \autoref{FigGN_v_f} with a group of factor nodes $\mathcal{F}_s=\{f_i,f_w,...,f_W\}$ $\subseteq$ $\mathcal{F}$ that are neighbours of the variable node $\Delta x_s$ $\in$ $\mathcal{V}$. Let us assume that the incoming messages $\mu_{f_w \to \Delta x_s}( \Delta x_s)$, $\dots$, $\mu_{f_W \to \Delta x_s}(\Delta x_s)$ into the variable node $\Delta x_s$ are Gaussian and represented by their mean-variance pairs $(r_{f_w \to \Delta x_s},v_{f_w \to \Delta x_s})$, $\dots$, $(r_{f_W \to \Delta x_s},v_{f_W \to \Delta x_s})$. 
	\begin{figure}[ht]
	\centering
	\includegraphics[width=4.3cm]{./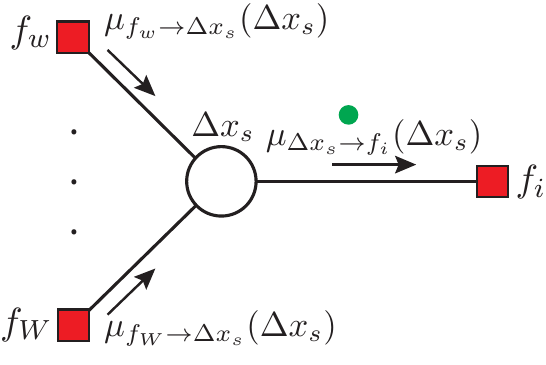}
	\caption{Message $\mu_{x_s \to f_i}(x_s)$ from variable node 
	$x_s$ to factor node $f_i$.}
	\label{FigGN_v_f}
	\end{figure} \noindent

\begin{tcolorbox}[title=Message from a Variable Node to a Factor Node]	
The message $\mu_{\Delta x_s \to f_i}(\Delta x_s)$ from the variable node $\Delta x_s$ to the factor node $f_i$ is equal to the product of all incoming factor node to variable node messages arriving at all the other incident edges \eqref{FG_v_f}. It is easy to show that the message $\mu_{\Delta x_s \to f_i}(\Delta x_s)$ is proportional to: 
		\begin{equation}
        \begin{gathered}
        \mu_{\Delta x_s \to f_i}(\Delta x_s) \propto
        \mathcal{N}(\Delta x_s|r_{\Delta x_s \to f_i}, v_{\Delta x_s \to f_i}),
        \end{gathered}
		\label{GN_Gauss_vf}
		\end{equation}	
with mean $r_{\Delta x_s \to f_i}$ and variance $v_{\Delta x_s \to f_i}$ obtained as:
		\begin{subequations}
        \begin{align}
         r_{\Delta x_s \to f_i} &= 
        \Bigg( \sum_{f_a \in \mathcal{F}_s\setminus f_i} 
        \cfrac{r_{f_{a} \to \Delta x_s}}
        {v_{f_{a} \to \Delta x_s}}\Bigg)
        v_{\Delta x_s \to f_i}
        \label{GN_vf_mean}\\
		 \cfrac{1}{v_{\Delta x_s \to f_{i}}} &= 
		\sum_{f_a \in \mathcal{F}_s\setminus f_{i}} 
		\cfrac{1}{v_{f_{a} \to \Delta x_s}},
		\label{GN_vf_var}
        \end{align}
		\label{GN_vf_mean_var}%
		\end{subequations}
where $\mathcal{F}_s \setminus f_i$ represents the set of factor nodes incident to the variable node $\Delta x_s$, excluding the factor node $f_i$. 
\end{tcolorbox}	

To conclude, after the variable node $\Delta x_s$ receives the messages from all of the neighbouring factor nodes from the set $\mathcal{F}_s\setminus f_i$, it evaluates the message $\mu_{\Delta x_s \to f_i}(\Delta x_s)$ and sends it to the factor node $f_i$. 

\textbf{Message from a factor node to a variable node:} Consider a part of a factor graph shown in \autoref{FigGN_f_v} that consists of a group of variable nodes $\mathcal{V}_i =$ $\{\Delta x_s,$ $\Delta x_l,$ $...,$ $\Delta x_L\}$ $\subseteq$ $\mathcal V$ that are neighbours of the factor node $f_i$ $\in$ $\mathcal{F}$. Let us assume that the messages $\mu_{\Delta x_l \to f_i}(\Delta x_l)$, $\dots$, $\mu_{\Delta x_L \to f_i}(\Delta x_L)$  into factor nodes are Gaussian, represented by their mean-variance pairs $(r_{\Delta x_l \to f_i}, v_{\Delta x_l \to f_i})$, $\dots$, $(r_{\Delta x_L \to f_i}, v_{\Delta x_L \to f_i})$. 
	\begin{figure}[ht]
	\centering
	\includegraphics[width=4.5cm]{./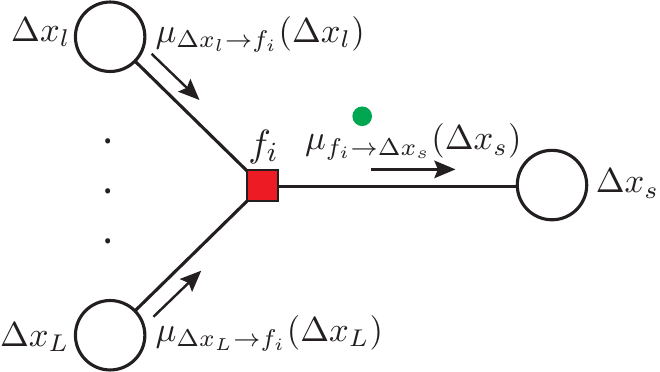}
	\caption{Message $\mu_{f_i \to \Delta x_s}(\Delta x_s)$ from factor 
	node $f_i$ to variable node $\Delta x_s$.}
	\label{FigGN_f_v}
	\end{figure} \noindent
The Gaussian function associated to the factor node $f_i$ is:
		\begin{equation}
        \begin{gathered}
		\mathcal{N}(r_i|\Delta x_s,\Delta x_l,\dots,
		\Delta x_L, v_i) 
		\propto 
        \exp\Bigg\{\cfrac{[r_i-g_i
        (\Delta x_s,\Delta x_l,\dots,\Delta x_L)]^2}
        {2v_i}\Bigg\},
		\end{gathered}        
		\label{GN_BP_Gauss_measurement_fun}
		\end{equation}
where the model contains only linear functions which we represent in a general form as:
		\begin{equation}
        \begin{gathered}
		g_i(\cdot) =
		C_{\Delta x_s} \Delta x_s + 
		\sum_{\Delta x_b \in \mathcal{V}_i\setminus \Delta x_s} 
		C_{\Delta x_b} \Delta x_b,
		\end{gathered}
		\label{GN_BP_general_measurment_fun}
		\end{equation}
where $\mathcal{V}_i\setminus \Delta x_s$ is the set of variable nodes incident to the factor node $f_i$, excluding the variable node $\Delta x_s$. 

\begin{tcolorbox}[title=Message from a Factor Node to a Variable Node]
The message $\mu_{f_i \to \Delta x_s}(\Delta x_s)$ from the factor node $f_i$ to the variable node $\Delta x_s$ is defined as a product of all incoming variable node to factor node messages arriving at other incident edges, multiplied by the function associated to the factor node $f_i$, and marginalized over all of the variables associated with the incoming messages \eqref{FG_f_v}. It can be shown that the message $\mu_{f_i \to \Delta x_s}(\Delta x_s)$ from the factor node $f_i$ to the variable node $\Delta x_s$ is represented by the Gaussian function:
		\begin{equation}
        \begin{gathered}
        \mu_{f_i \to \Delta x_s}(\Delta x_s) \propto
        \mathcal{N}(\Delta x_s|r_{f_i \to \Delta x_s}, v_{f_i \to \Delta x_s}),
        \end{gathered}
		\label{GN_Gauss_fv}
		\end{equation}
with mean $r_{f_i \to \Delta x_s}$ and variance $v_{f_i \to \Delta x_s}$ obtained as:	
		\begin{subequations}
		\begin{align}
		r_{f_i \to \Delta x_s}&=
		\cfrac{1}{C_{\Delta x_s}}
		\Bigg( r_i - \sum_{\Delta x_b \in \mathcal{V}_i \setminus \Delta x_s} 
		C_{\Delta x_b} \cdot r_{\Delta x_b \to f_i}  
		 \Bigg)
        \label{GN_fv_mean}	\\
		v_{f_i \to \Delta x_s} &= 
		\cfrac{1}{C_{\Delta x_s}^2}
		\Bigg( v_i + \sum_{\Delta x_b \in \mathcal{V}_i \setminus \Delta x_s} 
		C_{\Delta x_b}^2 \cdot v_{\Delta x_b \to f_i} 
		 \Bigg).
		\label{GN_fv_var}		
        \end{align}
        \label{GN_fv_mean_var}%
		\end{subequations}	
The coefficients $C_{\Delta x_p},\; \Delta x_p \in \mathcal{V}_i$, are Jacobian elements of the measurement function associated with the factor node $f_i$: 
		\begin{equation}
        \begin{gathered}
		C_{\Delta x_p}=\cfrac{\partial h_i(x_s,x_l,\dots, x_L)}{\partial x_p}.	
        \end{gathered}
		\label{GN_fv_coeff}
		\end{equation}	
\end{tcolorbox}
			
To summarize, after the factor node $f_i$ receives the messages from all of the neighbouring variable nodes from the set $\mathcal{V}_i\setminus \Delta x_s$, it evaluates the message $\mu_{f_i \to \Delta x_s}(\Delta x_s)$, and sends it to the variable node $\Delta x_s$.  

\textbf{Marginal Inference:} The marginal of the variable node $\Delta x_s$, illustrated in \autoref{FigGN_marginal}, is obtained as the product of all incoming messages into the variable node $\Delta x_s$ \autoref{FG_marginal}. 
	\begin{figure}[ht]
	\centering
	\includegraphics[width=4.3cm]{./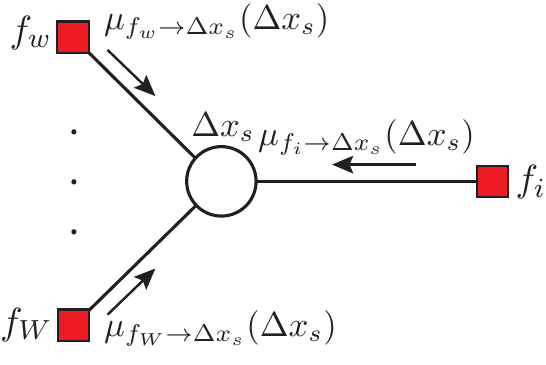}
	\caption{Marginal inference of the variable node $\Delta x_s$.}
	\label{FigGN_marginal}
	\end{figure} \noindent

\begin{tcolorbox}[title=Marginal]
It can be shown that the marginal of the state variable $\Delta x_s$ is represented by the Gaussian function: 
\begin{equation}
        \begin{gathered}
        p(\Delta x_s) \propto 
        \mathcal{N}(\Delta x_s|\Delta \hat x_s,v_{\Delta x_s}), 
        \end{gathered}
		\label{GN_Gauss_marg}
		\end{equation}
with mean $\Delta \hat x_s$ which represents the estimated value of the state variable increment $\Delta x_s$ and variance $v_{\Delta x_s}$:
		\begin{subequations}
        \begin{align}
        \Delta \hat x_s &= 
        \Bigg( \sum_{f_c \in \mathcal{F}_s} \cfrac{r_{f_{c} \to \Delta x_s}}
        {v_{f_{c} \to \Delta x_s}}\Bigg)
        v_{\Delta x_s}
        \label{GN_marginal_mean} \\
        \cfrac{1}{v_{\Delta x_s}} &= 
		\sum_{f_c \in \mathcal{F}_s} \cfrac{1}{v_{f_{c} \to \Delta x_s}}, 
		\label{GN_marginal_var}        
        \end{align}
        \label{GN_marginal_mean_var}%
		\end{subequations}
where $\mathcal{F}_s$ is the set of factor nodes incident to the variable node $\Delta x_s$.
\end{tcolorbox}		

Note that due to the fact that variable node and factor node processing preserves ``Gaussianity" of the messages, each message exchanged in BP is completely represented using only two values: the mean and the variance \cite{ping}.		

\subsection{Iterative GN-BP Algorithm}
\begin{algorithm} [ht]
\caption{The GN-BP}
\label{GN}
\begin{spacing}{1.15}
\begin{algorithmic}[1]
\Procedure {Initialization $\nu=0$}{}
  \For{Each $x_s \in \mathcal{X}$}
  	\State initialize value of $x_s^{(0)}$
  \EndFor 
\EndProcedure
\myline[black](-2.4,1.76)(-2.4,0.3)
\Procedure {Outer iteration loop $\nu=0,1,2,\dots$; $\tau = 0$}{}
\While{stopping criterion for the outer loop is not met} 
\For{Each $f_s \in \mathcal{F}_{\mathrm{dir}}$}
  	\State compute $r_{s}^{(\nu)} = z_s - x_s^{(\nu)}$
  	  \EndFor 
  	  \For{Each $f_s \in \mathcal{F}_{\mathrm{loc}}$}
  	\State send $\mu_{f_s \to \Delta x_s}^{(\nu)}$, $x_s^{(\nu)}$ 
  	to incident $\Delta x_s \in \mathcal{V}$
  \EndFor 
    
  \For{Each $\Delta x_s \in \mathcal{V}$}
  	\State send $\mu_{\Delta x_s \to f_i}^{(\nu){(\tau = 0)}} = \mu_{f_s \to \Delta x_s}^{(\nu)}$,  $x_s^{(\nu)}$ to incident $f_i \in \mathcal{F}_{\mathrm{ind}}$
  \EndFor
  \For{Each $f_i \in \mathcal{F}_{\mathrm{ind}}$}
  \State compute $r_{i}^{(\nu)} = z_i - h_i(\mathbf{x}^{(\nu)})$ and $C_{i,\Delta x_p}^{(\nu)}$; $\Delta x_p \in \mathcal{V}_i$
  \EndFor

\Procedure {Inner Iteration loop $\tau=1,2,\dots$}{} 
\While{stopping criterion for the inner loop is not met} 
  \For{Each $f_i \in \mathcal{F}_{\mathrm{ind}}$}
 	\State compute $\mu_{f_i \to \Delta x_s}^{(\tau)}$ using \eqref{GN_fv_mean_var}
  \EndFor
   \For{Each $\Delta x_s \in \mathcal{V}$}
  \State compute $\mu_{\Delta x_s \to f_i}^{(\tau)}$ using \eqref{GN_vf_mean_var}
  \EndFor
  \EndWhile 
\EndProcedure
\myline[black](-2.4,4.2)(-2.4,0.3)
  \For{Each $\Delta x_s \in \mathcal{V}$}
  	\State compute $\Delta \hat x_s^{(\nu)}$ using \eqref{GN_marginal_mean_var} 		       and $x_s^{(\nu+1)} =  x_s^{(\nu)}+\Delta \hat x_s^{(\nu)}$
  \EndFor
 \EndWhile 
\EndProcedure
\myline[black](-2.4,13.41)(-2.4,0.3) 
 \end{algorithmic}
 \end{spacing}
\end{algorithm} \noindent

The \emph{indirect factor nodes} $\mathcal{F}_{\mathrm{ind}} \subset \mathcal{F}$ correspond to measurements that measure state variables indirectly (e.g., power flows and injections). The \emph{direct factor nodes} $\mathcal{F}_{\mathrm{dir}} \subset \mathcal{F}$ correspond to the measurements that measure state variables directly (e.g., voltage magnitudes). Besides direct and indirect factor nodes, we define two additional types of singly-connected factor nodes. The \emph{slack factor node} corresponds to the slack or reference bus where the voltage angle has a given value, therefore, the residual of the corresponding state variable is equal to zero, and its variance tends to zero. Finally, the \emph{virtual factor node} is a singly-connected factor node used if the variable node is not directly measured. Residuals of virtual factor nodes approach zero, while their variances tend to infinity. 

We refer to direct factor nodes and two additional types of singly-connected factor nodes as local factor nodes $\mathcal{F}_{\mathrm{loc}} \subset \mathcal{F}$. Local factor nodes repeatedly send the same message to incident variable nodes. It is important to note that local factor nodes send messages represented by a triplet: mean (of the residual), variance and the state variable value.

The GN-BP algorithm is presented in \autoref{GN}, where the set of state variables is defined as $\mathcal{X}=\{x_1,...,x_n\}$. After the initialization (lines 1-5), the outer loop starts by computing residuals for direct and indirect factor nodes, as well as the Jacobian elements, and passes them to the inner iteration loop (lines 8-19). The inner iteration loop (lines 20-29) represents the main algorithm routine which includes BP-based message inference described in the previous subsection. We use synchronous scheduling, where all messages in a given inner iteration are updated  using the output of the previous iteration as an input \cite{elidan}. The output of the inner iteration loop is the estimate of the state variable increments. Finally, the outer loop updates the set of state variables (lines 30-32). The outer loop iterations are repeated until the stopping criteria is met.

\begin{example}[Constructing a factor graph] In this toy example, using a simple 3-bus model presented in Fig. \ref{GN_Fig_ex_bus_branch_5meas}, we demonstrate the conversion from a bus/branch model with a given measurement configuration into the corresponding factor graph. 
	\begin{figure}[ht]
	\centering
	\begin{tabular}{@{}c@{}}
	\subfloat[]{\label{GN_Fig_ex_bus_branch_5meas}
	\includegraphics[width=2.5cm]{./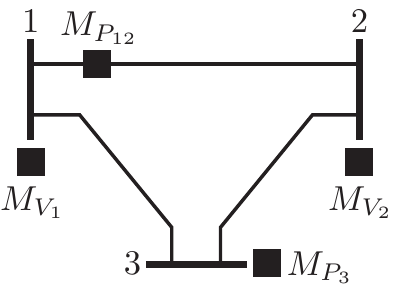}}
	\end{tabular}\quad
	\begin{tabular}{@{}c@{}}
	\subfloat[]{\label{GN_Fig_ex_FG_5meas}
	\includegraphics[width=5.1cm]{./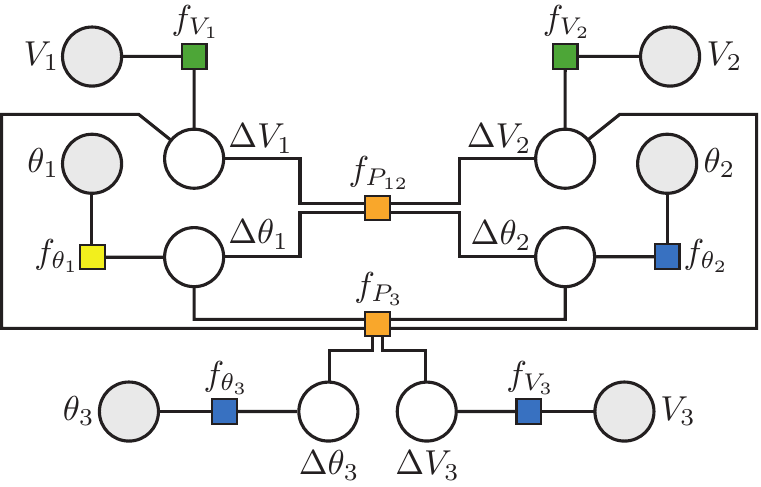}} 
	\end{tabular}
	\caption{Transformation of the bus/branch model and
	 measurement configuration (subfigure a) into the corresponding 
	 factor graph with different types of factor nodes (subfigure b).}
	\label{GN_Fig_ex_diff_factor}
	\end{figure} 
	
The corresponding factor graph is given in Fig. \ref{GN_Fig_ex_FG_5meas}, where the set of state variables is $\mathcal{X} =$ $\{(\theta_1,V_1),$  $(\theta_2,V_2),$ $(\theta_3,V_3)\}$ and the set of variable nodes is $\mathcal{V} =$ $\{(\Delta \theta_1,\Delta V_1),$ $(\Delta \theta_2, \Delta V_2),$ $(\Delta \theta_3,\Delta V_3)\}$. The indirect factor nodes (orange squares) are defined by corresponding measurements, where in our example, active power flow $M_{P_{12}}$ and active power injection $M_{P_3}$ measurements  are mapped into factor nodes $\mathcal{F}_{\mathrm{ind}} = $ $\{f_{P_{12}},$ $f_{P_{3}} \}$. The set of local factor nodes $\mathcal{F}_{\mathrm{loc}}$ consists of the set of direct factor nodes (green squares) $\mathcal{F}_{\mathrm{dir}} = $ $\{f_{V_{1}},$ $f_{V_{2}} \}$ defined by bus voltage magnitude measurements $M_{V_{1}}$ and $M_{V_2}$, virtual factor nodes (blue squares) and the slack factor node (yellow square). \demo 	  
\end{example}

\subsection{Discussion}
The presented GN-BP algorithm can be easily adapted to the multi-area SE model. Therein, each area runs the GN-BP algorithm in a fully parallelized way, exchanging messages asynchronously with neighboring areas. The algorithm may run as a continuous process, with each new measurement being seamlessly processed by the distributed state estimator. The BP approach is robust to ill-conditioned scenarios caused by significant differences between measurement variances, thus alleviating the need for observability analysis. Indeed, one can include arbitrarily large set of additional pseudo-measurements initialized using extremely high variances without affecting the BP solution within the observable part of the system \cite{fastDC}.

\subsection{Convergence of GN-BP Algorithm}
In this part, we present convergence analysis of the GN-BP algorithm with synchronous scheduling, and propose an improved GN-BP algorithm that applies synchronous scheduling \emph{with randomized damping}. We emphasize that the convergence of the GN-BP algorithm critically depends on the convergence behavior of each of the inner iteration loops. Thus, the convergence analysis presented in \autoref{sub:convergence_syn} and \autoref{sub:convergence_rnd} can be used to provide analysis for the GN-BP algorithm.    

Similar to the DC-BP analysis, it will be useful to consider a subgraph of the factor graph that contains the set of variable nodes $\mathcal{V}$, the set of indirect factor nodes $\mathcal{F}_{\mathrm{ind}} = \{f_1, \dots, f_m \} \subset \mathcal{F}$, and the set of edges $\mathcal{B} \subseteq \mathcal{V} \times \mathcal{F}_{\mathrm{ind}}$ connecting them. The number of edges in this subgraph is $b = |\mathcal{B}|$. Within the subgraph, we will consider a factor node $f_i \in \mathcal{F}_{\mathrm{ind}}$ connected to its neighboring set of variable nodes $\mathcal{V}_i = \{\Delta x_q, \dots, \Delta x_Q \} \subset \mathcal{V}$ by a set of edges $\mathcal{B}_i = \{b_i^q, \dots,b_i^Q \}  \subset \mathcal{B}$, where $d_i = |\mathcal{V}_i|$ is the degree of $f_i$. Next, we provide results on convergence of both variances and means of inner iteration loop messages, respectively.
	
\textbf{Convergence of the Variances:} As we show in \autoref{sub:convergence_syn}, the evolution of the variances $\mathbf {v}_{\mathrm{s}}$ is governed by:		
		\begin{equation}
        \begin{aligned}
        \mathbf {v}_{\mathrm{s}}^{(\tau)} = 
        \Big[\big(\mathbf{\widetilde C}^{-1}
        \bm {\Pi} \mathbf{\widetilde C}\big)\cdot
        \big(\mathfrak{D}(\mathbf{A}) 
        \big)^{-1}
        +\bm {\Sigma}_{\mathrm{a}}\mathbf{\widetilde C}^{-1}
        \Big] \mathbf{i}, 
        \end{aligned}
        \label{GN_var_evo}        
		\end{equation}
where according to \autoref{th_var} variances $\mathbf {v}_{\mathrm{s}}$ from indirect factor nodes to variable nodes always converge to a unique fixed point $\hat{\mathbf {v}}_{\mathrm{s}}$.

\textbf{Convergence of the Means:}
Using equations \eqref{GN_vf_mean} and \eqref{GN_fv_mean}, the evolution of means $\mathbf {r}_{\mathrm{s}}$ becomes a set of linear equations: 
		\begin{equation}
        \begin{aligned}
        \mathbf {r}_{\mathrm{s}}^{(\tau)} = \mathbf{\widetilde r}
        -  \bm \Omega 
        \mathbf {r}_{\mathrm{s}}^{(\tau-1)},
        \end{aligned}
        \label{GN_rand_mean}
		\end{equation}
where $\mathbf{\widetilde r} =  \mathbf{C}^{-1} \mathbf{r}_{\mathrm{a}} - \mathbf{D} \cdot \big(\mathfrak{D} (\hat {\mathbf{A}})\big)^{-1} \cdot \mathbf{L}\mathbf{r}_\mathrm{b}$, $\bm \Omega =\mathbf{D}\cdot\big(\mathfrak{D} (\hat {\mathbf{A}})\big)^{-1} \cdot \mathbf{\Gamma}\hat{\bm \Sigma}_{\mathrm{s}}^{-1}$, $ \hat {\mathbf{A}} = \mathbf{\Gamma} \hat{\bm \Sigma}_{\mathrm{s}}^{-1} \mathbf{\Gamma}^\mathrm{T} + \mathbf{L}$ and $\mathbf{D} = \mathbf{C}^{-1}\mathbf{\Pi} \mathbf{C}$ (we remind the reader that we described the vectors, matrices and matrix-operators involved in \eqref{GN_rand_mean} in \autoref{sub:convergence_syn}). According to \autoref{th_mean_sy}, the means $\mathbf {r}_{\mathrm{s}}$ from indirect factor nodes to variable nodes converge to a unique fixed point $\hat {\mathbf {r}}_{\mathrm{s}}:$
		\begin{equation}
        \begin{aligned}
        \hat {\mathbf {r}}_{\mathrm{s}} =\big(\mathbf{I}+\bm \Omega \big)^{-1} 
        \mathbf{\widetilde r},
        \label{GN_fixed}
        \end{aligned}
		\end{equation}
for any initial point $\mathbf {r}_{\mathrm{s}}^{(\tau=0)}$ if and only if the spectral radius $\rho(\bm \Omega)<1$.

Consequently, the convergence of the inner iteration loop of the GN-BP algorithm depends on the spectral radius of the matrix $\bm \Omega$. If the spectral radius $\rho(\bm \Omega)<1$, the GN-BP algorithm in the inner iteration loop $\nu$ will converge and the resulting vector of mean values will be equal to the solution of the MAP estimator. Consequently, the convergence of the GN-BP with synchronous scheduling in each outer iteration loop $\nu$ depends on the spectral radius of the matrix:
		\begin{equation}
        \begin{aligned}
		\bm \Omega(\mathbf x^{(\nu)}) = \big[\mathbf{C}(\mathbf x^{(\nu)})^{-1}
		\mathbf{\Pi} \mathbf{C}(\mathbf x^{(\nu)}) \big]
		\cdot\big[\mathfrak{D} (\mathbf{\Gamma} 
		\hat{\bm \Sigma}_{\mathrm{s}}^{-1}
        \mathbf{\Gamma}^\mathrm{T} + \mathbf{L})\big]^{-1} 
        \cdot \big(\mathbf{\Gamma}
        \hat{\bm \Sigma}_{\mathrm{s}}^{-1}\big).
        \end{aligned} 
		\label{GN_syn_omega_con}
		\end{equation}

\begin{tcolorbox}[title=Convergence of the GN-BP Algorithm with Synchronous Scheduling]		
\begin{remark}		
The GN-BP with synchronous scheduling converges to a unique fixed point if and only if $\rho_{\mathrm{syn}}<1$, where:
		\begin{equation}
        \begin{aligned}
  		\rho_{\mathrm{syn}} = 
        \max\{\rho \big(\bm {\Omega}({\mathbf x}^{(\nu)}):
        \nu = 0,1,\dots,\nu_{\max}\}. 
        \end{aligned}
        \label{GN_maxsy}
		\end{equation}
\end{remark}
\end{tcolorbox}		

\subsection{Convergence of GN-BP with Randomized Damping}
Next, we propose an improved GN-BP algorithm that applies synchronous scheduling with randomized damping. Using the proposed damping in \autoref{sub:convergence_rnd}, equation \eqref{GN_rand_mean} is redefined as:
		\begin{equation}
        \begin{aligned}
        \mathbf {r}_{\mathrm{d}}^{(\tau)} = \mathbf {r}_{\mathrm{q}}^{(\tau)}
        +\alpha_1 \mathbf {r}_{\mathrm{w}}^{(\tau-1)} + 
        \alpha_2\mathbf {r}_{\mathrm{w}}^{(\tau)},
        \label{GN_rand_3}
        \end{aligned}
		\end{equation}
where $0<\alpha_1<1$ is the weighting coefficient, and $\alpha_2 = 1 - \alpha_1$. In the above expression, $\mathbf {r}_{\mathrm{q}}^{(\tau)}$ and $\mathbf {r}_{\mathrm{w}}^{(\tau)}$ are obtained as: 
		\begin{subequations}
        \begin{align}
        \mathbf {r}_{\mathrm{q}}^{(\tau)} &= 
        \mathbf {Q} \mathbf{\widetilde r}
        - \mathbf {Q}  \bm \Omega 
        \mathbf {r}_{\mathrm{s}}^{(\tau-1)}
        \label{GN_rand_1}\\
         \mathbf {r}_{\mathrm{w}}^{(\tau)} &= 
        \mathbf {W}\mathbf{\widetilde r}
        - \mathbf {W} \bm \Omega
        \mathbf {r}_{\mathrm{s}}^{(\tau-1)},
        \label{GN_rand_2}
        \end{align}
		\end{subequations} 
where diagonal matrices $\mathbf {Q} \in \mathbb{F}_2^{b \times b}$ and $\mathbf {W} \in \mathbb{F}_2^{b \times b}$ are defined as $\mathbf {Q} = \mathrm{diag}(1 - q_1,...,1 - q_b)$, $q_i \sim \mathrm{Ber}(p)$, and $\mathbf {W} = \mathrm{diag}(q_1,...,q_b)$, respectively, and where $\mathrm{Ber}(p) \in \{0,1\}$ is a Bernoulli random variable with probability $p$ independently sampled for each mean value message. In a more compact form \eqref{GN_rand_3} can be written as follows: 
		\begin{equation}
        \begin{aligned}
        \mathbf {r}_{\mathrm{d}}^{(\tau)} =\mathbf{\bar r} - \bm {\bar \Omega} 
        \mathbf {r}_{\mathrm{s}}^{(\tau-1)}, 
        \label{GN_rand_5}
        \end{aligned}
		\end{equation}
where $\mathbf{\bar r} = \big(\mathbf {Q}+  \alpha_2 \mathbf {W}\big)  \mathbf{\widetilde r}$ and $\bm {\bar \Omega} = \mathbf {Q} \bm \Omega + \alpha_2\mathbf {W} \bm \Omega - \alpha_1\mathbf {W}$. According to \autoref{th_mean_sy}, the means $\mathbf {r}_{\mathrm{d}}$ from indirect factor nodes to variable nodes converge to a unique fixed point $\hat{\mathbf {r}}_{\mathrm{d}}$, if and only if the spectral radius $\rho(\bm {\bar \Omega})<1$, and for the resulting fixed point $\hat{\mathbf {r}}_{\mathrm{d}}$, it holds that $\hat{\mathbf {r}}_{\mathrm{d}} = \hat{\mathbf {r}}_{\mathrm{s}}$. 

To summarize, the convergence of the GN-BP with randomized damping in every outer iteration loop $\nu$ is governed by the spectral radius of the matrix:
		\begin{equation}
        \begin{gathered}
		\bm {\bar \Omega}(\mathbf x^{(\nu)}) = \mathbf {Q} 
		\bm \Omega(\mathbf x^{(\nu)}) +
		\alpha_2\mathbf {W} 
        \bm \Omega(\mathbf x^{(\nu)}) - \alpha_1\mathbf {W}.	
        \end{gathered}
		\label{GN_rand_con}
		\end{equation}

\begin{tcolorbox}[title=Convergence of the GN-BP Algorithm with Randomized Damping]
\begin{remark}			
The GN-BP with randomized damping will converge to a unique fixed point if and only if $\rho_{\mathrm{rd}}<1$, where:
		\begin{equation}
        \begin{aligned}
  		\rho_{\mathrm{rd}} = 
        \max\{\rho \big(\bm {\bar \Omega}({\mathbf x}^{(\nu)}):
        \nu = 0,1,\dots,\nu_{\max}\}, 
        \end{aligned}
        \label{maxrd}
		\end{equation}
and the resulting fixed point is equal to the fixed point obtained by the GN-BP with synchronous scheduling. 
\end{remark}
\end{tcolorbox}	

In \autoref{sec:gn_numerical}, we demonstrate that the GN-BP with randomized damping dramatically improves the GN-BP convergence.    

\section{Bad Data Analysis}
Besides the SE algorithm, one of the essential SE routines is the bad data analysis, whose main task is to detect and identify measurement errors, and eliminate them if possible. SE algorithms based on the Gauss-Newton method proceed with the bad data analysis after the estimation process is finished. This is usually done by processing the measurement residuals \cite[Ch.~5]{abur}, and typically, the largest normalized residual test (LNRT) is used to identify bad data \cite{guo}. The LNRT is performed after the Gauss-Newton algorithm converged in the repetitive process of identifying and eliminating bad data measurements one after another \cite{korres}.

Using analogies from the LNRT, we define the bad data test based on the BP messages from factor nodes to variable nodes. The presented model establishes local criteria to detect and identify bad data measurements. In \autoref{sec:gn_numerical}, we demonstrate that the BP-based bad data test (BP-BDT) significantly improves the bad data detection over the LNRT.

\textbf{The Belief Propagation Bad Data Test:} Consider a part of the factor graph shown in Fig. \ref{Fig_bad} and focus on a single measurement $M_i \in \mathcal{M}$ that defines the factor node $f_i$ $\in$ $\mathcal{F}$. Factor nodes $\{f_s,$ $f_l,$ $\dots,$ $f_L\}$ carry a collective evidence of the rest of the factor graph about the group of variable nodes $\mathcal{V}_i = \{\Delta x_s, \Delta x_l,..., \Delta x_L\}$ $\subseteq$ $\mathcal V$ incident to $f_i$. 
	\begin{figure}[ht]
	\centering
	\includegraphics[width=7.0cm]{./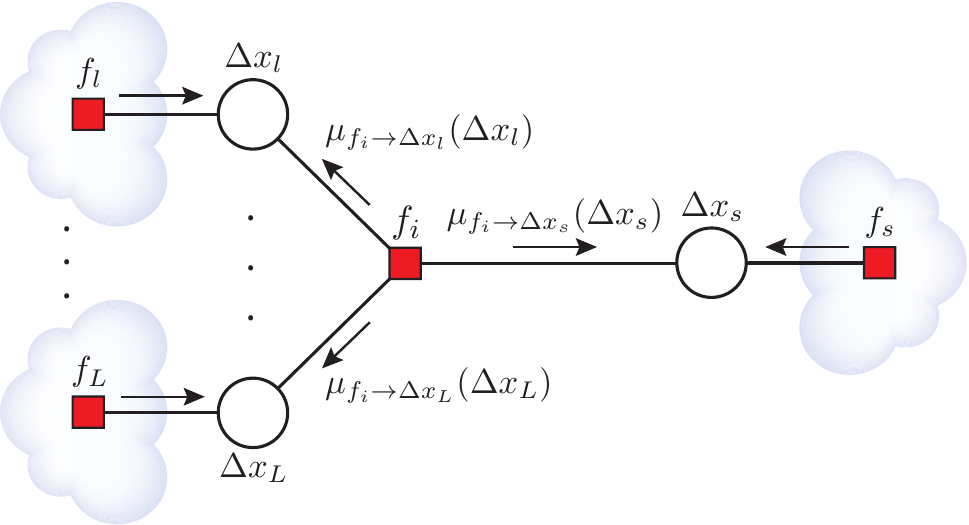}
	\caption{The part of the factor graph with messages 
	from factor node $f_i$ 
	to group of variable nodes $\mathcal{V}_i = \{\Delta x_s, \Delta x_l,..., 
	\Delta x_L\}$.}
	\label{Fig_bad}
	\end{figure} 

Assume that the estimation process is done, and the residual of the measurement $M_i$ is given as:  
		\begin{equation}
        \begin{gathered}
        r_i(\mathbf{ x}_i + \Delta {\hat{\mathbf x}}_i) = 
        z_i - h_i(\mathbf{x}_i + \Delta {\hat{\mathbf x}}_i), 
        \end{gathered}
		\label{residual}
		\end{equation}
where $\mathbf{x}_i =$ $[x_s,$ $x_l,$ $\dots,$ $x_L]^{\mathrm{T}}$ is the vector of state variables, while $\Delta {\hat{\mathbf x}}_i =$ $[\Delta \hat{x}_s,$ $\Delta \hat{x}_l,$ $\dots,$ $\Delta \hat{x}_L]^{\mathrm{T}}$ is the corresponding estimate vector of state variable increments. Let us define vectors $\mathbf{r}_{f_i} = $ $[r_{f_i \to \Delta x_s},$ $r_{f_i \to \Delta x_l},$ $\dots,$ $r_{f_i \to \Delta x_L}]^{\mathrm{T}}$ and $\mathbf{v}_{f_i} = $ $[v_{f_i \to \Delta x_s},$ $v_{f_i \to \Delta x_l},$ $\dots,$ $v_{f_i \to \Delta x_L}]^{\mathrm{T}}$ of mean and variance values of BP messages sent from the factor node $f_i$ to the variable nodes in $\mathcal{V}_i$, respectively. 

According to \eqref{GN_marginal_mean}, the vector of state variable increments $\Delta {\hat{\mathbf x}}_i$ is determined as:  
		\begin{equation}
        \begin{gathered}
        \Delta {\hat{\mathbf x}}_i = 
        [\mathrm{diag}(\mathbf{v}_{{\Delta x}_i})]\cdot
        [\mathrm{diag}(\mathbf{v}_{f_i})]^{-1} \cdot \mathbf{r}_{f_i} 
        + \mathbf{b},
        \end{gathered}
		\label{evo}
		\end{equation}
where $\mathbf{v}_{{\Delta x}_i} =$ $[v_{\Delta x_s},$ $v_{\Delta x_l},$ $\dots,$ $v_{\Delta x_L}]^{\mathrm{T}}$ is the vector of variable node variances obtained using \eqref{GN_marginal_var} and the vector $\mathbf{b}$ carries evidence of the rest of the graph about the corresponding variable nodes $\mathcal{V}_i$.

From \eqref{evo}, one can note that the BP-based SE algorithm decomposes the contribution of each factor node to state variable increments, thus providing insight in the structure of measurement residual in \eqref{residual}, where the impact of each measurement can be observed. More precisely, the expression $[\mathrm{diag}(\mathbf{v}_{f_i})]^{-1} \cdot$ $\mathbf{r}_{f_i}$ determines the influence of the measurement $M_i$ to the residual \eqref{residual}. 
To recall, the mean-value messages $\mathbf{r}_{f_i}$ contain ``beliefs'' of the factor node $f_i$ about variable nodes in $\mathcal{V}_i$, with the corresponding variances $\mathbf{v}_{f_i}$. Consequently, if the measurement $M_i$ represents bad data,  it will likely provide an inflated values of the normalized residual components $[\mathrm{diag}(\mathbf{v}_{f_i})]^{-1} \cdot$ $\mathbf{r}_{f_i}$ in \eqref{evo}. 

\begin{tcolorbox}[title=BP-based Bad Data Test Criteria]
We observe the following vector corresponding to each factor node $f_i$ to detect the bad data:   
		\begin{equation}
        \begin{gathered}
        \mathbf{r}_{\mbox{\scriptsize BP},f_i} = 
        [\mathrm{diag}(\mathbf{v}_{f_i})]^{-1}
        \cdot [\mathrm{diag}(\mathbf{r}_{f_i})] \cdot
        \mathbf{r}_{f_i}. 
        \end{gathered}
		\label{det_bad}
		\end{equation}
Note, the expression $[\mathrm{diag}(\mathbf{r}_{f_i})] \cdot$ $\mathbf{r}_{f_i} = $ $[r_{f_i \to \Delta x_s}^2,$ $r_{f_i \to \Delta x_l}^2,$ $\dots,$ $r_{f_i \to \Delta x_L}^2]^{\mathrm{T}}$ favors larger values of $\mathbf{r}_{f_i}$.
\end{tcolorbox} 

Finally, the BP-BDT is given in Algorithm \ref{BD} following similar steps as the LNRT \cite[Sec.~5.7]{abur}. Namely, after the state estimation process is done, we compute $\mathbf{r}_{\mbox{\scriptsize BP},f_i}$, $f_i$ $\in$ $\mathcal{F}$, using \eqref{det_bad}, and observe $\bar {r}_{\mbox{\scriptsize BP},f_i}$ as the largest element of $\mathbf{r}_{\mbox{\scriptsize BP},f_i}$. Comparing $\bar {r}_{\mbox{\scriptsize BP},f_i}$ values among all factor nodes, we find the largest such value ${r}_{\mbox{\scriptsize BP},f_m}$ corresponding to the $m$-th factor node. If ${r}_{\mbox{\scriptsize BP},f_m} > \kappa$, then the $m$-th measurement is suspected as bad data, where $\kappa$ is the bad data identification threshold.
\begin{algorithm} [ht]
\caption{The BP-BDT}
\label{BD}
\begin{spacing}{1.15}
\begin{algorithmic}[1]
\If {the GN-BP algorithm is converged}{}
  \For{Each $f_i \in \mathcal{F}$}
  	\State compute $\mathbf{r}_{\mbox{\tiny BP},f_i}$ using \eqref{det_bad}
  	\State find $\bar {r}_{\mbox{\tiny BP},f_i}$ as the largest 
  	 element of $\mathbf{r}_{\mbox{\tiny BP},f_i}$  
  \EndFor 
  	\State find ${r}_{\mbox{\tiny BP},f_m}$ as 
  	the largest element among all  $\bar {r}_{\mbox{\tiny BP},f_i}$
  	\If {${r}_{\mbox{\tiny BP},f_m} > \tau$}
  	\State the measurement $m$-th is suspected as bad data
  	\EndIf
\EndIf
 \end{algorithmic}
 \end{spacing}
\end{algorithm} 

\section{Numerical Results} \label{sec:gn_numerical}
In the simulated model, we start with a given IEEE test case and apply the power flow analysis to generate the exact solution. Further, we corrupt the exact solution by the additive white Gaussian noise of variance $v_i$, and we observe the set of measurements: legacy (active and reactive injections and power flows, line current magnitudes and bus voltage magnitudes) and phasor measurements (bus voltage and line current phasors). The set of measurements is selected in such a way that the system is observable. More precisely, for each scenario, we generate 300 random measurement configurations in order to obtain average performances.

In all models, we use measurement variance equal to $v_i = 10^{-10}\,\mbox{p.u.}$ for PMUs, and $v_i = 10^{-4}\,\mbox{p.u.}$ for legacy devices. To initialize the GN-BP and Gauss-Newton method, we run algorithms using ``flat start" with a small random perturbation \cite[Sec.~9.3]{abur} or ``warm start" where we use the same initial point as the one applied in AC power flow. Finally, randomized damping parameters are set to $p = 0.8$ and $\alpha_1 = 0.4$ (obtained by exhaustive search). To evaluate the performance of the GN-BP algorithm, we convert each of the above randomly generated IEEE test cases with a given measurement configuration into the corresponding factor graph, and we run the GN-BP algorithm.  

\textbf{Convergence and Accuracy:} We consider IEEE 30-bus test case with 5 PMUs and the set of legacy measurements with redundancy $\gamma$ $\in$ $\{2,3,4,5\}$. We first set the number of inner iterations to a high value of $\tau_{\max}(\nu)=5000$ iterations for each outer iteration $\nu$, where $\nu_{\max} = 11$, with the goal of investigating convergence and accuracy of GN-BP.

Fig. \ref{plot1} shows empirical cumulative density function (CDF) $F(\rho)$ of spectral radius $\rho_{\mathrm{syn}}$ and $\rho_{\mathrm{rd}}$ for different redundancies for ``flat start" and ``warm start". For each scenario, the randomized damping case is superior in terms of the spectral radius. For example, for redundancy $\gamma = 5$ and ``flat start", we record convergence with probability $0.98$ for randomized damping and $0.25$ for synchronous scheduling. When operated in ``warm start" via, e.g., large-scale historical data, the GN-BP can be integrated into continuous real-time SE framework following similar steps as in \cite{fastDC}.
	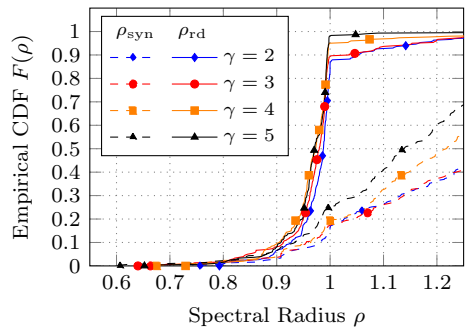
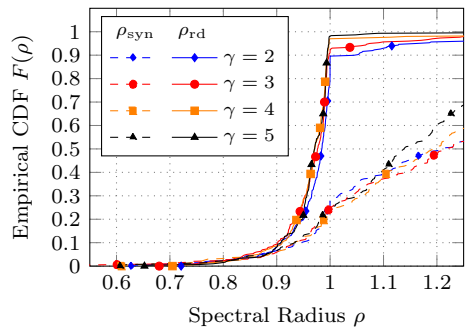
\begin{figure}[ht]
	\centering
	\captionsetup[subfigure]{oneside,margin={1.3cm,0cm}}
	\begin{tabular}{@{}c@{}}
	\subfloat[]{\label{plot1a}
	\centering
	\begin{tikzpicture}
  	\begin{axis}[width=6.5cm, height=5.0cm,
   	x tick label style={/pgf/number format/.cd,
   	set thousands separator={},fixed},
   	xlabel={Spectral Radius $\rho$},
   	ylabel={Empirical CDF $F(\rho)$},
   	label style={font=\footnotesize},
   	grid=major,
   	legend style={legend pos=north west,font=\scriptsize, column sep=0cm},
	legend columns=2,   	
   	ymin = 0, ymax = 1.1,
   	xmin = 0.55, xmax = 1.25,
   	xtick={0.6,0.7,0.8,0.9,1,1.1,1.2},
   	tick label style={font=\footnotesize},
   	ytick={0,0.1,0.2,0.3,0.4,0.5,0.6,0.7,0.8,0.9,1.0}]
	\addlegendimage{legend image with text=$\rho_{\mathrm{syn}}$}
    \addlegendentry{}
    \addlegendimage{legend image with text=$\rho_{\mathrm{rd}}$}
    \addlegendentry{}   
    
    \addplot[mark=diamond*,mark repeat=70, mark size=1.5pt, blue, dashed] 
   	table [x={x}, y={y}] {./chapter_05/Figs/fig5_5a/01_std10m2red2_sy.txt}; 
   	\addlegendentry{}  
	\addplot[mark=diamond*,mark repeat=70, mark size=1.5pt, blue] 
   	table [x={x}, y={y}] {./chapter_05/Figs/fig5_5a/02_std10m2red2_rd.txt};
   	\addlegendentry{$\gamma = 2$}
   	
   	\addplot[mark=otimes*, mark repeat=68, mark size=1.5pt, red, dashed]
	table [x={x}, y={y}] {./chapter_05/Figs/fig5_5a/03_std10m2red3_sy.txt};
   	\addlegendentry{}
	\addplot[mark=otimes*, mark repeat=68, mark size=1.5pt, red]
	table [x={x}, y={y}] {./chapter_05/Figs/fig5_5a/04_std10m2red3_rd.txt};   	
   	\addlegendentry{$\gamma = 3$}
   	
	\addplot[mark=square*,mark repeat=58, mark size=1.4pt, orange, dashed]
   	table [x={x}, y={y}] {./chapter_05/Figs/fig5_5a/05_std10m2red4_sy.txt};
   	\addlegendentry{}
   	\addplot[mark=square*,mark repeat=58, mark size=1.4pt, orange]
   	table [x={x}, y={y}] {./chapter_05/Figs/fig5_5a/06_std10m2red4_rd.txt};
   	\addlegendentry{$\gamma = 4$}
   	
    \addplot[mark=triangle*,mark repeat=74, mark size=1.5pt, black, dashed]
   	table [x={x}, y={y}] {./chapter_05/Figs/fig5_5a/07_std10m2red5_sy.txt};
   	\addlegendentry{}
   	\addplot[mark=triangle*,mark repeat=74, mark size=1.5pt, black]
   	table [x={x}, y={y}] {./chapter_05/Figs/fig5_5a/08_std10m2red5_rd.txt};
   	\addlegendentry{$\gamma = 5$}
  	\end{axis}
	\end{tikzpicture}}
	\end{tabular}\\
	\begin{tabular}{@{}c@{}}
	\subfloat[]{\label{plot1b}
	\begin{tikzpicture}
	\begin{axis}[width=6.5cm, height=5.0cm,
   	x tick label style={/pgf/number format/.cd,
   	set thousands separator={},fixed},
   	xlabel={Spectral Radius $\rho$},
   	ylabel={Empirical CDF $F(\rho)$},
   	label style={font=\footnotesize},
   	grid=major,
   	legend style={legend pos=north west,font=\scriptsize, column sep=0cm},
	legend columns=2,   	
   	ymin = 0, ymax = 1.1,
   	xmin = 0.55, xmax = 1.25,
   	xtick={0.6,0.7,0.8,0.9,1,1.1,1.2},
   	tick label style={font=\footnotesize},
   	ytick={0,0.1,0.2,0.3,0.4,0.5,0.6,0.7,0.8,0.9,1.0}]
	\addlegendimage{legend image with text=$\rho_{\mathrm{syn}}$}
    \addlegendentry{}
    \addlegendimage{legend image with text=$\rho_{\mathrm{rd}}$}
    \addlegendentry{}   
    
    \addplot[mark=diamond*,mark repeat=70, mark size=1.5pt, blue, dashed] 
   	table [x={x}, y={y}] {./chapter_05/Figs/fig5_5b/01_std10m2red2_sy.txt}; 
   	\addlegendentry{}  
	\addplot[mark=diamond*,mark repeat=70, mark size=1.5pt, blue] 
   	table [x={x}, y={y}] {./chapter_05/Figs/fig5_5b/02_std10m2red2_rd.txt};
   	\addlegendentry{$\gamma = 2$}
   	
   	\addplot[mark=otimes*, mark repeat=70, mark size=1.5pt, red, dashed]
	table [x={x}, y={y}] {./chapter_05/Figs/fig5_5b/03_std10m2red3_sy.txt};
   	\addlegendentry{}
	\addplot[mark=otimes*, mark repeat=70, mark size=1.5pt, red]
	table [x={x}, y={y}] {./chapter_05/Figs/fig5_5b/04_std10m2red3_rd.txt};   	
   	\addlegendentry{$\gamma = 3$}
   	
	\addplot[mark=square*,mark repeat=59, mark size=1.4pt, orange, dashed]
   	table [x={x}, y={y}] {./chapter_05/Figs/fig5_5b/05_std10m2red4_sy.txt};
   	\addlegendentry{}
   	\addplot[mark=square*,mark repeat=59, mark size=1.4pt, orange]
   	table [x={x}, y={y}] {./chapter_05/Figs/fig5_5b/06_std10m2red4_rd.txt};
   	\addlegendentry{$\gamma = 4$}
   	
    \addplot[mark=triangle*,mark repeat=65, mark size=1.5pt, black, dashed]
   	table [x={x}, y={y}] {./chapter_05/Figs/fig5_5b/07_std10m2red5_sy.txt};
   	\addlegendentry{}
   	\addplot[mark=triangle*,mark repeat=65, mark size=1.5pt, black]
   	table [x={x}, y={y}] {./chapter_05/Figs/fig5_5b/08_std10m2red5_rd.txt};
   	\addlegendentry{$\gamma = 5$}   	
  	\end{axis}
	\end{tikzpicture}}
	\end{tabular}
	\caption{The maximum spectral radii $\rho_{\mathrm{syn}}$ with synchronous and $\rho_{\mathrm{rd}}$ with randomized damping 
	scheduling over outer iterations $\nu = \{0,1,2,\dots,12\}$ for legacy redundancy $\gamma$ $\in$ $\{2,3,4,5\}$ and variance $v = 10^{-4}$ for IEEE 30-bus test case using ``flat start" (subfigure a) and ``warm start" (subfigure b).}
	\label{plot1}
	\end{figure} 

In the following, we compare the accuracy of the GN-BP algorithm to that of the Gauss-Newton method. We use the weighted residual sum of squares (WRSS) as a metric:
	\begin{equation}
	\begin{gathered}
	\mathrm{WRSS} = \sum_{i=1}^k 
	\cfrac{[z_i-h_i({\mathbf x})]^2}{v_i}.
	\end{gathered}
	\label{GN_num_WRSS}
	\end{equation}
Finally, we normalize the obtained $\mathrm{WRSS}_{\mbox{\scriptsize BP}}^{(\nu)}$ over outer iterations $\nu$ by $\mathrm{WRSS}_{\mbox{\scriptsize WLS}}$ of the centralized SE obtained using the Gauss-Newton method after 12 iterations (which we adopt as a normalization constant). 
	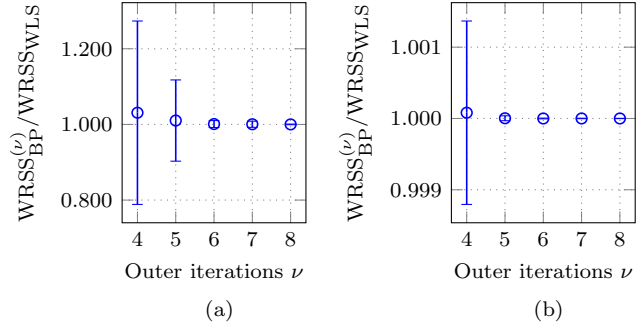
\begin{figure}[ht]
	\centering
	\captionsetup[subfigure]{oneside,margin={1.8cm,0cm}}
	\begin{tabular}{@{}c@{}}
	\subfloat[]{\label{plot2a}
	\centering	
	\begin{tikzpicture}
  	\begin{axis}[width=4cm, height=4.5cm,
   	x tick label style={/pgf/number format/.cd,
   	set thousands separator={},fixed},
   	y tick label style={/pgf/number format/.cd,fixed,
   	fixed zerofill, precision=3, /tikz/.cd},
   	xlabel={Outer iterations  $\nu$},
   	ylabel={$\mathrm{WRSS}_{\mbox{\scriptsize BP}}^{(\nu)}/
   	\mathrm{WRSS}_{\mbox{\scriptsize WLS}}$},
   	label style={font=\footnotesize},
   	grid=major,
   	xtick={1,2,3,4,5},
   	xticklabels={$4$, $5$, $6$, $7$, $8$},
   	tick label style={font=\footnotesize}]
	\addplot [color=blue, only marks, mark=o, line width=0.6pt]
 	plot [error bars, y dir = both, y explicit, 
 	error bar style={line width=0.6pt}]
 	table[x =x, y =y, y error =e]{./chapter_05/Figs/fig5_6/errorbar_red4.txt};
  	\end{axis}
	\end{tikzpicture}}
	\end{tabular}
	\begin{tabular}{@{}c@{}}
	\subfloat[]{\label{plot2b}
	\begin{tikzpicture}
  	\begin{axis}[width=4cm, height=4.5cm,
   	x tick label style={/pgf/number format/.cd,
   	set thousands separator={},fixed},
   	y tick label style={/pgf/number format/.cd,fixed,
   	fixed zerofill, precision=3, /tikz/.cd},
   	xlabel={Outer iterations  $\nu$},
   	ylabel={$\mathrm{WRSS}_{\mbox{\scriptsize BP}}^{(\nu)}/
   	\mathrm{WRSS}_{\mbox{\scriptsize WLS}}$},
   	label style={font=\footnotesize},
   	grid=major,
   	ytick={1,1.001,0.999},
   	xtick={1,2,3,4,5},
   	xticklabels={$4$, $5$, $6$, $7$, $8$},
   	tick label style={font=\footnotesize}]
	\addplot [color=blue, only marks, mark=o, line width=0.6pt]
 	plot [error bars, y dir = both, y explicit, 
 	error bar style={line width=0.6pt}]
 	table[x =x, y =y, y error =e]{./chapter_05/Figs/fig5_6/errorbar_red5.txt};
  	\end{axis}
	\end{tikzpicture}}
	\end{tabular}
	\caption{The GN-BP normalized WRSS (i.e., 
	$\mathrm{WRSS}_{\mbox{\tiny BP}}^{(\nu)}/
   	\mathrm{WRSS}_{\mbox{\tiny WLS}}$) for IEEE 30-bus test case 
   	using ``flat start" and legacy redundancy $\gamma = 4$ (subfigure a) and
   	$\gamma = 5$ (subfigure b).}
	\label{plot2}
	\end{figure} 


\textbf{Scalability and Complexity:}  Next, we use the mean absolute difference (MAD) between the state variables in two consecutive iterations as a metric:
	\begin{equation}
	\begin{gathered}
	\mathrm{MAD} =  \cfrac{1}{n} \sum_{i=1}^n 
	|\Delta x_i|.
	\end{gathered}
	\label{num_MAD}
	\end{equation}	
The MAD value represents average component-wise shift of the state estimate over the iterations, thus it may be used to quantify the rate of convergence. 

To investigate the rate of convergence as the size of the system increases, we provide MAD values for IEEE 118-bus and 300-bus test case using the ``warm start" and legacy redundancy $\gamma = 4$ with $20$ and $50$ PMUs, respectively. In the following, in order to reduce the number of inner iterations, we define an alternative inner iteration scheme. Namely, as before, we are running algorithm up to $\tau_{\max}(\nu)$, but here we allow interruption of the inner iteration loops when accuracy-based criterion is met. More precisely, the algorithm in the inner iteration loop is running until the following criterion is reached:    
		\begin{equation}
        \begin{gathered}      
		|\mathbf{r}_{f \to \Delta x}^{(\nu,\tau)}-
		\mathbf{r}_{f \to \Delta x}^{(\nu, \tau-1})| < \epsilon(\nu)
		\;\;\mathrm{or}\;\;
		\tau(\nu) = \tau_{\max}(\nu),
		\end{gathered}
		\label{GN_num_break2}
		\end{equation}
where $\mathbf{r}_{f \to \Delta x}$ represents the vector of mean-value messages from factor nodes to variable nodes, $\epsilon(\nu) = [10^{-2},$ $10^{-4},$ $10^{-6},$ $10^{-8},$ $10^{-10}]$ is the threshold at iteration $\nu$. The upper limit on inner iterations is $\tau_{\max}(\nu)=6000$ for each outer iteration $\nu$, where $\nu_{\max} = 4$.
	\begin{figure}[ht]
	\centering
	\captionsetup[subfigure]{oneside,margin={1.6cm,0cm}}
	\begin{tabular}{@{}c@{}}
	\subfloat[]{\label{GN_plot3a}
	\begin{tikzpicture}
	\begin{semilogyaxis} [box plot width=1.0mm,
	xlabel={Outer iterations  $\nu$},
   	ylabel={$\mathrm{MAD}$},
   	grid=major,   		
   	xmin=0, xmax=12, ymin=0.00000000001, ymax =0.05,	
   	xtick={1,2,3,4,5,7,8,9,10,11},
   	xticklabels={0, 1, 2, 3, 4, 0, 1, 2, 3, 4},
   	ytick={0, 0.01, 0.0001, 0.000001, 0.00000001, 0.0000000001},
   	width=9cm,height=6cm,
   	tick label style={font=\footnotesize}, label style={font=\footnotesize},
   	legend style={draw=black,fill=white,legend cell align=left,font=\tiny,
   	legend pos=south west}]
	\boxplot [
    forget plot, fill=blue!30,
    box plot whisker bottom index=1,
    box plot whisker top index=5,
    box plot box bottom index=2,
    box plot box top index=4,
    box plot median index=3] {./chapter_05/Figs/fig5_7/ieee118_mad_bp_data.txt};   
    \draw[solid, fill=blue!30, draw=black]
    (axis cs:0.685, 3) rectangle (axis cs:0.965, 4);
	\addlegendimage{area legend,fill=blue!30,draw=black}
	\addlegendentry{GN-BP};
	   
	\boxplot [
    forget plot, fill=cyan!30, 
    box plot whisker bottom index=1,
    box plot whisker top index=5,
    box plot box bottom index=2,
    box plot box top index=4,
    box plot median index=3] {./chapter_05/Figs/fig5_7/ieee118_mad_wls_data.txt};        
	\draw[solid, fill=red!30, draw=black] 
	(axis cs:0.685, 3) rectangle (axis cs:0.965, 4);
	\addlegendimage{area legend,fill=cyan!30,draw=black}
	\addlegendentry{Gauss-Newton}; 
	
    \addplot[only marks, mark options={draw=black, fill=blue!30},mark size=0.5pt] 
	table[x index=0, y index=1] 
	{./chapter_05/Figs/fig5_7/ieee118_mad_bp_outliers.txt};	
	\addplot[only marks, mark options={draw=black, fill=cyan!30},mark size=0.5pt] 
	table[x index=0, y index=1] 
	{./chapter_05/Figs/fig5_7/ieee118_mad_wls_outliers.txt};   
    
	\draw [thin] (60,\pgfkeysvalueof{/pgfplots/ymin}) -- 
	(60,\pgfkeysvalueof{/pgfplots/ymax});

	\end{semilogyaxis}
	\end{tikzpicture}}
	\end{tabular}\\
	\begin{tabular}{@{}c@{}}
	\subfloat[]{\label{GN_plot3b}
	\begin{tikzpicture}
	\begin{semilogyaxis} [box plot width=1.0mm,
	xlabel={Outer iterations  $\nu$},
   	ylabel={$\mathrm{MAD}$},
   	grid=major,   		
   	xmin=0, xmax=12, ymin=0.000000001, ymax =0.05,	
   	xtick={1,2,3,4,5,7,8,9,10,11},
   	xticklabels={0, 1, 2, 3, 4, 0, 1, 2, 3, 4},
   	ytick={0, 0.01, 0.0001, 0.000001, 0.00000001, 0.0000000001},
   	width=9cm,height=6cm,
   	tick label style={font=\footnotesize}, label style={font=\footnotesize},
   	legend style={draw=black,fill=white,legend cell align=left,font=\tiny,
   	legend pos=south west}]
	\boxplot [
    forget plot, fill=blue!30,
    box plot whisker bottom index=1,
    box plot whisker top index=5,
    box plot box bottom index=2,
    box plot box top index=4,
    box plot median index=3] {./chapter_05/Figs/fig5_7/ieee300_mad_bp_data.txt};   
    \draw[solid, fill=blue!30, draw=black]
    (axis cs:0.685, 3) rectangle (axis cs:0.965, 4);
	\addlegendimage{area legend,fill=blue!30,draw=black}
	\addlegendentry{GN-BP};
	   
	\boxplot [
    forget plot, fill=cyan!30, 
    box plot whisker bottom index=1,
    box plot whisker top index=5,
    box plot box bottom index=2,
    box plot box top index=4,
    box plot median index=3] {./chapter_05/Figs/fig5_7/ieee300_mad_wls_data.txt};        
	\draw[solid, fill=cyan!30, draw=black] 
	(axis cs:0.685, 3) rectangle (axis cs:0.965, 4);
	\addlegendimage{area legend,fill=cyan!30,draw=black}
	\addlegendentry{Gauss-Newton}; 
	
    \addplot[only marks, mark options={draw=black, fill=blue!30},mark size=0.5pt] 
	table[x index=0, y index=1] 
	{./chapter_05/Figs/fig5_7/ieee300_mad_bp_outliers.txt};	
	\addplot[only marks, mark options={draw=black, fill=cyan!30},mark size=0.5pt] 
	table[x index=0, y index=1] 
	{./chapter_05/Figs/fig5_7/ieee300_mad_wls_outliers.txt};   
    
	\draw [thin] (60,\pgfkeysvalueof{/pgfplots/ymin}) -- 
	(60,\pgfkeysvalueof{/pgfplots/ymax});

	\end{semilogyaxis}
	\end{tikzpicture}}
	\end{tabular}
	\caption{The MAD values of the GN-BP algorithm and Gauss-Newton method 
	for IEEE 118-bus (subfigure a) and IEEE 300-bus (subfigure b) 
	test case.}
	\label{GN_plot3}
	\end{figure}
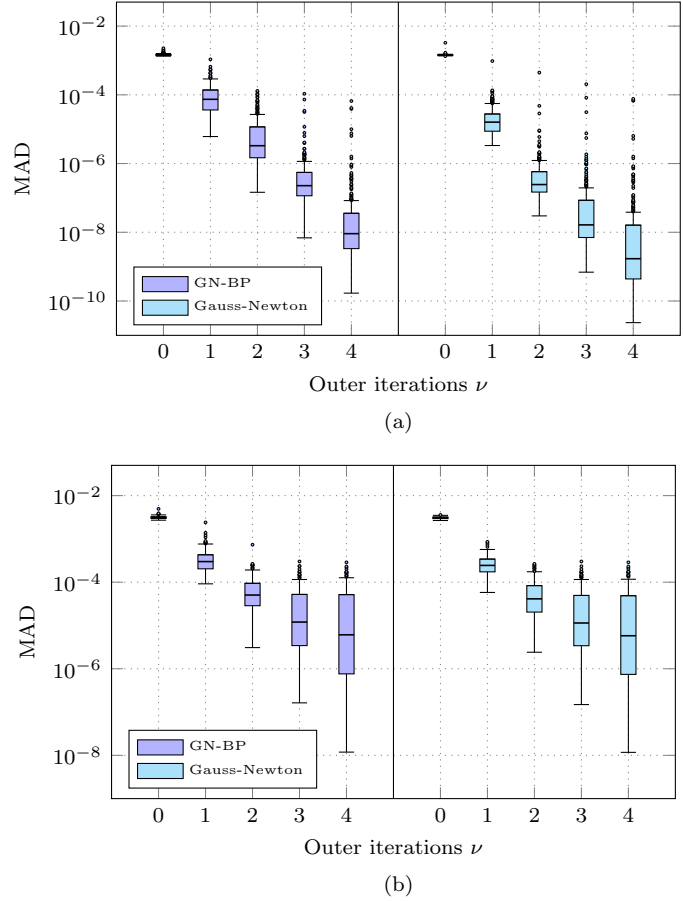

\autoref{GN_plot3} compares the MAD values of the GN-BP and Gauss-Newton method for IEEE 118-bus and 300-bus test cases within converged simulations. The GN-BP has achieved the presented performance at $\tau_{\max}(\nu) =$ $\{131,$ $488,$ $855,$ $1357,$ $2587\}$ and $\tau_{\max}(\nu) =$ $\{242,$ $1394,$ $5987,$ $6000,$ $6000\}$ (i.e., median values) for IEEE 118-bus and 300-bus test case, respectively. Note that the GN-BP exhibits very similar convergence performance to that of the centralized SE. Note also that it is difficult to directly compare the two, due to a large difference in computational loads of a single (outer) iteration. For example, the complexity of a single iteration remains constant but significant (due to matrix inversion) over iterations for the centralized SE algorithm, while it gradually increases for the GN-BP starting from an extremely low complexity at initial outer iterations. Namely, the overall complexity of the centralized SE scales as $O(n^3)$, and this can be reduced to $O(n^{2+c})$ by employing matrix inversion techniques that exploit the sparsity of involved matrices\cite{enns, alvarado}. The complexity of BP depends on the sparsity of the underlying factor graph, as the computational effort per iteration is proportional to the number of edges in the factor graph. For each of the $k$ measurements, the degree of the corresponding factor node is limited by a (typically small) constant. Indeed, for any type of measurements, the corresponding measurement function depends only on a few state variables corresponding to the buses in the local neighbourhood of the bus/branch where the measurement is taken. As $n$ and $k$ grow large, the number of edges in the factor graph scales as $O(n)$, thus the computational complexity of GN-BP scales linearly \emph{per iteration}. ased on discussion in \cite{bickson} for full matrices, the number of iterations is likely to scale with condition number of the underlying matrix, which for well-conditioned matrices may scale as low as $O(1)$. However, we leave the more detailed analysis on the scaling of the number of BP iterations as $n$ grows large for our future work.   

To summarize, BP approach builds upon the factor graph structure that directly exploits the underlying system sparsity, thus achieving minimal complexity of $\mathcal{O}(n)$ per iteration, while the scaling of the number of iterations needs further study. In contrast to the optimized centralized methods whose complexity scales as $\mathcal{O}(n^2)$, the BP method can be flexibly distributed by arbitrarily segmenting the underlying factor graph into disjoint areas. In the extreme case of the fully-distributed BP algorithm, each factor graph node operates locally and independently. Thus, the SE problem is distributed across $\mathcal{O}(n)$ nodes, and if implemented to run in parallel, can be $\mathcal{O}(n)$ times faster than the centralized solution. In addition, for fully-distributed BP, none of the nodes need to store the system-level matrices (whose storage-size typically scales as $\mathcal{O}(n^2)$), and storing only constant-size set of local parameters is sufficient.

\textbf{Bad Data Analysis:} To investigate the proposed BP-BDT, we use IEEE 14-bus and 30-bus test case, with 3 PMUs and 5 PMUs, respectively, and the set of legacy measurements of redundancy $\gamma = 3$. In each of 300 random measurement configurations, we randomly generate a bad measurement among legacy measurements, with variance set to $v_{\mathrm{b}20} = 400v_i$ or $v_{\mathrm{b}40} = 1600v_i$ (i.e., $20\sigma_i$ or $40\sigma_i$). For each simulation, we record only the largest elements ${r}_{\mbox{\scriptsize BP},f_m}$ and $r_{\mbox{\scriptsize N},m}$ obtained using BP-BDT and LNRT, respectively.   
	\begin{figure}[ht]
	\centering
	\captionsetup[subfigure]{oneside,margin={1.0cm,0cm}}
	\begin{tabular}{@{}c@{}}
	\subfloat[]{\label{plot4a}
	\begin{tikzpicture}
	\begin{axis} [box plot width=1.0mm,
	xlabel={},
   	ylabel={BP-BDT ${r}_{\mbox{\tiny BP},f_m}$},
   	grid=major,   		
    xmin=0.5, xmax=1.5, ymin=1.5, ymax =26,	
    xtick={1},
   	xticklabels={no bad data},
   	ytick={3, 10, 17, 24},
   	width=3cm,height=4.5cm,
   	tick label style={font=\footnotesize}, label style={font=\footnotesize},
   	legend style={draw=black,fill=white,legend cell align=left,font=\tiny,
   	legend pos=south west}]
	\boxplot [
    forget plot, fill=blue!30,
    box plot whisker bottom index=1,
    box plot whisker top index=5,
    box plot box bottom index=2,
    box plot box top index=4,
    box plot median index=3] {./chapter_05/Figs/fig5_8/4a_dat_ieee14_no_bad_BP.txt};  
    \addplot[only marks, mark options={draw=black, fill=blue!30},
    mark size=0.5pt] 
	table[x index=0, y index=1] 
	{./chapter_05/Figs/fig5_8/4a_out_ieee14_no_bad_BP.txt}; 
	\end{axis}
	\end{tikzpicture}}
	\end{tabular}
	\begin{tabular}{@{}c@{}}
	\subfloat[]{\label{plot4b}
	\begin{tikzpicture}
	\begin{axis} [box plot width=1.0mm,
    /pgf/number format/.cd, use comma, 1000 sep={},
	xlabel={},
   	ylabel={},
   	grid=major,   		
    xmin=0.5, xmax=1.5, ymin=-80, ymax =4000,	
    xtick={1},
   	xticklabels={$v_{\mathrm{b}20}$},
	ytick={20, 1280, 2540, 3800},
   	width=3cm,height=4.5cm,
   	tick label style={font=\footnotesize}, label style={font=\footnotesize},
   	legend style={draw=black,fill=white,legend cell align=left,font=\tiny,
   	legend pos=south west}]
	\boxplot [
    forget plot, fill=blue!30,
    box plot whisker bottom index=1,
    box plot whisker top index=5,
    box plot box bottom index=2,
    box plot box top index=4,
    box plot median index=3] {./chapter_05/Figs/fig5_8/4b_dat_ieee14_1leg_20_BP.txt}; 
    \addplot[only marks, mark options={draw=black, fill=blue!30},
    mark size=0.5pt] 
	table[x index=0, y index=1] 
	{./chapter_05/Figs/fig5_8/4b_out_ieee14_1leg_20_BP.txt};       
	\end{axis}
	\end{tikzpicture}}
	\end{tabular}
	\begin{tabular}{@{}c@{}}
	\subfloat[]{\label{plot4c}
	\begin{tikzpicture}
	\begin{axis} [box plot width=1.0mm,
    /pgf/number format/.cd, use comma, 1000 sep={},
	xlabel={},
   	ylabel={},
   	grid=major,   		
    xmin=0.5, xmax=1.5, ymin=-350, ymax =9800,	
    xtick={1},
   	xticklabels={$v_{\mathrm{b}40}$},
	ytick={80, 3180, 6280, 9380},
   	width=3cm,height=4.5cm,
   	tick label style={font=\footnotesize}, label style={font=\footnotesize},
   	legend style={draw=black,fill=white,legend cell align=left,font=\tiny,
   	legend pos=south west}]
	\boxplot [
    forget plot, fill=blue!30,
    box plot whisker bottom index=1,
    box plot whisker top index=5,
    box plot box bottom index=2,
    box plot box top index=4,
    box plot median index=3] 
    {./chapter_05/Figs/fig5_8/4c_dat_ieee14_1leg_40_BP.txt};
   \addplot[only marks, mark options={draw=black, fill=blue!30},
    mark size=0.5pt] 
	table[x index=0, y index=1] 
	{./chapter_05/Figs/fig5_8/4c_out_ieee14_1leg_40_BP.txt};    
	\end{axis}
	\end{tikzpicture}}
	\end{tabular}\\
	\captionsetup[subfigure]{oneside,margin={1.25cm,0cm}}
	\begin{tabular}{@{\hspace{-0.15cm}}c@{}}
	\subfloat[]{\label{plot4d}
	\centering
	\begin{tikzpicture}
	\begin{axis} [box plot width=1.0mm,
	xlabel={},
   	ylabel={LNRT ${r}_{\mbox{\tiny N},m}$},
   	grid=major,   		
    xmin=0.5, xmax=1.5, ymin=-10, ymax =200,	
    xtick={1},
   	xticklabels={no bad data},
   	ytick={1,64, 127,190},
   	width=3cm,height=4.5cm,
   	tick label style={font=\footnotesize}, label style={font=\footnotesize},
   	legend style={draw=black,fill=white,legend cell align=left,font=\tiny,
   	legend pos=south west}]
	\boxplot [
    forget plot, fill=cyan!30,
    box plot whisker bottom index=1,
    box plot whisker top index=5,
    box plot box bottom index=2,
    box plot box top index=4,
    box plot median index=3] 
    {./chapter_05/Figs/fig5_8/4d_dat_ieee14_no_bad_WLS.txt};   
    \addplot[only marks, mark options={draw=black, fill=cyan!30},
    mark size=0.5pt] 
	table[x index=0, y index=1] 
	{./chapter_05/Figs/fig5_8/4d_out_ieee14_no_bad_WLS.txt}; 
	\end{axis}		
	\end{tikzpicture}}
	\end{tabular}
	\captionsetup[subfigure]{oneside,margin={0.7cm,0cm}}	
	\begin{tabular}{@{\hspace{0.31cm}}c@{}}
	\subfloat[]{\label{plot4e}
	\begin{tikzpicture}
	\begin{axis} [box plot width=1.0mm,
    /pgf/number format/.cd, use comma, 1000 sep={},
	xlabel={},
   	ylabel={},
   	grid=major,   		
    xmin=0.5, xmax=1.5, ymin=1, ymax =65,	
    xtick={1},
   	xticklabels={$v_{\mathrm{b}20}$},
	ytick={4,23,42,61},
   	width=3cm,height=4.5cm,
   	tick label style={font=\footnotesize}, label style={font=\footnotesize},
   	legend style={draw=black,fill=white,legend cell align=left,font=\tiny,
   	legend pos=south west}]
	\boxplot [
    forget plot, fill=cyan!30,
    box plot whisker bottom index=1,
    box plot whisker top index=5,
    box plot box bottom index=2,
    box plot box top index=4,
    box plot median index=3] 
    {./chapter_05/Figs/fig5_8/4e_dat_ieee14_1leg_20_WLS.txt};
    \addplot[only marks, mark options={draw=black, fill=cyan!30},
    mark size=0.5pt] 
	table[x index=0, y index=1] 
	{./chapter_05/Figs/fig5_8/4e_out_ieee14_1leg_20_WLS.txt}; 
	\end{axis}       
	\end{tikzpicture}}
	\end{tabular}
	\captionsetup[subfigure]{oneside,margin={0.9cm,0cm}}		
	\begin{tabular}{@{\hspace{0.14cm}}c@{}}
	\subfloat[]{\label{plot4f}
	\begin{tikzpicture}
	\begin{axis} [box plot width=1.0mm,
    /pgf/number format/.cd, use comma, 1000 sep={},
	xlabel={},
   	ylabel={},
   	grid=major,   		
    xmin=0.5, xmax=1.5, ymin=0, ymax =130,	
    xtick={1},
   	xticklabels={$v_{\mathrm{b}40}$},
	ytick={5,45,85,	125},
   	width=3cm,height=4.5cm,
   	tick label style={font=\footnotesize}, label style={font=\footnotesize},
   	legend style={draw=black,fill=white,legend cell align=left,font=\tiny,
   	legend pos=south west}]
	\boxplot [
    forget plot, fill=cyan!30,
    box plot whisker bottom index=1,
    box plot whisker top index=5,
    box plot box bottom index=2,
    box plot box top index=4,
    box plot median index=3] 
    {./chapter_05/Figs/fig5_8/4f_dat_ieee14_1leg_40_WLS.txt};
    \addplot[only marks, mark options={draw=black, fill=cyan!30},
    mark size=0.5pt] 
	table[x index=0, y index=1] 
	{./chapter_05/Figs/fig5_8/4f_out_ieee14_1leg_40_WLS.txt};        
	\end{axis}
	\end{tikzpicture}}
	\end{tabular}
	\caption{Comparisons between BP-BDT and LNRT for bad data 
	free measurement set (subfigure a and d), a single bad data 
	in the measurement set with
	variance $v_{\mathrm{b}20}$ (subfigure b and e) 
	and $v_{\mathrm{b}40}$ (subfigure c and f)
	for IEEE 14-bus test case using ``warm start".}
	\label{plot4}
	\end{figure}
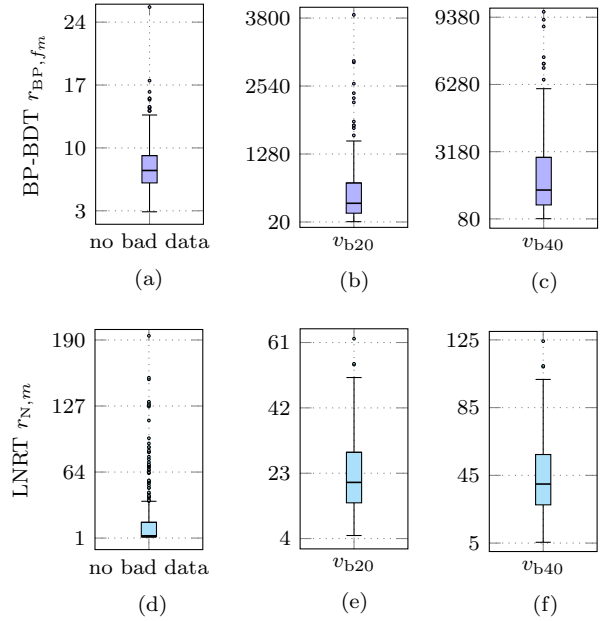

	\begin{figure}[ht]
	\centering
	\captionsetup[subfigure]{oneside,margin={1.1cm,0cm}}
	\begin{tabular}{@{}c@{}}
	\subfloat[]{\label{plot5a}
	\begin{tikzpicture}
	\begin{axis} [box plot width=1.0mm,
	xlabel={},
   	ylabel={BP-BDT ${r}_{\mbox{\tiny BP},f_m}$},
   	grid=major,   		
    xmin=0.5, xmax=1.5, ymin=1.5, ymax =80,	
    xtick={1},
   	xticklabels={no bad data},
   	ytick={3, 27, 51, 75},
   	width=3cm,height=4.5cm,
   	tick label style={font=\footnotesize}, label style={font=\footnotesize},
   	legend style={draw=black,fill=white,legend cell align=left,font=\tiny,
   	legend pos=south west}]
	\boxplot [
    forget plot, fill=blue!30,
    box plot whisker bottom index=1,
    box plot whisker top index=5,
    box plot box bottom index=2,
    box plot box top index=4,
    box plot median index=3] {./chapter_05/Figs/fig5_9/5a_dat_ieee30_no_bad_BP.txt};  
    \addplot[only marks, mark options={draw=black, fill=blue!30},
    mark size=0.5pt] 
	table[x index=0, y index=1]
	{./chapter_05/Figs/fig5_9/5a_out_ieee30_no_bad_BP.txt}; 
	\end{axis}
	\end{tikzpicture}}
	\end{tabular}
	\begin{tabular}{@{}c@{}}
	\subfloat[]{\label{plot5b}
	\begin{tikzpicture}
	\begin{axis} [box plot width=1.0mm,
    /pgf/number format/.cd, use comma, 1000 sep={},
	xlabel={},
   	ylabel={},
   	grid=major,   		
    xmin=0.5, xmax=1.5, ymin=-80, ymax =5100,	
    xtick={1},
   	xticklabels={$v_{\mathrm{b}20}$},
	ytick={20, 1620, 3220, 4820},
   	width=3cm,height=4.5cm,
   	tick label style={font=\footnotesize}, label style={font=\footnotesize},
   	legend style={draw=black,fill=white,legend cell align=left,font=\tiny,
   	legend pos=south west}]
	\boxplot [
    forget plot, fill=blue!30,
    box plot whisker bottom index=1,
    box plot whisker top index=5,
    box plot box bottom index=2,
    box plot box top index=4,
    box plot median index=3] 
    {./chapter_05/Figs/fig5_9/5b_dat_ieee30_2leg_20_BP.txt}; 
    \addplot[only marks, mark options={draw=black, fill=blue!30},
    mark size=0.5pt] 
	table[x index=0, y index=1] 
	{./chapter_05/Figs/fig5_9/5b_out_ieee30_2leg_20_BP.txt};       
	\end{axis}
	\end{tikzpicture}}
	\end{tabular}
	\begin{tabular}{@{}c@{}}
	\subfloat[]{\label{plot5c}
	\begin{tikzpicture}
	\begin{axis} [box plot width=1.0mm,
    /pgf/number format/.cd, use comma, 1000 sep={},
	xlabel={},
   	ylabel={},
   	grid=major,   		
    xmin=0.5, xmax=1.5, ymin=-350, ymax =9800,	
    xtick={1},
   	xticklabels={$v_{\mathrm{b}40}$},
	ytick={80, 3180, 6280, 9380},
   	width=3cm,height=4.5cm,
   	tick label style={font=\footnotesize}, label style={font=\footnotesize},
   	legend style={draw=black,fill=white,legend cell align=left,font=\tiny,
   	legend pos=south west}]
	\boxplot [
    forget plot, fill=blue!30,
    box plot whisker bottom index=1,
    box plot whisker top index=5,
    box plot box bottom index=2,
    box plot box top index=4,
    box plot median index=3] 
    {./chapter_05/Figs/fig5_9/5c_dat_ieee30_2leg_40_BP.txt};
   \addplot[only marks, mark options={draw=black, fill=blue!30},
    mark size=0.5pt] 
	table[x index=0, y index=1] 
	{./chapter_05/Figs/fig5_9/5c_out_ieee30_2leg_40_BP.txt};    
	\end{axis}
	\end{tikzpicture}}
	\end{tabular}
	\caption{The BP-BDT performances for IEEE 30 bus test case 
	using ``flat start" for bad data 
	free measurement set (subfigure a), two bad data 
	in the measurement set with
	with variances $v_{\mathrm{b}20}$ (subfigure b) 
	and $v_{\mathrm{b}40}$ (subfigure c).}
	\label{plot5}
	\end{figure}
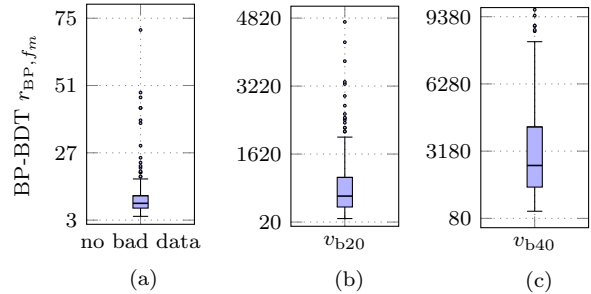

Fig. \ref{plot4} compares the BP-BDT to the LNRT for IEEE 14-bus test case using \emph{``warm start"}. The BP-BDT successfully identified the bad measurement in 291 and 294 cases, while LNRT succeeded in 220 and 240 cases, for $v_{\mathrm{b}20}$ and $v_{\mathrm{b}40}$, respectively. Figs. \ref{plot4b}, \ref{plot4c}, \ref{plot4e} and \ref{plot4f} show observed distributions of BP-BDT and LNRT metrics (${r}_{\mbox{\scriptsize BP},f_m}$ and $r_{\mbox{\scriptsize N},m}$) when tests succeeded in identifying the bad measurement. Clearly, the metric resolution between the cases without bad data (Figs. \ref{plot4a} and \ref{plot4d}) and the cases when the bad data exists in the measurement set, allows easier identification of bad data with the BP-BDT, providing for easier adjustment of the bad data identification threshold $\kappa$, in contrast to the LNRT. 
  
The BP-BDT reconfirmed the improved bad data detection for the case where two bad measurements exist in the measurement set (both with variance $v_{\mathrm{b}20}$ or $v_{\mathrm{b}40}$) for IEEE 30-bus test case initialized via \emph{``flat start"}. The BP-BDT successfully identified one of the two bad data samples after the first cycle (i.e., in the presence of another bad measurement) in 267 and 275 cases, while the LNRT identified the first bad data sample in 222 and 251 cases.	

\section{Summary}
In this chapter, we presented a novel GN-BP algorithm, which is an efficient and accurate BP-based implementation of the iterative Gauss-Newton method. The GN-BP can be highly parallelized and flexibly distributed in the context of multi-area SE.  The GN-BP is the first BP-based solution for the non-linear SE model achieving exactly the same accuracy as the centralized SE via Gauss-Newton method.
 
\chapter{Conclusions}	

In this thesis, we presented an in-depth study of the application of the BP algorithm to the SE problem in power systems. We provided detailed derivation, convergence and performance analysis of BP-based SE algorithms for both DC and non-linear model. The main contribution of our study is the GN-BP algorithm, which is shown to represent a BP-based implementation of the iterative Gauss-Newton method. GN-BP can be highly parallelized and flexibly distributed in the context of multi-area SE. In our ongoing work, we are investigating GN-BP in asynchronous, dynamic and real-time SE with online bad data detection, supported by future 5G communication infrastructure \cite{commag}. 

In the forthcoming years, 5G technology will provide ideal arena for the development of future distributed smart grid services. These services will rely on massive and reliable acquisition of timely information from the system, in combination with large-scale computing and storage capabilities, providing highly responsive, robust and scalable monitoring and control solution for future smart grids, and the proposed BP algorithms have a promising properties in such a
5G communications scenario. 

In addition, we presented the fast real-time DC SE model based on the powerful BP algorithm, which is able to provide state estimates without resorting to observability analysis. The proposed BP estimator can be distributed and parallelized which allows for flexible and low-delay centralized or distributed implementation suitable for integration in emerging WAMS. For the future work, we plan to provide extensive numerical analysis of the proposed algorithm, including the AC SE model implemented within the same framework, and extended to the generalized SE model. 

\begin{appendices}
\addtocontents{toc}{\protect\setcounter{tocdepth}{1}}
\makeatletter
\addtocontents{toc}{%
  \begingroup
  \let\protect\l@chapter\protect\l@section
  \let\protect\l@section\protect\l@subsection
}
\makeatother

\chapter{The SE in Power System: Toy Example}  \label{app:A}
\addcontentsline{lof}{chapter}{A The SE in Power System: Toy Example}
\addcontentsline{lot}{chapter}{A The SE in Power System: Toy Example}
An illustrative example presented in \autoref{appA_fig_1} will be used to
provide a step-by-step presentation of the centralized SE algorithm. The power system consists of 3 buses and 3 branches, where we observe two legacy measurements, active power flow $M_{P_{12}}$ and active power injection $M_{P_{3}}$, while bus $2$ contains one PMU that provides line current $M_{\ph I_{21}}$ and $M_{\ph I_{23}}$, and bus voltage $M_{\ph V_{2}}$ phasor measurements. Note, bus 1 is the slack, where the voltage angle has a given value.  
	\begin{figure}[ht]
	\centering
	\includegraphics[width=32mm]{./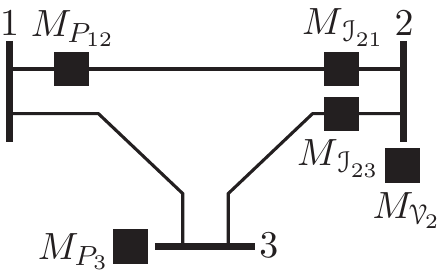}
	\caption{The 3-bus power system with given measurement configuration.}
	\label{appA_fig_1}
	\end{figure}	

We presented simultaneous SE algorithm, where state variables and phasor measurements are given in polar coordinate system. Each measurement $M_i \in \mathcal{M}$ is associated with measurement value $z_i$, variance $v_i$ and function $h_i(\mathbf{x})$ as shown in \autoref{appA_tab_1}.  system.  
	\begin{table}[H]
	\footnotesize
  	\begin{center}	
	\begin{tabular}{cccc}
	\hline															
	\rule{0pt}{2.5ex}
	Measurement & Measurement Value & Measurement Variance & Measurement Function  	\\
	$M_i$     &   $z_i$    &   $v_i$  &  							\\[1.3pt]
	\hline \rule{0pt}{3ex}%
	$M_{P_{12}}$ 		& $z_{P_{12}}$		& $v_{P_{12}}$ & $h_{P_{12}}(\cdot)$ 
	\rule{0pt}{2.2ex}\\
    $M_{P_{3}}$ 		& $z_{P_{3}}$		& $v_{P_{3}}$ & $h_{P_{3}}(\cdot)$  
    \rule{0pt}{2.2ex}\\
	$M_{V_{2}}$ 		& $z_{V_{2}}$		& $v_{V_{2}}$ & $h_{V_{2}}(\cdot)$
	 \rule{0pt}{2.2ex}\\
	$M_{\theta_{2}}$ & $z_{\theta_{2}}$	& $v_{\theta_{2}}$ & 
	$h_{\theta_{2}}(\cdot)$
	 \rule{0pt}{2.2ex}\\	
	$M_{I_{21}}$ 		& $z_{I_{21}}$		& $v_{I_{21}}$ & $h_{I_{21}}(\cdot)$
	 \rule{0pt}{2.2ex}\\
	$M_{I_{23}}$ 		& $z_{I_{23}}$		& $v_{I_{23}}$ & $h_{I_{23}}(\cdot)$
	 \rule{0pt}{2.2ex}\\
	$M_{\phi_{21}}$ & $z_{\phi_{21}}$	& $v_{\phi_{21}}$ & $h_{\phi_{21}}(\cdot)$ 
	 \rule{0pt}{2.2ex}\\	 
	$M_{\phi_{23}}$ & $z_{\phi_{23}}$	& $v_{\phi_{23}}$ & $h_{\phi_{23}}(\cdot)$
	\rule{0pt}{2.2ex}\\[2pt]
	\hline
	\end{tabular}
	\end{center}
 	\caption{Measurement data.}
 	\label{appA_tab_1}
	\end{table} 

\subsection*{Vector of Measurement Values and Covariance Matrix}
The vector of measurement values $\mathbf z \in \mathbb {R}^{N_{\mathrm{le}} + 2N_{\mathrm{ph}}}$ and the covariance matrix  $\mathbf R \in \mathbb {R}^{{(N_{\mathrm{le}} + 2N_{\mathrm{ph}})} \times {(N_{\mathrm{le}} + 2N_{\mathrm{ph}}})}$ according to the measurement configuration are:
	\begin{subequations}
   	\begin{align*}
   	\mathbf z &=
    \begin{bmatrix}    	 
	z_{P_{12}}\;
	z_{P_{3}}\;
	z_{V_{2}}\;
	z_{\theta_{2}}\;
	z_{I_{21}}\;
	z_{I_{23}}\;
	z_{\phi_{21}}\;
	z_{\phi_{23}}
	\end{bmatrix}^{\mathrm{T}}\\
	\mathbf R &= \mathrm{diag}
    (   	 
	v_{P_{12}},\; 	
	v_{P_{3}},\; 
	v_{V_{2}},\;
	v_{\theta_{2}},\;
	v_{I_{21}},\;
	v_{I_{23}},\;
	v_{\phi_{21}},\;
	v_{\phi_{23}}).
    \end{align*}
	\end{subequations}
Note that, due to uncorrelated measurement errors the covariance matrix $\mathbf{R}$ has the diagonal structure.

\subsection*{Measurement Functions}
The vector of measurement functions $\mathbf h(\mathbf x) \in \mathbb {R}^{N_{\mathrm{le}} + 2N_{\mathrm{ph}}}$ is:
	\begin{subequations}
   	\begin{align*}
   	\mathbf z &=
    \begin{bmatrix}    	 
	h_{P_{12}}(\cdot)\;
	h_{P_{3}}(\cdot)\;
	h_{V_{2}}(\cdot)\;
	h_{\theta_{2}}(\cdot)\;	
	h_{I_{21}}(\cdot)\;
	h_{I_{23}}(\cdot)\;
	h_{\phi_{21}}(\cdot)\;
	h_{\phi_{23}}(\cdot)
	\end{bmatrix}^{\mathrm{T}}.
    \end{align*}
	\end{subequations}

The measurement functions associated with legacy measurements $M_{P_{12}}$ and $M_{P_3}$ are:
	\begin{subequations}
   	\begin{align*}
	h_{P_{12}}(\cdot)&=
    {V}_{1}^2(g_{12}+g_{s1})-{V}_{1}{V}_{2}(g_{12}\cos\theta_{12}
    +b_{12}\sin\theta_{12})\\[4pt]
	h_{P_{3}}(\cdot) &={V}_{3}^2G_{33} +
	{V}_{3} \big[
	{V}_{1}(G_{31}\cos\theta_{31}+B_{31}\sin\theta_{31})+
	{V}_{2}(G_{32}\cos\theta_{32}+B_{32}\sin\theta_{32})\big].   
	\end{align*}
	\end{subequations}

The measurement functions associated with the bus phasor measurement $\mathcal{M}_{\ph{V}_{2}}= $ $\{M_{{V}_{2}},$ $M_{{\theta}_{2}}\}$ are:
	\begin{subequations}
   	\begin{align*}
	h_{{V}_{2}}(\cdot) = V_2; \;\;\;\;
	h_{\theta_2}(\cdot) = \theta_2.
	\end{align*}
	\end{subequations}

The measurement functions associated with line current phasor measurements $\mathcal{M}_{\ph{I}_{21}}= $ $\{M_{{I}_{21}},$ $M_{{\phi}_{21}}\}$ and $\mathcal{M}_{\ph{I}_{23}}= $ $\{M_{{I}_{23}},$ $M_{{\phi}_{23}}\}$ are as follows:
	\begin{subequations}
   	\begin{align*}
	h_{{I}_{21}}(\cdot) &= 
    [A_\mathrm{21c} V_2^2 + B_\mathrm{21c} V_1^2 - 2 V_2V_1
    (C_\mathrm{21c} \cos \theta_{21}-D_\mathrm{21c} \sin \theta_{21})]^{1/2}\\[4pt]
    h_{{I}_{23}}(\cdot) &= 
    [A_\mathrm{23c} V_2^2 + B_\mathrm{23c} V_3^2 - 2 V_2V_3
    (C_\mathrm{23c} \cos \theta_{23}-D_\mathrm{23c} \sin \theta_{23})]^{1/2}\\[4pt]
	h_{{\phi}_{21}}(\cdot) &=\mathrm{arctan}\Bigg[ 
	\cfrac{(A_\mathrm{21a} \sin\theta_2
    +B_\mathrm{21a} \cos\theta_2)V_2 
    - (C_\mathrm{21a} \sin\theta_1 + D_\mathrm{21a}\cos\theta_1)V_1}
   	{(A_\mathrm{21a} \cos\theta_2
    -B_\mathrm{21a} \sin\theta_2)V_2 
    - (C_\mathrm{21a} \cos\theta_1 - D_\mathrm{21a} \sin\theta_1)V_1} \Bigg]\\[4pt]
     h_{{\phi}_{23}}(\cdot) &=\mathrm{arctan}\Bigg[ 
	\cfrac{(A_\mathrm{23a} \sin\theta_2
    +B_\mathrm{23a} \cos\theta_2)V_2 
    - (C_\mathrm{23a} \sin\theta_3 + D_\mathrm{23a}\cos\theta_3)V_3}
   	{(A_\mathrm{23a} \cos\theta_2
    -B_\mathrm{23a} \sin\theta_2)V_2 
    - (C_\mathrm{23a} \cos\theta_3 - D_\mathrm{23a} \sin\theta_3)V_3} \Bigg],   
	\end{align*}
	\end{subequations}
where coefficients are: 		
	\begin{equation}
    \begin{aligned}
    A_\mathrm{21c}&=(g_{21}+g_{\mathrm{s}2})^2+(b_{21}+b_{\mathrm{s}2})^2;&
    B_\mathrm{21c}&=g_{21}^2+b_{21}^2\\
    C_\mathrm{21c}&=g_{21}(g_{21}+g_{\mathrm{s}2})+b_{21}(b_{21}+b_{\mathrm{s}2});&
    D_\mathrm{21c}&=g_{21}b_{\mathrm{s}2}-b_{21}g_{\mathrm{s}2}\\
    A_\mathrm{21a}&=g_{21}+g_{\mathrm{s}2};&
    B_\mathrm{21a}&=b_{21}+b_{\mathrm{s}2}\\
    C_\mathrm{21a}&=g_{21};&
    D_\mathrm{21a}&=b_{21}\\
    A_\mathrm{23c}&=(g_{23}+g_{\mathrm{s}2})^2+(b_{23}+b_{\mathrm{s}2})^2;&
    B_\mathrm{23c}&=g_{23}^2+b_{23}^2\\
    C_\mathrm{23c}&=g_{23}(g_{23}+g_{\mathrm{s}2})+b_{23}(b_{23}+b_{\mathrm{s}2});&
    D_\mathrm{23c}&=g_{23}b_{\mathrm{s}2}-b_{23}g_{\mathrm{s}2}\\
    A_\mathrm{23a}&=g_{23}+g_{\mathrm{s}2};&
    B_\mathrm{23a}&=b_{23}+b_{\mathrm{s}2}\\
    C_\mathrm{23a}&=g_{23};&
    D_\mathrm{23a}&=b_{23}.
    \end{aligned}
    \nonumber
	\end{equation}
Note that, it holds $g_{21}=g_{12}$, $b_{21}=b_{12}$.

\subsection*{Jacobian Matrix}
The Jacobian matrix $\mathbf {J}(\mathbf x) \in \mathbb {R}^{(N_{\mathrm{le}} + 2N_{\mathrm{ph}}) \times n}$ is defined: 
	\begin{equation}
   	\begin{gathered}
    \mathbf J(\mathbf x)=
    \left[
    \begin{array}{cc:ccc}
	\cfrac{\mathrm \partial{{h_{P_{12}}}(\cdot)}}
    {\mathrm \partial \theta_{2}} 
    & 0 & \cfrac{\mathrm \partial{{h_{P_{12}}}(\cdot)}}
   	{\mathrm \partial V_{1}}  
   	& \cfrac{\mathrm \partial{h_{{P_{12}}}(\cdot)}}{\mathrm \partial V_{2}} & 0 
   	\\[8pt]  
   	\cfrac{\mathrm \partial{h_{P_{3}}(\cdot)}}
    {\mathrm \partial \theta_{2}} &
    \cfrac{\mathrm \partial{h_{P_{3}}(\cdot)}}
    {\mathrm \partial \theta_{3}} &
    \cfrac{\mathrm \partial{h_{P_{3}}(\cdot)}}
    {\mathrm \partial V_{1}} &
    \cfrac{\mathrm \partial{h_{P_{3}}(\cdot)}}
    {\mathrm \partial V_{2}} &
    \cfrac{\mathrm \partial{h_{P_{3}}(\cdot)}}
    {\mathrm \partial V_{3}} 
    \\[8pt] 
	\cfrac{\mathrm \partial{{h_{\theta_2}(\cdot)}}}
   	{\mathrm \partial \theta_{2}} & 0 & 0& 0 & 0
   	\\[8pt]   
    0 & 0 & 0 &
    \cfrac{\mathrm \partial{{h_{{V}_{2}}(\cdot)}}}
   	{\mathrm \partial V_{2}} & 0
   	\\[8pt] 
   	\cfrac{\mathrm \partial{h_{I_{21}}(\cdot)}}
	{\mathrm \partial \theta_{2}} & 0 &
	\cfrac{\mathrm \partial{h_{I_{21}}(\cdot)}}
	{\mathrm \partial V_{1}} &
	\cfrac{\mathrm \partial{h_{I_{21}}(\cdot)}}
	{\mathrm \partial V_{2}} & 0
	\\[5pt]
	\cfrac{\mathrm \partial{h_{I_{23}}(\cdot)}}
	{\mathrm \partial \theta_{2}} &
	\cfrac{\mathrm \partial{h_{I_{23}}(\cdot)}}
	{\mathrm \partial \theta_{3}} & 0 &
	\cfrac{\mathrm \partial{h_{I_{23}}(\cdot)}}
	{\mathrm \partial V_{2}} &
	\cfrac{\mathrm \partial{h_{I_{23}}(\cdot)}}
	{\mathrm \partial V_{3}}
	\\[8pt] 
	\cfrac{\mathrm \partial{h_{{\phi}_{21}}(\cdot)}}
	{\mathrm \partial \theta_{2}} & 0 &
	\cfrac{\mathrm \partial{h_{{\phi}_{21}}(\cdot)}}
	{\mathrm \partial V_{1}} &
	\cfrac{\mathrm \partial{h_{{\phi}_{21}}(\cdot)}}
	{\mathrm \partial V_{2}} & 0
	\\[5pt]
	\cfrac{\mathrm \partial{h_{{\phi}_{23}}(\cdot)}}
	{\mathrm \partial \theta_{2}} & 
	\cfrac{\mathrm \partial{h_{{\phi}_{23}}(\cdot)}}
	{\mathrm \partial \theta_{3}} & 0 &
	\cfrac{\mathrm \partial{h_{{\phi}_{23}}(\cdot)}}
	{\mathrm \partial V_{2}} & 
	\cfrac{\mathrm \partial{h_{{\phi}_{23}}(\cdot)}}
	{\mathrm \partial V_{3}}
	\end{array}\right]. \nonumber
	\end{gathered}
	\end{equation}

Jacobian expressions corresponding to the active power flow measurement function $h_{P_{12}}(\cdot)$ are:
	\begin{subequations}
   	\begin{align*}
    \cfrac{\mathrm \partial{{h_{P_{12}}}(\cdot)}}
    {\mathrm \partial \theta_{2}}&=
    -{V}_{1}{V}_{2}(g_{12}\sin\theta_{12}-b_{12}\cos\theta_{12})
    \\
    \cfrac{\mathrm \partial{{h_{P_{12}}}(\cdot)}}
   	{\mathrm \partial V_{1}}&=
    -{V}_{2}(g_{12}\cos\theta_{12}+b_{12}\sin\theta_{12})+
    2V_{1}(g_{12}+g_{s1}) 
    \\
    \cfrac{\mathrm \partial{h_{{P_{12}}}(\cdot)}}{\mathrm \partial V_{2}}&=
    -{V}_{1}(g_{12}\cos\theta_{12}+b_{12}\sin\theta_{12}).   
	\end{align*}
	\end{subequations}

Jacobian expressions corresponding to the active power injection measurement function $h_{P_{3}}(\cdot)$ are:
	\begin{subequations}
   	\begin{align*}
    \cfrac{\mathrm \partial{h_{P_{3}}(\cdot)}}{\mathrm \partial \theta_{2}}&=
    {V}_{3}{V}_{2}(G_{32}\sin\theta_{32}-B_{32}\cos\theta_{32})
    \\
    \cfrac{\mathrm \partial{h_{P_{3}}(\cdot)}}
    {\mathrm \partial \theta_{3}}&=
    {V}_{3}[{V}_{1}(-G_{31}\sin\theta_{31}+B_{31}\cos\theta_{31})+
    {V}_{2}(-G_{32}\sin\theta_{32}+B_{32}\cos\theta_{32})]
    \\
    \cfrac{\mathrm \partial{h_{P_{3}}(\cdot)}}{\mathrm \partial V_{1}}&=
    {V}_{3}(G_{31}\cos\theta_{31}+B_{31}\sin\theta_{31})
 	 \\
    \cfrac{\mathrm \partial{h_{P_{3}}(\cdot)}}{\mathrm \partial V_{2}}&=
    {V}_{3}(G_{32}\cos\theta_{32}+B_{32}\sin\theta_{32})
    \\
    \cfrac{\mathrm \partial{h_{P_{3}}(\cdot)}}{\mathrm \partial V_{3}}&=
    {V}_{1}(G_{31}\cos\theta_{31}+B_{31} + {V}_{2}(G_{32}\cos\theta_{32}+B_{32}
    \sin\theta_{32})+2{V}_{3}G_{33}.  
	\end{align*}
	\end{subequations}	

Jacobian expressions corresponding to the bus phasor measurement functions $h_{V_{2}}(\cdot)$ and $h_{\theta_{2}}(\cdot)$ are as follows:		
	\begin{subequations}
   	\begin{align*}
   	\cfrac{\mathrm \partial{{h_{{V}_{2}}(\cdot)}}}
   	{\mathrm \partial V_{2}}=1; \;\;\;\;
   	\cfrac{\mathrm \partial{{h_{\theta_2}(\cdot)}}}
   	{\mathrm \partial \theta_{2}}=1.  
	\end{align*}
	\end{subequations}	

Jacobian expressions corresponding to the line current magnitude measurement function $h_{I_{21}}(\cdot)$ are:
	\begin{subequations}
   	\begin{align*}
	\cfrac{\mathrm \partial{h_{I_{21}}(\cdot)}}
	{\mathrm \partial \theta_{2}}&=
    \cfrac{V_2V_1(D_\mathrm{21c}\cos\theta_{21}+
    C_\mathrm{21c}\sin\theta_{21})}{h_{I_{21}}(\cdot)}
    \\
    \cfrac{\mathrm \partial{h_{I_{21}}(\cdot)}}{\mathrm \partial V_{1}}&=
    \cfrac{V_2(D_\mathrm{21c}\sin\theta_{21}-
    C_\mathrm{21c}\cos\theta_{21})+B_\mathrm{21c}V_1}{h_{I_{21}}(\cdot)} 
    \\    
    \cfrac{\mathrm \partial{h_{I_{21}}(\cdot)}}{\mathrm \partial V_{2}}&=
    \cfrac{V_1(D_\mathrm{21c}\sin\theta_{21}-
    C_\mathrm{21c}\cos\theta_{21})+A_\mathrm{21c}V_1}{h_{I_{21}}(\cdot)}.   
	\end{align*}
	\end{subequations}
	
Jacobian expressions corresponding to the line current magnitude measurement function $h_{I_{23}}(\cdot)$ are:  
	\begin{subequations}
   	\begin{align*}
	\cfrac{\mathrm \partial{h_{I_{23}}(\cdot)}}
	{\mathrm \partial \theta_{2}}&=
    \cfrac{V_2V_3(D_\mathrm{23c}\cos\theta_{23}+
    C_\mathrm{23c}\sin\theta_{23})}{h_{I_{23}}(\cdot)}	
    \\
    \cfrac{\mathrm \partial{h_{I_{23}}(\cdot)}}
    {\mathrm \partial \theta_{3}}&=-
    \cfrac{V_2V_3(D_\mathrm{23c}\cos\theta_{23}+
    C_\mathrm{23c}\sin\theta_{23})}{h_{I_{23}}(\cdot)}
    \\
    \cfrac{\mathrm \partial{h_{I_{23}}(\cdot)}}{\mathrm \partial V_{2}}&=
    \cfrac{V_3(D_\mathrm{23c}\sin\theta_{23}-
    C_\mathrm{23c}\cos\theta_{23})+A_\mathrm{23c}V_2}{h_{I_{23}}(\cdot)}
    \\
    \cfrac{\mathrm \partial{h_{I_{23}}(\cdot)}}{\mathrm \partial V_{3}}&=
    \cfrac{V_2(D_\mathrm{23c}\sin\theta_{23}-
    C_\mathrm{23c}\cos\theta_{23})+B_\mathrm{23c}V_3}{h_{I_{23}}(\cdot)}.   
	\end{align*}
	\end{subequations}	

Jacobian expressions corresponding to the line current angle measurement function $h_{\phi_{21}}(\cdot)$ are:
	\begin{subequations}
   	\begin{align*}
	\cfrac{\mathrm \partial{h_{{\phi}_{21}}(\cdot)}}
	{\mathrm \partial \theta_{2}}&=
	\frac{A_\mathrm{21c} V_2^2+ (D_\mathrm{21c} \sin \theta_{21}-C_\mathrm{21c} 
	\cos\theta_{21})V_2V_1}
   	{h_{{I}_{21}}(\cdot)}\\
   	\cfrac{\mathrm \partial{h_{{\phi}_{21}}(\cdot)}}
	{\mathrm \partial V_{1}}&=
	\frac{V_2 (C_\mathrm{21c} \sin\theta_{21}+D_\mathrm{21c} \cos\theta_{21})}
   	{h_{{I}_{21}}(\cdot)}	\\
   	\cfrac{\mathrm \partial{h_{{\phi}_{21}}(\cdot)}}
	{\mathrm \partial V_{2}}&=
   	-\frac{V_1 (C_\mathrm{21c} \sin\theta_{21}+D_\mathrm{21c} \cos\theta_{21})}
   	{h_{{I}_{21}}(\cdot)}	
	\end{align*}
	\end{subequations}
	
Jacobian expressions corresponding to the line current angle measurement function $h_{\phi_{23}}(\cdot)$ are:	
	\begin{subequations}
   	\begin{align*}
	\cfrac{\mathrm \partial{h_{{\phi}_{23}}(\cdot)}}
	{\mathrm \partial \theta_{2}}&=
	\frac{A_\mathrm{23c} V_2^2+ (D_\mathrm{23c} \sin \theta_{23}-C_\mathrm{23c} 
	\cos\theta_{23})V_2V_3}
   	{h_{{I}_{23}}(\cdot)}\\
   	\cfrac{\mathrm \partial{h_{{\phi}_{23}}(\cdot)}}
	{\mathrm \partial \theta_{3}}&=
	\frac{B_\mathrm{23c} V_3^2+ (D_\mathrm{23c}
   	\sin \theta_{23}-C_\mathrm{23c} \cos
   	\theta_{23})V_2V_3}{h_{{I}_{23}}(\cdot)}	\\
   	\cfrac{\mathrm \partial{h_{{\phi}_{23}}(\cdot)}}
	{\mathrm \partial V_{2}}&=
   	-\frac{V_3 (C_\mathrm{23c} \sin\theta_{23}+D_\mathrm{23c} \cos\theta_{23})}
   	{h_{{I}_{23}}(\cdot)}\\
   	\cfrac{\mathrm \partial{h_{{\phi}_{23}}(\cdot)}}
	{\mathrm \partial V_{3}}&=
	\frac{V_2 (C_\mathrm{23c} \sin\theta_{23}+D_\mathrm{23c} \cos\theta_{23})}
   	{h_{{I}_{23}}(\cdot)}.		
	\end{align*}
	\end{subequations}	
\chapter{The DC-BP Algorithm: Numerical Example} \label{app:B}
\addcontentsline{lof}{chapter}{B The DC-BP Algorithm: Numerical Example}
\addcontentsline{lot}{chapter}{B The DC-BP Algorithm: Numerical Example}
An illustrative example presented in \autoref{appB_fig_1} will be used to
provide a step-by-step presentation of the proposed DC-BP algorithm. The power system consists of 3 buses and 3 branches, where we observe 3 measurements: active power flow $M_{P_{12}}$, active power injection $M_{P_{3}}$, and bus voltage angle $M_{\theta_{2}}$. Note, bus 1 is the slack, where the voltage angle has a given value with the corresponding variance.  
	\begin{figure}[ht]
	\centering
	\includegraphics[width=30mm]{./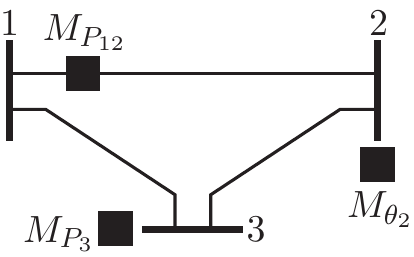}
	\caption{The 3-bus power system with given measurement configuration.}
	\label{appB_fig_1}
	\end{figure}	

\noindent
\autoref{appB_tab_1} shows the branch reactances $x_{ij}$ for the observed power system.  
	\begin{table}[ht]
	\footnotesize
  	\begin{center}	
	\begin{tabular}{cccccc}
	\hline 												\rule{0pt}{2.5ex}
	From Bus	& To Bus	& Reactance   				\\
	$i$    		&   $j$  	& $x_{ij}\;\mbox{(pu)}$  	\\[1.5pt]
	\hline \rule{0pt}{3ex}%
	1 			& 2  		& 0.040  					\rule{0pt}{2ex}\\
    1 			& 3  		& 0.020  					\rule{0pt}{2ex}\\
	2 			& 3  		& 0.025  					\rule{0pt}{2ex}\\
	\hline
	\end{tabular}
	\end{center}
 	\caption{Branch data.}
 	\label{appB_tab_1}
	\end{table} 

Each measurement $M_i \in \mathcal{M}$ is associated with measurement value $z_i$ and variance $v_i$ as shown in \autoref{appB_tab_2}. In addition, power measurements $M_{P_{12}}$ and $M_{P_{3}}$ are associated with measurement functions respectively: 
	\begin{subequations}
   	\begin{align*}
	h_{P_{12}}(\theta_1,\theta_2) &= \cfrac{\theta_1-\theta_2}{x_{12}} =
	C_{\theta_1P_{12}} \cdot \theta_1 + C_{\theta_2P_{12}} \cdot \theta_2\\
	h_{P_{3}}(\theta_1,\theta_2,\theta_3) &= 
	- \cfrac{\theta_1}{x_{13}} - 
	\cfrac{\theta_2}{x_{23}} + \cfrac{\theta_3}{x_{13}+x_{23}} =
	C_{\theta_1P_{3}} \cdot \theta_1 + C_{\theta_2P_{3}} \cdot \theta_2 +
	C_{\theta_3P_{3}} \cdot \theta_3, 
	\end{align*}
   	\label{appB_eqn_1}%
	\end{subequations}	
where coefficients are:
	\begin{equation}
    \begin{aligned}
	C_{\theta_1P_{12}} = 25\;\;\;\; 
	C_{\theta_2P_{12}} = -25\;\;\;\;
	C_{\theta_1P_{3}}  = -50\;\;\;\; 
	C_{\theta_2P_{3}}  = -40\;\;\;\;
	C_{\theta_3P_{3}}  = 90. \nonumber 
	\end{aligned} 
   	\label{appB_eqn_2}%
	\end{equation}	
	\begin{table}[ht]
	\footnotesize
  	\begin{center}	
	\begin{tabular}{cccc}
	\hline															\rule{0pt}{2.5ex}
	Measurement & Measurement Value & Measurement Variance & Unit  	\\
	$M_i$     &   $z_i$    &   $v_i$  &  							\\[1.3pt]
	\hline \rule{0pt}{3ex}%
	$M_{P_{12}}$ 		& \hspace{0.3em}1.795		& $10^{-2}$	& $\mbox{pu}$  
	\rule{0pt}{2.2ex}\\
    $M_{P_{3}}$ 		& \hspace{0.3em}1.966		& $10^{-2}$ 	& $\mbox{pu}$  
    \rule{0pt}{2.2ex}\\
	$M_{\theta_{2}}$ 	& -0.066	& $10^{-6}$	& $\mbox{rad}$ 
	\rule{0pt}{2.2ex}\\[2pt]
	\hline
	\end{tabular}
	\end{center}
 	\caption{Measurement data.}
 	\label{appB_tab_2}
	\end{table} 

\subsection*{The Factor Graph}
The first step is forming a factor graph, where set of variable nodes $\mathcal{V} = \{\theta_1, \theta_2, \theta_3\}$ is defined by state variables. The set of measurements $\mathcal{M}$ defines the set of factor nodes $\mathcal{F}$, and in addition, the set $\mathcal{F}$ is further expanded with slack and virtual factor nodes. 
	\begin{figure}[ht]
	\centering
	\includegraphics[width=65mm]{./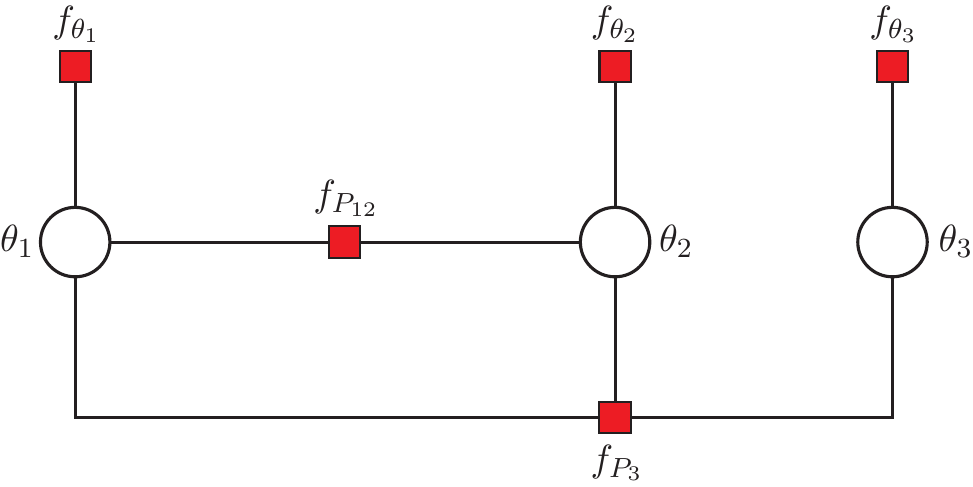}
	\caption{The factor graph.}
	\label{appB_fig_2}
	\end{figure}
More precisely, measurements $M_{P_{12}}$ and $M_{P_{3}}$ define the set of indirect factor nodes $\mathcal{F}_{\mathrm{ind}} = \{f_{P_{12}}, f_{P_{3}} \} \subset \mathcal{F}$, and measurement $M_{\theta_{2}}$ define the set of direct factor nodes $\mathcal{F}_{\mathrm{dir}} = \{f_{\theta_{2}}\} \subset \mathcal{F}$. Further, the slack bus defines the slack factor node $f_{\theta_1}$, while virtual factor node $f_{\theta_3}$ is used because variable node $\theta_3$ is not directly measured. Direct, slack and virtual factor nodes define the set of local factor nodes $\mathcal{F}_{\mathrm{loc}} \subset \mathcal{F}$. The factor graph that correspond with power system with given measurement configuration is shown in \autoref{appB_fig_1}. 

\subsection*{The DC-BP Initialization $\bm {\tau = 0}$}
\subsubsection*{Messages from local factor nodes to variable nodes}
The initialization step starts with messages from local factor nodes $\mathcal{F}_{\mathrm{loc}}$ to variable nodes $\mathcal{V}$, as shown in \autoref{appB_fig_2}. All messages are Gaussian and represent by their mean-variance pairs. 	
	\begin{figure}[ht]
	\centering
	\includegraphics[width=65mm]{./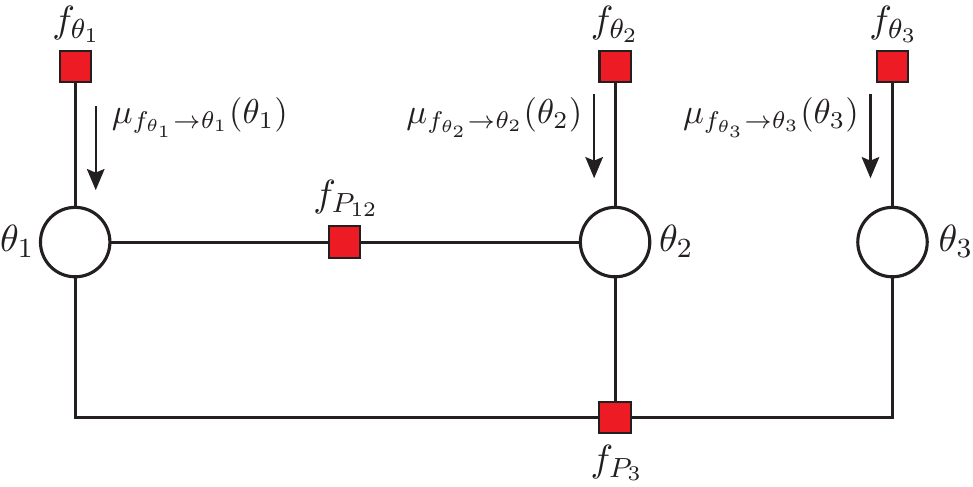}
	\caption{Messages from local factor nodes to variable nodes.}
	\label{appB_fig_3}
	\end{figure}

\noindent
According to properties of local factor nodes, messages from local factor nodes $\mathcal{F}_{\mathrm{loc}}$ to variable nodes $\mathcal{V}$ are determined:	
	\begin{subequations} 
   	\begin{align*}
	\mu_{f_{\theta_1} \to \theta_1}(\theta_1) &:= 
	(z_{f_{\theta_1} \to \theta_1}, 
	v_{f_{\theta_1} \to \theta_1}) = (0,10^{-60})\\[4pt]
	\mu_{f_{\theta_2} \to \theta_2}(\theta_2) &:= 
	(z_{f_{\theta_2} \to \theta_2}, 
	v_{f_{\theta_2} \to \theta_2}) = (-0.066,10^{-6})\\[4pt]	
	\mu_{f_{\theta_3} \to \theta_3}(\theta_3) &:= 
	(z_{f_{\theta_3} \to \theta_3}, 
	v_{f_{\theta_3} \to \theta_3}) = (0,10^{60}).
	\end{align*}
   	\label{appB_eqn_4}%
	\end{subequations}  	 
Note that we left the iteration index $\tau = 0$ as a consequence that messages from local factor nodes $\mathcal{F}_{\mathrm{loc}}$ to variable nodes $\mathcal{V}$ are constant through iterations.  

\subsubsection*{Forward incoming messages}
Then, variable nodes forward the incoming messages received from local factor nodes along remaining edges as shown in \autoref{appB_fig_3}.
	\begin{figure}[ht]
	\centering
	\includegraphics[width=65mm]{./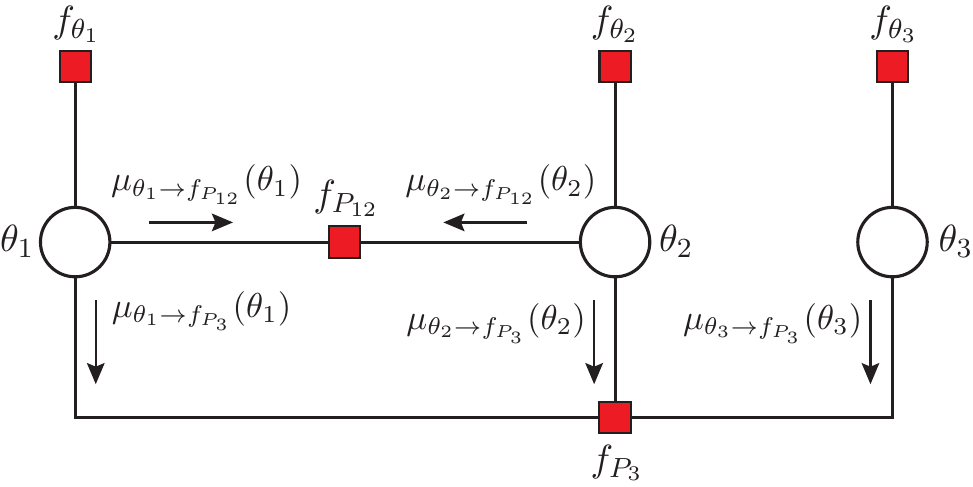}
	\caption{Variable nodes forward the incoming messages.}
	\label{appB_fig_4}
	\end{figure}	
Consequently, messages from variable nodes $\mathcal{V}$ to indirect factor nodes $\mathcal{F}_{\mathrm{ind}}$ are as follows:	
	\begin{subequations} 
   	\begin{align*}
	\mu_{\theta_1 \to f_{P_{12}}}^{(0)}(\theta_1) &:=
	(z_{\theta_1 \to f_{P_{12}}}^{(0)},v_{\theta_1 \to f_{P_{12}}}^{(0)})  
	= (0,10^{-60})\\[4pt]
	\mu_{\theta_2 \to f_{P_{12}}}^{(0)}(\theta_2) &:= 
	(z_{\theta_2 \to f_{P_{12}}}^{(0)},v_{\theta_2 \to f_{P_{12}}}^{(0)})
	= (-0.066,10^{-6})\\[4pt]
	\mu_{\theta_1 \to f_{P_{3}}}^{(0)}(\theta_1) &:= 
	(z_{\theta_1 \to f_{P_{3}}}^{(0)},v_{\theta_1 \to f_{P_{3}}}^{(0)})
	= (0,10^{-60})\\[4pt]
	\mu_{\theta_2 \to f_{P_{3}}}^{(0)}(\theta_2) &:= 
	(z_{\theta_2 \to f_{P_{3}}}^{(0)},v_{\theta_2 \to f_{P_{3}}}^{(0)})
	= (-0.066,10^{-6})\\[4pt]		
	\mu_{\theta_3 \to f_{P_{3}}}^{(0)}(\theta_3) &:= 
	(z_{\theta_3 \to f_{P_{3}}}^{(0)},v_{\theta_3 \to f_{P_{3}}}^{(0)}) 
	= (0,10^{60}).
	\end{align*}
	\end{subequations}	

\subsection*{The DC-BP Iterations $\bm {\tau = 1,2,} \dots$}
\subsubsection*{Messages from indirect factor nodes to variable nodes}
The BP iteration $\tau = 1$ starts with computing messages from indirect factor nodes $\mathcal{F}_{\mathrm{ind}}$ to variable nodes $\mathcal{V}$, as shown in \autoref{appB_fig_5}, using incoming messages from variable nodes $\mathcal{V}$ to indirect factor nodes $\mathcal{F}_{\mathrm{ind}}$ obtained in the initialization step.  
	\begin{figure}[ht]
	\centering
	\includegraphics[width=65mm]{./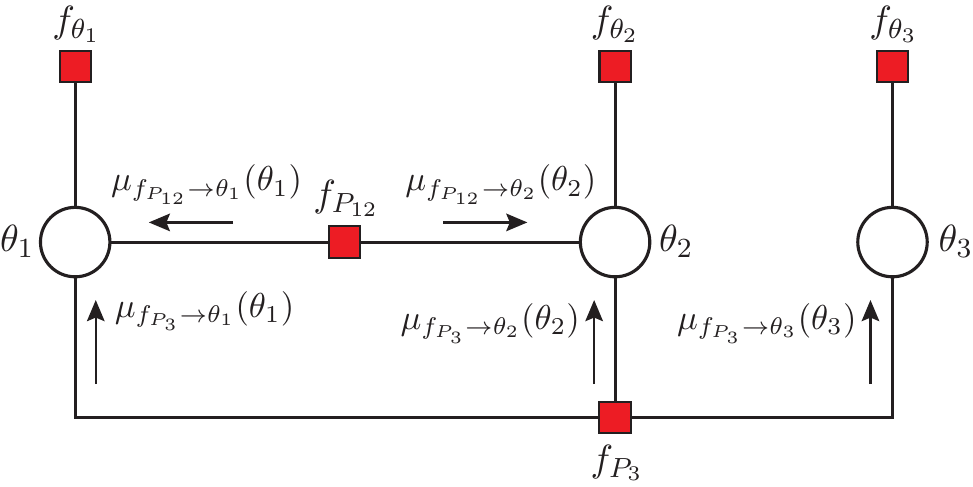}
	\caption{Messages from indirect factor nodes to variable nodes.}
	\label{appB_fig_5}
	\end{figure}

Mean and variance values of messages from factor node $f_{P{12}}$ to variable nodes $\theta_1$ and $\theta_2$ are respectively:
	\begin{subequations} 
   	\begin{align*}
	z_{f_{P_{12}} \to \theta_1}^{(1)} &= \cfrac{1}{C_{\theta_1P_{12}}}
	(z_{P_{12}} - C_{\theta_2P_{12}} \cdot z_{\theta_2 \to f_{P_{12}}}^{(0)}) 
	= 0.0058\\[3pt]
	v_{f_{P_{12}} \to \theta_1}^{(1)} &= \cfrac{1}{C_{\theta_1P_{12}}^2}
	(v_{P_{12}} + C_{\theta_2P_{12}}^2 \cdot v_{\theta_2 \to f_{P_{12}}}^{(0)}) 
	= 1.7 \cdot 10^{-5}\\[10pt]		
	z_{f_{P_{12}} \to \theta_2}^{(1)} &= \cfrac{1}{C_{\theta_2P_{12}}}
	(z_{P_{12}} - C_{\theta_1P_{12}} \cdot z_{\theta_1 \to f_{P_{12}}}^{(0)}) 
	= -0.0718\\[3pt]
	v_{f_{P_{12}} \to \theta_2}^{(1)} &= \cfrac{1}{C_{\theta_2P_{12}}^2}
	(v_{P_{12}} - C_{\theta_1P_{12}}^2 \cdot v_{\theta_1 \to f_{P_{12}}}^{(0)}) = 
	1.6 \cdot 10^{-5}.
	\end{align*}
	\end{subequations}	
\noindent
Mean and variance values of messages from factor node $f_{P{3}}$ to variable nodes $\theta_1$, $\theta_2$ and $\theta_3$ are respectively:
	\begin{subequations} 
   	\begin{align*}
	z_{f_{P_{3}} \to \theta_1}^{(1)} &= \cfrac{1}{C_{\theta_1P_{3}}}
	(z_{P_{3}} - C_{\theta_2P_{3}} \cdot z_{\theta_2 \to f_{P_{3}}}^{(0)}
	- C_{\theta_3P_{3}} \cdot z_{\theta_3 \to f_{P_{3}}}^{(0)}) = 0.0135\\[3pt]
	v_{f_{P_{3}} \to \theta_1}^{(1)} &= \cfrac{1}{C_{\theta_1P_{3}}^2}
	(v_{P_{3}} + C_{\theta_2P_{3}}^2 \cdot v_{\theta_2 \to f_{P_{3}}}^{(0)}
	+ C_{\theta_3P_{3}}^2 \cdot v_{\theta_3 \to f_{P_{3}}}^{(0)}) 
	= 3.24 \cdot 10^{60}\\[10pt]	
	z_{f_{P_{3}} \to \theta_2}^{(1)} &= \cfrac{1}{C_{\theta_2P_{3}}}
	(z_{P_{3}} - C_{\theta_1P_{3}} \cdot z_{\theta_1 \to f_{P_{3}}}^{(0)}
	- C_{\theta_3P_{3}} \cdot z_{\theta_3 \to f_{P_{3}}}^{(0)}) = -0.0491\\[3pt]
	v_{f_{P_{3}} \to \theta_2}^{(1)} &= \cfrac{1}{C_{\theta_2P_{3}}^2}
	(v_{P_{3}} + C_{\theta_1P_{3}}^2 \cdot v_{\theta_1 \to f_{P_{3}}}^{(0)}
	+ C_{\theta_3P_{3}}^2 \cdot v_{\theta_3 \to f_{P_{3}}}^{(0)}) 
	= 5.0625 \cdot 10^{60}\\[10pt]
	z_{f_{P_{3}} \to \theta_3}^{(1)} &= \cfrac{1}{C_{\theta_3P_{3}}}
	(z_{P_{3}} - C_{\theta_1P_{3}} \cdot z_{\theta_1 \to f_{P_{3}}}^{(0)}
	- C_{\theta_2P_{3}} \cdot z_{\theta_2 \to f_{P_{3}}}^{(0)}) = -0.0075\\[3pt]
	v_{f_{P_{3}} \to \theta_3}^{(1)} &= \cfrac{1}{C_{\theta_3P_{3}}^2}
	(v_{P_{3}} + C_{\theta_1P_{3}}^2 \cdot v_{\theta_1 \to f_{P_{3}}}^{(0)}
	+ C_{\theta_2P_{3}}^2 \cdot v_{\theta_2 \to f_{P_{3}}}^{(0)}) 
	= 1.4321 \cdot 10^{-6}.		
	\end{align*}
	\end{subequations}		
To summarize, corresponding messages from indirect factor nodes $\mathcal{F}_{\mathrm{ind}}$ to variable nodes $\mathcal{V}$ are:
	\begin{subequations} 
   	\begin{align*}
   	\mu_{f_{P_{12}} \to \theta_1}^{(1)}(\theta_1) &:= 
   	(z_{f_{P_{12}} \to \theta_1}^{(1)}, v_{f_{P_{12}} \to \theta_1}^{(1)}) 
   	= (0.0058, 1.7 \cdot 10^{-5})\\[4pt]
	\mu_{f_{P_{12}} \to \theta_2}^{(1)}(\theta_2)	&:= 
	(z_{f_{P_{12}} \to \theta_2}^{(1)},v_{f_{P_{12}} \to \theta_2}^{(1)})
	= (-0.0718, 1.6 \cdot 10^{-5})\\[4pt]
	\mu_{f_{P_{3}} \to \theta_1}^{(1)}(\theta_1) &:=
	(z_{f_{P_{3}} \to \theta_1}^{(1)},v_{f_{P_{3}} \to \theta_1}^{(1)})
	=(0.0135,3.24 \cdot 10^{60})\\[4pt]
	\mu_{f_{P_{3}} \to \theta_2}^{(1)}(\theta_2) &:=
	(z_{f_{P_{3}} \to \theta_2}^{(1)},v_{f_{P_{3}} \to \theta_2}^{(1)})
	= (-0.0491,5.0625 \cdot 10^{60})\\[4pt]
	\mu_{f_{P_{3}} \to \theta_3}^{(1)}(\theta_3) &:=
	(z_{f_{P_{3}} \to \theta_3}^{(1)}, v_{f_{P_{3}} \to \theta_3}^{(1)})
	=(-0.0075,1.4321 \cdot 10^{-6}).
	\end{align*}
	\end{subequations}
	
\subsubsection*{Messages from variable nodes to indirect factor nodes}
Next, the algorithm proceeds with computing messages from variable nodes $\mathcal{V}$ to indirect factor nodes $\mathcal{F}_{\mathrm{ind}}$, as shown in \autoref{appB_fig_6}, using incoming messages from factor nodes $\mathcal{F}$ to variable nodes $\mathcal{V}$.
	\begin{figure}[ht]
	\centering
	\includegraphics[width=65mm]{./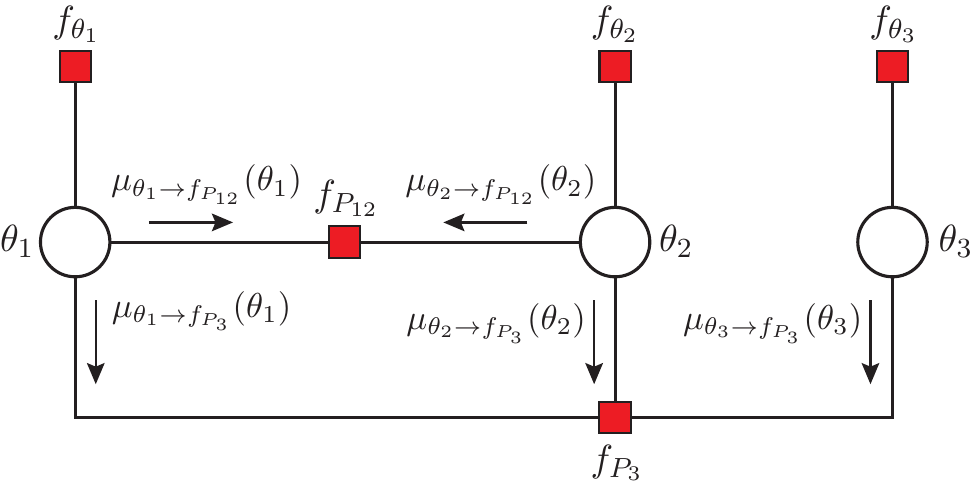}
	\caption{Messages from variable nodes to indirect factor nodes.}
	\label{appB_fig_6}
	\end{figure} 

Variance and mean values of messages from variable nodes $\theta_1$ and $\theta_2$ to factor node $f_{P{12}}$ are respectively:
	\begin{subequations} 
   	\begin{align*}
	v_{\theta_1 \to f_{P_{12}}}^{(1)} &= 
	\Bigg (\cfrac{1}{v_{f_{\theta_1} \to \theta_1}} 
	+ \cfrac{1}{v_{f_{P_3} \to \theta_1}^{(1)}}\Bigg)^{-1} = 10^{-60} \\[3pt]
	z_{\theta_1 \to f_{P_{12}}}^{(1)} &= 
	\Bigg (\cfrac{z_{f_{\theta_1} \to \theta_1}}{v_{f_{\theta_1} \to \theta_1}} 
	+ \cfrac{z_{f_{P_3} \to \theta_1}^{(1)}}{v_{f_{P_3} \to \theta_1}^{(1)}}\Bigg)
	v_{\theta_1 \to f_{P_{12}}}^{(1)} 
	 = 0 \\[10pt]
	v_{\theta_2 \to f_{P_{12}}}^{(1)} &= 
	\Bigg (\cfrac{1}{v_{f_{\theta_2} \to \theta_2}} 
	+ \cfrac{1}{v_{f_{P_3} \to \theta_2}^{(1)}}\Bigg)^{-1} = 10^{-6}\\[3pt]
	z_{\theta_2 \to f_{P_{12}}}^{(1)} &= 
	\Bigg (\cfrac{z_{f_{\theta_2} \to \theta_2}}{v_{f_{\theta_2} \to \theta_2}} 
	+ \cfrac{z_{f_{P_3} \to \theta_2}^{(1)}}{v_{f_{P_3} \to \theta_2}^{(1)}}\Bigg)
	 v_{\theta_2 \to f_{P_{12}}}^{(1)}= -0.066. 	 
	\end{align*}
	\end{subequations}		 
Variance and mean values of messages from variable nodes $\theta_1$, $\theta_2$ and $\theta_3$ to factor node $f_{P{3}}$ are respectively:
	\begin{subequations} 
   	\begin{align*}
	v_{\theta_1 \to f_{P_{3}}}^{(1)} &= 
	\Bigg (\cfrac{1}{v_{f_{\theta_1} \to \theta_1}} 
	+ \cfrac{1}{v_{f_{P_{12}} \to \theta_1}^{(1)}}\Bigg)^{-1} = 10^{-60} \\[3pt]
	z_{\theta_1 \to f_{P_{3}}}^{(1)} &= 
	\Bigg (\cfrac{z_{f_{\theta_1} \to \theta_1}}{v_{f_{\theta_1} \to \theta_1}} 
	+ \cfrac{z_{f_{P_{12}} \to \theta_1}^{(1)}}{v_{f_{P_{12}} \to \theta_1}^{(1)}}\Bigg)
	v_{\theta_1 \to f_{P_{3}}}^{(1)} = 3.4118 \cdot 10^{-58}\\[10pt]
	v_{\theta_2 \to f_{P_{3}}}^{(1)} &= 
	\Bigg (\cfrac{1}{v_{f_{\theta_2} \to \theta_2}} 
	+ \cfrac{1}{v_{f_{P_{12}} \to \theta_2}^{(1)}}\Bigg)^{-1} 
	=  9.4118 \cdot 10^{-7}\\[3pt]	
	z_{\theta_2 \to f_{P_{3}}}^{(1)} &= 
	\Bigg (\cfrac{z_{f_{\theta_2} \to \theta_2}}{v_{f_{\theta_2} \to \theta_2}} 
	+ \cfrac{z_{f_{P_{12}} \to \theta_2}^{(1)}}{v_{f_{P_{12}} \to \theta_2}^{(1)}}\Bigg)
	v_{\theta_2 \to f_{P_{3}}}^{(1)} = -0.0663\\[10pt]
	v_{\theta_3 \to f_{P_{3}}}^{(1)} &= 
	\Bigg (\cfrac{1}{v_{f_{\theta_3} \to \theta_3}} \Bigg)^{-1} = 10^{60} \\[3pt]
	z_{\theta_3 \to f_{P_{3}}}^{(1)} &= 
	\Bigg (\cfrac{z_{f_{\theta_3} \to \theta_3}}
	{v_{f_{\theta_3} \to \theta_3}} \Bigg)v_{\theta_3 \to f_{P_{3}}}^{(1)} = 0.			 
	\end{align*}
	\end{subequations}	
To summarize, corresponding messages from variable nodes $\mathcal{V}$ to indirect factor nodes $\mathcal{F}_{\mathrm{ind}}$ are:  
	\begin{subequations} 
   	\begin{align*}
   	\mu_{\theta_1 \to f_{P_{12}}}^{(1)}(\theta_1) &:=
   	(z_{\theta_1 \to f_{P_{12}}}^{(1)},v_{\theta_1 \to f_{P_{12}}}^{(1)})
   	=(0,10^{-60})\\[4pt]
	\mu_{\theta_2 \to f_{P_{12}}}^{(1)}(\theta_2)&:=
	(v_{\theta_2 \to f_{P_{12}}}^{(1)},v_{\theta_2 \to f_{P_{12}}}^{(1)}) 
	=(-0.066,10^{-6})\\[4pt]
	\mu_{\theta_1 \to f_{P_{3}}}^{(1)}(\theta_1)&:=
	(z_{\theta_1 \to f_{P_{3}}}^{(1)},v_{\theta_1 \to f_{P_{3}}}^{(1)})
	=(3.4118 \cdot 10^{-58},10^{-60})\\[4pt]
	\mu_{\theta_2 \to f_{P_{3}}}^{(1)}(\theta_2)&:=
	(v_{\theta_2 \to f_{P_{3}}}^{(1)},v_{\theta_2 \to f_{P_{3}}}^{(1)})
	=(-0.0663,9.4118 \cdot 10^{-7})	\\[4pt]
	\mu_{\theta_3 \to f_{P_{3}}}^{(1)}(\theta_3)&:=
	(z_{\theta_3 \to f_{P_{3}}}^{(1)},v_{\theta_3 \to f_{P_{3}}}^{(1)})
	=(0,10^{60}). 
	\end{align*}
	\end{subequations}	
	
Finally, the first iteration is done, and the iteration loop is repeated
until the stopping criteria is met. We define accuracy-based criterion where iteration loop is running until the following criterion is reached:    
		\begin{equation}
        \begin{gathered}      
		|\mathbf{z}_{f \to \theta}^{(\tau)}-
		\mathbf{z}_{f \to \theta}^{(\tau-1})| < \epsilon,
		\end{gathered}
		\label{num_break2}
		\end{equation}
where $\mathbf{z}_{f \to \theta}$ represents the vector of mean-value messages from factor nodes to variable nodes, and $\epsilon = 10^{-14}$ is the threshold. The algorithm converged after $\tau = 3$ iterations and final value of messages from indirect factor nodes $\mathcal{F}_{\mathrm{ind}}$ to variable nodes $\mathcal{V}$ are:
	\begin{subequations} 
   	\begin{align*}
   	\mu_{f_{P_{12}} \to \theta_1}(\theta_1) &:= 
   	(z_{f_{P_{12}} \to \theta_1}, v_{f_{P_{12}} \to \theta_1}) 
   	= (0.0058, 1.7 \cdot 10^{-5})\\[4pt]
	\mu_{f_{P_{12}} \to \theta_2}(\theta_2)	&:= 
	(z_{f_{P_{12}} \to \theta_2},v_{f_{P_{12}} \to \theta_2})
	= (-0.0718, 1.6 \cdot 10^{-5})\\[4pt]
	\mu_{f_{P_{3}} \to \theta_1}(\theta_1) &:=
	(z_{f_{P_{3}} \to \theta_1},v_{f_{P_{3}} \to \theta_1})
	=(0.0138,3.24 \cdot 10^{60})\\[4pt]
	\mu_{f_{P_{3}} \to \theta_2}(\theta_2) &:=
	(z_{f_{P_{3}} \to \theta_2},v_{f_{P_{3}} \to \theta_2})
	= (-0.0491,5.0625 \cdot 10^{60})\\[4pt]
	\mu_{f_{P_{3}} \to \theta_3}(\theta_3) &:=
	(z_{f_{P_{3}} \to \theta_3}, v_{f_{P_{3}} \to \theta_3})
	=(-0.0076,1.4205 \cdot 10^{-6}).
	\end{align*}
	\end{subequations}
	
\subsection*{The DC-BP Marginal Inference}	
The marginal of variable nodes $\mathcal{V}$ can be obtained using messages from factor nodes $\mathcal{F}$ to variable nodes $\mathcal{V}$, as shown in \autoref{appB_fig_7}. Note that the mean-value of marginal is adopted as the estimated value of the state variable.
	\begin{figure}[ht]
	\centering
	\includegraphics[width=65mm]{./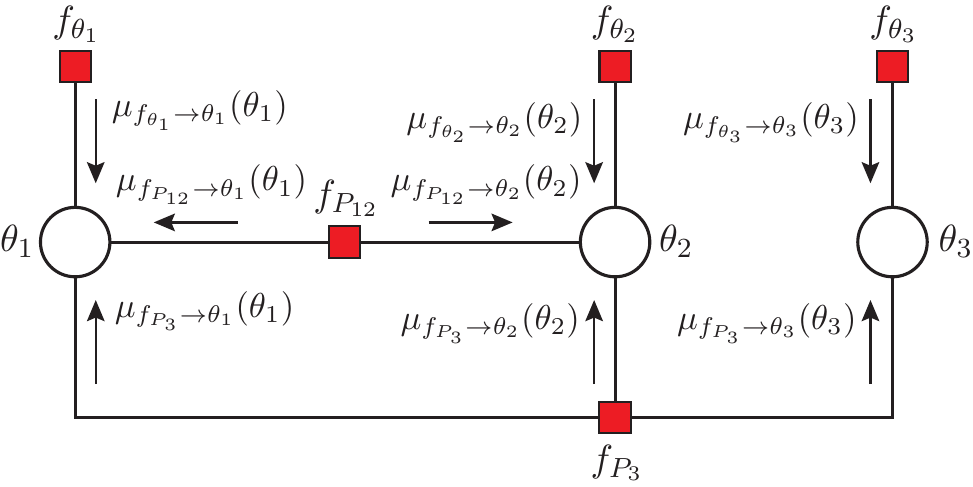}
	\caption{Messages into variable nodes.}
	\label{appB_fig_7}
	\end{figure} 

Consequently, estimated values of state variables can be obtained:
	\begin{subequations} 
   	\begin{align*}
	v_{\theta_1} &= 
	\Bigg (\cfrac{1}{v_{f_{\theta_1} \to \theta_1}} 
	+ \cfrac{1}{v_{f_{P_{12}} \to \theta_1}} 
	+ \cfrac{1}{v_{f_{P_{3}} \to \theta_1}}
	\Bigg)^{-1} 	= 10^{-60}\\[3pt]
	\hat{\theta}_1 &= 
	\Bigg (\cfrac{z_{f_{\theta_1} \to \theta_1}}{v_{f_{\theta_1} \to \theta_1}} 
	+ \cfrac{z_{f_{P_{12}} \to \theta_1}}{v_{f_{P_{12}} \to \theta_1}} 
	+ \cfrac{z_{f_{P_{3}} \to \theta_1}}{v_{f_{P_{3}} \to \theta_1}}
	\Bigg)v_{\theta_1} 	= 0 \\[10pt]
	v_{\theta_2} &= 
	\Bigg (\cfrac{1}{v_{f_{\theta_2} \to \theta_2}} 
	+ \cfrac{1}{v_{f_{P_{12}} \to \theta_2}} 
	+ \cfrac{1}{v_{f_{P_{3}} \to \theta_2}}
	\Bigg)^{-1} 	=9.4118 \cdot 10^{-7}\\[3pt]
	\hat{\theta}_2 &=
	\Bigg (\cfrac{z_{f_{\theta_2} \to \theta_2}}{v_{f_{\theta_2} \to \theta_2}} 
	+ \cfrac{z_{f_{\theta_2} \to \theta_2}}{v_{f_{P_{12}} \to \theta_2}} 
	+ \cfrac{z_{f_{\theta_2} \to \theta_2}}{v_{f_{P_{3}} \to \theta_2}}
	\Bigg)v_{\theta_2}	= -0.0663\\[10pt]
	v_{\theta_3} &= 
	\Bigg (\cfrac{1}{v_{f_{\theta_3} \to \theta_3}} 
	+ \cfrac{1}{v_{f_{P_{3}} \to \theta_3}}
	\Bigg)^{-1} =1.4205 \cdot 10^{-6}\\[3pt]
	\hat{\theta}_3 &=	
	\Bigg (\cfrac{z_{f_{\theta_3} \to \theta_3}}{v_{f_{\theta_3} \to \theta_3}} 
	+ \cfrac{z_{f_{P_{3}} \to \theta_3}}{v_{f_{P_{3}} \to \theta_3}}
	\Bigg)v_{\theta_3} =-0.0076.		 
	\end{align*}
	\end{subequations} 
To recall, the BP solution for means is equivalent to the WLS solution. Unlike means,
the variances need not converge to correct values. 		 
\chapter{The AC-BP Algorithm: Message Derivation} \label{app:C}
\addcontentsline{lof}{chapter}{C The AC-BP Algorithm: Message Derivation}
Here we present an example of evaluation of the message from a factor node to a variable node for the AC-BP algorithm. We consider a simple model containing buses $i$ and $j$, with the active power flow measurement $M_i \equiv$ $M_{P_{ij}}$ at the branch $(i,j)$. The mean $z_i$, variance $v_i$ and the measurement function $h_i(\theta_i, V_i, \theta_j, V_j)$ defined as \eqref{mf_activeF} is associated with the active power flow measurement $M_i$. The corresponding factor graph is shown in \autoref{Fig_appB_mess}. 

Further, all incoming messages from variable nodes to the factor node $f_i$ have Gaussian form. Therefore, these messages, denoted as $\mu_{\theta_i \to f_i}(\theta_i)$, $\mu_{V_i \to f_i}(V_i)$, $\mu_{\theta_j \to f_i}(\theta_j)$ and $\mu_{V_j \to f_i}(V_j)$, are represented by their mean-variance pair $(z_{\theta_i \to f_i},$ $v_{\theta_i \to f_i})$, $(z_{V_i \to f_i},$ $v_{V_i \to f_i})$, $(z_{\theta_j \to f_i},$ $v_{\theta_j \to f_i})$ and $(z_{V_j \to f_i},$ $v_{V_j \to f_i})$, respectively (\autoref{Fig_appB_V_i} - \autoref{Fig_appB_theta_j}). 

	\begin{figure}[h]
	\centering
	\begin{tabular}{@{}c@{}}
	\subfloat[]{\label{Fig_appB_V_i}
	\includegraphics[width=4.3cm]{./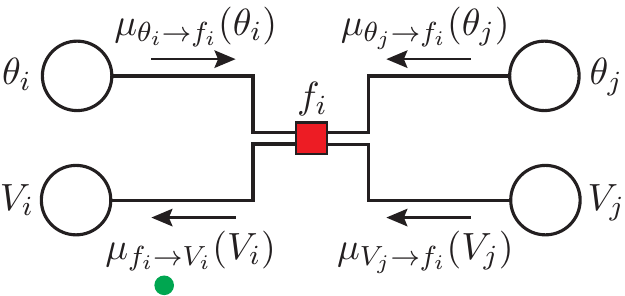}}
	\end{tabular}\quad
	\begin{tabular}{@{}c@{}}
	\subfloat[]{\label{Fig_appB_V_j}
	\includegraphics[width=4.3cm]{./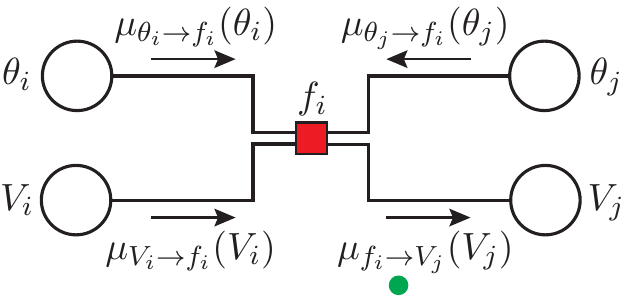}} 
	\end{tabular}\\
	\begin{tabular}{@{}c@{}}
	\subfloat[]{\label{Fig_appB_theta_i}
	\includegraphics[width=4.3cm]{./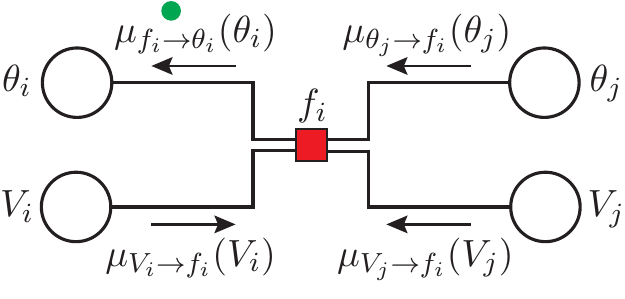}}
	\end{tabular}\quad
	\begin{tabular}{@{}c@{}}
	\subfloat[]{\label{Fig_appB_theta_j}
	\includegraphics[width=4.3cm]{./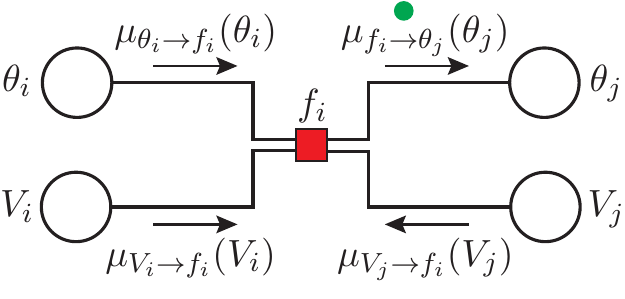}} 
	\end{tabular}
	\caption{Messages from from factor node $f_i$ to variable nodes: 
	$V_i$ (subfigure a), $V_j$ (subfigure b), $\theta_i$ (subfigure c) 
	and $\theta_j$ (subfigure d).}
	\label{Fig_appB_mess}
	\end{figure} 
	
According to assumption (see \autoref{ch:native_bp}), the messages from the factor node $f_i$ to variable nodes have Gaussian form: $\mu_{f_i \to V_i}(V_i)$, $\mu_{f_i \to V_j}(V_j)$, $\mu_{f_i \to \theta_i}(\theta_i)$ and $\mu_{f_i \to \theta_j}(\theta_j)$ with their mean-variance pair $(z_{f_i \to V_i},v_{f_i \to V_i})$, $(z_{f_i \to V_j},v_{f_i \to V_j})$, $(z_{f_i \to \theta_i}, v_{f_i \to \theta_i})$ and $(z_{f_i \to \theta_j},v_{f_i \to \theta_j})$. In the following, we consider calculation of each of these messages.

\noindent
{\scriptsize $\bullet$} The message $\mu_{f_i \to V_i}$ (\autoref{Fig_appB_V_i}): Let us first consider the mean $z_{f_i \to V_i}$.  The equation \eqref{BP_fv_mean} for the active power flow measurement boils down to \eqref{BP_vf_mean_Vi}:
		\begin{equation}
        \begin{aligned}
		a  \mathbb{E}[V_i^2|\mathbf {x}_b = \mathbf{z}_{\mathbf {x}_b \to f_i}]+ 
		b  \mathbb{E}[V_i|\mathbf {x}_b = \mathbf{z}_{\mathbf {x}_b \to f_i}] + c = 0,		
        \end{aligned}
		\nonumber
		\end{equation}		
where: $\mathbf{x}_b=(\theta_i, \theta_j, V_j)$ and $\mathbf{z}_{\mathbf {x}_b \to f_i} =$ $(z_{\theta_i \to f_i},$ $z_{\theta_j \to f_i},$ $z_{V_j \to f_i})$, with coefficients: 
		\begin{equation}
        \begin{aligned}		
		a &= g_{ij}+g_{si}\\
		b &= -z_{V_j \to f_i}
		(g_{ij}\cos z_{\theta_{ij} \to f_i}+
		b_{ij}\sin z_{\theta_{ij} \to f_i})\\
		c &= -z_i,
        \end{aligned}
		\nonumber
		\end{equation}
where $z_{\theta_{ij} \to f_i}$ is determined as $z_{\theta_{i} \to f_i}$ $- z_{\theta_j \to f_i}$. Due the fact that the conditional expected value $\mathbb{E}[V_i|\mathbf {x}_b = \mathbf{z}_{\mathbf {x}_b \to f_i}]$ represents the mean $z_{f_i \to V_i}$, we can write:
		\begin{equation}
        \begin{aligned}
		a  (z_{f_i \to V_i}^2 + v_{f_i \to V_i})+ 
		b  z_{f_i \to V_i} + c = 0.		
        \end{aligned}
		\nonumber
		\end{equation}
The mean $z_{f_i \to V_i}$ follows from the quadratic equation, where we selected a solution using \eqref{BP_mean_cond}.		

The variance $v_{f_i \to V_i}$ is determined using \eqref{BP_fv_var} as: 
        \begin{equation}
		\begin{aligned}
        \sigma_{f_i \to V_i}^2 =         
        \cfrac{1}{C_{V_i}^2} ( v_i +  
        C_{\theta_i}^2 v_{\theta_i \to f_i} +               
        C_{\theta_j}^2 v_{\theta_j \to f_i} +
        C_{V_j}^2 v_{V_j \to f_i}),
        \end{aligned}
        \nonumber
		\end{equation}
where coefficients are defined according to Jacobian elements of the measurement function $h_i(\cdot)$:
		\begin{equation}
        \begin{aligned}
		C_{\theta_i}=
		\cfrac{\mathrm \partial{h_i(V_i,\mathbf x_b)}}{\mathrm \partial \theta_i}
		\Biggr|_{\substack{V_i = z_{f_i \to V_i} \\ 
		\mathbf x_b = \mathbf z_{\mathbf {x}_b \to f_i}}} 
		\;
     	C_{\theta_j}=
		\cfrac{\mathrm \partial{h_i(V_i,\mathbf x_b)}}{\mathrm \partial \theta_j}
		\Biggr|_{\substack{V_i = z_{f_i \to V_i} \\ 
		\mathbf x_b = \mathbf z_{\mathbf {x}_b \to f_i}}}
		\\
		C_{V_i}=
		\cfrac{\mathrm \partial{h_i(V_i,\mathbf x_b)}}{\mathrm \partial V_i} 
		\Biggr|_{\substack{V_i = z_{f_i \to V_i} \\ 
		\mathbf x_b = \mathbf z_{\mathbf {x}_b \to f_i}}}
		\;
     	C_{V_j}=
		\cfrac{\mathrm \partial{h_i(V_i,\mathbf x_b)}}{\mathrm \partial V_j} 
		\Biggr|_{\substack{V_i = z_{f_i \to V_i} \\ 
		\mathbf x_b = \mathbf z_{\mathbf {x}_b \to f_i}}}	
        \end{aligned}
		\nonumber	
		\end{equation} 

\noindent
{\scriptsize $\bullet$} The message $\mu_{f_i \to V_j}$ (\autoref{Fig_appB_V_j}): The mean $z_{f_i \to V_j}$ is defined according to \eqref{BP_vf_mean_Vj} as: 
		\begin{equation}
        \begin{aligned}
		a\mathbb{E}[V_j|\mathbf {x}_b = \mathbf{z}_{\mathbf {x}_b \to f_i}]+b= 0,		
        \end{aligned}
		\nonumber
		\end{equation}
where: $\mathbf{x}_b=(\theta_i, V_i, \theta_j)$ and $\mathbf{z}_{\mathbf {x}_b \to f_i}$ $= (z_{\theta_i \to f_i},$ $z_{V_i \to f_i},$ $z_{\theta_j \to f_i} )$, with coefficients:		
		\begin{equation}
        \begin{aligned}		
		a &= z_i-z_{V_i \to f_i}^2(g_{ij}+g_{si})\\
		b &= z_{V_i \to f_i}
		(g_{ij}\cos z_{\theta_{ij} \to f_i}+
		b_{ij}\sin z_{\theta_{ij} \to f_i}).
        \end{aligned}
		\nonumber
		\end{equation}
Due the fact that the conditional expected value $\mathbb{E}[V_j|\mathbf {x}_b = \mathbf{z}_{\mathbf {x}_b \to f_i}]$ represents the mean $z_{f_i \to V_j}$, we obtain:
		\begin{equation}
        \begin{aligned}
		az_{f_i \to V_j}+b= 0.		
        \end{aligned}
		\nonumber
		\end{equation}
The variance $v_{f_i \to V_j}$ is determined using \eqref{BP_fv_var} as: 
        \begin{equation}
		\begin{aligned}
        v_{f_i \to V_j} =         
        \cfrac{1}{C_{V_j}^2} ( v_i +  
        C_{\theta_i}^2 v_{\theta_i \to f_i} +
		C_{V_i}^2 v_{V_i \to f_i} +        
        C_{\theta_j}^2 v_{\theta_j \to f_i}),
        \end{aligned}
        \nonumber
		\end{equation}
where coefficient are defined according to Jacobian elements of the measurement function $h_i(\cdot)$.

\noindent
{\scriptsize $\bullet$} The messages $\mu_{f_i \to \theta_i}$ and $\mu_{f_i \to \theta_j}$ (\autoref{Fig_appB_theta_i} and \autoref{Fig_appB_theta_j}): Means $z_{f_i \to \theta_i}$ and $z_{f_i \to \theta_j}$ are defined according to \eqref{BP_vf_mean_sin}:
		\begin{equation}
        \begin{aligned}
		a\mathbb{E}[\sin^2 x_s|\mathbf {x}_b = \mathbf{z}_{\mathbf {x}_b \to f_i}]+ 
		b \mathbb{E}[\sin x_s|\mathbf {x}_b = \mathbf{z}_{\mathbf {x}_b \to f_i}]+c= 0,	
        \end{aligned}
		\nonumber
		\end{equation} 
where: $\mathbf{x}_b=(V_i, \theta_j, V_j)$ and $\mathbf{z}_{\mathbf {x}_b \to f_i}$ $= (z_{V_i \to f_i},$ $z_{\theta_j \to f_i},$ $z_{V_j \to f_i} )$  for the message $\mu_{f_i \to \theta_i}$, $\mathbf{x}_b=(\theta_i, V_i, V_j)$ and $\mathbf{z}_{\mathbf {x}_b \to f_i}$ $= (z_{\theta_i \to f_i},$ $z_{V_i \to f_i},$ $z_{V_j \to f_i} )$ for the message $\mu_{f_i \to \theta_j}$, and $x_s \in \{\theta_i, \theta_j \}$. Due the fact that the all variables and messages preserve Gaussian distribution, the conditional expectations of sine functions are equal to $\mathbb{E}[\sin^2 x_s|\mathbf {x}_b =$ $\mathbf{z}_{\mathbf {x}_b \to f_i}] =$ $\sin^2 z_{f_i \to x_s}$ and $\mathbb{E}[\sin x_s|\mathbf {x}_b$ $= \mathbf{z}_{\mathbf {x}_b \to f_i}] =$ $\sin z_{f_i \to x_s}$, which allows us to compute the mean:
		\begin{equation}
        \begin{aligned}
		a\sin^2 z_{f_i \to x_s} + 
		b \sin z_{f_i \to x_s}+c= 0.		
        \end{aligned}
		\nonumber
		\end{equation}
To simplify expressions, we introduce coefficients $a = A^2 + B^2$, $b = -2BC$ and $c = -A^2 + C^2$:
		\begin{subequations}
        \begin{align*}     		
		A &= g_{ij} \cos z_{\theta_j \to f_i} - b_{ij} \sin z_{\theta_j \to f_i}, 
		&x_s& \equiv \theta_i\\
		A &= g_{ij} \cos z_{\theta_i \to f_i} + b_{ij} \sin z_{\theta_i \to f_i}, 
		&x_s& \equiv \theta_j\\
		B &= g_{ij} \sin z_{\theta_j \to f_i} + b_{ij} \cos z_{\theta_j \to f_i}, 
		&x_s& \equiv \theta_i\\
		B &= g_{ij} \sin z_{\theta_i \to f_i} - b_{ij} \cos z_{\theta_i \to f_i},
		&x_s& \equiv \theta_i\\
		C &= \cfrac{z_{V_i \to f_i}^2(g_{ij}+g_{si})-z_i}{z_{V_i \to f_i}z_{V_j \to f_i}}, 
		&x_s& \in \{\theta_i, \theta_j \}		
		\end{align*}   
		\end{subequations}		

The variance $v_{f_i \to \theta_i}$ is determined using \eqref{BP_fv_var} as: 
        \begin{equation}
		\begin{aligned}
        v_{f_i \to \theta_i} =         
        \cfrac{1}{C_{\theta_i}^2} ( v_i +  
        C_{V_i}^2 v_{V_i \to f_i} +
        C_{\theta_j}^2 v_{\theta_j \to f_i} +
        C_{V_j}^2 v_{V_j \to f_i}),
        \end{aligned}
        \nonumber
		\end{equation}
where coefficients are defined, as above, by calculating Jacobian elements of the measurement function $h_i(\cdot)$.

The variance $v_{f_i \to \theta_j}$ is determined according to \eqref{BP_fv_var} as: 
        \begin{equation}
		\begin{aligned}
        v_{f_i \to \theta_j} =         
        \cfrac{1}{C_{\theta_j}^2} ( v_i +  
        C_{\theta_i}^2 v_{\theta_i \to f_i} +
        C_{V_i}^2 v_{V_i \to f_i} +
        C_{V_j}^2 v_{V_j \to f_i}),
        \end{aligned}
        \nonumber
		\end{equation}
where coefficient follow Jacobian elements of the measurement function $h_{P_{i}}(\cdot)$.

Using the same methodology, it is possible to define corresponding equations for means and variances for every type of measurement functions.	
 
\chapter{The GN-BP Algorithm: Toy Example} \label{app:D}
\addcontentsline{lof}{chapter}{D The GN-BP Algorithm: Toy Example}
An illustrative example presented in \autoref{D_bus} will be used to provide a step-by-step presentation of the proposed algorithm.
	\begin{figure}[H]
	\centering
	\includegraphics[width=3cm]{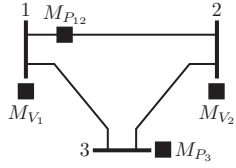}
	\caption{The 3-bus power system with given measurement configuration.}
	\label{D_bus}
	\end{figure} 	
Input data for SE from measurement devices are Gaussian-type functions represented by means and variances: $\{z_{V_1}, z_{\theta_2},z_{\theta_3}, z_{P_{12}}, z_{P_3} \}$ and $\{v_{V_1}, v_{\theta_2},v_{\theta_3}, v_{P_{12}}, v_{P_3} \}$.

\subsection*{The Factor Graph}
The corresponding factor graph is given in \autoref{D_graph}, where the set of state variables is $\mathcal{X} =$ $\{(\theta_1,V_1),$  $(\theta_2,V_2),$ $(\theta_3,V_3)\}$ and the set of variable nodes is $\mathcal{V} =$ $\{(\Delta \theta_1,\Delta V_1),$ $(\Delta \theta_2, \Delta V_2),$ $(\Delta \theta_3,\Delta V_3)\}$. The indirect factor nodes (orange squares) are defined by corresponding measurements, where in our example, active power flow $M_{P_{12}}$ and active power injection $M_{P_3}$ measurements  are mapped into factor nodes $\mathcal{F}_{\mathrm{ind}} = $ $\{f_{P_{12}},$ $f_{P_{3}} \}$. The set of local factor nodes $\mathcal{F}_{\mathrm{loc}}$ consists of the set of direct factor nodes (green squares) $\mathcal{F}_{\mathrm{dir}} = $ $\{f_{V_{1}},$ $f_{V_{2}} \}$ defined by bus voltage magnitude measurements $M_{V_{1}}$ and $M_{V_2}$, virtual factor nodes (blue squares) and the slack factor node (yellow square). 	  
	\begin{figure}[H]
	\centering
	\includegraphics[width=5.5cm]{./chapter_05/Figs/fig5_4b-eps-converted-to.pdf}
	\caption{The factor graph.}
	\label{D_graph}
	\end{figure} 
Local factor nodes only send, but do not receive, the messages to the incident variable nodes. 

\subsection*{Algorithm Initialization $\bm {\tau = 0}$}
\begin{enumerate}[leftmargin=*]
\item The non-linear SE in electric power systems assumes ``flat start" or a priori given values of state variables: 
	 	\begin{equation}
        \begin{gathered}
        \mathbf x^{(\nu=0) }=[\theta_1\; \theta_2\; \theta_3\; V_1\; 
        V_2\; V_3]^{(\nu=0)}.
        \end{gathered}
		\nonumber
		\end{equation}
\item The residual of the slack factor node is set to $r_{\theta_1}=0$ with variance $v_{\theta_1} \to 0$. 
\item The value of virtual factor nodes are set to $r_{\theta_2} \to 0$, $r_{V_3} \to 0$ and $r_{\theta_3} \to 0$, with variances $v_{\theta_2} \to \infty$, $v_{V_3} \to \infty$ and $v_{\theta_3} \to \infty$.
\end{enumerate}

\noindent
\textbf{Iterate - Outer Loop:} $\bm\nu \mathbf{=0,1,2,\dots}; \bm\tau \mathbf{=0 }$
\begin{enumerate}[leftmargin=*,resume]			
\item Each direct factor node from the set $\mathcal{F}_{\mathrm{dir}}$ computes residual:
	 	\begin{equation}
        \begin{aligned}
		r_{V_1}^{(\nu)} &= z_{V_1}-V_1^{(\nu)}\\[4pt]
		r_{V_2}^{(\nu)} &= z_{V_2}-V_2^{(\nu)}       
		\end{aligned} 
		\nonumber
		\end{equation}	

\item Local factor nodes $\mathcal{F}_{\mathrm{loc}}$ send messages represented by a triplet (residual, variance, state variable) to incident variable nodes $\mathcal{V}$:
	 	\begin{equation}
        \begin{aligned}
		\mu_{f_{{\theta_1}} \to \Delta \theta_1}^{(\nu)} &:=
		\big(r_{\theta_1},v_{\theta_1},\theta_1^{(\nu)}\big)\\[4pt]
		\mu_{f_{{V_1}} \to \Delta V_1}^{(\nu)} &:= 
		\big(r_{V_1}^{(\nu)},v_{V_1},V_1^{(\nu)}\big).        
		\end{aligned}
		\nonumber
		\end{equation}
\item Variable nodes $\mathcal{V}$ forward the incoming messages received from local factor nodes $\mathcal{F}_{\mathrm{loc}}$ along remaining edges, e.g.: 
	 	\begin{equation}
        \begin{aligned}
		\mu_{\Delta \theta_1 \to f_{{P_{12}}}}^{(\nu,\tau)} &
		:=\big(r_{\Delta \theta_1 \to f_{{P_{12}}}}^{(\nu, \tau)},
		v_{\Delta \theta_1 \to f_{{P_{12}}}}^{(\nu, \tau)},
		\theta_1^{(\nu)}\big):= 
		\big(r_{\theta_1}^{(\nu)},v_{\theta_1},\theta_1^{(\nu)}\big) \\[4pt]
		\mu_{\Delta \theta_1 \to f_{{P_{3}}}}^{(\nu, \tau)} &
		:=\big(r_{\Delta \theta_1 \to f_{{P_{3}}}}^{(\nu, \tau)},
		v_{\Delta \theta_1 \to f_{{P_{3}}}}^{(\nu, \tau)},
		\theta_1^{(\nu)}\big):= 
		\big(r_{\theta_1}^{(\nu)},v_{\theta_1},\theta_1^{(\nu) }\big). 	
		\end{aligned}
		 \nonumber
		\end{equation}	
\item Indirect factor nodes compute residuals, e.g.:
	 	\begin{equation}
        \begin{gathered}
		r_{P_{12}}^{(\nu)}=z_{P_{12}}-
		h_{P_{12}}(\theta_1^{(\nu)}, \theta_2^{(\nu)}, V_1^{(\nu)}, V_2^{(\nu)}).			
		\end{gathered}
		\nonumber
		\end{equation}
\item Indirect factor nodes compute appropriate Jacobian elements associated with state variables, e.g.: 
		\begin{equation}
        \begin{aligned}
     	C_{P_{12},\Delta \theta_1}^{(\nu)}=
     	\cfrac{\mathrm \partial{h_{P_{12}}(\cdot)}}{\mathrm \partial \theta_{1}}&=
     	{V}_{1}^{(\nu)}{V}_{2}^{(\nu)}
     	(g_{12}\mbox{sin}\theta_{12}^{(\nu)}-b_{12}\mbox{cos}\theta_{12}^{(\nu)})\\
     	C_{P_{12},\Delta V_2}^{(\nu)}=
     	\cfrac{\mathrm \partial{h_{{P_{12}}}(\cdot)}}{\mathrm \partial V_{2}}&=
     	-{V}_{1}^{(\nu)}(g_{12}\mbox{cos}\theta_{12}^{(\nu)}+b_{12}\mbox{sin}\theta_{12}^{(\nu)}).
        \end{aligned}
		\nonumber
		\end{equation}	
\end{enumerate}

\noindent
\textbf{Iterate - Inner Loop:} $\bm\tau \mathbf{=1,2,\dots, \bm\eta(\bm\nu)}$
\begin{enumerate}[leftmargin=*,resume]	
\item Indirect factor nodes send messages as pairs along incident edges according to \eqref{GN_fv_mean_var}, e.g.:
  	 	\begin{equation}
        \begin{aligned}
		\mu_{f_{{P_{12}}} \to \Delta \theta_2}^{(\tau)} := 
		\big(r_{f_{{P_{12}}} \to \Delta \theta_2}^{(\tau)} ,
		v_{f_{{P_{12}}} \to \Delta \theta_2}^{(\tau)}\big)
		\end{aligned}
		\nonumber
		\end{equation}\\[-3ex]
		\begin{equation}
        \begin{aligned}
		r_{f_{r_{P_{12}}} \to \Delta \theta_2}^{(\tau)} =
		\cfrac{1}{C_{P_{12},\Delta \theta_2}^{(\nu)}}\Big[
		r_{P_{12}}^{(\nu)}-
		C_{P_{12},\Delta \theta_1}^{(\nu)}\cdot
		r_{\Delta \theta_1 \to f_{r_{P_{12}}}}^{(\nu, \tau-1)}\\
		-C_{P_{12},\Delta V_1}^{(\nu)}\cdot
		r_{\Delta V_1 \to f_{r_{P_{12}}}}^{(\nu, \tau-1)}
		-C_{P_{12},\Delta V_2}^{(\nu)}\cdot
		r_{\Delta V_2 \to f_{r_{P_{12}}}}^{(\nu, \tau-1)}
		\Big] 
		\end{aligned}
		\nonumber
		\end{equation}\\[-2ex]
		\begin{equation}
        \begin{aligned}
		v_{f_{r_{P_{12}}} \to \Delta \theta_2}^{(\tau)} =
		\cfrac{1}{(C_{P_{12},\Delta \theta_2}^{(\nu)})^2}\Big[
		v_{P_{12}}+
		(C_{P_{12},\Delta \theta_1}^{(\nu)})^2\cdot
		v_{\Delta \theta_1 \to f_{r_{P_{12}}}}^{(\nu, \tau-1)}\\
		+(C_{P_{12},\Delta V_1}^{(\nu)})^2\cdot
		v_{\Delta V_1 \to f_{r_{P_{12}}}}^{(\nu, \tau-1)}+
		({C_{P_{12},\Delta V_2}^{(\nu)}})^2\cdot
		v_{\Delta V_2 \to f_{r_{P_{12}}}}^{(\nu, \tau-1)}
		\Big].	
		\end{aligned}
		\nonumber
		\end{equation}
				
\item Variable nodes send messages as pairs along incident edges to indirect factor nodes according to \eqref{BP_vf_mean_var}, e.g.:
  	 	\begin{equation}
        \begin{gathered}
		\mu_{\Delta \theta_2 \to f_{r_{P_{12}}}}^{(\nu, \tau)} :=
		\big(r_{\Delta \theta_2 \to f_{r_{P_{12}}}}^{(\nu, \tau)} ,
		v_{\Delta \theta_2 \to f_{r_{P_{12}}}}^{(\nu, \tau)}\big)
		\end{gathered}
		\nonumber
		\end{equation}
		\begin{equation}
        \begin{aligned} 
		\cfrac{1}{v_{\Delta \theta_2 \to f_{r_{P_{12}}}}^{(\nu, \tau)}}&=
		\cfrac{1}{v_{\theta_2}}+
		\cfrac{1}{v_{f_{r_{P_{3}}} \to \Delta \theta_2}^{(\tau)}}\\
		r_{\Delta \theta_2 \to f_{r_{P_{12}}}}^{(\nu, \tau)}&=\Bigg(
		\cfrac{r_{\theta_2}^{(\nu)}}{v_{\theta_2}}+
		\cfrac{r_{f_{r_{P_{3}}} \to \Delta \theta_2}^{(\tau)}}
		{v_{f_{r_{P_{3}}} \to \Delta \theta_2}^{(\tau)}} \Bigg)
		v_{\Delta \theta_2 \to f_{r_{P_{12}}}}^{(\tau)}.
		\end{aligned}
		\nonumber
		\end{equation}	
\end{enumerate}

\noindent
\textbf{Iterate - Outer Loop:} $\bm\nu \mathbf{=0,1,2,\dots}; \bm\tau =\bm{\eta(\bm\nu) }$
\begin{enumerate}[leftmargin=6mm,resume]	
\item Variable nodes compute marginals according to \eqref{BP_marginal_mean_var}, e.g.:
  	 	\begin{equation}
        \begin{gathered}
        p(\Delta \theta_2) \propto 
        \mathcal{N}(\Delta \hat \theta_2^{(\nu)}|\Delta \theta_2, 
        \hat v_{\theta_2}^{(\nu)})
		\end{gathered}
		\nonumber
		\end{equation}	
  	 	\begin{equation}
        \begin{aligned}
        \cfrac{1}{\hat v_{\Delta \theta_2}^{(\nu)}}&=
		\cfrac{1}{v_{\theta_2}}+
		\cfrac{1}{v_{f_{r_{P_{12}}} \to \Delta \theta_2}^{(\tau)}}+
		\cfrac{1}{v_{f_{r_{P_{3}}} \to \Delta \theta_2}^{(\tau)}}\\
		\Delta \hat \theta_2^{(\nu)}&=\Bigg(
		\cfrac{r_{\theta_2}^{(\nu)}}{v_{\theta_2}}+
		\cfrac{r_{f_{r_{P_{12}}} \to \Delta \theta_2}^{(\tau)}}
		{v_{f_{r_{P_{12}}} \to \Delta \theta_2}^{(\tau)}}+
		\cfrac{r_{f_{r_{P_{3}}} \to \Delta \theta_2}^{(\tau)}}
		{v_{f_{r_{P_{3}}} \to \Delta \theta_2}^{(\tau)}} \Bigg)
		\hat v_{\Delta \theta_2}^{(\nu)}.
		\end{aligned}
		\nonumber
		\end{equation}	 		
\item Variable nodes update the state variables, e.g.:
  	 	\begin{equation}
        \begin{gathered}
        \theta_2^{(\nu+1)} = \theta_2^{(\nu)}+\Delta \hat \theta_2^{(\nu)}.
		\end{gathered}
		\nonumber
		\end{equation}
\item Repeat steps 4-13 until convergence.	
\end{enumerate}
\addtocontents{toc}{\endgroup}
	
\end{appendices}

\small
\cleardoublepage
\newpage
\phantomsection
\addcontentsline{toc}{chapter}{Bibliography}
\bibliographystyle{IEEEtran} 
\bibliography{thesis_bibliography}

\begin{thebibliography}{10}
\providecommand{\url}[1]{#1}
\csname url@samestyle\endcsname
\providecommand{\newblock}{\relax}
\providecommand{\bibinfo}[2]{#2}
\providecommand{\BIBentrySTDinterwordspacing}{\spaceskip=0pt\relax}
\providecommand{\BIBentryALTinterwordstretchfactor}{4}
\providecommand{\BIBentryALTinterwordspacing}{\spaceskip=\fontdimen2\font plus
\BIBentryALTinterwordstretchfactor\fontdimen3\font minus
  \fontdimen4\font\relax}
\providecommand{\BIBforeignlanguage}[2]{{%
\expandafter\ifx\csname l@#1\endcsname\relax
\typeout{** WARNING: IEEEtran.bst: No hyphenation pattern has been}%
\typeout{** loaded for the language `#1'. Using the pattern for}%
\typeout{** the default language instead.}%
\else
\language=\csname l@#1\endcsname
\fi
#2}}
\providecommand{\BIBdecl}{\relax}
\BIBdecl

\bibitem{kleidaras}
A.~Kleidaras, M.~Cosovic, D.~Vukobratovic, and A.~E. Kiprakis, ``Demand
  response for thermostatically controlled loads using belief propagation,'' in
  \emph{Proc. IEEE ISGT-Europe}, Sept. 2017, pp. 1--6.

\bibitem{nalini}
Q.~Han, R.~T. Eguchi, S.~Mehrotra, and N.~Venkatasubramanian, ``Enabling state
  estimation for fault identification in water distribution systems under large
  disasters.''

\bibitem{abur}
A.~Abur and A.~Exp{\'o}sito, \emph{Power System State Estimation: Theory and
  Implementation}, ser. Power Engineering.\hskip 1em plus 0.5em minus
  0.4em\relax Taylor \& Francis, 2004.

\bibitem{wu}
F.~F. Wu, K.~Moslehi, and A.~Bose, ``Power system control centers: Past,
  present, and future,'' \emph{Proc. IEEE}, vol.~93, pp. 1890--1908, Nov. 2005.

\bibitem{monticelli}
A.~Monticelli, ``Electric power system state estimation,'' \emph{Proc. IEEE},
  vol.~88, no.~2, pp. 262--282, Feb. 2000.

\bibitem{schweppe}
F.~C. Schweppe and D.~B. Rom, ``Power system static-state estimation, part
  {II}: Approximate model,'' \emph{IEEE Trans. Power Syst.}, vol. PAS-89,
  no.~1, pp. 125--130, Jan. 1970.

\bibitem{gupta}
Y.~F. Huang, S.~Werner, J.~Huang, N.~Kashyap, and V.~Gupta, ``State estimation
  in electric power grids: Meeting new challenges presented by the requirements
  of the future grid,'' \emph{IEEE Signal Process. Mag.}, vol.~29, no.~5, pp.
  33--43, Sept. 2012.

\bibitem{zhu}
N.~Kayastha, D.~Niyato, E.~Hossain, and Z.~Han, ``Smart grid sensor data
  collection, communication, and networking: a tutorial,'' \emph{Wireless
  communications and mobile computing}, vol.~14, pp. 1055--1087, 2014.

\bibitem{anna}
X.~Li and A.~Scaglione, ``Robust decentralized state estimation and tracking
  for power systems via network gossiping,'' \emph{IEEE J. Sel. Areas Commun.},
  vol.~31, no.~7, pp. 1184--1194, July 2013.

\bibitem{bose2010smart}
A.~Bose, ``Smart transmission grid applications and their supporting
  infrastructure,'' \emph{IEEE Trans. Smart Grid}, vol.~1, no.~1, pp. 11--19,
  2010.

\bibitem{gol}
M.~Göl and A.~Abur, ``A fast decoupled state estimator for systems measured by
  {PMUs},'' \emph{IEEE Trans. Power Syst.}, vol.~30, no.~5, pp. 2766--2771,
  Sept. 2015.

\bibitem{commag}
M.~Cosovic, A.~Tsitsimelis, D.~Vukobratovic, J.~Matamoros, and C.~Anton-Haro,
  ``5{G} mobile cellular networks: Enabling distributed state estimation for
  smart grids,'' \emph{IEEE Commun. Mag.}, vol.~55, no.~10, pp. 62--69, Oct.
  2017.

\bibitem{terzija}
V.~V. Terzija, G.~Valverde, D.~Cai, P.~Regulski, V.~Madani, J.~Fitch, S.~Skok,
  M.~Begovic, and A.~G. Phadke, ``Wide-area monitoring, protection, and control
  of future electric power networks,'' \emph{Proc. IEEE}, vol.~99, no.~1, pp.
  80--93, 2011.

\bibitem{korres}
G.~N. Korres, ``A distributed multiarea state estimation,'' \emph{IEEE Trans.
  Power Syst.}, vol.~26, no.~1, pp. 73--84, Feb. 2011.

\bibitem{jiang}
W.~Jiang, V.~Vittal, and G.~T. Heydt, ``Diakoptic state estimation using phasor
  measurement units,'' \emph{IEEE Trans. Power Syst.}, vol.~23, no.~4, pp.
  1580--1589, Nov. 2008.

\bibitem{aburali}
L.~Zhao and A.~Abur, ``Multi area state estimation using synchronized phasor
  measurements,'' \emph{IEEE Trans. Power Syst.}, vol.~20, no.~2, pp. 611--617,
  May 2005.

\bibitem{contaxis}
G.~N. Korres and G.~C. Contaxis, ``Application of a reduced model to a
  distributed state estimator,'' in \emph{Proc. IEEE PES Winter Meeting},
  vol.~2, Jan. 2000, pp. 999--1004.

\bibitem{minot}
A.~Minot, Y.~Lu, and N.~Li, ``A distributed {G}auss-{N}ewton method for power
  system state estimation,'' in \emph{Proc. IEEE PESGM}, July 2016, pp. 1--1.

\bibitem{marelli}
D.~Marelli, B.~Ninness, and M.~Fu, ``Distributed weighted least-squares
  estimation for power networks,'' \emph{IFAC-PapersOnLine}, vol.~48, no.~28,
  pp. 562 -- 567, 2015.

\bibitem{tai}
X.~Tai, Z.~Lin, M.~Fu, and Y.~Sun, ``A new distributed state estimation
  technique for power networks,'' in \emph{American Control Conference}, June
  2013, pp. 3338--3343.

\bibitem{reza}
R.~Ebrahimian and R.~Baldick, ``State estimation distributed processing [for
  power systems],'' \emph{IEEE Trans. Power Syst.}, vol.~15, no.~4, pp.
  1240--1246, Nov. 2000.

\bibitem{conejo}
A.~J. Conejo, S.~de~la Torre, and M.~Canas, ``An optimization approach to
  multiarea state estimation,'' \emph{IEEE Trans. Power Syst.}, vol.~22, no.~1,
  pp. 213--221, Feb. 2007.

\bibitem{boyddistributed}
S.~Boyd, N.~Parikh, E.~Chu, B.~Peleato, J.~Eckstein \emph{et~al.},
  ``Distributed optimization and statistical learning via the alternating
  direction method of multipliers,'' \emph{Foundations and Trends in Machine
  learning}, vol.~3, no.~1, pp. 1--122, 2011.

\bibitem{giannakis}
H.~Zhu and G.~B. Giannakis, ``Power system nonlinear state estimation using
  distributed semidefinite programming,'' \emph{IEEE J. Sel. Topics Signal
  Process.}, vol.~8, no.~6, pp. 1039--1050, Dec. 2014.

\bibitem{kekatos}
V.~Kekatos and G.~B. Giannakis, ``Distributed robust power system state
  estimation,'' \emph{IEEE Trans. Power Syst.}, vol.~28, no.~2, pp. 1617--1626,
  May 2013.

\bibitem{matamoros}
J.~Matamoros, A.~Tsitsimelis, M.~Gregori, and C.~Antón-Haro, ``Multiarea state
  estimation with legacy and synchronized measurements,'' in \emph{Proc. IEEE
  ICC}, May 2016, pp. 1--6.

\bibitem{poor}
L.~Xie, D.~H. Choi, S.~Kar, and H.~V. Poor, ``Fully distributed state
  estimation for wide-area monitoring systems,'' \emph{IEEE Trans. Smart Grid},
  vol.~3, no.~3, pp. 1154--1169, Sept. 2012.

\bibitem{chakrabarti}
A.~Sharma, S.~C. Srivastava, and S.~Chakrabarti, ``Multi area state estimation
  using area slack bus angle adjustment with minimal data exchange,'' in
  \emph{Proc. IEEE PES General Meeting}, July 2013, pp. 1--5.

\bibitem{guo}
Y.~Guo, L.~Tong, W.~Wu, H.~Sun, and B.~Zhang, ``Hierarchical multi-area state
  estimation via sensitivity function exchanges,'' \emph{IEEE Trans. Power
  Syst.}, vol.~32, no.~1, pp. 442--453, Jan. 2017.

\bibitem{gomeztax}
A.~G{\'o}mez-Exp{\'o}sito, A.~de~la Villa~Ja{\'e}n, C.~G{\'o}mez-Quiles,
  P.~Rousseaux, and T.~Van~Cutsem, ``A taxonomy of multi-area state estimation
  methods,'' \emph{Electric Power Systems Research}, vol.~81, no.~4, pp.
  1060--1069, 2011.

\bibitem{pearl}
J.~Pearl, \emph{Probabilistic Reasoning in Intelligent Systems: Networks of
  Plausible Inference}.\hskip 1em plus 0.5em minus 0.4em\relax San Francisco,
  USA: Morgan Kaufmann Publishers Inc., 1988.

\bibitem{barber}
D.~Barber, \emph{{Bayesian Reasoning and Machine Learning}}.\hskip 1em plus
  0.5em minus 0.4em\relax {Cambridge University Press}, 2012.

\bibitem{bishop}
C.~M. Bishop, \emph{Pattern Recognition and Machine Learning (Information
  Science and Statistics)}.\hskip 1em plus 0.5em minus 0.4em\relax Berlin,
  Heidelberg: Springer-Verlag, 2006.

\bibitem{loop}
Y.~Weiss and W.~T. Freeman, ``On the optimality of solutions of the max-product
  belief-propagation algorithm in arbitrary graphs,'' \emph{IEEE Trans. Inf.
  Theory}, vol.~47, no.~2, pp. 736--744, 2001.

\bibitem{KollerFreidmanBook}
D.~Koller and N.~Friedman, \emph{Probabilistic Graphical Models: Principles and
  Techniques}.\hskip 1em plus 0.5em minus 0.4em\relax MIT Press, 2009.

\bibitem{bajovic}
D.~Bajovic, K.~He, L.~Stankovic, D.~Vukobratovic, and V.~Stankovic, ``Optimal
  detection and error exponents for hidden semi-markov models,'' \emph{IEEE J.
  Sel. Topics Signal Process.}, vol.~12, no.~5, pp. 1077--1092, Oct. 2018.

\bibitem{kschischang}
F.~R. Kschischang, B.~J. Frey, and H.~A. Loeliger, ``Factor graphs and the
  sum-product algorithm,'' \emph{IEEE Trans. Inf. Theory}, vol.~47, no.~2, pp.
  498--519, Feb 2001.

\bibitem{elidan}
G.~Elidan, I.~McGraw, and D.~Koller, ``Residual belief propagation: Informed
  scheduling for asynchronous message passing,'' in \emph{Proc. UAI}, ser.
  UAI'06.\hskip 1em plus 0.5em minus 0.4em\relax Arlington, Virginia, United
  States: AUAI Press, 2006, pp. 165--173.

\bibitem{kavcic}
Y.~Hu, A.~Kuh, T.~Yang, and A.~Kavcic, ``A belief propagation based power
  distribution system state estimator,'' \emph{IEEE Comput. Intell. Mag.},
  vol.~6, no.~3, pp. 36--46, Aug. 2011.

\bibitem{kavcicconf}
Y.~Hu, A.~Kuh, A.~Kavcic, and D.~Nakafuji, ``Real-time state estimation on
  micro-grids,'' in \emph{Proc. IJCNN}, July 2011, pp. 1378--1385.

\bibitem{cain}
A.~C. Richard P.~ONeill and M.~B. Cain, ``{The {IV} formulation and linear
  approximations of the {AC} optimal power flow problem},'' \emph{Technical
  report, US FERC, December 2012.}

\bibitem{ilic}
Y.~Weng, R.~Negi, and M.~Ilic, ``Graphical model for state estimation in
  electric power systems,'' in \emph{Proc. IEEE SmartGridComm}, Oct. 2013, pp.
  103--108.

\bibitem{fu}
T.~Sui, D.~E. Marelli, and M.~Fu, ``Convergence analysis of {G}aussian belief
  propagation for distributed state estimation,'' in \emph{Proc. IEEE CDC},
  Dec. 2015, pp. 1106--1111.

\bibitem{cosovicexdc}
M.~Cosovic and D.~Vukobratovic, ``State estimation in electric power systems
  using belief propagation: {A}n extended {DC} model,'' in \emph{Proc. IEEE
  SPAWC}, July 2016, pp. 1--5.

\bibitem{xupmu}
C.~Xu and A.~Abur, ``A massively parallel framework for very large scale linear
  state estimation,'' \emph{IEEE Trans. Power Syst.}, vol.~33, no.~4, pp.
  4407--4413, July 2018.

\bibitem{aburpmu}
M.~G{\"o}l and A.~Abur, ``{LAV} based robust state estimation for systems
  measured by {PMU}s,'' \emph{IEEE Trans. Smart Grid}, vol.~5, no.~4, pp.
  1808--1814, July 2014.

\bibitem{stankovic}
M.~Cosovic, D.~Vukobratovic, and V.~Stankovic, ``Linear state estimation via
  {5G C-RAN} cellular networks using {G}aussian belief propagation,'' in
  \emph{Proc. IEEE WCNC}, April 2018, pp. 1--6.

\bibitem{monticelliBook}
A.~Monticelli, \emph{State Estimation in Electric Power Systems: A Generalized
  Approach}, ser. Kluwer international series in engineering and computer
  science.\hskip 1em plus 0.5em minus 0.4em\relax Springer US, 1999.

\bibitem{wood}
A.~Wood and B.~Wollenberg, \emph{Power Generation, Operation, and Control},
  ser. A Wiley-Interscience publication.\hskip 1em plus 0.5em minus 0.4em\relax
  Wiley, 1996.

\bibitem{stevenson}
J.~Grainger and W.~Stevenson, \emph{Power system analysis}, ser. McGraw-Hill
  series in electrical and computer engineering: Power and energy.\hskip 1em
  plus 0.5em minus 0.4em\relax McGraw-Hill, 1994.

\bibitem{catalina}
A.~Gomez-Exposito, A.~Abur, P.~Rousseaux, A.~de~la Villa~Jaen, and
  C.~Gomez-Quiles, ``On the use of {PMU}s in power system state estimation,''
  \emph{Proc. IEEE PSCC}, 2011.

\bibitem{yang}
T.~Bi, X.~Qin, and Q.~Yang, ``A novel hybrid state estimator for including
  synchronized phasor measurements,'' \emph{Electric Power Systems Research},
  vol.~78, no.~8, pp. 1343--1352, 2008.

\bibitem{kashyap}
N.~Kashyap, S.~Werner, and Y.-F. Huang, ``Decentralized {PMU}-assisted power
  system state estimation with reduced interarea communication,'' \emph{IEEE J.
  Sel. Topics Signal Process.}, vol.~12, no.~4, pp. 607--616, 2018.

\bibitem{ghosh}
S.~Chakrabarti, E.~Kyriakides, G.~Ledwich, and A.~Ghosh, ``Inclusion of {PMU}
  current phasor measurements in a power system state estimator,'' \emph{IET
  generation, transmission \& distribution}, vol.~4, no.~10, pp. 1104--1115,
  2010.

\bibitem{dzafic}
I.~Džafić, R.~A. Jabr, and T.~Hrnjić, ``Hybrid state estimation in complex
  variables,'' \emph{IEEE Trans. Power Syst.}, vol.~33, no.~5, pp. 5288--5296,
  Sept. 2018.

\bibitem{phadke}
M.~Zhou, V.~A. Centeno, J.~S. Thorp, and A.~G. Phadke, ``An alternative for
  including phasor measurements in state estimators,'' \emph{IEEE Trans. Power
  Syst.}, vol.~21, no.~4, pp. 1930--1937, Nov. 2006.

\bibitem{weng}
Y.~Weng, Q.~Li, R.~Negi, and M.~Ilić, ``Semidefinite programming for power
  system state estimation,'' in \emph{Proc. IEEE PES General Meeting}, July
  2012, pp. 1--8.

\bibitem{hansen}
P.~C. Hansen, V.~Pereyra, and G.~Scherer, \emph{Least squares data fitting with
  applications}.\hskip 1em plus 0.5em minus 0.4em\relax JHU Press, 2013.

\bibitem{phadkeBook}
A.~G. Phadke and J.~S. Thorp, \emph{Synchronized phasor measurements and their
  applications}.\hskip 1em plus 0.5em minus 0.4em\relax Springer, 2008, vol.~1.

\bibitem{manousakis}
G.~N. Korres and N.~M. Manousakis, ``State estimation and observability
  analysis for phasor measurement unit measured systems,'' \emph{IET Gener.
  Transm. Dis.}, vol.~6, no.~9, pp. 902--913, September 2012.

\bibitem{poor_wu}
J.~Du, S.~Ma, Y.~C. Wu, and H.~V. Poor, ``Distributed hybrid power state
  estimation under {PMU} sampling phase errors,'' \emph{IEEE Trans. Signal
  Process.}, vol.~62, no.~16, pp. 4052--4063, Aug. 2014.

\bibitem{nuquistate}
R.~F. Nuqui, ``State estimation and voltage security monitoring using
  synchronized phasor measurements,'' Ph.D. dissertation, Virginia Tech, 2001.

\bibitem{freeman}
Y.~Weiss and W.~T. Freeman, ``Correctness of belief propagation in gaussian
  graphical models of arbitrary topology,'' in \emph{Advances in neural
  information processing systems}, 2000, pp. 673--679.

\bibitem{hanly}
B.~L. Ng, J.~Evans, and S.~Hanly, ``Distributed downlink beamforming in
  cellular networks,'' in \emph{Proc. IEEE ISIT}, June 2007, pp. 6--10.

\bibitem{zhang}
C.~Fan, X.~Yuan, and Y.~J. Zhang, ``Scalable uplink signal detection in
  {C}-{RANs} via randomized {G}aussian message passing,'' \emph{IEEE Trans.
  Wireless Commun.}, vol.~16, no.~8, pp. 5187--5200, Aug. 2017.

\bibitem{pretti}
M.~Pretti, ``A message-passing algorithm with damping,'' \emph{Journal of
  Statistical Mechanics: Theory and Experiment}, vol. 2005, no.~11, p. P11008,
  2005.

\bibitem{cosovictsp}
M.~Cosovic and D.~Vukobratovic, ``Distributed {G}auss-{N}ewton method for state
  estimation using belief propagation,'' \emph{IEEE Trans. Power Syst.}, pp.
  1--1, 2018.

\bibitem{cosovicac}
M.~Cosovic\vspace{0mm} and D.~Vukobratovic, ``Distributed {G}auss-{N}ewton
  method for {AC} state estimation: A belief propagation approach,'' in
  \emph{Proc. IEEE SmartGridComm}, Nov. 2016, pp. 643--649.

\bibitem{ping}
H.~A. Loeliger, J.~Dauwels, J.~Hu, S.~Korl, L.~Ping, and F.~R. Kschischang,
  ``The factor graph approach to model-based signal processing,'' \emph{Proc.
  IEEE}, vol.~95, no.~6, pp. 1295--1322, June 2007.

\bibitem{fastDC}
M.~Cosovic and D.~Vukobratovic, ``Fast real-time {DC} state estimation in
  electric power systems using belief propagation,'' in \emph{Proc. IEEE
  SmartGridComm}, Oct. 2017, pp. 207--212.

\bibitem{enns}
M.~K. Enns, W.~F. Tinney, and F.~L. Alvarado, ``Sparse matrix inverse factors
  [power systems],'' \emph{IEEE Trans. Power Syst.}, vol.~5, no.~2, 1990.

\bibitem{alvarado}
F.~L. Alvarado, ``Computational complexity in power systems,'' \emph{IEEE
  Trans. Power App. Syst.}, vol.~95, no.~4, July 1976.

\bibitem{bickson}
D.~{Bickson}, ``{Gaussian Belief Propagation: Theory and Aplication},''
  \emph{ArXiv e-prints}, Nov. 2008.

\end{thebibliography}

\end{document}